\documentclass[11pt,letterpaper]{article}

\usepackage[utf8]{inputenc}
\usepackage[english]{babel}
\usepackage{csquotes}
\usepackage[T1]{fontenc}

\usepackage[margin=1in]{geometry} 

\usepackage{setspace}
\usepackage{graphicx}
\usepackage{tikz}
\usetikzlibrary{arrows.meta,positioning,calc,fit}

\usepackage{amsmath, amssymb, amsfonts}
\usepackage{amsthm, mathrsfs, mathtools}

\usepackage[linktoc=page]{hyperref}
\hypersetup{
  colorlinks=false,
  pdfborder={0 0 0.3},
  linkbordercolor={0.5 0.5 0.5},
  citebordercolor={0.5 0.5 0.5},
  urlbordercolor={0.5 0.5 0.5}
}

\usepackage{zref-clever}

\zcsetup{
    nameinlink = false,
    cap        = false,
    abbrev     = false,
}

\theoremstyle{plain}
\newtheorem{theorem}{Theorem}[section]
\newtheorem{lemma}[theorem]{Lemma}
\newtheorem{proposition}[theorem]{Proposition}
\newtheorem{corollary}[theorem]{Corollary}

\newtheorem{definition}[theorem]{Definition}
\newtheorem{maintheorem}{Theorem}

\theoremstyle{definition}

\newtheorem{remark}[theorem]{Remark}

\newcommand{\zcsharedtype}[1]{%
    \AddToHook{env/#1/begin}{%
        \zcsetup{countertype={theorem=#1}}%
    }%
}

\zcsetup{
    countertype={maintheorem=theorem}
}

\zcsharedtype{lemma}
\zcsharedtype{proposition}
\zcsharedtype{corollary}
\zcsharedtype{definition}
\zcsharedtype{example}
\zcsharedtype{remark}

\usepackage[backend=biber,style=alphabetic, minbibnames=99,sorting=nty,sortcites=true,maxbibnames=99]{biblatex}
\usepackage{titling}

\usepackage{authblk}

\usepackage{enumitem} 
\usepackage{caption} 
\usepackage{subcaption} 
\usepackage{microtype} 
\usepackage{dsfont} 

\usepackage[baseline]{euflag}

\let\epsilon\varepsilon

\let\phi\varphi

\newcommand*{\Tr}[1]{\mathop{}\!\mathrm{Tr}{\left(#1\right)}}

\def\id{\mathds{1}}

\newcommand{\bra}[1]{\ensuremath{\left\langle#1\right|}}
\newcommand{\ket}[1]{\ensuremath{\left|#1\right\rangle}}
\newcommand{\braket}[2]{\langle #1  |#2\rangle}
\newcommand{\ketbra}[2]{|#1\rangle\langle #2|  }
\newcommand{\sandwich}[3]{\langle #1|#2 |#3\rangle  }
\newcommand{\norm}[1]{\left\lVert#1\right\rVert}

\newcommand{\identitymap}{\operatorname{id}}

\newcommand{\cA}{\mathcal{A}}
\newcommand{\cE}{\mathcal{E}}
\newcommand{\cH}{\mathcal{H}}
\newcommand{\cM}{\mathcal{M}}

\newcommand{\cR}{\mathcal{R}}

\newcommand{\cX}{\mathcal{X}}
\newcommand{\cY}{\mathcal{Y}}
\newcommand{\cZ}{\mathcal{Z}}

\newcommand{\fA}{\mathfrak{A}}
\newcommand{\fQ}{\mathfrak{Q}}
\newcommand{\fL}{\mathfrak{L}}
\newcommand{\Tree}{\mathsf{T}}
\newcommand{\Vertices}[1]{\mathrm{V}\!\left(#1\right)}
\newcommand{\Edges}[1]{\mathrm{E}\!\left(#1\right)}
\newcommand{\PartyVertices}[1]{\mathrm{P}\!\left(#1\right)}
\newcommand{\SourceVertices}[1]{\mathrm{S}\!\left(#1\right)}

\newcommand{\Parent}[1]{%
    \operatorname{par}\!\left(#1\right)%
}

\newcommand{\Children}[1]{%
    \operatorname{ch}\!\left(#1\right)%
}

\newcommand{\Neighbors}[1]{%
    N_{\Tree}\!\left(#1\right)%
}

\newcommand{\meas}[3]{%
    m^{#1}_{#2\mid #3}%
}

\newcommand{\Meas}[3]{%
    M^{#1}_{#2\mid #3}%
}

\newcommand{\instr}[3]{%
    \Phi^{#1}_{#2\mid #3}%
}

\newcommand{\rstate}[3]{%
    \omega^{#1}_{#2\mid #3}%
}

\newcommand{\TransferMsg}[1]{Q_{#1}}

\newcommand{\InAlg}[1]{\cA_{#1}^{\mathrm{in}}}
\newcommand{\InState}[1]{\omega_{#1}^{\mathrm{in}}}
\newcommand{\InHilb}[1]{\cH_{#1}^{\mathrm{in}}}
\newcommand{\InVec}[1]{\Omega_{#1}^{\mathrm{in}}}
\newcommand{\InRep}[1]{\pi_{#1}^{\mathrm{in}}}

\newcommand{\JoinAlg}[1]{\cA_{#1}}

\newcommand{\InPolyAlg}[1]{\fA_{#1}^{\mathrm{in}}}
\newcommand{\rPolystate}[2]{%
    \varsigma_{#1\mid #2}^{r}%
}
\newcommand{\JoinPolyAlg}[1]{\fA_{#1}}

\newcommand{\thinS}{\mathscr{S}}

\newcommand{\fatS}{\pmb{\mathscr{S}}}

\newcommand{\thinTreeS}{\thinS_{(\Tree, r)}}
\newcommand{\InfatS}[1]{\fatS_{#1}^{\mathrm{in}}}

\newcommand{\Monomial}[2]{\operatorname{Mon}_{\leq #1}(#2)}

\newcommand{\fLsf}{\fL^{\mathrm{ST}}}

\newcommand{\formalInstr}[3]{%
    \Psi^{#1}_{#2\mid #3}%
}

\newcommand{\Bof}[1]{%
    \mathcal{B}\!\left(#1\right)%
}

\newcommand{\maxotimes}{%
    \mathbin{\otimes_{\max}}%
}

\newcommand{\vnotimes}{%
    \mathbin{\overline{\otimes}}%
}

\newcommand{\vnbigotimes}{%
    \mathop{\overline{\bigotimes}}%
}

\newcommand{\gns}[1]{%
    \bigl(\cH_{#1},\pi_{#1}, \ket{\Omega_{#1}}\bigr)%
}

\newcommand{\thmpart}[1]{\par\addvspace{\medskipamount}\noindent\textsc{#1}}

\usepackage{xcolor}

\hypersetup{
  pdftitle = {Quantum inflation and source-transfer hierarchies are complete for tree networks},
  pdfauthor = {Xiangling Xu, Igor Klep, Marc-Olivier Renou},
  pdfsubject = {Completeness of quantum inflation and source-transfer SDP hierarchies for quantum tree-network correlations, with a sufficient criterion for finite-dimensional extraction},
  pdfkeywords = {quantum tree networks, quantum correlations, quantum causal compatibility, network nonlocality, quantum inflation, inflation-NPA hierarchy, semidefinite programming, noncommutative real algebraic geometry, multi-state polynomial optimization, source-transfer model, mixed quantum model, completely positive maps, Radon-Nikodym derivative, archimedean quadratic modules, flat extension, quantum de Finetti theorem, coRE-completeness, device-independent certification},
  pdflang = {en},
  pdfdisplaydoctitle = true
}

\title{\Large\bfseries{Quantum inflation and source-transfer hierarchies\\are complete for tree networks}}

\author[1,$*$]{Xiangling Xu}
\author[2,3]{Igor Klep}
\author[1]{Marc-Olivier Renou}

\affil[1]{Inria, CPHT, LIX, CNRS, \'Ecole Polytechnique, Institut Polytechnique de Paris, Palaiseau, France}
\affil[2]{University of Ljubljana, Faculty of Mathematics and Physics, Jadranska 21, 1000 Ljubljana, Slovenia}
\affil[3]{University of Primorska, Faculty of Mathematics, Natural Sciences and Information Technologies, Glagolja\v{s}ka 8, 6000 Koper, Slovenia}
\affil[*]{\scriptsize\texttt{xu.xiangling@inria.fr}}

\date{\vspace{-2.80em}}

\begin{document}
\maketitle
\begin{abstract}
Understanding the capabilities and limits of quantum systems is central to quantum information processing.
This includes determining which correlations can arise in networks of quantum systems distributed by independent sources.
Yet for general quantum networks, no systematic, dimension-free method is known to bound these correlations arbitrarily well.
We resolve this problem for all quantum tree networks through a new \emph{source-transfer model}, which we prove equivalent to a \emph{mixed quantum model}.
The mixed model places no restriction on dimension and reduces to the standard tensor-product model in finite dimensions; the source-transfer model instead describes these correlations through states on $C^*$-algebras and completely positive maps composed recursively along the tree.
We prove the equivalence by reconstructing local measurements as Radon--Nikodym derivatives in the spatial tensor products prescribed by the network.
Building on this characterization, we establish two convergent outer hierarchies of semidefinite programs (SDPs): a new \emph{source-transfer hierarchy} and the \emph{inflation-NPA hierarchy}.
For the source-transfer hierarchy, we extend noncommutative real algebraic geometry to \emph{multi-state polynomials with completely positive maps}, prove a reconstruction theorem establishing its convergence, and give a sufficient stopping criterion for extracting finite-dimensional realizations.
For the inflation-NPA hierarchy, we reconstruct source-transfer realizations from weak-operator limits of averages over source copies produced by quantum inflation.
Consequently, the game-value promise problem for quantum tree networks in the mixed model is $\mathsf{coRE}$-complete.
\end{abstract}

\par\addvspace{1.2\baselineskip}
\noindent\makebox[\textwidth][c]{%
\begin{minipage}{0.9\textwidth}
\footnotesize
\noindent\textit{2020 Mathematics Subject Classification.}
Primary 81P45, 46L07; Secondary 81P40, 90C22, 13J30.
\par\smallskip
\noindent\textit{Key words and phrases.}
Quantum correlations, quantum tree networks, mixed quantum model, source-transfer model, quantum inflation, semidefinite programming hierarchies, noncommutative polynomial optimization, completely positive maps.
\par
\end{minipage}%
}
\par\addvspace{1.2\baselineskip}

\tableofcontents

\section{Introduction}
Understanding what quantum systems can and cannot achieve is a central goal of quantum information theory.
For information processing protocols in a quantum network, these possibilities depend on how the parties are connected and which quantum sources they share.
Here, we characterize the correlations achievable when separated parties perform local measurements on systems distributed by independent quantum sources in a prescribed network structure (\zcref{fig:IntroductionNetworkScope}).
This is a first step towards understanding more general quantum causal structures that arise in tasks such as distributed quantum computation.
In a device-independent setting, the source states and measurement operators are unspecified, so this characterization provides bounds valid for every choice consistent with the network structure~\cite{Lee2018TowardsDeviceIndependent}.

Bell scenarios, in which all parties share a single quantum source (\zcref{fig:IntroductionScopeBell}), provide the canonical setting for studying quantum nonlocality~\cite{brunner2014bell}.
The standard tensor-product model assigns separate Hilbert spaces to the parties, whereas the commuting-operator model expresses locality through commutation relations between their measurements.
These axiomatizations agree in finite dimensions, but the commuting-operator model is strictly more general: it admits correlations outside even the closure of finite-dimensional tensor-product correlations~\cite{ji2021mip}.

\begin{figure}[!tbp]
    \centering
    \begingroup
    \tikzset{
        scope party/.style={draw, rectangle, minimum size=6mm, inner sep=0pt, fill=white, font=\small},
        scope source/.style={draw, circle, minimum size=6mm, inner sep=0pt, fill=black!7, font=\small},
        scope arrow/.style={-{Stealth[length=1.5mm]}, line width=0.85pt},
        scope message/.style={scope arrow, draw=blue!55!black},
        scope memory/.style={line width=0.55pt, draw=black!45}
    }
    \begin{subfigure}[t]{0.48\linewidth}
        \centering
        \begin{tikzpicture}[line width=0.55pt]
            \path[use as bounding box] (-3.10,-1.55) rectangle (3.10,1.15);
            \begin{scope}[scale=1.12]
            \node[scope source] (s) at (0,-0.45) {$\alpha$};
            \node[scope party] (a) at (0,0.75) {$A$};
            \node[scope party] (b) at (-1.03923,-1.05) {$B$};
            \node[scope party] (c) at (1.03923,-1.05) {$C$};
            \draw (s)--(a) (s)--(b) (s)--(c);
            \end{scope}
        \end{tikzpicture}
        \caption{Bell scenarios}
        \label{fig:IntroductionScopeBell}
        \medskip
        \begin{tikzpicture}[line width=0.55pt]
            \path[use as bounding box] (-3.10,-5.00) rectangle (3.10,-1.45);
            \begin{scope}[shift={(0,-2.85)}]
                \draw[black, dashed, line width=0.7pt] (0,0) ellipse (1.98 and 1.15);
                \filldraw[draw=black, fill=black!4] (0,0) ellipse (1.55 and 0.80);
                \filldraw[draw=black, fill=black!12] (0,-0.31) ellipse (1.15 and 0.49);
                \node[font=\small] at (0,0.46) {$C_{\mathrm{qc}}$};
                \node[font=\small] at (0,-0.31) {$C_{\mathrm{qa}}$};
                \node[font=\footnotesize, align=center] at (0,-1.80) {NPA hierarchy\\converges from outside};
                \draw[scope arrow, black] (0,1.15)--(0,0.86);
                \draw[scope arrow, black] (0,-1.15)--(0,-0.86);
            \end{scope}
        \end{tikzpicture}
    \end{subfigure}\hfill
    \begin{subfigure}[t]{0.48\linewidth}
        \centering
        \begin{tikzpicture}[line width=0.55pt]
            \path[use as bounding box] (-3.10,-1.55) rectangle (3.10,1.15);
            \begin{scope}[xscale=1.40, yscale=1.12]
            \node[scope party] (a) at (-1.95,-0.70) {$A$};
            \node[scope source] (ab) at (-1.30,-0.70) {$\alpha$};
            \node[scope party] (b) at (-0.65,-0.70) {$B$};
            \node[scope source] (bce) at (0,-0.70) {$\beta$};
            \node[scope party] (c) at (0.65,-0.70) {$C$};
            \node[scope source] (cd) at (1.30,-0.70) {$\gamma$};
            \node[scope party] (d) at (1.95,-0.70) {$D$};
            \node[scope party] (e) at (0,0.75) {$E$};
            \draw (a)--(ab)--(b)--(bce)--(c)--(cd)--(d) (bce)--(e);
            \end{scope}
        \end{tikzpicture}
        \caption{Tree networks: this work}
        \label{fig:IntroductionScopeTree}
        \medskip
        \begin{tikzpicture}[line width=0.55pt]
            \path[use as bounding box] (-3.10,-5.00) rectangle (3.10,-1.45);
            \begin{scope}[shift={(0,-2.85)}]
                \begin{scope}[xscale=1.08, yscale=0.85, xshift=0.24cm]
                \draw[black, dashed, line width=0.7pt] (1.48,0.88) .. controls (0.12,1.80) and (-2.04,1.42) .. (-2.04,0) .. controls (-2.04,-1.42) and (0.12,-1.80) .. (1.48,-0.88) -- (0.42,-0.035) .. controls (-0.04,-0.22) and (-0.08,-0.16) .. (-0.08,0) .. controls (-0.08,0.16) and (-0.04,0.22) .. (0.42,0.035) -- cycle;
                \filldraw[draw=black, fill=black!4] (0.98,0.67) .. controls (0.05,1.20) and (-1.46,0.93) .. (-1.46,0) .. controls (-1.46,-0.93) and (0.05,-1.20) .. (0.98,-0.67) -- (0.48,-0.33) .. controls (-0.14,-0.65) and (-0.48,-0.40) .. (-0.48,0) .. controls (-0.48,0.40) and (-0.14,0.65) .. (0.48,0.33) -- cycle;
                \node[font=\small] at (-1.05,0) {$C_{\mathrm{mix}}$};
                \end{scope}
                \draw[scope arrow, black] (0,1.15)--(0,0.86);
                \draw[scope arrow, black] (0,-1.15)--(0,-0.86);
                \node[font=\footnotesize, align=center] at (0,-1.80) {Source-transfer and inflation-NPA\\both converge from outside};
            \end{scope}
        \end{tikzpicture}
    \end{subfigure}
    \endgroup
    \caption{From Bell scenarios to quantum tree networks.
    Squares denote parties and circles denote independent sources.
    Dashed contours illustrate outer approximations, with arrows indicating convergence.
    In (a), the NPA hierarchy converges from outside to the commuting-operator set $C_{\mathrm{qc}}$, which in general strictly contains $C_{\mathrm{qa}}$, the closure of finite-dimensional tensor-product correlations.
    In (b), the source-transfer and inflation-NPA hierarchies both converge from outside to the nonconvex mixed-model correlation set $C_{\mathrm{mix}}$ on tree networks.
    }
    \label{fig:IntroductionNetworkScope}
\end{figure}

This distinction raises a fundamental computational question: which model admits a systematic characterization through inner or outer approximations?
The $\mathsf{RE}$-completeness of the tensor-product model and the $\mathsf{coRE}$-completeness of the commuting-operator model rule out general algorithms producing convergent outer approximations of tensor-product correlations or convergent inner approximations of commuting-operator correlations~\cite{ji2021mip,lin2025mipcocore}.
Inner approximations establish achievable performance and can demonstrate quantum advantages over classical strategies, while outer approximations establish limits valid for every strategy allowed by the model.
To bound quantum performance and adversarial capabilities in device-independent protocols, the commuting-operator model is therefore a natural target for complete outer hierarchies~\cite{Pironio2010RandomNumber,Primaatmaja2023securityofdevice}.

The Navascu\'es--Pironio--Ac\'in (NPA) hierarchy provides such a convergent outer approximation~\cite{navascues2008convergent}.
It is an instance of noncommutative polynomial optimization, in which positivity and algebraic relations among measurement operators are tested through a hierarchy of semidefinite programs (SDPs)~\cite{pironio2010convergent}.
The hierarchy is \emph{outer}: every correlation in the model passes every level.
It is also \emph{convergent}: every correlation outside the model is rejected at some finite level.
We call a hierarchy \emph{complete} when it has both properties, as illustrated by \zcref{fig:IntroductionScopeBell}.
Thus, the NPA hierarchy is complete and provides rigorous, arbitrarily tight bounds on optimal performance in the commuting-operator model of Bell scenarios.

Quantum networks with several independent sources generalize the Bell setting of a single shared source~\cite{tavakoli2022bellnetworks}.
The natural next question is which axiomatizations of measurement locality and source independence admit complete outer approximations that characterize network correlations and bound their performance.
Such bounds have applications to device-independent cryptography, randomness generation, and the certification of entangled states and measurements in networks~\cite{wolfe2021quantum,minati2025randomness,supic2023networkselftest,Renou2018SelftestingEntangled}.
More broadly, SDP hierarchies also bound the performance of quantum provers in cryptographically compiled nonlocal games~\cite{klep2026quantitativequantumsoundnessbipartite,baroni2026quantitativequantumsoundnessmultipartite}.
Quantum networks also reveal foundational distinctions, such as separations between real and complex quantum theory and between fermionic and bosonic information theory~\cite{Renou2021QuantumTheory,kalarde2026fermionsfundamentallynonlocalbosons}.
Questions about correlations permitted by locality and causal constraints also arise in distributed quantum computing and in online and dynamic graph models~\cite{balliu2024distributedquantumadvantagelocal,akbari2026onlinelocalitymeetsdistributed}.

The main difficulty in extending complete outer hierarchies to quantum networks is to enforce source independence.
The network problem is generally nonconvex (\zcref{fig:IntroductionScopeTree}): a mixture of two valid quantum network strategies need not be a valid strategy for the same network~\cite{tavakoli2022bellnetworks,renou2019limits}.
The original NPA hierarchy does not enforce source independence and is therefore too weak for this task.
The challenge is to identify additional constraints that can be imposed in finite SDP relaxations and prove that their limiting solutions reconstruct realizations with the required network structure.

For correlations generated by independent classical sources in a network, inflation provides a complete hierarchy of linear-programming relaxations based on source duplications~\cite{navascues2020inflation}.
Its quantum counterpart---quantum inflation and the associated inflation-NPA hierarchy~\cite{wolfe2021quantum}---provides outer relaxations, but convergence remains a major open question for general quantum networks.
The authors of~\cite{ligthart2023convergent} proved convergence of a modified inflation-NPA hierarchy with additional bounds on operator Schmidt ranks and norms of local operators.
These bounds, however, mean that the resulting relaxations are no longer outer for the unrestricted model.

Prior to this work, completeness of inflation-NPA and two alternative SDP hierarchies had been established for bilocal networks, where three parties are connected by two independent sources~\cite{ligthart2023inflation,renou2026two}, with extensions to star networks, where independent bipartite sources connect a central party to the other parties~\cite[Corollary~3.3.17]{ligthart2024semidefinite}.
A key step in these completeness proofs is an operator-algebraic reconstruction that ``splits'' the central party into subsystems associated with its connected sources~\cite{ligthart2023inflation,renou2026two,xu2023quantum}.
However, already on the four-party line in \zcref{fig:IntroductionLine}, the same argument does not guarantee consistent splittings of the two internal parties $2$ and $3$.
Extending the above results therefore requires new tools to make these local splittings consistent across the network.

In this work, we develop these tools and resolve the problem for all quantum tree networks, that is, connected networks whose graph of sources and measurement parties contains no cycles (\zcref{fig:IntroductionScopeTree}).
Central to our contribution is a new \emph{source-transfer model}, which we prove equivalent to the \emph{mixed quantum model} on every such tree network.
The model is called \emph{mixed} because it uses tensor products of states to describe source independence and commuting observable algebras to describe measurement locality; in finite dimensions, it is equivalent to the standard tensor-product model.
The source-transfer model describes the same correlations through states on source $C^*$-algebras and completely positive maps composed recursively along the tree, without the source duplication used by quantum inflation.
We prove the equivalence by reconstructing local measurements as Radon--Nikodym derivatives in the spatial tensor products prescribed by the network.
Building on this equivalence, we establish \emph{two complete SDP hierarchies}.
For the first, we extend noncommutative real algebraic geometry to a framework for jointly optimizing over measurement operators, states on different algebras, and completely positive maps between them, subject to the network structure.
This leads to a new \emph{source-transfer hierarchy} based on \emph{multi-state polynomials with completely positive maps}.
We prove a reconstruction theorem establishing its completeness and give a sufficient stopping criterion for extracting finite-dimensional realizations.
The second hierarchy is inflation-NPA itself: we resolve its convergence for all quantum tree networks by reconstructing source-transfer realizations.
A direct computational consequence is that the game-value promise problem remains $\mathsf{coRE}$-complete for quantum tree networks in the mixed model.

\subsection{Mixed and source-transfer models for quantum tree networks}\label{sec:IntroductionModels}
To obtain device-independent bounds valid for systems of unrestricted dimension, including infinite-dimensional realizations, we must specify the network model to which our outer approximations converge.
We formulate the \emph{mixed quantum model} for general networks, extending the bilocal formulation of~\cite[Corollary~7]{ligthart2023inflation}, and then introduce the \emph{source-transfer model}.

To illustrate the two models, we consider the line-of-four network $1-\alpha-2-\beta-3-\gamma-4$ in \zcref{fig:IntroductionLine}.
We describe two models for correlations $p(\mathbf{a}|\mathbf{x})$ arising in this network, where $\mathbf{a}=(a_1,a_2,a_3,a_4)$ and $\mathbf{x}=(x_1,x_2,x_3,x_4)$ denote the outcomes and measurement settings.

\begin{figure}[htbp]
    \centering
    \begin{tikzpicture}[
        party/.style={draw,rectangle,minimum size=6mm,fill=white},
        root/.style={party,fill=blue!8},
        source/.style={draw,circle,minimum size=6mm,inner sep=1pt},
        every node/.style={font=\small},
        annotation/.style={font=\footnotesize,align=center},
        transfer/.style={-{Stealth[length=1.8mm]}}
    ]
        \node[annotation,anchor=east] at (-0.55,0) {Mixed\\model};
        \node[party] (p1) at (0,0) {$1$};
        \node[source] (a) at (1.6,0) {$\alpha$};
        \node[party] (p2) at (3.2,0) {$2$};
        \node[source] (b) at (4.8,0) {$\beta$};
        \node[party] (p3) at (6.4,0) {$3$};
        \node[source] (c) at (8.0,0) {$\gamma$};
        \node[party] (p4) at (9.6,0) {$4$};
        \draw (p1)--(a)--(p2)--(b)--(p3)--(c)--(p4);
        \node[annotation,above=4pt of a] {$\cH_\alpha,\ket{\Omega_\alpha}$};
        \node[annotation,above=4pt of b] {$\cH_\beta,\ket{\Omega_\beta}$};
        \node[annotation,above=4pt of c] {$\cH_\gamma,\ket{\Omega_\gamma}$};
        \node[annotation,below=4pt of p1] {$\Meas{1}{a_1}{x_1}$};
        \node[annotation,below=4pt of p2] {$\Meas{2}{a_2}{x_2}$};
        \node[annotation,below=4pt of p3] {$\Meas{3}{a_3}{x_3}$};
        \node[annotation,below=4pt of p4] {$\Meas{4}{a_4}{x_4}$};

        \node[annotation,anchor=east] at (-0.55,-2.7) {Source-transfer\\model};
        \node[party] (s1) at (0,-2.7) {$1$};
        \node[source] (sa) at (1.6,-2.7) {$\alpha$};
        \node[party] (s2) at (3.2,-2.7) {$2$};
        \node[source] (sb) at (4.8,-2.7) {$\beta$};
        \node[party] (s3) at (6.4,-2.7) {$3$};
        \node[source] (sc) at (8.0,-2.7) {$\gamma$};
        \node[root] (s4) at (9.6,-2.7) {$4$};
        \draw[transfer] (s1)--(sa);
        \draw[transfer] (sa)--(s2);
        \draw[transfer] (s2)--(sb);
        \draw[transfer] (sb)--(s3);
        \draw[transfer] (s3)--(sc);
        \draw[transfer] (sc)--(s4);
        \node[annotation,above=4pt of sa] {$\cA_\alpha^1,\omega_\alpha$};
        \node[annotation,above=4pt of sb] {$\cA_\beta^2,\omega_\beta$};
        \node[annotation,above=4pt of sc] {$\cA_\gamma^3,\omega_\gamma$};
        \node[annotation,below=4pt of s1] {$\meas{1}{a_1}{x_1}$};
        \node[annotation,below=4pt of s2] {$\instr{2}{a_2}{x_2}$};
        \node[annotation,below=4pt of s3] {$\instr{3}{a_3}{x_3}$};
        \node[annotation,below=4pt of s4] {$\rstate{4}{a_4}{x_4}$};
    \end{tikzpicture}
    \caption{Two quantum models for the line-of-four network.
    Squares denote parties and circles denote independent sources.
    In the mixed model, local measurements act on the source systems.
    In the source-transfer model, arrows indicate the recursive evaluation of the correlation from party $1$ towards the root at party $4$.}
    \label{fig:IntroductionLine}
\end{figure}

In the mixed quantum model, each source $\delta\in\{\alpha,\beta,\gamma\}$ has a Hilbert space $\cH_\delta$ and a unit vector $\ket{\Omega_\delta}$.
Source independence is expressed by the product vector $\ket{\Omega_\alpha}\otimes\ket{\Omega_\beta}\otimes\ket{\Omega_\gamma}$.
The two parties receiving each source have von Neumann algebras $\cM_\alpha^1,\cM_\alpha^2\subset\Bof{\cH_\alpha}$, $\cM_\beta^2,\cM_\beta^3\subset\Bof{\cH_\beta}$, and $\cM_\gamma^3,\cM_\gamma^4\subset\Bof{\cH_\gamma}$, respectively, satisfying
\begin{align}\label{eq:IntroductionMixedLocality}
    [\cM_\alpha^1,\cM_\alpha^2]=[\cM_\beta^2,\cM_\beta^3]=[\cM_\gamma^3,\cM_\gamma^4]=0.
\end{align}
Each party measures the systems from its neighboring sources with positive operator-valued measure (POVM) effects
\begin{equation}\label{eq:IntroductionMixedMeasurements}
    \begin{aligned}
        \Meas{1}{a_1}{x_1}&\in\cM_\alpha^1, & \Meas{2}{a_2}{x_2}&\in\cM_\alpha^2\vnotimes\cM_\beta^2,\\
        \Meas{3}{a_3}{x_3}&\in\cM_\beta^3\vnotimes\cM_\gamma^3, & \Meas{4}{a_4}{x_4}&\in\cM_\gamma^4,
    \end{aligned}
\end{equation}
where $\vnotimes$ denotes the spatial tensor product of von Neumann algebras.
We regard each POVM effect as an operator on $\cH_\alpha\otimes\cH_\beta\otimes\cH_\gamma$ by tensoring with identities on the remaining source spaces.
Born's rule gives
\begin{align}\label{eq:IntroductionMixedCorrelation}
    p(\mathbf{a}\mid\mathbf{x})=\sandwich{\Omega_\alpha\otimes\Omega_\beta\otimes\Omega_\gamma}{\prod_{v=1}^{4}\Meas{v}{a_v}{x_v}}{\Omega_\alpha\otimes\Omega_\beta\otimes\Omega_\gamma}.
\end{align}

The source-transfer model describes the same correlation recursively through abstract $C^*$-algebras, states, and completely positive maps, without specifying Hilbert-space representations.
Choosing party $4$ as the root, we evaluate the correlation by transferring party $1$'s measurement operator from source $\alpha$ to $\beta$ through the map at party $2$, and then to $\gamma$ through the map at party $3$.
The three sources are now described by unital $C^*$-algebras $\cA_\alpha^1$, $\cA_\beta^2$, and $\cA_\gamma^3$, with states $\omega_\alpha$, $\omega_\beta$, and $\omega_\gamma$.
A leaf POVM $\meas{1}{a_1}{x_1}\in\cA_\alpha^1$ is followed by completely positive maps and a positive root functional:
\begin{equation}\label{eq:IntroductionTransferTypes}
    \begin{aligned}
        \instr{2}{a_2}{x_2}:\cA_\alpha^1\longrightarrow\cA_\beta^2,\qquad \instr{3}{a_3}{x_3}:\cA_\beta^2\longrightarrow\cA_\gamma^3, \qquad \rstate{4}{a_4}{x_4}:\cA_\gamma^3\longrightarrow\mathbb{C}.
    \end{aligned}
\end{equation}
They recover the correlation by
\begin{align}\label{eq:IntroductionLineTransfer}
    p(\mathbf{a}|\mathbf{x})=\rstate{4}{a_4}{x_4}\!\left(\instr{3}{a_3}{x_3}\!\left(\instr{2}{a_2}{x_2}\!\left(\meas{1}{a_1}{x_1}\right)\right)\right).
\end{align}

Source independence requires
\begin{equation}\label{eq:IntroductionSourceReplacer}
    \begin{aligned}
        \sum_{a_2}\instr{2}{a_2}{x_2}(q)&=\omega_\alpha(q)\,\id_{\cA_\beta^2}, &&q\in\cA_\alpha^1,\\
        \sum_{a_3}\instr{3}{a_3}{x_3}(q)&=\omega_\beta(q)\,\id_{\cA_\gamma^3}, &&q\in\cA_\beta^2,\\
        \sum_{a_4}\rstate{4}{a_4}{x_4}&=\omega_\gamma.
    \end{aligned}
\end{equation}
We call these identities \emph{source-replacer conditions}.
In the Schr\"{o}dinger picture, summing over a party's outcomes gives a channel that discards the input state and prepares the fixed source state.

The general definitions of the mixed and source-transfer models are given in \zcref{def:MixedQuantumModelSourcePartyGraph,def:SourceTransferModelTree}, respectively.
Further examples of mixed and source-transfer realizations are worked out in \zcref{sec:SimpleExample}.
We prove that the two models give the same set of correlations on every source-party tree (\zcref{thm:SourceTransferTreeEquivalence}); this set is compact (\zcref{rem:MixedTreeCompactness}).
In finite dimensions, both are equivalent to the standard tensor-product model of quantum information (\zcref{rem:FiniteDimensionalTransferToMixed}).
Note that a characterization based on measurement-algebra factorization alone~\cite{PozasKerstjens2019Bounding,yang2026mutuallycommutingvonneumannalgebra} is not sufficient on tree networks; see \zcref{sec:AppendixWhySourceTransfer} for a counterexample already on the line-of-four network.

\subsection{Main results}\label{sec:IntroductionResults}
We now state the results for an arbitrary finite quantum tree network.
The source-transfer characterization yields two complete SDP hierarchies: the \emph{source-transfer hierarchy}, which directly imposes polynomial constraints on states and transfer maps, and the \emph{inflation-NPA hierarchy}, which uses quantum inflation.
Their constructions are introduced in \zcref{sec:IntroductionMultiState,sec:IntroductionInflation}, respectively.
The following statement summarizes \zcref{thm:SourceTransferTreeEquivalence,thm:SourceTransferSDPConvergence,thm:InflationSourceTransferTreeEquivalence,cor:MixedTreeSDPConvergence,cor:InflationNPATreeConvergence,rem:FiniteDimensionalTransferToMixed,thm:SourceTransferFlatExtraction,cor:MixedTreeGameComputability}.

\begin{maintheorem}[Informal]\label{thm:IntroductionCompleteness}
Let $\Tree$ be a finite quantum tree network with finite setting and outcome sets.

\thmpart{Equivalence and completeness.}
The source-transfer hierarchy and the inflation-NPA hierarchy are complete for the mixed model on $\Tree$.
Specifically, for a correlation $p$, the following are equivalent:
\begin{enumerate}[label=(\roman*)]
    \item $p$ admits a mixed quantum realization on $\Tree$ as in \zcref{def:MixedQuantumModelSourcePartyGraph};
    \item $p$ admits a source-transfer realization on $(\Tree,r)$ for every root party $r$, as in \zcref{def:SourceTransferModelTree};
    \item $p$ passes every level of the source-transfer SDP hierarchy on $(\Tree,r)$ for every root party $r$, as in \zcref{def:SourceTransferSDPRelaxation};
    \item $p$ passes every level of the inflation-NPA hierarchy on $\Tree$ as in \zcref{def:InflationNPATest}.
\end{enumerate}

\thmpart{Finite-dimensional realizations and extraction.}
\begin{enumerate}[label=(\roman*)]
    \item In finite dimensions, the mixed, source-transfer, and tensor-product models give the same correlations on $\Tree$.
    \item From a finite-level feasible solution of the source-transfer SDP hierarchy satisfying the stopping criterion of \zcref{thm:SourceTransferFlatExtraction}, we can extract a finite-dimensional realization in each of the models in (i), reproducing the prescribed correlations.
\end{enumerate}

\thmpart{Optimization.}
Let $f$ be a real polynomial in the correlation entries, and write $f(\cR)$ for its value on the correlation generated by a mixed realization $\cR$ on $\Tree$.
For either hierarchy, let $\beta_d^{\min}(f)$ and $\beta_d^{\max}(f)$ denote its lower and upper SDP bounds for this objective, as the level increases through the hierarchy, with a fixed root $r$ in the source-transfer case.
Then
\begin{equation}\label{eq:IntroductionOptimization}
    \begin{aligned}
        \beta_1^{\min}(f)\leq\beta_2^{\min}(f)\leq\cdots&\nearrow\inf_{\cR}f(\cR),\\
        \beta_1^{\max}(f)\geq\beta_2^{\max}(f)\geq\cdots&\searrow\sup_{\cR}f(\cR),
    \end{aligned}
\end{equation}
where $\cR$ ranges over mixed realizations on $\Tree$.

For the source-transfer hierarchy, an optimal finite-level solution satisfying the stopping criterion for $f$ yields an optimal finite-dimensional realization in each of the models above.

\thmpart{Computability.}
Consequently, given a finite tree network and a game, the promise problem
\begin{equation}\label{eq:IntroductionGamePromise}
    \begin{aligned}
        \mathrm{YES}:\quad &\text{optimal mixed-model winning probability}=1,\\
        \mathrm{NO}:\quad &\text{optimal mixed-model winning probability}\leq\frac{1}{2},
    \end{aligned}
\end{equation}
is $\mathsf{coRE}$-complete.
\end{maintheorem}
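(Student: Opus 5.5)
This informal theorem only summarizes results proved later, so the plan is to set up the chain of implications (i)$\Rightarrow$(ii)$\Rightarrow$(iii)$\Rightarrow$(ii)$\Rightarrow$(i), together with the inflation-NPA equivalence (i)$\Leftrightarrow$(iv). Everything else (finite dimensions, extraction, optimization, computability) is then read off from these equivalences. I fix a root $r$ and argue by induction on the tree, processing parties from the leaves towards $r$.

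\emph{Mixed to source-transfer.} Take a mixed realization. For each source $\delta$ adjacent to a party $v$ on the path towards $r$, let $\cA^v_\delta$ be the von Neumann algebra $\cM^v_\delta$ and let $\omega_\delta$ be the vector state of $\ket{\Omega_\delta}$. Let $\delta'$ be the source through which $v$ receives from its child side. The instrument $\instr{v}{a}{x}$ sends $q\in\cM^{u}_{\delta'}$ to the partial expectation of $M^v_{a\mid x}(q\otimes\id)$ over $\delta'$, taken in the state $\Omega_{\delta'}$. Here $u$ is the neighbouring party on the child side of $\delta'$, and in general the input is a product over several child sources. Commutation of $\cM^u_{\delta'}$ with $\cM^v_{\delta'}$ makes this map completely positive. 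The source-replacer identities follow from $\sum_a M^v_{a\mid x}=\id$. This direction is routine.

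\emph{Source-transfer to mixed.} This is the main obstacle. I would GNS-represent each $(\cA_\delta,\omega_\delta)$ and, by induction, build commuting von Neumann algebras on $\cH_\delta$. The measurement of party $v$ is obtained as a Radon--Nikodym derivative. The functional $q_1\otimes\cdots\otimes q_k\otimes s\mapsto \omega_{\delta}\bigl(s\,\instr{v}{a}{x}(q_1\otimes\cdots)\bigr)$ is dominated by the product state, by the source-replacer condition, and positivity gives a bounded positive operator. That operator must be placed in the commutant of the neighbours' algebras, inside the spatial tensor product $\vnbigotimes_\delta \cM^v_\delta$. The delicate point is to make these splittings consistent across internal parties, which is what the recursive transfer structure is designed for. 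I would take $\cM^v_\delta$ to be the commutant of the algebra generated on the other side of the source, and check that the derivatives sum to $\id$.

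\emph{SDP hierarchy and inflation.} Soundness, (ii)$\Rightarrow$(iii), is immediate, because the moment matrices of a genuine realization are feasible at every level. For convergence, (iii)$\Rightarrow$(ii), I would take a sequence of feasible moment data at increasing levels. Archimedean bounds on the quadratic module make the moments uniformly bounded, so a diagonal subsequence converges pointwise. A GNS-type reconstruction for multi-state polynomials with CP maps then yields states and completely positive maps satisfying the source-replacer relations in the limit.

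For inflation, soundness follows from copying sources. For convergence, take limits of feasible inflation moments and average over the source copies. Weak-operator limits of these averages, in de Finetti spirit, define conditional expectations that give source-transfer maps, so (iv)$\Rightarrow$(ii).

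\emph{Consequences.} Monotone convergence of $\beta_d$ follows from nestedness and convergence, using compactness of the correlation set (\zcref{rem:MixedTreeCompactness}). Flat extension gives the finite-dimensional extraction. For $\mathsf{coRE}$-completeness, membership comes from running the convergent outer hierarchy. Hardness reduces from the Bell (star) case, which is a special tree, together with the known $\mathsf{coRE}$-hardness there.
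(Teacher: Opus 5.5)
Your chain of implications matches the paper's, but two ideas that carry the actual proofs are missing.

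The first is the step that handles nonconvexity, and without it three of your arguments break. In (iii)$\Rightarrow$(ii), the pointwise limit $L$ of the level-$d$ functionals is only a normalized functional on the commutative algebra $\thinS$ that is nonnegative on the archimedean quadratic module. In general it is a \emph{mixture} of realizations. Source independence enters through products of scalar symbols, e.g.\ $\sum_{a_2}\varsigma_\beta\bigl(w_1^*\formalInstr{2}{a_2}{x_2}(q)w_2\bigr)=\varsigma_\alpha(q)\,\varsigma_\beta(w_1^*w_2)$. Since $L$ is not multiplicative, a GNS reconstruction with $\omega_\alpha\coloneqq L\circ\varsigma_\alpha$ does not yield $\sum_{a_2}\instr{2}{a_2}{x_2}(q)=\omega_\alpha(q)I$. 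The paper inserts the Kadison--Dubois representation $L=\int K\,\mathrm{d}\mu(K)$ over characters $K$ nonnegative on $\fQ(\Tree,r,p)$. Because the correlation constraints are imposed with all multipliers $s_1^*(\cdot)s_2$, every such $K$ reproduces $p$, and the reconstruction is carried out from a single character.

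The same decomposition proves the optimization claim: $\lim_d\beta_d^{\min}(f)=L(f)=\int f(\cR_K)\,\mathrm{d}\mu(K)\geq\inf_{\cR}f(\cR)$. This does not follow from the membership equivalence plus compactness, because a near-optimal $L_d$ is not attached to any one correlation and $f$ is nonlinear. It is also why the stopping criterion needs the rank-one condition on the scalar moment matrix, not just flatness.

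Inflation needs the analogous step. For an arbitrary $\Gamma$-invariant limit state, $\tfrac1m\sum_{i=1}^m\pi(q^{(i)})\ket{\Omega}\to P\,\pi(q)\ket{\Omega}$, where $P$ is the projection onto permutation-invariant vectors. So the limits of your averages need not be scalar (presumably the ``conditional expectations'' you mention), and the source-replacer condition fails. You must first use de Finetti together with $\int(\varphi(D_e^{(1)})-p(e))^2\,\mathrm{d}\mu(\varphi)=0$ to select a component that factorizes on disjoint source labels and still reproduces $p$. For that component $P=\ketbra{\Omega}{\Omega}$, and the averages converge strongly to $\omega(q)I$.

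The second gap is in (ii)$\Rightarrow$(i): you assert rather than derive that the derivative lies in the right algebra, and this is the crux. To make your scalar functional $q\otimes s\mapsto\omega_{\Parent{v}}\bigl(s\,\instr{v}{a_v}{x_v}(q)\bigr)$ positive, you need $s$ to commute with the range of $\instr{v}{a_v}{x_v}$, i.e.\ $s\in\pi_{\Parent{v}}(\cA_{\Parent{v}}^v)'$. The dominating product state then has GNS space $\InHilb{v}\otimes\overline{\pi_{\Parent{v}}(\cA_{\Parent{v}}^v)'\ket{\Omega_{\Parent{v}}}}$. This can be much smaller than $\InHilb{v}\otimes\cH_{\Parent{v}}$; it is just $\InHilb{v}\otimes\mathbb{C}\ket{\Omega_{\Parent{v}}}$ if $\pi_{\Parent{v}}$ is irreducible and $v$ is the only child of $\Parent{v}$. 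So by itself it gives neither a POVM on the full product space nor the operator identity $\pi_{\Parent{v}}\bigl(\instr{v}{a_v}{x_v}(q)\bigr)=\sandwich{\InVec{v}}{F_{a_v}\bigl(\InRep{v}(q)\otimes I\bigr)}{\InVec{v}}$ used in the recursion.

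The paper instead applies Arveson's Radon--Nikodym theorem to the CP maps $\pi_{\Parent{v}}\circ\instr{v}{a_v}{x_v}$. These are dominated by the replacer channel $q\mapsto\InState{v}(q)I$, whose minimal Stinespring dilation is explicitly $\bigl(\InHilb{v}\otimes\cH_{\Parent{v}},\InRep{v}\otimes I,\ket{\InVec{v}}\otimes I\bigr)$. This puts $F_{a_v}$ on the full space, in $\InRep{v}(\InAlg{v})'\vnotimes\Bof{\cH_{\Parent{v}}}$, with $\sum_{a_v}F_{a_v}=I$. The range condition $\instr{v}{a_v}{x_v}(\InAlg{v})\subset\cA_{\Parent{v}}^v$ plus minimality then gives $[F_{a_v},I\otimes Z]=0$ for all $Z\in\pi_{\Parent{v}}(\cA_{\Parent{v}}^v)'$. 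Hence $F_{a_v}\in\InRep{v}(\InAlg{v})'\vnotimes\pi_{\Parent{v}}(\cA_{\Parent{v}}^v)''$ by the tensor commutation theorem. The consistent assignment is therefore asymmetric: commutants $\cM_\alpha'$ on child sources, but the generated algebra $\pi_{\Parent{v}}(\cA_{\Parent{v}}^v)''$ on the parent source.

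Finally, ``read off'' does not cover the tensor-product part of the finite-dimensional claim. That needs the finite-dimensional splitting of commuting algebras into tensor factors, applied inductively along the tree.
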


Although both hierarchies characterize the same correlation set in the limit, their finite-level bounds can differ.
The size and strength of the source-transfer relaxations can also depend on the chosen root (\zcref{rem:FiniteLevelRootDependence}).
In addition, we extend the source-transfer characterization to networks in which removing the classical sources leaves a disjoint union of quantum trees, by treating the classical values as additional measurement settings~(\zcref{sec:BeyondTree}).

\subsection{Multi-state polynomials with completely positive maps and their SDP hierarchy}\label{sec:IntroductionMultiState}
We now construct the first of the two hierarchies in \zcref{thm:IntroductionCompleteness}.
The source-transfer model raises a new optimization problem: how can we impose source independence while optimizing jointly over states and completely positive maps?
Existing SDP hierarchies for post-measurement states and quantum instruments characterize convex sets of Bell correlations~\cite{klep2026quantitativequantumsoundnessbipartite,baroni2026quantitativequantumsoundnessmultipartite}, but their constraints do not enforce independence between several sources.
For networks, the source-replacer identities couple the maps to products of independent source-state evaluations, and these additional constraints must survive reconstruction from the SDP data.

We resolve this problem by introducing a new framework of \emph{multi-state polynomials with completely positive maps}, extending noncommutative real algebraic geometry to multiple states linked by completely positive maps.
Going beyond state-polynomial optimization~\cite{klep2024state}, the framework incorporates recursively composed maps between distinct source algebras and polynomial relations among their source states.
We prove a reconstruction theorem that recovers these states and maps simultaneously while preserving complete positivity and source independence (\zcref{lem:SourceTransferCharacterReconstruction}).
This establishes completeness of the resulting source-transfer hierarchy for the nonconvex network models considered here (\zcref{thm:SourceTransferSDPConvergence}).
Furthermore, the stopping criterion in \zcref{thm:SourceTransferFlatExtraction} extracts a finite-dimensional source-transfer realization from a feasible solution at a finite level of the hierarchy.

Returning to the line-of-four example in \zcref{sec:IntroductionModels}, we replace the source algebras by formal $*$-algebras $\fA_\alpha^1$, $\fA_\beta^2$, and $\fA_\gamma^3$.
We introduce formal letters $\meas{1}{a_1}{x_1}$ for the measurement effects at party $1$, which generate $\fA_\alpha^1$.
For each word $q\in\fA_\alpha^1$ and each outcome and setting $(a_2,x_2)$, we introduce a formal symbol $\formalInstr{2}{a_2}{x_2}(q)$.
These symbols generate $\fA_\beta^2$; repeating the construction with symbols $\formalInstr{3}{a_3}{x_3}(q)$ for words $q\in\fA_\beta^2$ generates $\fA_\gamma^3$.
We extend the formal maps linearly and require adjoint preservation.
In particular, the construction includes products of map outputs, such as $\formalInstr{2}{a_2}{x_2}(q)^*\formalInstr{2}{a_2}{x_2}(q)$, as well as compositions of different maps.

For each word $q$ in $\fA_\alpha^1$, $\fA_\beta^2$, or $\fA_\gamma^3$, we introduce a scalar symbol $\varsigma_\delta(q)$ representing its expectation under the corresponding source state $\omega_\delta$, with $\delta=\alpha,\beta,\gamma$, respectively.
Likewise, for each word $q\in\fA_\gamma^3$, we introduce symbols $\rPolystate{a_4}{x_4}(q)$ representing the values of the root functionals $\rstate{4}{a_4}{x_4}$.
These scalar symbols commute and generate the commutative multi-state polynomial $*$-algebra $\thinS$.
The correlation and a source-replacer condition are represented by the following polynomial expressions, where $q\in\fA_\alpha^1$ and $w_1,w_2\in\fA_\beta^2$:
\begin{equation}\label{eq:IntroductionFormalCorrelationSourceReplacer}
    \begin{aligned}
        h_{\mathbf{a}\mid\mathbf{x}}\coloneqq\rPolystate{a_4}{x_4}\!\left(\formalInstr{3}{a_3}{x_3}\!\left(\formalInstr{2}{a_2}{x_2}\!\left(\meas{1}{a_1}{x_1}\right)\right)\right),\\
        \sum_{a_2}\varsigma_\beta\!\left(w_1^*\formalInstr{2}{a_2}{x_2}(q)w_2\right)=\varsigma_\alpha(q)\,\varsigma_\beta(w_1^* w_2).
    \end{aligned}
\end{equation}
We can \emph{evaluate} these expressions on a given source-transfer realization by replacing the formal measurement letters with the realization's measurement effects, $\formalInstr{v}{a_v}{x_v}$ with $\instr{v}{a_v}{x_v}$, $\varsigma_\delta$ with $\omega_\delta$, and $\rPolystate{a_4}{x_4}$ with $\rstate{4}{a_4}{x_4}$.
This recovers $h_{\mathbf{a}|\mathbf{x}}=p(\mathbf{a}|\mathbf{x})$ from \zcref{eq:IntroductionLineTransfer} and the first source-replacer identity in \zcref{eq:IntroductionSourceReplacer}, tested between $w_1^*$ and $w_2$ in the state $\omega_\beta$.

This polynomial framework gives rise to SDP hierarchies: one optimizes finite-dimensional restrictions of a normalized positive linear functional $L:\thinS\to\mathbb{C}$, subject to constraints encoding the required properties of the measurements, states, and transfer maps.
For instance, complete positivity of $\formalInstr{2}{a_2}{x_2}$ is encoded by constraints of the form
\begin{align}\label{eq:IntroductionPolynomialCPMatrix}
    \left[L\!\left(\varsigma_\beta\!\left(w_i^*\formalInstr{2}{a_2}{x_2}(q_i^*q_j)w_j\right)\right)\right]_{i,j=1}^{n}\succeq0,
\end{align}
for finite families of words $q_i\in\fA_\alpha^1$ and $w_i\in\fA_\beta^2$, indexed by $i=1,\ldots,n$.
Together with normalization, state positivity, the archimedean condition, and source-replacer relations, these constraints yield a finite SDP at each level.
We call this the \emph{source-transfer SDP hierarchy}.
The general polynomial construction and SDP hierarchy are given in \zcref{sec:MultiStatePolyRing,sec:SourceTransferSDPHierarchy}.
Explicit examples of the resulting SDP relaxations are given in \zcref{sec:SimpleExample}.

\subsection{Quantum inflation and the inflation-NPA hierarchy}\label{sec:IntroductionInflation}
The second hierarchy in \zcref{thm:IntroductionCompleteness} is built from quantum inflation.
\emph{Quantum inflation} constrains the correlations achievable in a quantum network through the device-duplication principle~\cite{wolfe2021quantum}.
Suppose a correlation $p$ is compatible with the line-of-four network in the mixed quantum model, with POVM effects $\Meas{v}{a_v}{x_v}$ at parties $v=1,2,3,4$.
Parties $1$ and $4$ measure systems from $\alpha$ and $\gamma$, respectively, while parties $2$ and $3$ perform joint measurements on systems from $(\alpha,\beta)$ and $(\beta,\gamma)$.

We now prepare $m$ independent copies of each source, as illustrated for $m=2$ in \zcref{fig:IntroductionInflationLine}.
Each party can apply its original POVM to any choice of copies of the sources it receives.

\begin{figure}[htbp]
    \centering
    \definecolor{inflationAlpha}{HTML}{0077BB}
    \definecolor{inflationBeta}{HTML}{EE7733}
    \definecolor{inflationGamma}{HTML}{AA3377}
    \begin{tikzpicture}[
        party/.style={draw,rectangle,minimum size=6mm,fill=white},
        source/.style={draw,circle,minimum size=7mm,inner sep=1pt,fill=white},
        every node/.style={font=\small},
        connection/.style={draw=black,-{Stealth[length=1.5mm]}},
        selected/.style={line width=0.9pt,-{Stealth[length=1.7mm]}},
        selectedalpha/.style={selected,draw=inflationAlpha},
        selectedbeta/.style={selected,draw=inflationBeta},
        selectedgamma/.style={selected,draw=inflationGamma}
    ]
        \node[party] (p1) at (0,0) {$1$};
        \node[party] (p2) at (3.2,0) {$2$};
        \node[party] (p3) at (6.4,0) {$3$};
        \node[party] (p4) at (9.6,0) {$4$};
        \node[source] (a1) at (1.6,0.85) {$\alpha_1$};
        \node[source] (a2) at (1.6,-0.85) {$\alpha_2$};
        \node[source] (b1) at (4.8,0.85) {$\beta_1$};
        \node[source] (b2) at (4.8,-0.85) {$\beta_2$};
        \node[source] (c1) at (8.0,0.85) {$\gamma_1$};
        \node[source] (c2) at (8.0,-0.85) {$\gamma_2$};
        \draw[connection] (a1)--(p1);
        \draw[connection] (a2)--(p1);
        \draw[connection] (a2)--(p2);
        \draw[connection] (b1)--(p2);
        \draw[connection] (b1)--(p3);
        \draw[connection] (c2)--(p3);
        \draw[connection] (c1)--(p4);
        \draw[connection] (c2)--(p4);
        \draw[selectedalpha] (a1)--(p2);
        \draw[selectedbeta] (b2)--(p2);
        \draw[selectedbeta] (b2)--(p3);
        \draw[selectedgamma] (c1)--(p3);
        \node[font=\footnotesize] at (0,-1.6) {$E^1_{a_1\mid x_1}(i)$};
        \node[font=\footnotesize] at (3.2,-1.6) {$E^2_{a_2\mid x_2}(\textcolor{inflationAlpha}{1},\textcolor{inflationBeta}{2})$};
        \node[font=\footnotesize] at (6.4,-1.6) {$E^3_{a_3\mid x_3}(\textcolor{inflationBeta}{2},\textcolor{inflationGamma}{1})$};
        \node[font=\footnotesize] at (9.6,-1.6) {$E^4_{a_4\mid x_4}(k)$};
    \end{tikzpicture}
    \caption{Quantum inflation of the line-of-four network with two copies of each source.
    Squares denote parties and circles denote independent source copies.
    Colored indices and edges indicate the source copies on which each measurement acts.}
    \label{fig:IntroductionInflationLine}
\end{figure}

Writing $[m]\coloneqq\{1,\ldots,m\}$, we denote these inflated POVM effects by
\begin{align}\label{eq:IntroductionInflatedLine}
    E^1_{a_1\mid x_1}(i),\qquad E^2_{a_2\mid x_2}(i,j),\qquad E^3_{a_3\mid x_3}(j,k),\qquad E^4_{a_4\mid x_4}(k),\qquad i,j,k\in[m],
\end{align}
where $i,j,k$ label copies of $\alpha,\beta,\gamma$, respectively.
The inflated effects act on the tensor product of all copied source Hilbert spaces.
For example, $E^2_{a_2\mid x_2}(1,2)$ is a copy of $\Meas{2}{a_2}{x_2}$ acting on $\cH_\alpha^{(1)}\otimes\cH_\beta^{(2)}$ and as the identity on the remaining copies.

For each fixed choice of copy indices and measurement setting, the inflated effects form a POVM.
Effects at different parties commute by measurement locality.
At the same party, effects commute whenever they act on disjoint source copies; for example, $[E^2_{a_2\mid x_2}(i,j),E^2_{b_2\mid y_2}(i',j')]=0$ when $i\neq i'$ and $j\neq j'$.

The duplicated source states define a state $\omega_m$ on the algebra generated by the inflated POVM effects.
Let $S_m$ be the permutation group of $[m]$ and $\Gamma_m=S_m\times S_m\times S_m$.
An element $g=(g_\alpha,g_\beta,g_\gamma)\in\Gamma_m$ acts on this algebra by relabeling the copies:
\begin{equation}\label{eq:IntroductionInflationAction}
    \begin{aligned}
        \alpha_g\!\left(E^1_{a_1\mid x_1}(i)\right)&=E^1_{a_1\mid x_1}(g_\alpha(i)),\\
        \alpha_g\!\left(E^2_{a_2\mid x_2}(i,j)\right)&=E^2_{a_2\mid x_2}(g_\alpha(i),g_\beta(j)),\\
        \alpha_g\!\left(E^3_{a_3\mid x_3}(j,k)\right)&=E^3_{a_3\mid x_3}(g_\beta(j),g_\gamma(k)),\\
        \alpha_g\!\left(E^4_{a_4\mid x_4}(k)\right)&=E^4_{a_4\mid x_4}(g_\gamma(k)).
    \end{aligned}
\end{equation}
For each event $e=(\mathbf{a}|\mathbf{x})$ and $t\in[m]$, define the event operator acting on the $t$th copy of every source by
\begin{align}\label{eq:IntroductionInflationDiagonal}
    D_e^{(t)}\coloneqq E^1_{a_1\mid x_1}(t)E^2_{a_2\mid x_2}(t,t)E^3_{a_3\mid x_3}(t,t)E^4_{a_4\mid x_4}(t).
\end{align}
The device-duplication principle implies
\begin{equation}\label{eq:IntroductionInflationConstraints}
    \begin{aligned}
        \omega_m\circ\alpha_g&=\omega_m,\qquad g\in\Gamma_m,\\
        \omega_m\!\left(\prod_{t=1}^{\ell}D_{e_t}^{(t)}\right)&=\prod_{t=1}^{\ell}p(e_t),\qquad 1\leq\ell\leq m,
    \end{aligned}
\end{equation}
for all complete events $e_1,\ldots,e_\ell$.
The first identity holds because permuting identical source copies leaves the state unchanged.
The second follows because the event operators act on disjoint copies of the sources, so independence makes their joint probability factorize.
The general definition of the inflation test is given in \zcref{def:InflationTestPOVM}.

Applying the NPA hierarchy at each fixed inflation level $m$, and then increasing $m$, gives the \emph{inflation-NPA hierarchy}.
The construction above ensures that every mixed-model correlation passes all levels.
To show completeness in \zcref{thm:IntroductionCompleteness}, we prove the converse for all quantum tree networks: every correlation passing all levels admits a mixed-model realization on the given tree (\zcref{cor:InflationNPATreeConvergence}).
Thus, consistency with the device-duplication principle at every level is sufficient to characterize the mixed-model correlations on a tree network.
Note that we use the POVM formulation while the original inflation-NPA is projective; the two agree in the limit (\zcref{rem:InflationPOVMProjective}).

\subsection{Technical outline}\label{sec:IntroductionProofIdeas}
After defining the mixed model in \zcref{sec:SourcePartyGraphMixedModel}, we develop in \zcref{sec:SourceReplacerRadonNikodym} the tools that establish its equivalence with the source-transfer model, which we prove in \zcref{sec:SourceTransferModelMain}.
We then reconstruct source-transfer realizations from the source-transfer and inflation-NPA hierarchies in \zcref{sec:CompleteSDPSourceTransfer,sec:InflationTreeConvergence}, respectively.
This reduces both completeness proofs in \zcref{thm:IntroductionCompleteness} to constructing source states and transfer maps satisfying the source-replacer conditions.

\medskip
\noindent\emph{From transfer maps to mixed-model measurements.}
To prove the model equivalence in \zcref{thm:SourceTransferTreeEquivalence}, we reconstruct mixed-model POVM effects from transfer maps using Arveson's Radon--Nikodym theorem~\cite{arveson1969subalgebras} (see also~\cite[Theorem~III.1]{raginsky2003radon}).
Represent each source algebra on the Gelfand--Naimark--Segal (GNS) space of its state, with weak closures $\mathcal{N}_\alpha$, $\mathcal{N}_\beta$, and $\mathcal{N}_\gamma$.
Under the source-replacer conditions in \zcref{eq:IntroductionSourceReplacer}, \zcref{cor:RadonNikodymSourceProductReplacer} gives POVMs at parties $2$ and $3$ satisfying
\begin{align}\label{eq:IntroductionReconstructedLocality}
    \Meas{2}{a_2}{x_2}\in\mathcal{N}_\alpha'\vnotimes\mathcal{N}_\beta,\qquad \Meas{3}{a_3}{x_3}\in\mathcal{N}_\beta'\vnotimes\mathcal{N}_\gamma,
\end{align}
where the prime denotes the commutant.
Thus, the reconstructed POVMs act on the prescribed sources and satisfy the commutation relations required by the mixed model.
The expectation identities of \zcref{cor:RadonNikodymSourceProductReplacer} recursively recover Born's rule for this model.

\zcref[S]{cor:RadonNikodymSourceProductReplacer} also applies to general trees.
We remark that the localization of the reconstructed POVMs provided by the corollary allows a consistent and simultaneous ``splitting'' of the POVMs on the associated source algebras.
This is the central obstruction to generalizing the bilocal and star-network techniques of~\cite{ligthart2023inflation,ligthart2024semidefinite} to general networks.

\medskip
\noindent\emph{From multi-state SDP solutions to transfer maps.}
A key step in proving completeness of the source-transfer SDP hierarchy in \zcref{sec:MultiStatePolyRing,sec:SourceTransferSDPHierarchy} is reconstructing transfer maps $\instr{v}{a_v}{x_v}$ from the formal symbols $\formalInstr{v}{a_v}{x_v}$.
We express the SDP constraints, including the prescribed correlation, as nonnegativity on a quadratic module $\mathfrak{Q}\subset\thinS$ and prove that $\mathfrak{Q}$ is archimedean (\zcref{sec:QuadraticModuleSDPHierarchy}).
This allows us to extract a subsequence of feasible SDP solutions converging pointwise to a normalized positive functional $L:\thinS\to\mathbb{C}$ nonnegative on $\mathfrak{Q}$.
Applied to $L$, the Kadison--Dubois representation theorem~\cite[Theorem~5.4.4]{marshall2008positive} yields a character $K:\thinS\to\mathbb{C}$, that is, a unital $*$-homomorphism nonnegative on $\mathfrak{Q}$.
The reconstruction in \zcref{lem:SourceTransferCharacterReconstruction} turns $K$ into a source-transfer realization.
For the line-of-four example, the source-state functionals $K\circ\varsigma_\alpha$, $K\circ\varsigma_\beta$, and $K\circ\varsigma_\gamma$ give GNS representations $\pi_\alpha$, $\pi_\beta$, and $\pi_\gamma$ of $\fA_\alpha^1$, $\fA_\beta^2$, and $\fA_\gamma^3$, respectively, leading to source $C^*$-algebras $\cA_\alpha^1$, $\cA_\beta^2$, and $\cA_\gamma^3$.
We define
\begin{align}\label{eq:IntroductionReconstructedTransfer}
    \instr{2}{a_2}{x_2}\!\left(\pi_\alpha(q)\right)\coloneqq\pi_\beta\!\left(\formalInstr{2}{a_2}{x_2}(q)\right),\qquad q\in\fA_\alpha^1.
\end{align}
Due to the source-replacer condition in \zcref{eq:IntroductionFormalCorrelationSourceReplacer}, we prove that this expression is well-defined and extends to a completely positive map $\cA_\alpha^1\to\cA_\beta^2$.
The extension satisfies the source-replacer conditions.
The full reconstruction produces all the data of a source-transfer realization reproducing the prescribed correlation, establishing completeness in \zcref{thm:SourceTransferSDPConvergence}.
We refer to \zcref{sec:SimpleExample} for explicit calculations on concrete examples that may help the reader follow the general proof.

In \zcref{sec:SourceTransferFlatness}, we also establish a sufficient flatness criterion for extracting a finite-dimensional realization from a finite-level solution.
The computational consequence in \zcref{sec:ComputabilitySourceTransferSDP} is $\mathsf{coRE}$-completeness of the mixed-model game-value promise problem on quantum tree networks.
The character reconstruction also yields compactness of the mixed-model correlation set on trees (\zcref{rem:MixedTreeCompactness}).

\medskip
\noindent\emph{From inflation to transfer maps.}
In \zcref{sec:InflationTreeConvergence}, we reconstruct a source-transfer realization for a correlation $p$ compatible with quantum inflation at every level $m$.
Following the de Finetti argument of~\cite{ligthart2023convergent}, we obtain a permutation-invariant state $\omega$ on the algebra generated by infinitely many copies of the inflated measurements, reproducing $p$ and factorizing on disjoint source labels (\zcref{lem:InflationFactorizingState}).
We use $\omega$ to reconstruct the source-transfer model as follows.
For the line-of-four example, work in the GNS representation of $\omega$ on $\cH$, and let $\cA_\alpha^1$ be the $C^*$-algebra generated by $E^1_{a_1\mid x_1}(1)$.
For $q\in\cA_\alpha^1$, write $q^{(i)}$ for its copy with source-$\alpha$ label $i$.
For each fixed source-$\beta$ label $j$, define maps $\cA_\alpha^1\to\Bof{\cH}$ by averaging over the input label $i$:
\begin{align}\label{eq:IntroductionInflationTransferAverage}
    \Phi^{2,j;m}_{a_2\mid x_2}(q)\coloneqq\frac1m\sum_{i=1}^{m}q^{(i)}E^2_{a_2\mid x_2}(i,j).
\end{align}
We show that, along a subsequence as $m\to\infty$, these maps converge pointwise in the weak operator topology to completely positive maps, with the limit at $j=1$ defining $\instr{2}{a_2}{x_2}$.
Summing $\Phi^{2,j;m}_{a_2\mid x_2}(q)$ over $a_2$ gives $m^{-1}\sum_{i=1}^{m}q^{(i)}$, which we show converges strongly to $\omega(q)I$ as $m\to\infty$, recovering the source-replacer condition in \zcref{eq:IntroductionSourceReplacer}.
In \zcref{lem:InflationRecursiveConstruction}, we show that the construction is well-defined with the required commutation and permutation relations, allowing us to iterate it along the tree network.
Finally, we verify that the resulting source-transfer realization reproduces $p$, completing the proof of \zcref{thm:InflationSourceTransferTreeEquivalence}.

\subsection{Discussion and outlook}\label{sec:IntroductionDiscussion}
The source-transfer model we introduce gives an operator-algebraic characterization of mixed-model correlations on every quantum tree network.
Its recursive formulation expresses source independence through states and completely positive maps and allows us to reconstruct local measurements in the tensor products prescribed by the network.
This characterization underlies both the new source-transfer hierarchy, based on multi-state polynomials with completely positive maps, and our completeness proof for inflation-NPA.
In finite dimensions, the mixed, source-transfer, and tensor-product models give the same correlations (\zcref{rem:FiniteDimensionalTransferToMixed}).
The stopping criterion in \zcref{thm:SourceTransferFlatExtraction} connects this equivalence to finite-level solutions of the source-transfer SDP hierarchy: a solution satisfying the stopping criterion yields a finite-dimensional realization, and an optimal solution yields an optimal realization.

The source-transfer hierarchy has two structural advantages: it avoids source duplication and provides a sufficient finite-dimensional extraction stopping criterion.
For inflation-NPA, reconstruction of independent sources relies on a de Finetti theorem as $m\to\infty$, so it remains unclear whether current techniques can yield a sufficient flatness condition at a fixed level $(m,d)$ (\zcref{rem:InflationFlatnessStopping}).

Our results provide strong evidence that the mixed quantum model is a natural target for complete outer hierarchies on general quantum networks.
It keeps the commuting-operator description of measurement locality used in Bell scenarios, while representing source independence by tensor products of source states.
On tree networks, we prove that every correlation passing all levels of either hierarchy admits a realization with both structures.
In particular, inflation-NPA completeness shows that consistency with device duplication at every level suffices to recover such a realization.
The mixed-model game-value promise problem on trees is consequently $\mathsf{coRE}$-complete, retaining the computational classification of its commuting-operator Bell counterpart.

The main open problem is to extend this characterization to general networks containing cycles, such as the triangle network (\zcref{fig:BeyondTreeTriangle}).
Resolving this problem is an important first step towards the rigorous characterization of more general causal models arising from, for example, distributed quantum computing.
The extension in \zcref{sec:BeyondTree} already covers networks whose quantum sources form disjoint trees after removing the classical sources.
Beyond this case, our reconstruction faces a structural difficulty: it uses the independence of source states in disjoint branches of a tree, whereas on a network with cycles those branches can remain connected through the rest of the network.
Hence, a general characterization must account for these connections while preserving source independence.
Completeness of the original inflation-NPA hierarchy for the mixed model on arbitrary networks remains open.

A practical next step is to implement both hierarchies and compare the bounds they give on quantities such as game values for concrete tree networks.
One can also introduce additional constraints at the level of states and completely positive maps for specific applications.
The comparison should include SDP size and the dependence of the source-transfer bounds on the chosen root.
For the stopping criterion, useful improvements would include smaller degree requirements for specific tasks and procedures for extracting states and measurements from numerical solutions.
Symmetries in the network structure and sparsity may also help reduce the computational cost.

Another direction is to incorporate quantum inputs or partial information about the devices, such as trusted input states, specified measurements, or dimension bounds.
These extensions could connect the framework to semiquantum games, steering, and semi-device-independent certification on networks.
Existing SDP hierarchies for operationally non-signaling sequential strategies~\cite{klep2026quantitativequantumsoundnessbipartite,baroni2026quantitativequantumsoundnessmultipartite} also motivate extending our framework to sequential measurements with multiple independent quantum sources.
The recursive use of completely positive maps also suggests an algebraic connection with finitely correlated states~\cite{fannes1992finitely}.
We plan to investigate this connection and the extensions described above in future work.

The operator-algebraic tools may be of independent interest beyond the study of quantum network correlations.
Our Radon--Nikodym lemmas reconstruct measurement operators in prescribed spatial tensor products from completely positive maps satisfying source-replacer conditions.
It would be useful to determine whether analogous conclusions hold for maps dominated by more general reference maps.
The new polynomial optimization framework that we introduce may also be of independent interest to noncommutative real algebraic geometry, extending the noncommutative polynomial optimization framework to incorporate multiple states linked by completely positive maps.

\section{Source-party graphs and mixed quantum models}\label{sec:SourcePartyGraphMixedModel}
In this section, we introduce the mixed quantum model for general quantum networks.
The model is called \emph{mixed} because it uses commuting observable algebras for measurement parties sharing a source, while independent sources are described by tensor-product Hilbert spaces and product states.
We describe the network structure using source-party graphs.
Since our main results concern quantum tree networks, we also fix the rooted-tree notation used in the source-transfer construction in \zcref{sec:SourceTransferModelMain}.

\subsection{Notation for tree graphs}\label{sec:TreeGraph}
Consider a finite undirected graph $\Tree = ( \Vertices{\Tree}, \Edges{\Tree} )$, where $\Vertices{\Tree}$ is its set of vertices and $\Edges{\Tree}$ is its set of edges.
We say that the graph $\Tree$ is a \emph{tree} if it is connected and contains no cycles.
A \emph{rooted tree} is a pair $(\Tree, r)$, where $r \in \Vertices{\Tree}$ is a distinguished vertex called the \emph{root}.

For a rooted tree $(\Tree, r) $, let $v \in \Vertices{\Tree}$ with $v \neq r$.
Since $\Tree$ is a tree, there is a unique simple path $v = v_0 \sim v_1 \sim v_2 \sim \cdots \sim v_l = r$ from $v$ to $r$.
The vertex next to $v$ on the path to $r$ is the \emph{parent} of $v$, denoted by $\Parent{v}: = v_1$.
The \emph{children} of $v$ are the vertices whose parent is $v$, forming the set $\Children{v} = \{ u \in \Vertices{\Tree} \mid \Parent{u} = v \}$.
A vertex $v$ is called a \emph{leaf} if $\Children{v} = \emptyset$, and the root has no parent.

For any vertex $v\in \Vertices{\Tree}$, the \emph{subtree rooted at $v$}, denoted by $\Tree_v$, is the subgraph consisting of $v$ together with all vertices $u$ whose unique path to the root $r$ passes through $v$.

For $v\in\Vertices{\Tree}$, define its \emph{neighborhood} by $\Neighbors{v} \coloneqq \{ u\in\Vertices{\Tree} \mid \{u,v\} \in\Edges{\Tree} \}$.
The \emph{degree} of the vertex $v$ is $\deg_{\Tree}(v)=\lvert\Neighbors{v}\rvert$.

Throughout this manuscript, the words parent, child, and leaf are understood relative to the chosen root $r$.

\subsection{Source-party graphs and source-party trees}\label{sec:SourcePartyGraphs}
Let $\Tree = ( \Vertices{\Tree}, \Edges{\Tree} )$ be a finite graph whose vertex set is partitioned as
\begin{align}
    \Vertices{\Tree} = \PartyVertices{\Tree} \cup \SourceVertices{\Tree}.
\end{align}
The vertices $v$ in $\PartyVertices{\Tree}$ represent measurement parties, while the vertices $\alpha$ in $\SourceVertices{\Tree}$ represent independent quantum sources.
We require every edge in $\Edges{\Tree}$ to connect a party to a source, i.e.,
\begin{align}
    \Edges{\Tree} \subset \left\{ \{v, \alpha\} \mid v \in \PartyVertices{\Tree}, \ \alpha \in \SourceVertices{\Tree} \right\}.
\end{align}
We call $\Tree$ a \emph{source-party graph}, and a \emph{source-party tree} if $\Tree$ is also a tree; see \zcref{fig:InternalMultipartiteSourceTree} for a pictorial illustration.

The neighbors of vertices alternate between parties and sources, i.e., $\Neighbors{v} \subset \SourceVertices{\Tree}$ and $\Neighbors{\alpha} \subset \PartyVertices{\Tree}$.
If a party $r \in \PartyVertices{\Tree}$ is chosen as the root, then the parent and children also alternate: $\Parent{v} \in \SourceVertices{\Tree}$, $\Children{v} \subset \SourceVertices{\Tree}$ for a non-root party $v$, while $\Parent{\alpha} \in \PartyVertices{\Tree}$, $\Children{\alpha} \subset \PartyVertices{\Tree}$ for a source $\alpha$.
(The choice of root will be used later for the source-transfer model in \zcref{sec:SourceTransferModelMain}.)

\subsection{Mixed quantum models}\label{sec:MixedQuantumModelSourcePartyGraph}
Recall that $\vnotimes$ denotes the von Neumann spatial tensor product, namely the weak-operator closure of the algebraic tensor product~\cite[\nopp III.1.5.4]{blackadar2006operator}.
We now introduce an infinite-dimensional generalization of the quantum information description for source-party graphs, in which we model independent sources with tensor products and measurement locality with commutation.
See \zcref{fig:InternalMultipartiteSourceTree} for an illustration.

\begin{definition}[Mixed quantum model on a source-party graph]\label{def:MixedQuantumModelSourcePartyGraph}
    Let $\Tree = ( \PartyVertices{\Tree} \cup \SourceVertices{\Tree}, \Edges{\Tree} )$ be a finite source-party graph.
    For each party $v \in \PartyVertices{\Tree}$, let $\mathsf{X}_v$ and $\mathsf{A}_v$ denote its finite set of measurement settings and outcomes.
    A correlation
    \begin{align}
        p(\mathbf{a} | \mathbf{x}), \qquad \mathbf{a}=(a_v)_{v \in \PartyVertices{\Tree}}, \qquad \mathbf{x}=(x_v)_{v \in \PartyVertices{\Tree}},
    \end{align}
    admits a \emph{mixed quantum realization} $\cR_{\Tree}^{\mathrm{mixed}}$ on $\Tree$ if the following objects exist.
    
    For every independent source $\alpha \in \SourceVertices{\Tree}$, there are a Hilbert space $\cH_\alpha$, a unit state vector $\ket{\Omega_\alpha} \in \cH_\alpha$, and, for every party $v \in \Neighbors{\alpha}$, a von Neumann algebra $\cM_{\alpha}^{v} \subset \Bof{\cH_\alpha}$.
    The algebras $\cM_{\alpha}^{v}$ associated with the same source are mutually commuting:
    \begin{align}
        [\cM_{\alpha}^u, \cM_{\alpha}^v] = 0
    \end{align}
    for any $u \neq v$ and $u, v \in \Neighbors{\alpha}$.

    For every measurement party $v \in \PartyVertices{\Tree}$, define its local observable algebra by
    \begin{align}
        \cM^v \coloneqq \vnbigotimes_{\alpha \in \Neighbors{v}} \cM_{\alpha}^{v} \subset \vnbigotimes_{\alpha \in \Neighbors{v}} \Bof{\cH_\alpha}.
    \end{align}
    For every setting $x_v \in \mathsf{X}_v$, there is a positive operator-valued measure (POVM) $\bigl\{ \Meas{v}{a_v}{x_v} \bigr\}_{a_v\in\mathsf{A}_v} \subset \cM^v$.
    
    Finally, denote the global Hilbert space and global vector state by
    \begin{align}
        \cH \coloneqq \bigotimes_{\alpha \in \SourceVertices{\Tree}} \cH_\alpha, \qquad \ket{\Omega} \coloneqq \bigotimes_{\alpha \in \SourceVertices{\Tree}} \ket{\Omega_\alpha}.
    \end{align}
    The observed correlation satisfies
    \begin{align}\label{eq:MixedModelCorrelation}
        p(\mathbf{a} | \mathbf{x}) = \sandwich{\Omega}{\prod_{v\in\PartyVertices{\Tree}} \Meas{v}{a_v}{x_v}}{\Omega},
    \end{align}
    where we identify each $\cM^v$ with its natural amplification to $\cH$, acting trivially on $\cH_{\alpha}$ for $\alpha \notin \Neighbors{v}$ so that $[\cM^u, \cM^v] = 0$.
    Denote by $C_{\mathrm{mix}}(\Tree)$ the set of correlations $p$ admitting a mixed quantum realization on the graph $\Tree$.
\end{definition}

\begin{figure}[htbp]
    \centering
    \begin{tikzpicture}[
        scale=0.95,
        transform shape,
        every node/.style={font=\small},
        party/.style={
            rectangle,
            draw,
            thick,
            minimum width=8mm,
            minimum height=8mm,
            inner sep=0pt
        },
        source/.style={
            circle,
            draw,
            thick,
            minimum size=9mm,
            inner sep=0pt
        },
        source info/.style={
            font=\scriptsize,
            align=center
        },
        algebra/.style={
            fill=white,
            inner sep=1.2pt,
            font=\scriptsize
        }
    ]

    \node[party]  (p1) at (0,0) {$1$};
    \node[source] (a1) at (2.1,0) {$\alpha_1$};
    \node[party]  (p2) at (4.2,0) {$2$};
    \node[source] (a2) at (6.8,0) {$\alpha_2$};
    \node[party]  (p4) at (9.4,0) {$4$};

    \node[source] (a4) at (4.4,-2.4) {$\alpha_4$};
    \node[party]  (p3) at (6.8,-2.4) {$3$};
    \node[source] (a3) at (9.2,-2.4) {$\alpha_3$};
    \node[party]  (p5) at (11.4,-2.4) {$5$};

    \node[source info, above=2pt of a1]
        {$\bigl(\cH_{\alpha_1},\ket{\Omega_{\alpha_1}}\bigr)$};

    \node[source info, above=2pt of a2]
        {$\bigl(\cH_{\alpha_2},\ket{\Omega_{\alpha_2}}\bigr)$};

    \node[source info, below=2pt of a3]
        {$\bigl(\cH_{\alpha_3},\ket{\Omega_{\alpha_3}}\bigr)$};

    \node[source info, below=2pt of a4]
        {$\bigl(\cH_{\alpha_4},\ket{\Omega_{\alpha_4}}\bigr)$};

    \draw[thick] (p1) --
        node[midway, below=3pt, algebra]
            {$\cM_{\alpha_1}^{1}$}
        (a1);

    \draw[thick] (a1) --
        node[midway, below=3pt, algebra]
            {$\cM_{\alpha_1}^{2}$}
        (p2);

    \draw[thick] (p2) --
        node[midway, below=3pt, algebra]
            {$\cM_{\alpha_2}^{2}$}
        (a2);

    \draw[thick] (a2) --
        node[midway, below=3pt, algebra]
            {$\cM_{\alpha_2}^{4}$}
        (p4);

    \draw[thick] (a2) --
        node[midway, right=4pt, algebra]
            {$\cM_{\alpha_2}^{3}$}
        (p3);

    \draw[thick] (a4) --
        node[midway, above=3pt, algebra]
            {$\cM_{\alpha_4}^{3}$}
        (p3);

    \draw[thick] (p3) --
        node[midway, above=3pt, algebra]
            {$\cM_{\alpha_3}^{3}$}
        (a3);

    \draw[thick] (a3) --
        node[midway, above=3pt, algebra]
            {$\cM_{\alpha_3}^{5}$}
        (p5);

    \end{tikzpicture}

    \caption{
    A source-party tree with parties $1,\ldots,5$ represented by squares and independent sources $\alpha_1,\ldots,\alpha_4$ represented by circles.
    The source $\alpha_2$ is tripartite and is shared by parties $2,3,4$, while $\alpha_1$ and $\alpha_3$ are bipartite.
    The source $\alpha_4$ is a local source connected only to party $3$.
    In the mixed model of \zcref{def:MixedQuantumModelSourcePartyGraph}, each source $\alpha_j$ carries a Hilbert space $\cH_{\alpha_j}$ and a unit vector $\ket{\Omega_{\alpha_j}}$.
    For every adjacent party $i\in\Neighbors{\alpha_j}$, the von Neumann algebra $\cM_{\alpha_j}^{i} \subset \Bof{\cH_{\alpha_j}}$ describes the observables of party $i$ on source $\alpha_j$; for each fixed source, these algebras mutually commute.
    The local observable algebras are $\cM^1=\cM_{\alpha_1}^{1}$, $\cM^2=\cM_{\alpha_1}^{2}\vnotimes\cM_{\alpha_2}^{2}$, $\cM^3=\cM_{\alpha_2}^{3}\vnotimes\cM_{\alpha_3}^{3}\vnotimes\cM_{\alpha_4}^{3}$, $\cM^4=\cM_{\alpha_2}^{4}$, and $\cM^5=\cM_{\alpha_3}^{5}$.
    The global mixed realization acts on $\cH_{\alpha_1}\otimes\cH_{\alpha_2}\otimes\cH_{\alpha_3}\otimes\cH_{\alpha_4}$ with product source vector $\ket{\Omega_{\alpha_1}\otimes\Omega_{\alpha_2}\otimes\Omega_{\alpha_3}\otimes\Omega_{\alpha_4}}$.
    }
    \label{fig:InternalMultipartiteSourceTree}
\end{figure}

The \emph{quantum causal compatibility problem in the mixed quantum model} asks whether a prescribed correlation $p$ admits a realization as in \zcref{def:MixedQuantumModelSourcePartyGraph}, i.e., is $p \in C_{\mathrm{mix}}(\Tree)$ or $p \notin C_{\mathrm{mix}}(\Tree)$.
As we shall show in \zcref{rem:MixedTreeCompactness}, the set $C_{\mathrm{mix}}(\Tree)$ is compact.

\begin{remark}[Relation to the standard tensor-product model]\label{rem:MixedTensorProductModels}
The mixed model of \zcref{def:MixedQuantumModelSourcePartyGraph} is a direct generalization of the standard tensor-product model in quantum information theory~\cite{nielsen2010quantum}; it differs only in the notion of locality within an individual source space $\Bof{\cH_\alpha}$ for $\alpha \in \SourceVertices{\Tree}$.

Indeed, in the standard tensor-product model, every source $\alpha$ shared by parties $v \in \Neighbors{\alpha}$ is described by a multipartite Hilbert-space decomposition $\cH_\alpha = \bigotimes_{v \in\Neighbors{\alpha}} \cH_{\alpha}^v$.
The associated von Neumann algebras are
\begin{align}
    \cM_{\alpha}^v = \Bof{\cH_{\alpha}^v} \otimes \bigotimes_{u\in\Neighbors{\alpha}\setminus\{v\}} I_{\cH_{\alpha}^u}.
\end{align}
By contrast, the mixed model replaces only this tensor-factor requirement with the weaker commutation condition $[\cM_{\alpha}^u,\cM_{\alpha}^v]=0$ for $u, v \in \Neighbors{\alpha}$.

In particular, the mixed model is a priori more general than the standard tensor-product model in infinite dimensions~\cite{ji2021mip}.
However, in finite dimensions, the commuting-algebra formulation reduces to the tensor-product one~\cite{xu2025quantitativetsirelsonstheoremsapproximate}, and by the same inductive argument in the proof of~\cite[Corollary~9]{ligthart2023inflation}, a finite-dimensional mixed model for trees reduces to the standard quantum-information model for tree networks.
\end{remark}

For the rest of the manuscript, we restrict to the case where $\Tree$ is a source-party tree.
We refer to these as \emph{quantum tree network scenarios}.
When a root is needed, it will always be a party vertex in $\PartyVertices{\Tree}$.

\section{Source-replacer maps and their Radon--Nikodym derivatives}\label{sec:SourceReplacerRadonNikodym}
In this section, we make some technical observations on Arveson's Radon--Nikodym derivatives for replacer channels and source-replacer maps, which are the key tools for the proof of equivalence between the mixed and source-transfer models on quantum tree scenarios.
We refer to~\cite{paulsen2002completely} for background on completely positive and completely bounded maps, and to~\cite{blackadar2006operator,takesaki2002theory} for general $C^*$- and von Neumann algebra theory.

\subsection{Stinespring dilation of replacer channels}\label{sec:StinespringReplacer}
Let $\cX, \cY$ be unital $C^*$-algebras with units $\id_{\cX}, \id_{\cY}$, respectively.
Let $\omega_{\cX}: \cX \to \mathbb{C}$ be a state.
The \emph{replacer channel} (sometimes called \emph{erasure}) with respect to $\omega_{\cX}$ is a completely positive map
\begin{align}
    \cX \to \cY, \quad x \mapsto \omega_{\cX}(x) \id_{\cY}.
\end{align}

Physically, in the Schr\"{o}dinger picture, the replacer channel discards any input and prepares a fixed state; in the Heisenberg algebraic picture, it evaluates the input in a given state and returns a scalar multiple of the identity.

We make the following observation on the Stinespring dilations~\cite[\nopp II.6.9.7]{blackadar2006operator} of replacer channels on the Gelfand--Naimark--Segal (GNS) representations~\cite[\nopp II.6.4]{blackadar2006operator}.

\begin{lemma}[Stinespring dilations of replacer channels]\label{lem:StinespringReplacer}
    Let $\cX, \cY$ be unital $C^*$-algebras with units $\id_{\cX}, \id_{\cY}$.
    Let $\omega_{\cX}$ and $\omega_{\cY}$ be states on $\cX$ and $\cY$, with the GNS representations $\gns{\cX}$ and $\gns{\cY}$, respectively.

    Then $(\cH_{\cX} \otimes \cH_{\cY}, \pi_{\cX} \otimes I_{\cH_{\cY}}, V_{\omega_{\cX}})$ is the minimal Stinespring dilation of the replacer channel represented on $\cH_{\cY}$
    \begin{align}
        x \mapsto \pi_{\cY} (\omega_{\cX}(x) \id_{\cY}) = \omega_{\cX}(x) I_{\cH_{\cY}}.
    \end{align}
    That is, the map $\pi_{\cX} \otimes I_{\cH_{\cY}}$ defined by
    \begin{align}
        x \mapsto \pi_{\cX}(x) \otimes I_{\cH_{\cY}} \in \Bof{\cH_{\cX} \otimes \cH_{\cY}} = \Bof{\cH_{\cX}} \vnotimes \Bof{\cH_{\cY}}
    \end{align}
    is a unital $*$-representation of $\cX$ on $\Bof{\cH_{\cX} \otimes \cH_{\cY}}$, the map
    \begin{align}
        V_{\omega_{\cX}}: \cH_{\cY} \to \cH_{\cX} \otimes \cH_{\cY}, \quad \ket{y} \mapsto \ket{\Omega_{\cX}} \otimes \ket{y}
    \end{align}
    is an isometry satisfying
    \begin{align}
        V_{\omega_{\cX}}^* (\pi_{\cX}(x) \otimes I_{\cH_{\cY}}) V_{\omega_{\cX}} = \omega_{\cX}(x) I_{\cH_{\cY}},
    \end{align}
    and $(\pi_{\cX} \otimes I_{\cH_{\cY}})(\cX)V_{\omega_{\cX}}\cH_{\cY}$ is dense in $\cH_{\cX} \otimes \cH_{\cY}$.
    Thus the dilation is minimal and is unique up to unitary equivalence.
\end{lemma}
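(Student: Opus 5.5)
The plan is to verify the four asserted properties directly from the GNS data and then invoke the uniqueness of minimal Stinespring dilations.

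\emph{Step 1: the representation.} Since $\pi_{\cX}$ is a unital $*$-representation on $\cH_{\cX}$, the amplification $x\mapsto \pi_{\cX}(x)\otimes I_{\cH_{\cY}}$ is again a unital $*$-homomorphism into $\Bof{\cH_{\cX}\otimes\cH_{\cY}}$. Multiplicativity, adjoint preservation and unitality are checked on elementary tensors and extended by linearity and continuity. We also record the identification $\Bof{\cH_{\cX}\otimes\cH_{\cY}}=\Bof{\cH_{\cX}}\vnotimes\Bof{\cH_{\cY}}$, which is standard~\cite{blackadar2006operator}.

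\emph{Step 2: isometry and compression.} For $\ket{y}\in\cH_{\cY}$ we have $\norm{\Omega_{\cX}\otimes y}=\norm{\Omega_{\cX}}\norm{y}=\norm{y}$ because $\Omega_{\cX}$ is a unit vector. Hence $V_{\omega_{\cX}}$ is an isometry. Its adjoint acts on elementary tensors by $V_{\omega_{\cX}}^*(\xi\otimes\eta)=\braket{\Omega_{\cX}}{\xi}\,\eta$. Therefore
\begin{align}
V_{\omega_{\cX}}^*(\pi_{\cX}(x)\otimes I)V_{\omega_{\cX}}\ket{y}=\sandwich{\Omega_{\cX}}{\pi_{\cX}(x)}{\Omega_{\cX}}\ket{y}=\omega_{\cX}(x)\ket{y},
\end{align}
by the GNS identity $\omega_{\cX}(x)=\sandwich{\Omega_{\cX}}{\pi_{\cX}(x)}{\Omega_{\cX}}$. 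In particular, the triple is a Stinespring dilation of the replacer channel $x\mapsto\omega_{\cX}(x)I_{\cH_{\cY}}$.

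\emph{Step 3: minimality.} This is the only point needing care. The set $(\pi_{\cX}\otimes I)(\cX)V_{\omega_{\cX}}\cH_{\cY}$ consists of the vectors $\pi_{\cX}(x)\Omega_{\cX}\otimes y$. By cyclicity of the GNS vector, $\pi_{\cX}(\cX)\Omega_{\cX}$ is dense in $\cH_{\cX}$. It follows that the linear span of $\{\xi\otimes y\}$ with $\xi$ in this dense subspace and $y\in\cH_{\cY}$ is dense in the algebraic tensor product. This uses $\norm{\xi\otimes y-\xi'\otimes y}=\norm{\xi-\xi'}\norm{y}$, approximating each elementary tensor separately. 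The algebraic tensor product is in turn dense in $\cH_{\cX}\otimes\cH_{\cY}$. We interpret ``dense'' as density of the linear span, which is the standard minimality condition. The span is automatically taken, since the set is closed under sums only after spanning.

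\emph{Step 4: uniqueness.} Uniqueness up to unitary equivalence then follows from the standard uniqueness of minimal Stinespring dilations~\cite[\nopp II.6.9.7]{blackadar2006operator}~\cite{paulsen2002completely}. Given another minimal dilation $(\cK,\rho,W)$, the map $\rho(x)Wy\mapsto(\pi_{\cX}(x)\otimes I)V_{\omega_{\cX}}y$ preserves inner products, since both sides compute to $\sandwich{y}{\omega_{\cX}(x_1^*x_2)}{y'}$. It therefore extends to a unitary intertwining the representations and the isometries.

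I expect no real obstacle. The only subtle point is the density argument in Step 3, where one must use cyclicity of $\Omega_{\cX}$ together with density of elementary tensors.
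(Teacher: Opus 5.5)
Your proposal is correct and follows essentially the same route as the paper: you verify the representation, isometry and compression identities directly from the GNS data, obtain minimality from cyclicity of $\Omega_{\cX}$, and get uniqueness from the standard Stinespring theorem. The only difference is that you spell out details the paper leaves implicit, namely the density of $\pi_{\cX}(\cX)\Omega_{\cX}\odot\cH_{\cY}$ in $\cH_{\cX}\otimes\cH_{\cY}$ and the construction of the intertwining unitary.
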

\begin{proof}
    The replacer channel represented on $\cH_{\cY}$, $x \mapsto \omega_{\cX}(x) I_{\cH_{\cY}}$, is unital and completely positive since $\omega_{\cX}$ is a state and $\lambda \mapsto \lambda I_{\cH_{\cY}}$ is a unital $*$-representation of $\mathbb{C}$.
    It also follows immediately from the fact that $\pi_{\cX}$ is a unital $*$-representation that $\pi_{\cX} \otimes I_{\cH_{\cY}}$ is also a unital $*$-representation of $\cX$ on $\cH_{\cX} \otimes \cH_{\cY}$.
    
    Let $V_{\omega_{\cX}}: \cH_{\cY} \to \cH_{\cX} \otimes \cH_{\cY}$, $\ket{y} \mapsto \ket{\Omega_{\cX}} \otimes \ket{y}$.
    Since for every $\ket{y} \in \cH_{\cY}$ one has
    \begin{align}
        \norm{V_{\omega_{\cX}} \ket{y}}^2 = \norm{\ket{\Omega_{\cX}}}^2 \norm{\ket{y}}^2 = \norm{\ket{y}}^2,
    \end{align}
    $V_{\omega_{\cX}}$ is an isometry.
    Furthermore, for every $x \in \cX$ and $\ket{y_1}, \ket{y_2} \in \cH_{\cY}$, we calculate
    \begin{equation}
        \begin{aligned}
            \sandwich{y_1}{V_{\omega_{\cX}}^* (\pi_{\cX}(x) \otimes I_{\cH_{\cY}}) V_{\omega_{\cX}}}{y_2} &= (\bra{\Omega_\cX} \otimes \bra{y_1}) (\pi_{\cX}(x) \otimes I_{\cH_{\cY}}) (\ket{\Omega_\cX} \otimes \ket{y_2}) \\
            &= \sandwich{\Omega_{\cX}}{\pi_{\cX}(x)}{\Omega_{\cX}} \cdot \braket{y_1}{y_2} 
            = \omega_{\cX}(x) \cdot \braket{y_1}{y_2} = \sandwich{y_1}{ \omega_{\cX}(x) I_{\cH_{\cY}} }{y_2}.
        \end{aligned}
    \end{equation}
    Therefore,
    $V_{\omega_{\cX}}^* (\pi_{\cX}(x) \otimes I_{\cH_{\cY}}) V_{\omega_{\cX}} = \omega_{\cX}(x) I_{\cH_{\cY}}$.

    The minimality is due to
    \begin{align}
        (\pi_{\cX} \otimes I_{\cH_{\cY}})(x) V_{\omega_{\cX}} \ket{y} = \pi_{\cX}(x)\ket{\Omega_{\cX}} \otimes \ket{y}
    \end{align}
    for every $x \in \cX$ and $\ket{y} \in \cH_{\cY}$ and the cyclicity of the GNS triple $\gns{\cX}$.
    The unitary equivalence then follows from the standard Stinespring dilation theorem for completely positive maps; see~\cite[\nopp II.6.9.7]{blackadar2006operator} and also~\cite[Chapter~4]{paulsen2002completely}.
\end{proof}

\subsection{Arveson's Radon--Nikodym derivatives for source-replacer maps}\label{sec:RadonNikodymReplacer}
Let $\cZ \subset \cY$ be a unital $C^*$-subalgebra sharing the common unit $\id_{\cZ}=\id_{\cY}$.
Consider completely positive maps $\Phi_i: \cX \to \cZ$ indexed by $i$ in a finite index set $I$.
We say that $\{\Phi_i\}_{I}$ satisfies the \emph{source-replacer condition} with respect to the state $\omega_{\cX}: \cX \to \mathbb{C}$ if $\sum_{i \in I} \Phi_i(x) = \omega_{\cX}(x) \id_{\cZ}$ for all $x \in \cX$.

Physically, each $\Phi_i$ transfers algebraic information from the input source algebra $\cX$ to the specified subalgebra $\cZ$ of the output source algebra $\cY$, conditioned on outcome $i$.
The source-replacer condition states that, upon marginalizing/forgetting the outcome $i$, the input is erased and replaced by the source state $\omega_{\cX}$.
These $\Phi_i$ are the objects with which we shall reconstruct the measurement POVMs via Arveson's Radon--Nikodym derivatives~\cite{arveson1969subalgebras} (see also~\cite[Theorem~III.1]{raginsky2003radon}) in the source-transfer model.
The introduction of $\cZ \subset \cY$ shall correspond to $\cM^v_{\alpha} \subset \Bof{\cH_{\alpha}}$, which localizes the derivatives to the correct observable algebras of a multipartite quantum source.

\begin{lemma}[Arveson's Radon--Nikodym derivatives for source-replacer maps]\label{lem:RadonNikodymSourceReplacer}
    Let $\cX, \cY$ be unital $C^*$-algebras with units $\id_{\cX}, \id_{\cY}$, and let $\cZ \subset \cY$ be a unital $C^*$-subalgebra with $\id_{\cZ} = \id_{\cY}$.
    Let $\omega_{\cX}$ and $\omega_{\cY}$ be states on $\cX$ and $\cY$, with the GNS representations $\gns{\cX}$ and $\gns{\cY}$.
    Define the von Neumann algebras $\cM_{\cX} \coloneqq \pi_{\cX}(\cX)'' \subset \Bof{\cH_{\cX}}$, $\cM_{\cY} \coloneqq \pi_{\cY}(\cY)'' \subset \Bof{\cH_{\cY}}$, and $\cM_{\cZ} \coloneqq \pi_{\cY}(\cZ)'' \subset \cM_{\cY}$ completed by double commutant, respectively.
    
    Suppose that $\Phi_i: \cX \to \cZ$ are completely positive maps over a finite index set $I$ satisfying the source-replacer condition with respect to the state $\omega_{\cX}$:
    \begin{align}
        \sum_{i \in I} \Phi_i(x) = \omega_{\cX}(x) \id_{\cZ}, \quad \text{ for all } x \in \cX.
    \end{align}

    Then, there exists an isometry $V: \cH_{\cY} \to \cH_{\cX} \otimes \cH_{\cY}$ and a unique positive operator
    \begin{align}
        F_i \in \cM_{\cX}' \vnotimes \cM_{\cZ} \subset \cM_{\cX}' \vnotimes \cM_{\cY} \subset \Bof{\cH_{\cX}} \vnotimes \Bof{\cH_{\cY}}
    \end{align}
    for every $i \in I$, such that
    \begin{equation}\label{eq:RadonNikodymIdentity}
        \begin{aligned}
            \pi_{\cY} \circ \Phi_i(x) &= V^* F_i (\pi_{\cX}(x) \otimes I_{\cH_{\cY}}) V \\
            &= V^*  (\pi_{\cX}(x) \otimes I_{\cH_{\cY}}) F_i V \\
            &= V^* F_i^{1/2} (\pi_{\cX}(x) \otimes I_{\cH_{\cY}}) F_i^{1/2} V
        \end{aligned}
    \end{equation}
    for all $x \in \cX$ and
    \begin{align}\label{eq:POVMCompleteness}
        \sum_{i \in I} F_i = I_{\cH_{\cX} \otimes \cH_{\cY}} = I_{\cH_{\cX}} \otimes I_{\cH_{\cY}}.
    \end{align}

    Moreover, for every $x\in\cX$ and every $Z\in\cM_{\cZ}'$, a useful expectation-value form of \zcref{eq:RadonNikodymIdentity} is
    \begin{align}\label{eq:RadonNikodymExpectationIdentity}
        \sandwich{\Omega_{\cY}}{ Z\,\pi_{\cY}\!\left(\Phi_i(x)\right) }{\Omega_{\cY}} = \sandwich{\Omega_{\cX}\otimes\Omega_{\cY}}{ F_i \left( \pi_{\cX}(x)\otimes Z \right) }{\Omega_{\cX}\otimes\Omega_{\cY}}.
    \end{align}
\end{lemma}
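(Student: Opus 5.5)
Set $V\coloneqq V_{\omega_{\cX}}$ from \zcref{lem:StinespringReplacer}; recall $V\ket{y}=\ket{\Omega_{\cX}}\otimes\ket{y}$. The maps $\pi_{\cY}\circ\Phi_i:\cX\to\Bof{\cH_{\cY}}$ are completely positive and sum to the replacer channel $x\mapsto\omega_{\cX}(x)I_{\cH_{\cY}}$, so each is dominated by it. By \zcref{lem:StinespringReplacer}, the triple $(\cH_{\cX}\otimes\cH_{\cY},\pi_{\cX}\otimes I,V)$ is the minimal Stinespring dilation of this replacer channel. Arveson's Radon--Nikodym theorem then gives, for each $i$, a unique positive $F_i$ in the commutant $(\pi_{\cX}(\cX)\otimes I)'=\cM_{\cX}'\vnotimes\Bof{\cH_{\cY}}$ with
\begin{align*}
    \pi_{\cY}\circ\Phi_i(x)=V^*F_i(\pi_{\cX}(x)\otimes I)V .
\end{align*}
Because $F_i$ commutes with $\pi_{\cX}(x)\otimes I$, so does $F_i^{1/2}$, and the three forms in \zcref{eq:RadonNikodymIdentity} follow. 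Since the derivative map is additive and the derivative of the dominating map itself is $I$, uniqueness yields $\sum_i F_i=I$, i.e.\ \zcref{eq:POVMCompleteness}.

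The main obstacle is the localization $F_i\in\cM_{\cX}'\vnotimes\cM_{\cZ}$, which is stronger than $F_i\in\cM_{\cX}'\vnotimes\Bof{\cH_{\cY}}$. By the commutation theorem for tensor products, $\cM_{\cX}'\vnotimes\cM_{\cZ}=(\cM_{\cX}\vnotimes\cM_{\cZ}')'$. It therefore suffices to show that $F_i$ commutes with $I\otimes Z'$ for every $Z'\in\cM_{\cZ}'$, in particular for unitaries $U\in\cM_{\cZ}'$.

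To prove this, I would use uniqueness. Set $W=I\otimes U$, a unitary on $\cH_{\cX}\otimes\cH_{\cY}$ commuting with $\pi_{\cX}(\cX)\otimes I$ and satisfying $WV=VU$. Then
\begin{align*}
    V^*W^*F_iW(\pi_{\cX}(x)\otimes I)V=U^*\,\pi_{\cY}(\Phi_i(x))\,U=\pi_{\cY}(\Phi_i(x)),
\end{align*}
where the last equality holds because $\Phi_i(x)\in\cZ$ and $U\in\cM_{\cZ}'$. Moreover $W^*F_iW$ is positive and lies in the commutant $(\pi_{\cX}(\cX)\otimes I)'$. Uniqueness of the Radon--Nikodym derivative relative to the minimal dilation then forces $W^*F_iW=F_i$. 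Unitaries span $\cM_{\cZ}'$, so the claim follows.

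For \zcref{eq:RadonNikodymExpectationIdentity}, I would start from $\bra{\Omega_{\cY}}Z\,V^*F_i(\pi_{\cX}(x)\otimes I)V\ket{\Omega_{\cY}}$. The key fact is $ZV^*=V^*(I\otimes Z)$, which holds because $V^*(\ket{\xi}\otimes\ket{\eta})=\braket{\Omega_{\cX}}{\xi}\ket{\eta}$. It remains to commute $I\otimes Z$ past $F_i$, which is allowed since $F_i\in\cM_{\cX}'\vnotimes\cM_{\cZ}$ and $Z\in\cM_{\cZ}'$. Then $(I\otimes Z)(\pi_{\cX}(x)\otimes I)=\pi_{\cX}(x)\otimes Z$, and $V\ket{\Omega_{\cY}}=\ket{\Omega_{\cX}\otimes\Omega_{\cY}}$, which gives the stated identity.
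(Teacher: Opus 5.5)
Your proof is correct and follows the paper's skeleton: the minimal Stinespring dilation of the replacer channel, Arveson's theorem giving $F_i\in(\pi_{\cX}(\cX)\otimes I)'$, localization via $[F_i,I\otimes Z]=0$ for $Z\in\cM_{\cZ}'$, and the same computation for the expectation identity.

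The difference is in how you get the two extra facts. You derive $(I\otimes U)^*F_i(I\otimes U)=F_i$ for unitaries $U\in\cM_{\cZ}'$, and also $\sum_iF_i=I$, from the uniqueness clause of Arveson's theorem. The paper proves both directly: it sandwiches $[I\otimes Z,F_i]$ and $\sum_iF_i-I$ between vectors of the dense set $(\pi_{\cX}(\cX)\otimes I)V\cH_{\cY}$. This reduces them to $[Z,\pi_{\cY}(\Phi_i(x_1^*x_2))]=0$ and to the source-replacer identity, and minimality finishes.

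The two routes use the same mechanism. Uniqueness is exactly injectivity of $T\mapsto V^*T(\pi_{\cX}(\cdot)\otimes I)V$ on the commutant, and that is what the density computation proves. For $\sum_iF_i$ you need this injectivity on the whole commutant, not only among $0\le T\le I$, because $\sum_iF_i\le I$ is not known beforehand.

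Your covariance argument is more conceptual, but it needs the fact that unitaries span $\cM_{\cZ}'$; the paper's computation handles arbitrary $Z$ at once. You also close with $(\cM_{\cX}\vnotimes\cM_{\cZ}')'=\cM_{\cX}'\vnotimes\cM_{\cZ}$, whereas the paper intersects $\cM_{\cX}'\vnotimes\Bof{\cH_{\cY}}$ with $\Bof{\cH_{\cX}}\vnotimes\cM_{\cZ}$. Both are forms of the commutation theorem for tensor products.
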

That is, $\{F_i\}_{I}$ form a POVM and each $F_i$ is the Arveson's Radon--Nikodym derivative of $\Phi_i$.
\begin{proof}
    Applying \zcref{lem:StinespringReplacer} to $\pi_{\cY} \circ (\sum_{i \in I} \Phi_i) = \omega_{\cX} I_{\cH_{\cY}}$, where we have used $\id_{\cZ}=\id_{\cY}$, we obtain its minimal Stinespring dilation $(\cH_{\cX} \otimes \cH_{\cY}, \pi_{\cX} \otimes I_{\cH_{\cY}}, V_{\omega_{\cX}})$.
    The desired isometry is then $V\coloneqq V_{\omega_{\cX}}: \ket{y} \mapsto \ket{\Omega_{\cX}} \otimes \ket{y}$ such that $ V_{\omega_{\cX}}^* (\pi_{\cX}(x) \otimes I_{\cH_{\cY}}) V_{\omega_{\cX}} = \omega_{\cX}(x) I_{\cH_{\cY}}$.

    Then, since every $\pi_{\cY} \circ \Phi_i \leq \pi_{\cY} \circ (\sum_{i \in I} \Phi_i)$ in the sense of complete positivity, we can apply Arveson's Radon--Nikodym derivative theorem (\cite[Theorem~III.1]{raginsky2003radon}). 
    It follows that there exists a unique operator $F_i \in (\pi_{\cX} \otimes I_{\cH_{\cY}})(\cX)' = (\pi_{\cX}(\cX) \otimes I_{\cH_{\cY}})'$ satisfying $0 \leq F_i \leq I_{\cH_{\cX}} \otimes I_{\cH_{\cY}}$ and \zcref{eq:RadonNikodymIdentity}.
    It remains to check that $F_i \in \cM_{\cX}' \vnotimes \cM_{\cZ}$, the expectation-value identity \zcref{eq:RadonNikodymExpectationIdentity}, and the POVM completeness \zcref{eq:POVMCompleteness}.

    To see this, first note that
    \begin{align}
        F_i \in (\pi_{\cX}(\cX) \otimes I_{\cH_{\cY}})' = ((\pi_{\cX}(\cX) \otimes I_{\cH_{\cY}})'')' = (\pi_{\cX}(\cX)'' \vnotimes \mathbb{C}I_{\cH_{\cY}} )' = \cM_{\cX}' \vnotimes \Bof{\cH_{\cY}},
    \end{align}
    where the first equality is due to the fact that commutants are invariant under weak closure (by the double commutant), the second equality is due to the definition of von Neumann spatial tensor product~\cite[\nopp III.1.5.4]{blackadar2006operator}, and the last equality is due to the commutation theorem for tensor products~\cite[\nopp III.4.5.8]{blackadar2006operator}.
    
    We now further show that $F_i \in \cM_{\cX}' \vnotimes \cM_{\cZ} \subset \cM_{\cX}' \vnotimes \Bof{\cH_{\cY}}$.
    Take any $Z \in \cM_{\cZ}'$ and note that
    \begin{align}\label{eq:IsometryCommutantCommutationIdentity}
        (I_{\cH_{\cX}} \otimes Z) V_{\omega_{\cX}} = V_{\omega_{\cX}} Z, \quad V_{\omega_{\cX}}^*(I_{\cH_{\cX}} \otimes Z)  = Z V_{\omega_{\cX}}^*.
    \end{align}
    Indeed,
    \begin{align}
        (I_{\cH_{\cX}} \otimes Z) V_{\omega_{\cX}} \ket{y} = (I_{\cH_{\cX}} \otimes Z) (\ket{\Omega_{\cX}} \otimes \ket{y}) = \ket{\Omega_{\cX}} \otimes Z\ket{y} = V_{\omega_{\cX}} Z \ket{y}
    \end{align}
    for all $\ket{y} \in \cH_{\cY}$; an analogous calculation proves the $V_{\omega_{\cX}}^*$ identity.
    Then, for every $x_1, x_2 \in \cX$ and $\ket{y_1}, \ket{y_2} \in \cH_{\cY}$, we have 
    \begin{equation}
        \begin{aligned}
            &\sandwich{y_1}{V_{\omega_{\cX}}^* (\pi_{\cX}(x_1^*) \otimes I_{\cH_{\cY}}) [I_{\cH_{\cX}} \otimes Z, F_i] (\pi_{\cX}(x_2) \otimes I_{\cH_{\cY}}) V_{\omega_{\cX}}}{y_2} \\
            &\quad = \sandwich{y_1}{ \underbrace{V_{\omega_{\cX}}^* (I_{\cH_{\cX}} \otimes Z)}_{Z V_{\omega_{\cX}}^*} F_i (\pi_{\cX}(x_1^* x_2) \otimes I_{\cH_{\cY}}) V_{\omega_{\cX}}}{y_2} 
            \\&\qquad\qquad\qquad\qquad\qquad\qquad\qquad\quad - \sandwich{y_1}{V_{\omega_{\cX}}^* (\pi_{\cX}(x_1^* x_2) \otimes I_{\cH_{\cY}}) F_i  \underbrace{(I_{\cH_{\cX}} \otimes Z) V_{\omega_{\cX}}}_{V_{\omega_{\cX}} Z} }{y_2} \\
            &\quad = \sandwich{y_1}{ Z \underbrace{V_{\omega_{\cX}}^* F_i (\pi_{\cX}(x_1^* x_2) \otimes I_{\cH_{\cY}}) V_{\omega_{\cX}}}_{\pi_{\cY} \circ \Phi_i(x_1^* x_2)} }{y_2} - \sandwich{y_1}{ \underbrace{V_{\omega_{\cX}}^* (\pi_{\cX}(x_1^* x_2) \otimes I_{\cH_{\cY}}) F_i  V_{\omega_{\cX}}}_{\pi_{\cY} \circ \Phi_i(x_1^* x_2)} Z}{y_2} \\
            &\quad = \sandwich{y_1}{ [Z, \pi_{\cY} \circ \Phi_i(x_1^* x_2)] }{y_2} = 0,
        \end{aligned}
    \end{equation}
    since $F_i \in \cM_{\cX}' \vnotimes \Bof{\cH_{\cY}}$, and $\pi_{\cY} \circ \Phi_i(x_1^* x_2) \in \pi_{\cY}(\cZ) \subset \cM_{\cZ}$, hence commutes with $Z$.
    By Stinespring minimality of $(\cH_{\cX} \otimes \cH_{\cY}, \pi_{\cX} \otimes I_{\cH_{\cY}}, V_{\omega_{\cX}})$, i.e., density of $(\pi_{\cX} \otimes I_{\cH_{\cY}})(\cX)V_{\omega_{\cX}}\cH_{\cY}$, it follows that $[I_{\cH_{\cX}} \otimes Z, F_i] = 0$.
    So $F_i \in (I_{\cH_{\cX}} \otimes \cM_{\cZ}')'$.
    Thus, by~\cite[\nopp III.4.5.9]{blackadar2006operator} on intersections of spatial tensor products, we have
    \begin{align}
        F_i \in (\cM_{\cX}' \vnotimes \Bof{\cH_{\cY}}) \cap (I_{\cH_{\cX}} \otimes \cM_{\cZ}')' = (\cM_{\cX}' \vnotimes \Bof{\cH_{\cY}}) \cap (\Bof{\cH_{\cX}} \vnotimes \cM_{\cZ}'') = \cM_{\cX}' \vnotimes \cM_{\cZ}.
    \end{align}

    We now prove the expectation-value identity \zcref{eq:RadonNikodymExpectationIdentity}.
    Let $x\in\cX$ and $Z\in\cM_{\cZ}'$.
    Using \zcref{eq:RadonNikodymIdentity,eq:IsometryCommutantCommutationIdentity}, $V_{\omega_{\cX}} \ket{\Omega_{\cY}} = \ket{\Omega_{\cX} \otimes \Omega_{\cY}}$, and $[F_i, I_{\cH_{\cX}} \otimes Z] = 0$, we obtain
    \begin{equation}
        \begin{aligned}
            &\sandwich{\Omega_{\cY}}{ Z\,\pi_{\cY}\!\left(\Phi_i(x)\right) }{\Omega_{\cY}} \\
            &\quad= \sandwich{\Omega_{\cY}}{ ZV^* F_i \left(\pi_{\cX}(x)\otimes I_{\cH_{\cY}}\right)V}{\Omega_{\cY}} \\
            &\quad= \sandwich{\Omega_{\cX}\otimes\Omega_{\cY}}{ \left( I_{\cH_{\cX}}\otimes Z \right) F_i \left( \pi_{\cX}(x)\otimes I_{\cH_{\cY}} \right) }{\Omega_{\cX}\otimes\Omega_{\cY}} \\
            &\quad= \sandwich{\Omega_{\cX}\otimes\Omega_{\cY}}{ F_i \left( \pi_{\cX}(x)\otimes Z \right) }{\Omega_{\cX}\otimes\Omega_{\cY}}.
        \end{aligned}
    \end{equation}
    This proves \zcref{eq:RadonNikodymExpectationIdentity}.

    Finally, for every $x_1, x_2 \in \cX$ and $\ket{y_1}, \ket{y_2} \in \cH_{\cY}$,
    \begin{equation}
        \begin{aligned}
            &\sandwich{y_1}{V_{\omega_{\cX}}^* (\pi_{\cX}(x_1^*) \otimes I_{\cH_{\cY}}) (\sum_{i \in I} F_i - I_{\cH_{\cX} \otimes \cH_{\cY}}) (\pi_{\cX}(x_2) \otimes I_{\cH_{\cY}}) V_{\omega_{\cX}}}{y_2} \\
            &\quad = \sum_{i \in I}  \sandwich{y_1}{ \underbrace{V_{\omega_{\cX}}^*  F_i (\pi_{\cX}(x_1^* x_2) \otimes I_{\cH_{\cY}}) V_{\omega_{\cX}}}_{\pi_{\cY} \circ \Phi_i(x_1^* x_2)} }{y_2} - \sandwich{y_1}{ \underbrace{V_{\omega_{\cX}}^* (\pi_{\cX}(x_1^* x_2) \otimes I_{\cH_{\cY}}) V_{\omega_{\cX}}}_{\omega_{\cX}(x_1^* x_2) I_{\cH_{\cY}}} }{y_2} \\
            &\quad = \sandwich{y_1}{\pi_{\cY}(\sum_{i \in I} \Phi_i(x_1^* x_2) - \omega_{\cX}(x_1^* x_2)\id_{\cY})}{y_2} = 0,
        \end{aligned}
    \end{equation}
    due to the source-replacer condition with respect to the state $\omega_{\cX}$.
    \zcref[S]{eq:POVMCompleteness} then follows from the minimality/density again.
\end{proof}

\begin{remark}[Finite-dimensional Choi--Jamio\l kowski form]
\label{rem:RadonNikodymChoi}
In finite dimensions, \zcref{lem:RadonNikodymSourceReplacer} has a direct interpretation through the Choi--Jamio\l kowski isomorphism~\cite{choi1975completely}.
Let $\cX=M_n(\mathbb{C})$, let $\cZ\subset M_m(\mathbb{C})$ be a unital $C^*$-subalgebra, and let $\omega_{\cX}(x)=\Tr{\rho x}$ for a density operator $\rho \in M_n(\mathbb{C})$.
Fixing a basis $\{\ket{j}\}_{j=1}^n$, define the Choi matrix of $\Phi_i$ by
\begin{align}
    J(\Phi_i) \coloneqq \sum_{j,k=1}^n \ketbra{j}{k}\otimes \Phi_i\!\left(\ketbra{j}{k}\right) \in M_n(\mathbb{C})\otimes\cZ.
\end{align}
The source-replacer condition becomes
\begin{align}\label{eq:ChoiSourceReplacer}
    \sum_{i\in I}J(\Phi_i) = \rho^{\mathsf{T}}\otimes\id_{\cZ}.
\end{align}
If $\rho$ is faithful, define
\begin{align}\label{eq:NormalizedChoiRadonNikodym}
    E_i \coloneqq \left( (\rho^{\mathsf{T}})^{-1/2}\otimes\id_{\cZ} \right) J(\Phi_i) \left( (\rho^{\mathsf{T}})^{-1/2}\otimes\id_{\cZ} \right).
\end{align}
Then $E_i \succeq 0$ and $\sum_{i\in I}E_i=I_n\otimes\id_{\cZ}$ by construction.
(If $\rho$ is not faithful, the same statement holds after restricting to the support of $\rho^{\mathsf{T}}$.)
These operators $E_i$ can be identified with the Radon--Nikodym derivatives $F_i$ of $\Phi_i$, and thus, \zcref{lem:RadonNikodymSourceReplacer} can be seen as a basis-free and dimension-independent version of this Choi--Jamio\l kowski description.
\end{remark}

\subsection{Source-replacer maps for product states}\label{sec:ReplacerChannelProductState}
We make the following observation on replacer completely positive maps for product states on the maximal tensor product of $C^*$-algebras, which shall be the key technical result for recovering the mixed quantum model on trees from the source-transfer model.
The proved appropriate localization of the Radon--Nikodym derivatives is precisely what allows us to go beyond bilocal and star networks to arbitrary trees.

\begin{corollary}[Arveson's Radon--Nikodym derivatives for product-source-replacer maps]\label{cor:RadonNikodymSourceProductReplacer}
    Let $\cX_1, \dots, \cX_k$ and $\cY$ be unital $C^*$-algebras, and let $\cZ \subset \cY$ be a unital $C^*$-subalgebra with $\id_{\cZ} = \id_{\cY}$.
    Let $\omega_{\cX_j}: \cX_j \to \mathbb{C}$ for $j = 1, \dots, k$ and $\omega_{\cY}: \cY \to \mathbb{C}$ be states, with GNS representations $\gns{\cX_{j}}$ for $j = 1, \dots, k$ and $\gns{\cY}$, respectively.
    Define the von Neumann algebras $\cM_{\cX_{j}} \coloneqq \pi_{\cX_{j}}(\cX_{j})'' \subset \Bof{\cH_{\cX_{j}}}$ for $j = 1, \dots, k$, $\cM_{\cY} \coloneqq \pi_{\cY}(\cY)'' \subset \Bof{\cH_{\cY}}$, and $\cM_{\cZ} \coloneqq \pi_{\cY}(\cZ)'' \subset \cM_{\cY}$ completed by double commutant, respectively.

    Suppose that $\Phi_i: \cX_1 \maxotimes \cdots \maxotimes \cX_k \to \cZ$ are completely positive maps over a finite index set $I$ satisfying the source-replacer condition with respect to the product state of $\omega_{\cX_j}$:
    \begin{align}
        \sum_{i \in I} \Phi_i(x) = (\omega_{\cX_1} \otimes \cdots \otimes \omega_{\cX_k})(x) \id_{\cZ}, \quad \text{ for all } x \in \cX_1 \maxotimes \cdots \maxotimes \cX_k.
    \end{align}

    Then, there exists an isometry $V: \cH_{\cY} \to \cH_{\cX_1} \otimes \cdots \otimes \cH_{\cX_k} \otimes \cH_{\cY}$ and a unique operator
    \begin{equation}
        \begin{aligned}
            F_i \in \cM_{\cX_1}' \vnotimes \cdots \vnotimes \cM_{\cX_k}' \vnotimes \cM_{\cZ} &\subset \cM_{\cX_1}' \vnotimes \cdots \vnotimes \cM_{\cX_k}' \vnotimes \cM_{\cY} \\
            &\subset \Bof{\cH_{\cX_1}} \vnotimes \cdots \vnotimes \Bof{\cH_{\cX_k}} \vnotimes \Bof{\cH_{\cY}}
        \end{aligned}
    \end{equation}
    for every $i \in I$, such that
    \begin{equation}\label{eq:ProductRadonNikodymIdentity}
        \begin{aligned}
            \pi_{\cY} \circ \Phi_i(x) &= V^* F_i \bigl((\pi_{\cX_1} \otimes \cdots \otimes \pi_{\cX_k})(x) \otimes I_{\cH_{\cY}}\bigr) V \\
            &= V^* \bigl((\pi_{\cX_1} \otimes \cdots \otimes \pi_{\cX_k})(x) \otimes I_{\cH_{\cY}}\bigr) F_i V \\
            &= V^* F_i^{1/2} \bigl((\pi_{\cX_1} \otimes \cdots \otimes \pi_{\cX_k})(x) \otimes I_{\cH_{\cY}}\bigr) F_i^{1/2} V
        \end{aligned}
    \end{equation}
    for all $x \in \cX_1 \maxotimes \cdots \maxotimes \cX_k$ and
    \begin{align}\label{eq:ProductPOVMCompleteness}
        \sum_{i \in I} F_i = I_{\cH_{\cX_1} \otimes \cdots \otimes \cH_{\cX_k} \otimes \cH_{\cY}} = I_{\cH_{\cX_1}} \otimes \cdots \otimes I_{\cH_{\cX_k}} \otimes I_{\cH_{\cY}}.
    \end{align}

    Moreover, for every $x \in \cX_1 \maxotimes \cdots \maxotimes \cX_k$ and every $Z \in \cM_{\cZ}'$, a useful expectation-value form of \zcref{eq:ProductRadonNikodymIdentity} is
    \begin{align}\label{eq:ProductRadonNikodymExpectationIdentity}
        \sandwich{\Omega_{\cY}}{ Z\,\pi_{\cY}\!\left(\Phi_i(x)\right)}{\Omega_{\cY}} = \sandwich{ \bigotimes_{j = 1}^k \Omega_{\cX_j} \otimes \Omega_{\cY}}{ F_i \left( \left( \pi_{\cX_1} \otimes\cdots\otimes \pi_{\cX_k} \right)(x) \otimes Z \right) }{\bigotimes_{j = 1}^k \Omega_{\cX_j} \otimes \Omega_{\cY}}.
    \end{align}
\end{corollary}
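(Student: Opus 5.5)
The plan is to reduce the corollary to \zcref{lem:RadonNikodymSourceReplacer} by taking $\cX \coloneqq \cX_1 \maxotimes \cdots \maxotimes \cX_k$ with the product state $\omega_{\cX} \coloneqq \omega_{\cX_1} \otimes \cdots \otimes \omega_{\cX_k}$. This product state is a well-defined state on the maximal tensor product, since product states extend to the spatial tensor product and the maximal norm dominates it. The source-replacer hypothesis of the corollary is then literally the hypothesis of the lemma. The lemma produces an isometry $V$ and POVM effects $F_i \in \cM_{\cX}' \vnotimes \cM_{\cZ}$, together with the Radon--Nikodym identities, completeness, and the expectation identity, all written on the GNS space $\cH_{\cX}$ of $\omega_{\cX}$. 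What remains is to identify the GNS data of the product state with the tensor product of the individual GNS data, and to identify the commutant.

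The first step is to show that $\bigl(\cH_{\cX_1}\otimes\cdots\otimes\cH_{\cX_k},\ \pi_{\cX_1}\otimes\cdots\otimes\pi_{\cX_k},\ \Omega_{\cX_1}\otimes\cdots\otimes\Omega_{\cX_k}\bigr)$ is a GNS triple for $\omega_{\cX}$. The representation $\pi_{\cX_1}\otimes\cdots\otimes\pi_{\cX_k}$ is defined on the algebraic tensor product by commuting representations. By the universal property of $\maxotimes$, it extends to a unital $*$-representation of $\cX$. The vector state it induces agrees with $\omega_{\cX}$ on elementary tensors, hence on all of $\cX$ by linearity and continuity. Cyclicity holds because vectors of the form $\pi_{\cX_1}(x_1)\Omega_{\cX_1}\otimes\cdots\otimes\pi_{\cX_k}(x_k)\Omega_{\cX_k}$ span a dense subspace. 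By uniqueness of the GNS construction up to unitary equivalence, we may use this triple in place of $(\cH_{\cX},\pi_{\cX},\Omega_{\cX})$. The isometry $V$ then becomes $\ket{y}\mapsto \bigotimes_j\ket{\Omega_{\cX_j}}\otimes\ket{y}$. With this substitution, the identities \zcref{eq:ProductRadonNikodymIdentity}, \zcref{eq:ProductPOVMCompleteness} and \zcref{eq:ProductRadonNikodymExpectationIdentity} are direct transcriptions of \zcref{eq:RadonNikodymIdentity}, \zcref{eq:POVMCompleteness} and \zcref{eq:RadonNikodymExpectationIdentity}.

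The step requiring the most care is the localization $F_i \in \cM_{\cX_1}'\vnotimes\cdots\vnotimes\cM_{\cX_k}'\vnotimes\cM_{\cZ}$. We have $\cM_{\cX} = (\pi_{\cX_1}\otimes\cdots\otimes\pi_{\cX_k})(\cX)''$. This contains the algebraic tensor product of the $\pi_{\cX_j}(\cX_j)$ and is contained in $\cM_{\cX_1}\vnotimes\cdots\vnotimes\cM_{\cX_k}$, and its weak closure is the latter. Hence $\cM_{\cX} = \vnbigotimes_j \cM_{\cX_j}$. The commutation theorem for tensor products~\cite[\nopp III.4.5.8]{blackadar2006operator}, applied inductively in $k$, gives $\cM_{\cX}' = \vnbigotimes_j \cM_{\cX_j}'$. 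Substituting this into the lemma's conclusion $F_i\in\cM_{\cX}'\vnotimes\cM_{\cZ}$ yields the claimed membership. Associativity of $\vnotimes$ handles the bracketing.

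Uniqueness of $F_i$ is inherited from the uniqueness part of Arveson's theorem used in the lemma, transported through the unitary equivalence of GNS triples.
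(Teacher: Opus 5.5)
Your proposal is correct and follows the paper's proof essentially step for step. You set $\cX=\cX_1\maxotimes\cdots\maxotimes\cX_k$ with the product state, show via the universal property of the maximal tensor product and density of elementary tensors that the tensor product of the individual GNS triples is a GNS triple for it, identify $\cM_{\cX}=\cM_{\cX_1}\vnotimes\cdots\vnotimes\cM_{\cX_k}$ and its commutant via the commutation theorem, and read off the conclusions from the single-source lemma. The only simplification available is that no unitary-equivalence transport is needed, since the lemma applies directly to any GNS triple, and this is how the paper uses it.
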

\begin{proof}
    Denote the maximal tensor product $C^*$-algebra and the product state by
    \begin{align}
        \cX \coloneqq \cX_1 \maxotimes \cdots \maxotimes \cX_k, \quad \omega_{\cX} = \omega_{\cX_1} \otimes \cdots \otimes \omega_{\cX_k},
    \end{align}
    respectively.
    Then we claim that 
    \begin{align}
        \cH_{\cX} \coloneqq \cH_{\cX_1} \otimes \cdots \otimes \cH_{\cX_k}, \quad \pi_{\cX} \coloneqq \pi_{\cX_1} \otimes \cdots \otimes \pi_{\cX_k}, \quad \ket{\Omega_{\cX}} \coloneqq \ket{\Omega_{\cX_1}} \otimes \cdots \otimes \ket{\Omega_{\cX_k}}
    \end{align}
    is a GNS representation of $(\cX, \omega_{\cX})$.
    Indeed, the $*$-representations $I\otimes\cdots\otimes\pi_{\cX_j}(x_j)\otimes\cdots\otimes I$ of $\cX_j$ have commuting ranges for $j = 1, \dots, k$.
    Hence, by the universal property of the maximal tensor product, they induce a $*$-representation $\pi_{\cX}$ of $\cX$ satisfying $\pi_{\cX}(x_1\otimes\cdots\otimes x_k) = \pi_{\cX_1}(x_1)\otimes\cdots\otimes\pi_{\cX_k}(x_k)$.
    Also,
    \begin{align}
        \omega_{\cX}(x_1 \otimes \cdots \otimes x_k) = \prod_{j=1}^k \omega_{\cX_j}(x_j) = \prod_{j=1}^k \sandwich{\Omega_{\cX_j}}{\pi_{\cX_j}(x_j)}{\Omega_{\cX_j}} =  \sandwich{\Omega_{\cX}}{\pi_{\cX}(x_1 \otimes \cdots \otimes x_k)}{\Omega_{\cX}},
    \end{align}
    thus agreeing on $\cX$ since $\cX_1\odot\cdots\odot\cX_k$ is norm dense.
    Cyclicity of $\ket{\Omega_{\cX}}$ follows since each $\ket{\Omega_{\cX_j}}$ is cyclic in $\cH_{\cX_j}$, proving the claim.

    Furthermore, the norm density of $\cX_1\odot\cdots\odot\cX_k$ in $\cX$ gives
    \begin{align}
        \cM_{\cX} \coloneqq \pi_{\cX}(\cX)'' = \left(\pi_{\cX_1}(\cX_1) \odot\cdots\odot \pi_{\cX_k}(\cX_k) \right)'' = \cM_{\cX_1} \vnotimes\cdots\vnotimes \cM_{\cX_k}
    \end{align}
    by the definition of spatial von Neumann tensor product \cite[\nopp III.1.5.4]{blackadar2006operator} and the commutation theorem for tensor products~\cite[\nopp III.4.5.8]{blackadar2006operator}.
    Note that $\cM_{\cX}' = \cM_{\cX_1}' \vnotimes \cdots \vnotimes \cM_{\cX_k}'$ by~\cite[\nopp III.4.5.8]{blackadar2006operator} as well.

    Finally, applying \zcref{lem:RadonNikodymSourceReplacer} to $\Phi_i: \cX_1 \maxotimes \cdots \maxotimes \cX_k = \cX \to \cZ$ gives \zcref{eq:ProductRadonNikodymIdentity,eq:ProductPOVMCompleteness}.
    Under the product-GNS identification above, \zcref{eq:ProductRadonNikodymExpectationIdentity} is exactly \zcref{eq:RadonNikodymExpectationIdentity}.
\end{proof}

We finish the section with two remarks on \zcref{cor:RadonNikodymSourceProductReplacer} that will be used later.

\begin{remark}[Partial sandwich form of the Radon--Nikodym identity]\label{rem:RadonNikodymSandwichForm}
    We first record a useful reformulation of the Radon--Nikodym identity \zcref{eq:ProductRadonNikodymIdentity}.
    This form gives the main intuition behind the source-transfer constructions in \zcref{sec:SourceTransferModelMain}, and will also be useful in the proof of equivalence to the mixed model in \zcref{thm:SourceTransferTreeEquivalence}.
    
    Recall that the Stinespring isometry $V$ in \zcref{cor:RadonNikodymSourceProductReplacer} is given by $\cH_{\cY} \ni \ket{y} \mapsto \ket{\Omega_{\cX_1}} \otimes \cdots \otimes \ket{\Omega_{\cX_k}} \otimes \ket{y}$.
    Thus, we may identify
    \begin{align}
        V = \ket{\Omega_{\cX_1}} \otimes \cdots \otimes \ket{\Omega_{\cX_k}} \otimes I_{\cH_{\cY}} = \ket{\Omega_{\cX_1} \otimes \cdots \otimes \Omega_{\cX_k} } \otimes I_{\cH_{\cY}}.
    \end{align}
    Hence, \zcref{eq:ProductRadonNikodymIdentity} can equivalently be written as
    \begin{equation}\label{eq:ProductRadonNikodymPartialSandwich}
        \begin{aligned}
            \pi_{\cY}\!\left(\Phi_i(x)\right) =& \left( \bra{\Omega_{\cX_1} \otimes \cdots \otimes \Omega_{\cX_k}} \otimes I_{\cH_{\cY}} \right) F_i \left( (\pi_{\cX_1} \otimes \cdots \otimes \pi_{\cX_k})(x)\otimes I_{\cH_{\cY}} \right) \left( \ket{\Omega_{\cX_1} \otimes \cdots \otimes \Omega_{\cX_k}} \otimes I_{\cH_{\cY}} \right) \\
            \coloneqq& \ \sandwich{ \Omega_{\cX_1} \otimes \cdots \otimes \Omega_{\cX_k} }{ F_i \left( (\pi_{\cX_1} \otimes \cdots \otimes \pi_{\cX_k})(x)\otimes I_{\cH_{\cY}} \right) }{ \Omega_{\cX_1} \otimes \cdots \otimes \Omega_{\cX_k} } \\
            =& \sandwich{ \Omega_{\cX_1} \otimes \cdots \otimes \Omega_{\cX_k} }{ \left( (\pi_{\cX_1} \otimes \cdots \otimes \pi_{\cX_k})(x)\otimes I_{\cH_{\cY}} \right) F_i }{ \Omega_{\cX_1} \otimes \cdots \otimes \Omega_{\cX_k} } \\
            =& \sandwich{ \Omega_{\cX_1} \otimes \cdots \otimes \Omega_{\cX_k} }{ (F_i)^{1/2} \left( (\pi_{\cX_1} \otimes \cdots \otimes \pi_{\cX_k})(x)\otimes I_{\cH_{\cY}} \right) (F_i)^{1/2} }{ \Omega_{\cX_1} \otimes \cdots \otimes \Omega_{\cX_k} }.
        \end{aligned}
    \end{equation}
    That is, the sandwich takes the expectation value of the source state $\ket{\Omega_{\cX_1}} \otimes \cdots \otimes \ket{\Omega_{\cX_k}}$ over $\cH_{\cX_1} \otimes \cdots \otimes \cH_{\cX_k}$, while leaving an operator on $\cH_{\cY}$.
\end{remark}

\begin{remark}[Special case for functionals]\label{rem:FunctionalProductRadonNikodymIdentity}
    When $\cY=\mathbb{C}$, then necessarily $\cZ = \mathbb{C}$ as well.
    In this case, every decomposition $\omega_{\cX_1}\otimes\cdots\otimes\omega_{\cX_k} =\sum_{i\in I}\Phi_i$ into positive functionals is represented by a POVM $(F_i)_{i\in I}$ in $\cM_{\cX_1}'\vnotimes\cdots\vnotimes\cM_{\cX_k}'$ via
    \begin{align}\label{eq:FunctionalProductRadonNikodymIdentity}
        \Phi_i(x) = \sandwich{\Omega_{\cX_1} \otimes \cdots \otimes \Omega_{\cX_k}}{F_i (\pi_{\cX_1} \otimes \cdots \otimes \pi_{\cX_k})(x)}{\Omega_{\cX_1} \otimes \cdots \otimes \Omega_{\cX_k}}.
    \end{align}
    This special case can also be proven directly from the Radon--Nikodym derivative theorem for functionals~\cite[\nopp II.6.4.6]{blackadar2006operator}.
\end{remark}

\section{Source-transfer model on source-party trees}\label{sec:SourceTransferModelMain}
The mixed quantum model of \zcref{def:MixedQuantumModelSourcePartyGraph} is formulated in terms of source Hilbert spaces and commuting observable algebras.
For source-party trees, the same correlations admit an equivalent description directly in terms of source $C^*$-algebras, source states, and completely positive maps.
We call this description the \emph{source-transfer model}.
In addition, we refer to \zcref{sec:SimpleExample} for the motivating examples illustrated by \zcref{fig:SourceTransferExamples}.

The name emphasizes two features of the construction.
First, its basic objects are algebras and states associated with the independent sources in $\SourceVertices{\Tree}$, rather than algebras generated by the party measurements; compare, e.g., \cite{wolfe2021quantum,ligthart2023convergent,ligthart2024semidefinite,renou2026two}.

Second, fix a party $r$ as the root.
Then the source-transfer model describes quantum tree scenarios as follows: each party transfers algebraic information from some source algebras to some other source algebras, starting from the leaves of the tree $\Tree$ with the endpoint being the root party $r$.
Specifically, a source $\alpha \in \SourceVertices{\Tree}$ receives messages $Q_v$ produced by its child parties $v \in \Children{\alpha}$, and combines them into one message $Q_{\alpha} \coloneqq \bigotimes_{v \in \Children{\alpha}} Q_{v}$.
Then, at each non-root party $v \in \PartyVertices{\Tree}$, it maps the messages $Q_{\alpha}$ from all of its child sources $\alpha \in \Children{v}$ to a new message $Q_v$ and sends it to its parent source $\Parent{v}$, using quantum instruments $\Phi^v$.
The root party $r$ finally evaluates the incoming messages through post-measurement
subnormalized states.
This transfer direction is only part of the mathematical description and does not represent physical communication between the parties.
Similarly, as we shall show, the choice of $r$ is purely mathematical and does not change the set of described correlations.

\begin{figure}[htbp]
    \centering
    \begin{subfigure}[t]{0.465\textwidth}
        \centering
        \begin{tikzpicture}[
            scale=0.67,
            transform shape,
            every node/.style={font=\small},
            party/.style={rectangle,draw,thick,minimum width=7mm,minimum height=7mm,inner sep=0pt},
            root/.style={party,fill=blue!10},
            source/.style={circle,draw,thick,minimum size=8mm,inner sep=0pt},
            sourceinfo/.style={font=\footnotesize,align=center},
            partylabel/.style={font=\scriptsize,align=center},
            transfer/.style={-{Latex[length=2mm]},thick}
        ]
            \node[party] (p1t) at (0,1.5) {$1$};
            \node[source] (at) at (1.7,1.5) {$\alpha$};
            \node[party] (p2t) at (3.4,1.5) {$2$};
            \node[source] (bt) at (5.1,1.5) {$\beta$};
            \node[party] (p3t) at (6.8,1.5) {$3$};
            \node[source] (ct) at (8.5,1.5) {$\gamma$};
            \node[root] (p4t) at (10.2,1.5) {$4$};

            \draw[transfer] (p1t)--(at);
            \draw[transfer] (at)--(p2t);
            \draw[transfer] (p2t)--(bt);
            \draw[transfer] (bt)--(p3t);
            \draw[transfer] (p3t)--(ct);
            \draw[transfer] (ct)--(p4t);

            \node[sourceinfo,above=5pt of at] {$\cA_\alpha^1=\JoinAlg{\alpha}=\InAlg{2}$};
            \node[sourceinfo,above=5pt of bt] {$\cA_\beta^2=\JoinAlg{\beta}=\InAlg{3}$};
            \node[sourceinfo,above=5pt of ct] {$\cA_\gamma^3=\JoinAlg{\gamma}=\InAlg{4}$};
            \node[sourceinfo,below=4pt of at] {$\omega_\alpha$};
            \node[sourceinfo,below=4pt of bt] {$\omega_\beta$};
            \node[sourceinfo,below=4pt of ct] {$\omega_\gamma$};

            \node[partylabel,below=4pt of p1t] {$\meas{1}{a_1}{x_1}$};
            \node[partylabel,below=4pt of p2t] {$\instr{2}{a_2}{x_2}$};
            \node[partylabel,below=4pt of p3t] {$\instr{3}{a_3}{x_3}$};
            \node[partylabel,below=4pt of p4t] {$\rstate{4}{a_4}{x_4}$};

            \node[party] (p1b) at (0,-1.5) {$1$};
            \node[source] (ab) at (1.7,-1.5) {$\alpha$};
            \node[party] (p2b) at (3.4,-1.5) {$2$};
            \node[source] (bb) at (5.1,-1.5) {$\beta$};
            \node[root] (p3b) at (6.8,-1.5) {$3$};
            \node[source] (cb) at (8.5,-1.5) {$\gamma$};
            \node[party] (p4b) at (10.2,-1.5) {$4$};

            \draw[transfer] (p1b)--(ab);
            \draw[transfer] (ab)--(p2b);
            \draw[transfer] (p2b)--(bb);
            \draw[transfer] (bb)--(p3b);
            \draw[transfer] (p4b)--(cb);
            \draw[transfer] (cb)--(p3b);

            \node[sourceinfo,above=5pt of ab] {$\cA_\alpha^1=\JoinAlg{\alpha}=\InAlg{2}$};
            \node[sourceinfo,above=5pt of bb] {$\cA_\beta^2=\JoinAlg{\beta}$};
            \node[sourceinfo,above=5pt of cb] {$\cA_\gamma^4=\JoinAlg{\gamma}$};
            \node[sourceinfo,below=4pt of ab] {$\omega_\alpha$};
            \node[sourceinfo,below=4pt of bb] {$\omega_\beta$};
            \node[sourceinfo,below=4pt of cb] {$\omega_\gamma$};

            \node[partylabel,below=4pt of p1b] {$\meas{1}{a_1}{x_1}$};
            \node[partylabel,below=4pt of p2b] {$\instr{2}{a_2}{x_2}$};
            \node[partylabel,below=4pt of p3b] (rootlabel) {$\rstate{3}{a_3}{x_3}$};
            \node[partylabel,below=4pt of p4b] {$\meas{4}{a_4}{x_4}$};
            \node[sourceinfo,below=6pt of rootlabel] {$\InAlg{3}=\JoinAlg{\beta}\maxotimes\JoinAlg{\gamma}$};
        \end{tikzpicture}
        \caption{The line-of-four network, rooted at $r=4$ (top) and $r=3$ (bottom).}
        \label{fig:SourceTransferLineOfFour}
    \end{subfigure}
    \hspace{0.035\textwidth}
    \begin{subfigure}[t]{0.465\textwidth}
        \centering
        \begin{tikzpicture}[
            scale=0.74,
            transform shape,
            every node/.style={font=\small},
            party/.style={rectangle,draw,thick,minimum width=7mm,minimum height=7mm,inner sep=0pt},
            root/.style={party,fill=blue!10},
            source/.style={circle,draw,thick,minimum size=8mm,inner sep=0pt},
            sourceinfo/.style={font=\footnotesize,align=center},
            partylabel/.style={font=\scriptsize,align=center},
            edgeinfo/.style={font=\scriptsize,fill=white,inner sep=1pt},
            transfer/.style={-{Latex[length=2mm]},thick}
        ]
            \node[root] (p1) at (0,0) {$1$};
            \node[source] (a) at (2.1,0) {$\alpha$};
            \node[party] (p2) at (4.4,1.35) {$2$};
            \node[party] (p3) at (4.4,-1.35) {$3$};
            \node[source] (b) at (6.8,-1.35) {$\beta$};
            \node[party] (p4) at (9.1,-1.35) {$4$};

            \draw[transfer] (p2)--(a);
            \draw[transfer] (p3)--(a);
            \draw[transfer] (a)--(p1);
            \draw[transfer] (p4)--(b);
            \draw[transfer] (b)--(p3);

            \node[sourceinfo,anchor=south west,yshift=22pt] at (p1.north west) {$\JoinAlg{\alpha}=\cA_\alpha^2\maxotimes\cA_\alpha^3=\InAlg{1}$};
            \node[sourceinfo,above=5pt of b] {$\cA_\beta^4=\JoinAlg{\beta}=\InAlg{3}$};
            \node[sourceinfo,below=4pt of a] {$\omega_\alpha$};
            \node[sourceinfo,below=4pt of b] {$\omega_\beta$};
            \node[edgeinfo] at (3.35,1.02) {$\cA_\alpha^2$};
            \node[edgeinfo] at (3.35,-0.98) {$\cA_\alpha^3$};

            \node[partylabel,below=4pt of p1] {$\rstate{1}{a_1}{x_1}$};
            \node[partylabel,above=4pt of p2] {$\meas{2}{a_2}{x_2}$};
            \node[partylabel,below=4pt of p3] {$\instr{3}{a_3}{x_3}$};
            \node[partylabel,below=4pt of p4] {$\meas{4}{a_4}{x_4}$};
        \end{tikzpicture}
        \caption{The four-party tripartite-tail network, rooted at $r=1$.}
        \label{fig:SourceTransferTripartiteTail}
    \end{subfigure}
    \caption{
    Source-transfer descriptions of two quantum tree networks.
    Circles denote sources, squares denote parties, and blue squares denote roots.
    Each source $\alpha$ carries a state $\omega_\alpha$ on the joint source algebra $\JoinAlg{\alpha}$, formed from the algebras $\cA_\alpha^v$ of its child parties.
    The input algebra of a party $v$ is $\InAlg{v}=\bigotimes_{\alpha\in\Children{v}}^{\max}\JoinAlg{\alpha}$.
    Leaf parties carry POVM elements, internal non-root parties carry transfer maps, and root parties carry positive functionals.
    Arrows indicate the direction of transfer toward the root.
    }
    \label{fig:SourceTransferExamples}
\end{figure}

\subsection{Definition of the source-transfer model}\label{sec:DefinitionSourceTransferModel}
\begin{definition}[Source-transfer model on a rooted source-party tree]\label{def:SourceTransferModelTree}
Let $\Tree = ( \PartyVertices{\Tree} \cup \SourceVertices{\Tree}, \Edges{\Tree} )$ be a finite source-party tree, rooted at a party $r \in \PartyVertices{\Tree}$.
For each party $v \in \PartyVertices{\Tree}$, let $\mathsf{X}_v$ and $\mathsf{A}_v$ denote its finite set of measurement settings and outcomes, respectively.
A correlation
\begin{align}
    p(\mathbf{a} | \mathbf{x}), \qquad \mathbf{a}=(a_v)_{v \in \PartyVertices{\Tree}}, \qquad \mathbf{x}=(x_v)_{v \in \PartyVertices{\Tree}},
\end{align}
admits a \emph{source-transfer realization} $\cR_{(\Tree,r)}$ on $(\Tree,r)$ if the following objects exist.

For every source $\alpha \in \SourceVertices{\Tree}$ and every child party
$v \in \Children{\alpha}$, there exists a unital $C^*$-algebra $\cA_\alpha^v$.
For each source $\alpha$, define the joint source algebra
\begin{align}\label{eq:TreeSourceAlgebra}
    \JoinAlg{\alpha} \coloneqq \bigotimes_{v\in\Children{\alpha}}^{\max} \cA_\alpha^v,
\end{align}
and let $\omega_\alpha: \JoinAlg{\alpha} \to \mathbb{C}$ be a state.
Each $\cA_\alpha^v$ can be identified with its canonical copy in $\JoinAlg{\alpha}$ sharing the same unit.
If $\alpha$ is a leaf source, we use the convention $\JoinAlg{\alpha}=\mathbb{C}$ and $\omega_\alpha = \identitymap_{\mathbb{C}}$.

For every party $v \in \PartyVertices{\Tree}$, denote its incoming joint source algebra and incoming source state by
\begin{align}\label{eq:TreeIncomingSourceData}
    \InAlg{v} \coloneqq \bigotimes_{\alpha\in\Children{v}}^{\max} \JoinAlg{\alpha}, \qquad \InState{v} \coloneqq \bigotimes_{\alpha\in\Children{v}}\omega_\alpha.
\end{align}
If $v$ is a leaf party, we use the convention $\InAlg{v}=\mathbb{C}$ and $\InState{v}=\identitymap_{\mathbb{C}}$.
The party data are as follows.
\begin{enumerate}[label=(\roman*)]
    \item For every non-root leaf party $v$, and every setting $x_v \in \mathsf{X}_v$, there is a POVM
    \begin{align}\label{eq:TreeLeafPOVM}
        \bigl\{ \meas{v}{a_v}{x_v} \bigr\}_{a_v \in \mathsf{A}_v} \subset \cA_{\Parent{v}}^v.
    \end{align}

    \item For every non-root internal party $v$, there are completely positive maps
    \begin{align}\label{eq:TreeTransferMapType}
        \instr{v}{a_v}{x_v}: \InAlg{v} \to \cA_{\Parent{v}}^v \subset \JoinAlg{\Parent{v}}, \qquad a_v \in \mathsf{A}_v, \quad x_v \in \mathsf{X}_v,
    \end{align}
    satisfying the source-replacer condition
    \begin{align}\label{eq:TreeTransferReplacer}
        \sum_{a_v\in\mathsf{A}_v}\instr{v}{a_v}{x_v}(q)=\InState{v}(q)\,\id_{\cA_{\Parent{v}}^v}
    \end{align}
    for every $q\in\InAlg{v}$ and every $x_v\in\mathsf{X}_v$.

    \item At the root $r$, there are positive functionals
    \begin{align}\label{eq:TreeRootFunctionalType}
        \rstate{r}{a_r}{x_r}:\InAlg{r}\to\mathbb{C}, \qquad a_r\in\mathsf{A}_r, \quad x_r\in\mathsf{X}_r,
    \end{align}
    satisfying the source-replacer condition
    \begin{align}\label{eq:TreeRootFunctionalReplacer}
        \sum_{a_r\in\mathsf{A}_r}\rstate{r}{a_r}{x_r}=\InState{r}
    \end{align}
    for every $x_r\in\mathsf{X}_r$.
\end{enumerate}

Fix outcomes $\mathbf{a}$ and settings $\mathbf{x}$.
We define a \emph{transfer message} $\TransferMsg{v}\in\cA_{\Parent{v}}^v$ for every non-root party $v$ and a \emph{source message} $\TransferMsg{\alpha}\in\JoinAlg{\alpha}$ for every source $\alpha$, by a single recursion running from the leaves of $\Tree$ towards the root.
The recursion starts at the leaf parties, whose messages are their POVM elements:
\begin{align}\label{eq:TreeRecursiveTransfer}
    \TransferMsg{v} \coloneqq \begin{cases}
        \meas{v}{a_v}{x_v}, & v \text{ is a leaf party},\\[1mm]
        \instr{v}{a_v}{x_v}\!\left(\displaystyle\bigotimes_{\alpha\in\Children{v}}\TransferMsg{\alpha}\right), & v \text{ is an internal party}.
    \end{cases}
\end{align}
Each source then combines the messages of its child parties:
\begin{align}\label{eq:TreeSourceMessage}
    \TransferMsg{\alpha} \coloneqq  \bigotimes_{v\in\Children{\alpha}} \TransferMsg{v}.
\end{align}
If $\alpha$ is a leaf source, then $\Children{\alpha}=\emptyset$ and the empty tensor product gives $\TransferMsg{\alpha}=1\in\mathbb{C}$, consistent with the convention $\JoinAlg{\alpha}=\mathbb{C}$ above.
The correlation is recovered at the root by
\begin{align}\label{eq:TreeSourceTransferCorrelation}
    p(\mathbf{a}|\mathbf{x})=\rstate{r}{a_r}{x_r}\!\left(\bigotimes_{\alpha\in\Children{r}}\TransferMsg{\alpha}\right).
\end{align}

We call a source-transfer realization \emph{finite-dimensional} if every joint source algebra $\JoinAlg{\alpha}$ is finite-dimensional, equivalently if every local source algebra $\cA_\alpha^v$ is finite-dimensional.
\end{definition}

As required by the commuting-operator formulation of measurement locality, the maximal tensor product $\bigotimes^{\max}$ used above allows arbitrary commuting representations of the child-party algebras.

Note that the separate treatment of leaves and the root in~\zcref{def:SourceTransferModelTree} is only for readability.
Indeed, a leaf POVM can equivalently be viewed as a completely positive instrument from $\mathbb{C}$ to $\cA_{\Parent{v}}^v$, while the root functionals are completely positive maps from $\InAlg{r}$ to $\mathbb{C}$.
A source leaf represents a local source: we assign its joint source algebra to be $\mathbb{C}$, and absorb the state information into the transfer map of its parent party.

See \zcref{fig:SourceTransferExamples} for an illustration, and \zcref{sec:SimpleExample} for concrete examples.

\subsection{Equivalence with the mixed quantum model}\label{sec:EquivalenceSourceTransferModel}
We now prove that the source-transfer model in \zcref{def:SourceTransferModelTree} is exactly the mixed quantum model on source-party trees.

\begin{theorem}[Equivalence of the source-transfer and mixed models]\label{thm:SourceTransferTreeEquivalence}
Let $\Tree$ be a finite source-party tree and let $r \in \PartyVertices{\Tree}$ be any choice of root party.
A correlation $p(\mathbf{a}|\mathbf{x})$ admits a mixed quantum realization on $\Tree$ as in \zcref{def:MixedQuantumModelSourcePartyGraph} if and only if it admits a source-transfer realization on $(\Tree,r)$ as in \zcref{def:SourceTransferModelTree}.

Consequently, the set of correlations admitting a source-transfer realization on $(\Tree,r)$ is independent of the choice of root party $r \in \PartyVertices{\Tree}$.
\end{theorem}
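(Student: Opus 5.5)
We prove the two implications separately; the root-independence statement then follows at once, since the mixed model makes no reference to a root.

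\emph{Mixed $\Rightarrow$ source-transfer.} Fix a mixed realization and the root $r$. For each source $\alpha$ and child party $v\in\Children{\alpha}$, set $\cA_\alpha^v\coloneqq\cM_\alpha^v$. The algebras $\{\cM_\alpha^v\}_{v\in\Children{\alpha}}$ commute in $\Bof{\cH_\alpha}$, so by the universal property of $\maxotimes$ they induce a representation $\rho_\alpha:\JoinAlg{\alpha}\to\Bof{\cH_\alpha}$. We then put $\omega_\alpha(q)\coloneqq\sandwich{\Omega_\alpha}{\rho_\alpha(q)}{\Omega_\alpha}$.

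For a non-root party $v$ with parent source $\beta=\Parent{v}$, the parties $v$ and $\Parent{\beta}$ both act on $\beta$. We define
\begin{align}
    \instr{v}{a_v}{x_v}(q)\coloneqq\bigl(\bra{\bigotimes_{\alpha\in\Children{v}}\Omega_\alpha}\otimes I_{\cH_\beta}\bigr)\,\Meas{v}{a_v}{x_v}\,\bigl(\rho(q)\otimes I_{\cH_\beta}\bigr)\,\bigl(\ket{\bigotimes_{\alpha\in\Children{v}}\Omega_\alpha}\otimes I_{\cH_\beta}\bigr),
\end{align}
where $\rho=\bigotimes_{\alpha\in\Children{v}}\rho_\alpha$. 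Here $\rho(q)$ lies in the commutant of $\cM_\alpha^v$ on each child source, because the algebras $\cM_\alpha^u$ with $u\in\Children{\alpha}$ commute with $\cM_\alpha^v$. Hence $\Meas{v}{a_v}{x_v}$ commutes with $\rho(q)\otimes I$, and we may rewrite the middle as $(\Meas{v}{a_v}{x_v})^{1/2}(\rho(q)\otimes I)(\Meas{v}{a_v}{x_v})^{1/2}$. This form shows complete positivity.

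We must also check that $\instr{v}{a_v}{x_v}(q)$ lands in $\cM_\beta^v$. Slicing an element of $\bigl(\vnbigotimes_{\alpha\in\Children{v}}\Bof{\cH_\alpha}\bigr)\vnotimes\cM_\beta^v$ by vector functionals on the child factors yields an element of $\cM_\beta^v$ (Tomiyama slice maps). Since $\Meas{v}{a_v}{x_v}(\rho(q)\otimes I)$ is such an element, the claim holds. The source-replacer condition follows from $\sum_{a_v}\Meas{v}{a_v}{x_v}=I$. The root functional $\rstate{r}{a_r}{x_r}$ is the analogous full vector-state sandwich.

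We then verify~\eqref{eq:TreeSourceTransferCorrelation} by induction from the leaves. The claim is that the expectation of $\prod_{u\in\Tree_v}\Meas{u}{a_u}{x_u}$ in the vectors of the sources strictly below $v$ equals $\TransferMsg{v}$, as an operator on $\cH_{\Parent{v}}$. The inductive step uses that different subtrees act on disjoint tensor factors, so the partial expectation factorizes over children.

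\emph{Source-transfer $\Rightarrow$ mixed.} Given a source-transfer realization, take the GNS triple $\gns{\alpha}$ of $(\JoinAlg{\alpha},\omega_\alpha)$ for each source, and set $\cH_\alpha,\ket{\Omega_\alpha}$ accordingly. For each child party $v\in\Children{\alpha}$, let $\cM_\alpha^v\coloneqq\pi_\alpha(\cA_\alpha^v)''$, and for the parent party $\Parent{\alpha}$ let $\cM_\alpha^{\Parent{\alpha}}\coloneqq\pi_\alpha(\JoinAlg{\alpha})'$. These algebras mutually commute, since distinct factors of a maximal tensor product commute.

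For an internal non-root party $v$, apply \zcref{cor:RadonNikodymSourceProductReplacer} with $\cX_j=\JoinAlg{\alpha_j}$ for $\alpha_j\in\Children{v}$, $\cY=\JoinAlg{\Parent{v}}$ and $\cZ=\cA_{\Parent{v}}^v$. This gives POVMs
\begin{align}
    \Meas{v}{a_v}{x_v}\coloneqq F_{a_v\mid x_v}\in\vnbigotimes_{\alpha\in\Children{v}}\pi_\alpha(\JoinAlg{\alpha})'\vnotimes\cM_{\Parent{v}}^v,
\end{align}
which lies in $\cM^v$ by the choice of algebras above. For the root, \zcref{rem:FunctionalProductRadonNikodymIdentity} gives $\Meas{r}{a_r}{x_r}\in\vnbigotimes_{\alpha\in\Children{r}}\pi_\alpha(\JoinAlg{\alpha})'$. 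For a leaf party, set $\Meas{v}{a_v}{x_v}\coloneqq\pi_{\Parent{v}}(\meas{v}{a_v}{x_v})$. Leaf sources carry $\cH_\alpha=\mathbb{C}$.

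To recover Born's rule, we peel the tree from the root downward using the expectation identity~\eqref{eq:ProductRadonNikodymExpectationIdentity}. At each stage, the remaining product of descendant POVMs, evaluated partially in the source vectors below, plays the role of $\pi_{\cX}(x)$. The operator $Z\in\cM_{\cZ}'$ collects the contributions of the other parties on the parent source. It lies in $\cM_{\cZ}'$ because the other children's algebras and the parent's commutant algebra both commute with $\pi(\cA_{\Parent{v}}^v)$.

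\emph{Main obstacle.} The delicate point is this inductive bookkeeping. Concretely, we must show by induction that the partial expectation of $\prod_{u\in\Tree_v}\Meas{u}{a_u}{x_u}$ over the sources strictly inside $\Tree_v$ equals $\pi_{\Parent{v}}(\TransferMsg{v})$. This is exactly the partial-sandwich form \eqref{eq:ProductRadonNikodymPartialSandwich}, applied with $x=\bigotimes_\alpha\TransferMsg{\alpha}$ and combined with the inductive hypothesis. To pass from the hypothesis on the algebra to the $F$'s, we first establish the claim on elementary tensors. We then extend using that the partial trace commutes with the relevant products, since the $F$'s of different children act on disjoint factors.
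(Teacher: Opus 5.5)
Your proposal is correct and follows essentially the same route as the paper. For mixed $\Rightarrow$ source-transfer, both you and the paper partially sandwich the mixed-model POVMs with the child-source vectors. For the converse, both use GNS representations with parent algebras $\pi_\alpha(\JoinAlg{\alpha})'$, the product Radon--Nikodym corollary, and a leaves-to-root induction through the partial-sandwich identity; your explicit commutant argument for $[\Meas{v}{a_v}{x_v},\rho(q)\otimes I]=0$ and your slice-map argument that the transfer maps land in $\cM_{\Parent{v}}^v$ just spell out steps the paper states briefly.
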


The equivalence of source-transfer correlations means that the choice of root changes only the algebraic data, but not the set of correlations.
We refer to \zcref{sec:SimpleExample} for explicit calculations on concrete examples that may help the reader follow the general proof below.

\begin{proof}
\emph{Mixed model $\Rightarrow$ source-transfer model.}
Recall $\Tree_v$ is the subtree rooted at $v$, consisting of $v$ as the root together with all vertices whose unique path to the root $r$ passes through $v$.
We first construct a source-transfer realization from a mixed quantum realization of $p$.

\medskip
\noindent
\emph{Constructing source algebras and states.}
For every source $\alpha$ and every child party $v \in \Children{\alpha}$, let
\begin{align}\label{eq:TreeMixedToTransferEdgeData}
    \cA_\alpha^v \coloneqq \cM_\alpha^v \subset \Bof{\cH_\alpha}, \qquad \JoinAlg{\alpha} \coloneqq \bigotimes_{v\in\Children{\alpha}}^{\max}\cA_\alpha^v.
\end{align}
Since the algebras $\cM_\alpha^v$, $v\in\Children{\alpha}$, mutually commute, the universal property of the maximal tensor product gives a canonical unital $*$-representation $\bigotimes_{v\in\Children{\alpha}}q_v \mapsto \prod_{v\in\Children{\alpha}} q_v$.
We omit this canonical representation from the notation throughout this direction for simplicity.
With the omission, define the source state $\omega_\alpha: \cA_{\alpha} \to \mathbb{C}$ by
\begin{align}
    \omega_\alpha(q) \coloneqq \sandwich{\Omega_\alpha}{ q }{\Omega_\alpha}, \qquad q \in \JoinAlg{\alpha}.
\end{align}

\medskip
\noindent
\emph{Constructing data for parties.}
Let $v \in \PartyVertices{\Tree}$ be a non-root party, define the incoming algebras and spaces over its child sources by
\begin{align}
    \InAlg{v} \coloneqq \bigotimes_{\alpha\in\Children{v}}^{\max} \JoinAlg{\alpha}, \qquad \InHilb{v} \coloneqq \bigotimes_{\alpha\in\Children{v}}\cH_\alpha, \qquad \ket{\InVec{v}} \coloneqq \bigotimes_{\alpha\in\Children{v}}\ket{\Omega_\alpha}.
\end{align}
Again, we use the canonical product representation on $\InHilb{v}$ and omit it for simplicity.
We now construct the data for every party.

If $v \in \PartyVertices{\Tree}$ is a non-root leaf, simply define
\begin{align}\label{eq:TreeMixedToTransferLeaf}
    \meas{v}{a_v}{x_v}\coloneqq\Meas{v}{a_v}{x_v}\in\cM_{\Parent{v}}^v=\cA_{\Parent{v}}^v.
\end{align}

If $v \in \PartyVertices{\Tree}$ is a non-root internal party, the transfer maps at $v$, $\instr{v}{a_v}{x_v}: \InAlg{v} \to \cA_{\Parent{v}}^v \subset \JoinAlg{\Parent{v}}$, are defined analogously to \zcref{rem:RadonNikodymSandwichForm} by partial sandwiching:
\begin{align}\label{eq:TreeMixedToTransferMap}
    \instr{v}{a_v}{x_v}(q) \coloneqq \sandwich{\InVec{v}}{\bigl(\Meas{v}{a_v}{x_v}\bigr)^{1/2}\bigl( q \otimes I_{\cH_{\Parent{v}}} \bigr)\bigl(\Meas{v}{a_v}{x_v}\bigr)^{1/2}}{\InVec{v}},
\end{align}
for all $q \in \InAlg{v}$.
Since $\Meas{v}{a_v}{x_v} \in \left( \vnbigotimes_{\alpha\in\Children{v}}\cM_\alpha^v \right) \vnotimes \cM_{\Parent{v}}^v \subset \Bof{\InHilb{v}} \vnotimes \cM_{\Parent{v}}^v$, this partial sandwich indeed gives an operator in $\cM_{\Parent{v}}^v = \cA_{\Parent{v}}^v$ acting on $\cH_{\Parent{v}}$.
In addition, it is completely positive as a composition of the conjugation map by positive operators and the partial trace by a vector state.
Moreover, by construction $\Meas{v}{a_v}{x_v}$ commutes with all $q\otimes I_{\cH_{\Parent{v}}}$; hence its square root also commutes with all $q \otimes I_{\cH_{\Parent{v}}}$~\cite[\nopp I.4.2.1]{blackadar2006operator}.
Summing over $a_v$ and using POVM completeness implies
\begin{align}
    \sum_{a_v}\instr{v}{a_v}{x_v}(q)=\InState{v}(q)\,\id_{\cA_{\Parent{v}}^v},
\end{align}
so the source-replacer condition \zcref{eq:TreeTransferReplacer} holds.

At the root $r$, define $\InHilb{r}$, $\ket{\InVec{r}}$, in the same way and analogously let
\begin{align}\label{eq:TreeMixedToTransferRoot}
    \rstate{r}{a_r}{x_r}(q) \coloneqq \sandwich{\InVec{r}}{\bigl(\Meas{r}{a_r}{x_r}\bigr)^{1/2} q \bigl(\Meas{r}{a_r}{x_r}\bigr)^{1/2}}{\InVec{r}}, \qquad q\in\InAlg{r}.
\end{align}
The same commutation and POVM-completeness argument gives
\begin{align}
    \sum_{a_r}\rstate{r}{a_r}{x_r}=\InState{r},
\end{align}
which is the desired source-replacer condition \zcref{eq:TreeRootFunctionalReplacer}.

\medskip
\noindent
\emph{Recovering the correlations by inductively computing transfer messages.}
It remains to verify the correlation-recovery condition
\zcref{eq:TreeSourceTransferCorrelation}.
Fix outcomes $\mathbf{a}$ and settings $\mathbf{x}$.
For every non-root party $v$ and its subtree $\Tree_v$, we claim that the recursively defined party message satisfies
\begin{align}\label{eq:TreeSubtreeTransferIdentity}
    \TransferMsg{v}=\sandwich{\bigotimes_{\alpha\in\SourceVertices{\Tree_v}}\Omega_\alpha}{\prod_{w\in\PartyVertices{\Tree_v}}\Meas{w}{a_w}{x_w}}{\bigotimes_{\alpha\in\SourceVertices{\Tree_v}}\Omega_\alpha},
\end{align}
where the sandwich is taken over all source Hilbert spaces in the subtree $\Tree_v$, leaving an operator acting on the parent-source Hilbert space $\cH_{\Parent{v}}$.

We prove \zcref{eq:TreeSubtreeTransferIdentity} recursively from the leaves towards the root.
For a leaf party $v$, the subtree $\Tree_v$ contains no source vertex, so the identity is simply \zcref{eq:TreeMixedToTransferLeaf}.
Now let $v$ be an internal party and suppose that \zcref{eq:TreeSubtreeTransferIdentity} holds for every $u\in\Children{\alpha}$ and every $\alpha\in\Children{v}$.
Recall that $\TransferMsg{\alpha} = \bigotimes_{u\in\Children{\alpha}}\TransferMsg{u}$.
For fixed $\alpha\in\Children{v}$, the subtrees $\Tree_u$, $u\in\Children{\alpha}$, are pairwise disjoint, and the same is true for the branches corresponding to distinct $\alpha\in\Children{v}$.
Consequently,
\begin{align}\label{eq:TreeSubtreeSourceDecomposition}
    \bigotimes_{\alpha\in\SourceVertices{\Tree_v}}\ket{\Omega_\alpha} = \underbrace{ \bigotimes_{\alpha\in\Children{v}}\ket{\Omega_\alpha} }_{\ket{\InVec{v}}} \otimes \bigotimes_{\alpha\in\Children{v}} \bigotimes_{u\in\Children{\alpha}} \bigotimes_{\beta\in\SourceVertices{\Tree_u}}\ket{\Omega_\beta},
\end{align}
and
\begin{align}\label{eq:TreeSubtreeMeasurementDecomposition}
    \prod_{w\in\PartyVertices{\Tree_v}} \Meas{w}{a_w}{x_w} = \bigl(\Meas{v}{a_v}{x_v}\bigr)^{1/2} \left( \prod_{\alpha\in\Children{v}} \prod_{u\in\Children{\alpha}} \prod_{w\in\PartyVertices{\Tree_u}} \Meas{w}{a_w}{x_w} \right) \bigl(\Meas{v}{a_v}{x_v}\bigr)^{1/2}
\end{align}
due to the mixed-model commutation relations of $\Meas{v}{a_v}{x_v}$.
Thus, using the induction hypothesis \zcref{eq:TreeSubtreeTransferIdentity}, and \zcref{eq:TreeMixedToTransferMap,eq:TreeSubtreeSourceDecomposition,eq:TreeSubtreeMeasurementDecomposition}, we obtain
\begin{equation}
    \begin{aligned}
        \TransferMsg{v} &= \sandwich{\InVec{v}}{ \bigl(\Meas{v}{a_v}{x_v}\bigr)^{1/2} \left( \bigotimes_{\alpha\in\Children{v}} \TransferMsg{\alpha} \otimes I_{\cH_{\Parent{v}}} \right) \bigl(\Meas{v}{a_v}{x_v}\bigr)^{1/2} }{\InVec{v}} \\
        &= \sandwich{\bigotimes_{\alpha\in\SourceVertices{\Tree_v}}\Omega_\alpha}{\prod_{w\in\PartyVertices{\Tree_v}} \Meas{w}{a_w}{x_w}}{ \bigotimes_{\alpha\in\SourceVertices{\Tree_v}}\Omega_\alpha }.
    \end{aligned}
\end{equation}
This proves \zcref{eq:TreeSubtreeTransferIdentity} for $v$.

Finally, at the root $r$, the same disjoint-subtree argument of internal vertices applies.
Therefore, by \zcref{eq:TreeMixedToTransferRoot} and the claim \zcref{eq:TreeSubtreeTransferIdentity},
\begin{align}
    \rstate{r}{a_r}{x_r}\!\left(\bigotimes_{\alpha\in\Children{r}}\TransferMsg{\alpha}\right)=\sandwich{\bigotimes_{\alpha\in\SourceVertices{\Tree}}\Omega_\alpha}{\prod_{v\in\PartyVertices{\Tree}}\Meas{v}{a_v}{x_v}}{\bigotimes_{\alpha\in\SourceVertices{\Tree}}\Omega_\alpha}=p(\mathbf{a}|\mathbf{x}).
\end{align}
Hence the constructed source-transfer realization reproduces the original mixed correlation.

\bigskip
\noindent
\emph{Source-transfer model $\Rightarrow$ mixed model.}
Conversely, suppose that a source-transfer realization on $(\Tree,r)$ is given.
For every source $\alpha\in\SourceVertices{\Tree}$, let $\gns{\alpha}$ be the GNS representation of $(\JoinAlg{\alpha},\omega_\alpha)$ and define the source von Neumann algebra by
\begin{align}\label{eq:TreeTransferToMixedVNAlgebras}
    \cM_\alpha \coloneqq \pi_\alpha(\JoinAlg{\alpha})'' \subset \Bof{\cH_\alpha}.
\end{align}
For every child party $v \in \Children{\alpha}$, define the observable algebras of $v$ on source $\alpha$ by
\begin{align}\label{eq:TreeTransferToMixedEndpointAlgebras}
    \cM_\alpha^v \coloneqq \pi_\alpha(\cA_\alpha^v)'' \subset \Bof{\cH_\alpha}, \qquad \cM_\alpha^{\Parent{\alpha}} \coloneqq \cM_\alpha' \subset \Bof{\cH_\alpha}.
\end{align}
Note that $\cM_\alpha^v$, $v\in\Children{\alpha}$, mutually commute because they arise from different factors of the maximal tensor product $\cA_\alpha$, and the parent algebra $\cM_\alpha^{\Parent{\alpha}}$ commutes with all of them by construction.
Define the global source Hilbert space and source vector by
\begin{align}\label{eq:TreeTransferToMixedGlobalSource}
    \cH \coloneqq \bigotimes_{\alpha\in\SourceVertices{\Tree}}\cH_\alpha, \qquad \ket{\Omega} \coloneqq \bigotimes_{\alpha\in\SourceVertices{\Tree}}\ket{\Omega_\alpha}.
\end{align}
For every party $v$, the incoming product state $\InState{v} = \bigotimes_{\alpha \in \Children{v}} \omega_{\alpha}$ has the GNS representation
\begin{align}\label{eq:TreeTransferIncomingGNS}
    \InHilb{v} \coloneqq \bigotimes_{\alpha\in\Children{v}}\cH_\alpha, \qquad \InRep{v} \coloneqq \bigotimes_{\alpha\in\Children{v}}\pi_\alpha, \qquad \ket{\InVec{v}} \coloneqq \bigotimes_{\alpha\in\Children{v}}\ket{\Omega_\alpha},
\end{align}
by the product-GNS argument in \zcref{cor:RadonNikodymSourceProductReplacer}.

\medskip
\noindent
\emph{Reconstructing measurement POVMs.}
For every non-root leaf party $v$, define
\begin{align}\label{eq:TreeTransferToMixedLeaf}
    \Meas{v}{a_v}{x_v} \coloneqq \pi_{\Parent{v}}\!\left(\meas{v}{a_v}{x_v}\right) \in \cM_{\Parent{v}}^v.
\end{align}
Now let $v$ be a non-root internal party.
Observe that the completely positive maps $\{ \instr{v}{a_v}{x_v}\}_{a_v \in \mathsf{A}_v}$ satisfy the source-replacer condition with respect to $\InState{v} \id_{\cA_{\Parent{v}}^v}$.
Thus, apply \zcref{cor:RadonNikodymSourceProductReplacer} to its input algebras $\JoinAlg{\alpha}$, $\alpha\in\Children{v}$, ambient algebra $\cY = \cA_{\Parent{v}}$, and output subalgebra $\cZ=\cA_{\Parent{v}}^v$.
This gives a POVM satisfying
\begin{align}\label{eq:TreeTransferToMixedInternalLocation}
    \Meas{v}{a_v}{x_v} \in \left(\vnbigotimes_{\alpha\in\Children{v}}\underbrace{\cM_\alpha'}_{=\cM_\alpha^v}\right)\vnotimes\cM_{\Parent{v}}^v=\vnbigotimes_{\alpha\in\Neighbors{v}}\cM_\alpha^v \coloneqq \cM^v.
\end{align}
The partial-sandwich identity, \zcref{eq:ProductRadonNikodymPartialSandwich}, becomes
\begin{align}\label{eq:TreeTransferToMixedInternalIdentity}
    \pi_{\Parent{v}}\!\left(\instr{v}{a_v}{x_v}(q)\right)=\sandwich{\InVec{v}}{\Meas{v}{a_v}{x_v}\bigl(\InRep{v}(q)\otimes I_{\cH_{\Parent{v}}}\bigr)}{\InVec{v}}, \qquad q\in\InAlg{v}.
\end{align}
(The reason we require $\instr{v}{a_v}{x_v}: \InAlg{v} \to \cA_{\Parent{v}}^v$ instead of $\instr{v}{a_v}{x_v}: \InAlg{v} \to \JoinAlg{\Parent{v}}$ is precisely to localize the reconstructed POVM to the correct $\cM_{\Parent{v}}^v$ in the multipartite parent source $\cM_{\Parent{v}}$.)

At the root $r$, \zcref{rem:FunctionalProductRadonNikodymIdentity} gives a POVM satisfying
\begin{align}\label{eq:TreeTransferToMixedRootLocation}
    \Meas{r}{a_r}{x_r}\in\vnbigotimes_{\alpha\in\Children{r}}\cM_\alpha'=\vnbigotimes_{\alpha\in\Children{r}}\cM_\alpha^r \coloneqq \cM^r,
\end{align}
and
\begin{align}\label{eq:TreeTransferToMixedRootIdentity}
    \rstate{r}{a_r}{x_r}(q)=\sandwich{\InVec{r}}{\Meas{r}{a_r}{x_r}\,\InRep{r}(q)}{\InVec{r}}, \qquad q\in\InAlg{r}.
\end{align}
It follows that every reconstructed POVM belongs to the local observable algebra required in \zcref{def:MixedQuantumModelSourcePartyGraph}.

\medskip
\noindent
\emph{Recovery of the correlation from source-transfer realizations.}
It remains to verify that the reconstructed mixed model reproduces the source-transfer correlation $p(\mathbf{a}|\mathbf{x})$.
For every non-root party $v$, we claim that its recursively defined message $\TransferMsg{v}$ satisfies
\begin{align}\label{eq:TreeTransferToMixedSubtreeIdentity}
    \pi_{\Parent{v}}(\TransferMsg{v})=\sandwich{\bigotimes_{\alpha\in\SourceVertices{\Tree_v}}\Omega_\alpha}{\prod_{w\in\PartyVertices{\Tree_v}}\Meas{w}{a_w}{x_w}}{\bigotimes_{\alpha\in\SourceVertices{\Tree_v}}\Omega_\alpha},
\end{align}
where the partial sandwich is taken over all source Hilbert spaces in subtree $\Tree_v$, leaving an operator acting only on $\cH_{\Parent{v}}$.
This is analogous to \zcref{eq:TreeSubtreeTransferIdentity} from the forward direction.

We prove \zcref{eq:TreeTransferToMixedSubtreeIdentity} by induction from the leaves towards the root.
If $v$ is a leaf party, then $\SourceVertices{\Tree_v}=\emptyset$ and $\pi_{\Parent{v}}(\TransferMsg{v})=\pi_{\Parent{v}}\!\left(\meas{v}{a_v}{x_v}\right)=\Meas{v}{a_v}{x_v}$, leading to \zcref{eq:TreeTransferToMixedSubtreeIdentity}.
Now let $v$ be an internal party and suppose that the claim holds for every $u\in\Children{\alpha}$ and every $\alpha\in\Children{v}$.
For each child source $\alpha\in\Children{v}$, $\TransferMsg{\alpha}=\bigotimes_{u\in\Children{\alpha}}\TransferMsg{u}$ and multiplicativity of the GNS representation give
\begin{align}\label{eq:TreeRepresentedSourceMessage}
    \pi_\alpha(\TransferMsg{\alpha}) =\prod_{u\in\Children{\alpha}}\pi_\alpha(\TransferMsg{u}).
\end{align}
Then, we calculate
\begin{equation}
    \begin{aligned}
        \pi_{\Parent{v}}(\TransferMsg{v})
        &=\pi_{\Parent{v}}\!\left(\instr{v}{a_v}{x_v}\!\left(\bigotimes_{\alpha\in\Children{v}}\TransferMsg{\alpha}\right)\right)\\
        &=\sandwich{\InVec{v}}{\Meas{v}{a_v}{x_v}\left(\InRep{v}\!\left(\bigotimes_{\alpha\in\Children{v}}\TransferMsg{\alpha}\right)\otimes I_{\cH_{\Parent{v}}}\right)}{\InVec{v}}\\
        &=\sandwich{\InVec{v}}{\Meas{v}{a_v}{x_v}\left(\bigotimes_{\alpha\in\Children{v}}\prod_{u\in\Children{\alpha}}\pi_\alpha(\TransferMsg{u})\otimes I_{\cH_{\Parent{v}}}\right)}{\InVec{v}}\\
        &=\sandwich{\bigotimes_{\beta\in\SourceVertices{\Tree_v}}\Omega_\beta}{\Meas{v}{a_v}{x_v}\prod_{\alpha\in\Children{v}}\prod_{u\in\Children{\alpha}}\prod_{w\in\PartyVertices{\Tree_u}}\Meas{w}{a_w}{x_w}}{\bigotimes_{\beta\in\SourceVertices{\Tree_v}}\Omega_\beta}\\
        &=\sandwich{\bigotimes_{\beta\in\SourceVertices{\Tree_v}}\Omega_\beta}{\prod_{w\in\PartyVertices{\Tree_v}}\Meas{w}{a_w}{x_w}}{\bigotimes_{\beta\in\SourceVertices{\Tree_v}}\Omega_\beta}.
    \end{aligned}
\end{equation}
The first equality is the recursive definition of $\TransferMsg{v}$ in \zcref{eq:TreeRecursiveTransfer}, while the second follows from \zcref{eq:TreeTransferToMixedInternalIdentity}.
For the third equality, we use \zcref{eq:TreeRepresentedSourceMessage} and $\InRep{v}\!\left(\bigotimes_{\alpha\in\Children{v}}\TransferMsg{\alpha}\right) =\bigotimes_{\alpha\in\Children{v}}\pi_\alpha(\TransferMsg{\alpha})$.
For the fourth equality, apply the induction hypothesis to every
$u\in\Children{\alpha}$.
Since the associated subtrees are pairwise disjoint, their partial sandwiches can be taken simultaneously.
Using the same source-vector decomposition as in \zcref{eq:TreeSubtreeSourceDecomposition} then gives exactly the fourth line.
The last equality follows because of the commutativity of the reconstructed POVMs and the same party-vertex decomposition \zcref{eq:TreeSubtreeMeasurementDecomposition}.
This proves the claim of \zcref{eq:TreeTransferToMixedSubtreeIdentity}.

Finally, using \zcref{eq:TreeSourceTransferCorrelation,eq:TreeTransferToMixedRootIdentity,eq:TreeTransferToMixedSubtreeIdentity}, and the disjoint subtree decomposition in \zcref{eq:TreeSubtreeSourceDecomposition,eq:TreeSubtreeMeasurementDecomposition}, we analogously compute
\begin{equation}\label{eq:TreeTransferToMixedCorrelation}
    \begin{aligned}
        p(\mathbf{a}|\mathbf{x})
        &=\rstate{r}{a_r}{x_r}\!\left(\bigotimes_{\alpha\in\Children{r}}\TransferMsg{\alpha}\right)\\
        &=\sandwich{\InVec{r}}{\Meas{r}{a_r}{x_r}\InRep{r}\!\left(\bigotimes_{\alpha\in\Children{r}}\TransferMsg{\alpha}\right)}{\InVec{r}}\\
        &=\sandwich{\InVec{r}}{\Meas{r}{a_r}{x_r}\left(\bigotimes_{\alpha\in\Children{r}}\prod_{u\in\Children{\alpha}}\pi_\alpha(\TransferMsg{u})\right)}{\InVec{r}}\\
        &=\sandwich{\Omega}{\Meas{r}{a_r}{x_r}\prod_{\alpha\in\Children{r}}\prod_{u\in\Children{\alpha}}\prod_{w\in\PartyVertices{\Tree_u}}\Meas{w}{a_w}{x_w}}{\Omega} =\sandwich{\Omega}{\prod_{v\in\PartyVertices{\Tree}}\Meas{v}{a_v}{x_v}}{\Omega}.
    \end{aligned}
\end{equation}
Therefore the reconstructed objects form a mixed quantum realization on $\Tree$ and reproduce the original source-transfer correlation $p(\mathbf{a}|\mathbf{x})$.

\medskip
\noindent
\emph{Root-independence of quantum correlations on trees.}
Finally, let $r,s \in \PartyVertices{\Tree}$ be any two choices of the root party.
A source-transfer realization rooted at either party $r$ or $s$ is equivalent to the same mixed quantum model on $\Tree$.
This proves root independence and finishes the proof.
\end{proof}

We remark that the equivalence of both models is preserved in finite dimensions.

\begin{remark}[Equivalence in finite dimensions and the tensor-product model]\label{rem:FiniteDimensionalTransferToMixed}
On source-party trees, finite-dimensional mixed realizations and finite-dimensional source-transfer realizations give exactly the same correlations.
Together with \zcref{rem:MixedTensorProductModels}, this shows that both are equivalent to the finite-dimensional standard tensor-product model of~\cite{nielsen2010quantum}.

Indeed, the mixed-to-source-transfer construction in \zcref{thm:SourceTransferTreeEquivalence} takes $\cA_\alpha^v=\cM_\alpha^v\subset\Bof{\cH_\alpha}$.
If every $\cH_\alpha$ is finite-dimensional, these algebras and their maximal tensor products $\JoinAlg{\alpha}$ are finite-dimensional.
Conversely, if every $\JoinAlg{\alpha}$ is finite-dimensional, its source-state GNS space is finite-dimensional, with $\dim\cH_\alpha\leq\dim_{\mathbb{C}}\JoinAlg{\alpha}$.
The algebras $\cM_\alpha^v$ and $\cM_\alpha'$ in the reconstruction act on these same spaces, and \zcref{cor:RadonNikodymSourceProductReplacer} gives each party's POVM on the tensor product of its neighboring source GNS spaces.
The resulting mixed realization therefore acts on the finite-dimensional global space $\cH=\bigotimes_{\alpha\in\SourceVertices{\Tree}}\cH_\alpha$.
\end{remark}

Before considering an application to general networks with classical sources, we explain where the preceding proof uses the tree structure.

\begin{remark}[What is special about source-party trees?]\label{rem:WhyTreeGraph}
The proof of \zcref{thm:SourceTransferTreeEquivalence} uses the tree structure in two places.
First, every non-root party has a unique parent source, and every source has a unique parent party.
Thus the output algebra of each party's transfer maps is unambiguously determined.

Second, the subtrees below different child sources of a given party are disjoint.
Since distinct source vertices represent independent quantum sources, their states form exactly the product state $\InState{v}$, giving the product-source-replacer condition required by \zcref{cor:RadonNikodymSourceProductReplacer}.
Note that the same product-state mechanism also plays a central role in the previous equivalent results for the bilocal network~\cite{ligthart2023inflation,renou2026two}.

The same argument does not extend directly to source-party graphs containing cycles.
For example, consider the quantum triangle network $1 - \alpha_{12} - 2 - \alpha_{23} - 3 - \alpha_{13} - 1$, which is a six-cycle in the source-party graph.
If party $1$ is chosen as the root, the branches starting with $\alpha_{12}$ and $\alpha_{13}$ remain connected through party $2$, source $\alpha_{23}$, and party $3$.
They are therefore not disjoint source subtrees, and the product-state replacer channel on the right-hand side of the source-replacer condition in \zcref{cor:RadonNikodymSourceProductReplacer} is no longer available.
\end{remark}

\subsection{Application to networks with classical sources}\label{sec:BeyondTree}
A natural question is to what extent \zcref{thm:SourceTransferTreeEquivalence} extends beyond source-party trees.
We show that it also gives a characterization of source-party graphs when removing the classical sources results in a disjoint union of source-party trees.
For simplicity, we demonstrate the argument for a triangle network with a classical source $\gamma$ connecting $A$ and $B$.
The other sources $\alpha$ and $\beta$ connect $A$ to $C$ and $B$ to $C$, respectively.
Removing $\gamma$ leaves the bilocal tree $A-\alpha-C-\beta-B$, which we root at $C$; see \zcref{fig:BeyondTreeTriangle}.

\begin{figure}[htbp]
    \centering
    \begin{tikzpicture}[party/.style={draw,rectangle,minimum size=6mm,fill=white},source/.style={draw,circle,minimum size=6mm,inner sep=1pt},every node/.style={font=\small}]
        \node[party] (A) at (-1.35,0.9) {$A$};
        \node[party] (B) at (1.35,0.9) {$B$};
        \node[party,fill=blue!8] (C) at (0,-1.15) {$C$};
        \node[source,fill=orange!15] (gamma) at (0,0.9) {$\gamma$};
        \node[source] (alpha) at (-0.675,-0.125) {$\alpha$};
        \node[source] (beta) at (0.675,-0.125) {$\beta$};
        \draw (A)--(gamma)--(B)--(beta)--(C)--(alpha)--(A);
        \node[above=2mm of gamma] {$\lambda$};
        \draw[-{Stealth[length=2mm]}] (1.95,-0.1)--(2.85,-0.1);
        \node[party] (Ar) at (3.35,-0.1) {$A$};
        \node[source] (alphar) at (4.2,-0.1) {$\alpha$};
        \node[party,fill=blue!8] (Cr) at (5.05,-0.1) {$C$};
        \node[source] (betar) at (5.9,-0.1) {$\beta$};
        \node[party] (Br) at (6.75,-0.1) {$B$};
        \draw (Ar)--(alphar)--(Cr)--(betar)--(Br);
    \end{tikzpicture}
    \caption{
    A triangle network, an elementary example of a source-party graph with a cycle.
    Here, suppose that $\gamma$ is classical; removing it leaves the bilocal tree $A-\alpha-C-\beta-B$, rooted at $C$.
    }
    \label{fig:BeyondTreeTriangle}
\end{figure}

\begin{proposition}[Triangle networks with one classical source]\label{prop:BeyondTreeClassical}
Let $p(a,b,c\mid x,y,z)$ be a correlation with finite outcome sets $\mathsf{A}_A,\mathsf{A}_B,\mathsf{A}_C$ and finite setting sets $\mathsf{X}_A,\mathsf{X}_B,\mathsf{X}_C$.
Then $p$ admits a mixed triangle realization with a classical source $\gamma$ if and only if there exist $N\in\mathbb{N}$, a probability distribution $(w_t)_{t=1}^{N}$, and a correlation $q$ with the outcome sets $\mathsf{A}_A,\mathsf{A}_B,\mathsf{A}_C$ and setting sets $\mathsf{X}_A\times[N]$, $\mathsf{X}_B\times[N]$, $\mathsf{X}_C$, respectively, such that $q$ admits a source-transfer realization on the bilocal tree $A-\alpha-C-\beta-B$ and
\begin{align}\label{eq:BeyondTreeClassicalRecovery}
    p(a,b,c\mid x,y,z)=\sum_{t=1}^{N}w_t\,q(a,b,c\mid(x,t),(y,t),z)
\end{align}
for every $a,b,c,x,y,z$.
\end{proposition}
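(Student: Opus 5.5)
The plan is to reduce both directions to \zcref{thm:SourceTransferTreeEquivalence} on the bilocal tree $A-\alpha-C-\beta-B$. To do so, I will first fix what a mixed triangle realization with a classical source $\gamma$ means. There is a finite (or countable) random variable $\lambda$ with distribution $(w_t)$, distributed to $A$ and $B$. There are quantum sources $\alpha,\beta$ with Hilbert spaces $\cH_\alpha,\cH_\beta$, vectors $\ket{\Omega_\alpha},\ket{\Omega_\beta}$, and commuting von Neumann algebras $\cM_\alpha^A,\cM_\alpha^C\subset\Bof{\cH_\alpha}$ and $\cM_\beta^B,\cM_\beta^C\subset\Bof{\cH_\beta}$. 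The POVMs are $\Meas{A}{a}{x,t}\in\cM_\alpha^A$, $\Meas{B}{b}{y,t}\in\cM_\beta^B$ and $\Meas{C}{c}{z}\in\cM_\alpha^C\vnotimes\cM_\beta^C$. The correlation is $p=\sum_t w_t\sandwich{\Omega}{\Meas{A}{a}{x,t}\Meas{B}{b}{y,t}\Meas{C}{c}{z}}{\Omega}$. Equivalently, one can model $\gamma$ as a commutative source, with $\cH_\gamma=\mathbb{C}^N$, $\ket{\Omega_\gamma}=\sum_t\sqrt{w_t}\ket{t}$, and diagonal algebras for $A$ and $B$. I will state this equivalence as the working definition: a commutative von Neumann algebra on a finite set is spanned by the projections $\ketbra{t}{t}$, and $A$'s effect then decomposes as $\sum_t \ketbra{t}{t}\otimes \Meas{A}{a}{x,t}$. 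If $\gamma$ is allowed a general (non-finite) classical distribution, then finiteness of $N$ needs an extra Carathéodory argument. I will handle it by noting that the set of achievable $q$ is convex in the relevant sense after re-indexing, or else by restricting the definition to finite $\lambda$, and I will flag this as a point to check.

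For ($\Rightarrow$), fix a mixed triangle realization. Define $q(a,b,c\mid(x,t),(y,t'),z)\coloneqq\sandwich{\Omega_\alpha\otimes\Omega_\beta}{\Meas{A}{a}{x,t}\Meas{B}{b}{y,t'}\Meas{C}{c}{z}}{\Omega_\alpha\otimes\Omega_\beta}$ for all pairs $(t,t')$. This is by construction a mixed realization on the bilocal tree in the sense of \zcref{def:MixedQuantumModelSourcePartyGraph}, with the enlarged setting sets $\mathsf{X}_A\times[N]$ and $\mathsf{X}_B\times[N]$. The POVMs for each enlarged setting are still POVMs in the right local algebras. By \zcref{thm:SourceTransferTreeEquivalence}, $q$ therefore admits a source-transfer realization rooted at $C$, and \zcref{eq:BeyondTreeClassicalRecovery} holds by evaluating on the diagonal $t'=t$.

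For ($\Leftarrow$), take a source-transfer realization of $q$. \zcref{thm:SourceTransferTreeEquivalence} turns it into a mixed bilocal realization with POVMs $\Meas{A}{a}{x,t}$, $\Meas{B}{b}{y,t}$, $\Meas{C}{c}{z}$. Then adjoin the classical source: set $\cH_\gamma=\mathbb{C}^N$ with $\ket{\Omega_\gamma}=\sum_t\sqrt{w_t}\ket{t}$. Let $\cM_\gamma^A$ be the diagonal algebra and $\cM_\gamma^B$ the same diagonal algebra; these commute since they are abelian and equal. Define $\tilde M^A_{a\mid x}\coloneqq\sum_t\ketbra{t}{t}\otimes\Meas{A}{a}{x,t}\in\cM_\gamma^A\vnotimes\cM_\alpha^A$ and define $\tilde M^B$ similarly. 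Both are POVMs. The product $\tilde M^A\tilde M^B$ has cross terms $\ketbra{t}{t}\ketbra{t'}{t'}$ that vanish unless $t=t'$. Expanding Born's rule therefore gives $\sum_t w_t\, q(a,b,c\mid(x,t),(y,t),z)=p$.

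The main obstacle is not the algebra, which is routine given \zcref{thm:SourceTransferTreeEquivalence}, but pinning down the definition of "classical source" in the mixed model so that both directions match. In particular, I need to ensure that a classical $\gamma$ is exactly a commutative source shared by $A$ and $B$ whose effects are jointly diagonal, and that $N$ can be taken finite. I would first try the concrete finite-diagonal definition above. If a general abelian source is intended, I would use a spectral/conditional-expectation argument: measure $A$'s and $B$'s effects in a common commutative algebra, so that they are functions of a common classical variable. Since only finitely many settings and outcomes are involved, finitely many values suffice after coarse-graining the joint spectral measure by the finitely many values taken by the effects' relevant data. This yields finite $N$.
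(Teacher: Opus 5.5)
Your converse direction and the skeleton of your forward direction agree with the paper. In the forward direction you turn the value of $\gamma$ into extra settings at $A$ and $B$, define $q$ on all pairs $(t,t')$, apply \zcref{thm:SourceTransferTreeEquivalence}, and read off $p$ on the diagonal. In the converse, your explicit $\mathbb{C}^N$ diagonal model of $\gamma$ is a faithful version of the paper's ``add a classical source sending the same $t$ to $A$ and $B$''.

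\textbf{The gap is the step you flag but leave open.} The forward direction must handle an arbitrary classical source and still produce a finite $N$. That means a variable $\lambda\in\Lambda$ with a possibly continuous probability measure $\mu$ and $\lambda$-dependent effects $M^A_{a\mid x}(\lambda)$, $M^B_{b\mid y}(\lambda)$. None of your three fallbacks covers this.
\begin{itemize}
\item Restricting the definition to finitely many values of $\lambda$ proves a weaker statement, for which this direction is trivial.
\item Your spectral coarse-graining sketch fails. In general $\lambda\mapsto M^A_{a\mid x}(\lambda)\in\cM_\alpha^A$ takes uncountably many values, so no finite partition of $\Lambda$ makes the effects constant on each cell. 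Merging a cell $S$ while averaging $A$'s and $B$'s effects separately also changes the correlation, because $\lambda$ enters jointly through the product: in general $\int_S M^A(\lambda)M^B(\lambda)\,d\mu(\lambda)\neq\mu(S)^{-1}\int_S M^A\,d\mu\int_S M^B\,d\mu$.
\item The convexity of ``achievable $q$'s'' you allude to is never made precise. Network correlation sets are in general nonconvex, so it cannot be taken for granted.
\end{itemize}

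\emph{The missing idea is to apply Carath\'eodory to finite-dimensional correlation vectors, not to operators.} Set $F_\lambda(a,b,c\mid x,y,z)\coloneqq\sandwich{\Omega_\alpha\otimes\Omega_\beta}{M^A_{a\mid x}(\lambda)M^B_{b\mid y}(\lambda)M^C_{c\mid z}}{\Omega_\alpha\otimes\Omega_\beta}$. This is a point of $\mathbb{R}^n$ with $n=|\mathsf{A}_A||\mathsf{A}_B||\mathsf{A}_C||\mathsf{X}_A||\mathsf{X}_B||\mathsf{X}_C|$.
\begin{itemize}
\item Then $p=\int F_\lambda\,d\mu(\lambda)$ is the barycenter of a probability measure concentrated on $\{F_\lambda\}$.
\item Such a barycenter lies in the convex hull of $\{F_\lambda\}$; this is what the paper takes from Rosset et al.
\item Carath\'eodory then gives $p=\sum_{t=1}^{N}w_tF_{\lambda_t}$ exactly, with $N\leq n+1$.
\end{itemize}
This is usable because $\ket{\Omega_\alpha}$, $\ket{\Omega_\beta}$ and $M^C_{c\mid z}$ do not depend on $\lambda$. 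So the finitely many operators $M^A_{a\mid x}(\lambda_t)$ and $M^B_{b\mid y}(\lambda_s)$ can serve as extra settings on one fixed bilocal mixed realization. From there your argument goes through unchanged.
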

\begin{proof}
Suppose that $p$ admits a mixed realization on the triangle such that $\gamma$ distributes a possibly continuous classical variable $\lambda\in\Lambda$ according to a probability measure $\mu$.
The measurements at $A$ and $B$ may depend on $\lambda$, so we write $\Meas{A}{a}{x}(\lambda)$ and $\Meas{B}{b}{y}(\lambda)$.
The remaining source product state $\ket{\Omega_\alpha}\otimes\ket{\Omega_\beta}$ and the measurement $\Meas{C}{c}{z}$ do not depend on $\lambda$.

The correlation $p$ then satisfies
\begin{equation}\label{eq:BeyondTreeClassicalIntegral}
    \begin{aligned}
        p&=\int_\Lambda F_\lambda\,d\mu(\lambda),\\
        F_\lambda(a,b,c\mid x,y,z)&\coloneqq\sandwich{\Omega_\alpha\otimes\Omega_\beta}{\Meas{A}{a}{x}(\lambda)\Meas{B}{b}{y}(\lambda)\Meas{C}{c}{z}}{\Omega_\alpha\otimes\Omega_\beta}.
    \end{aligned}
\end{equation}
Carath\'eodory's theorem~\cite[Proposition~2 and Appendix~A]{rosset2018universal} implies that there exist $N\in\mathbb{N}$, values $\lambda_1,\dots,\lambda_N\in\Lambda$, and weights $w_1,\dots,w_N\geq0$ such that
\begin{align}\label{eq:BeyondTreeFiniteClassicalSource}
    p=\sum_{t=1}^{N}w_tF_{\lambda_t},\qquad \sum_{t=1}^{N}w_t=1.
\end{align}
Thus, $\mu$ can be replaced by a discrete distribution on $\{\lambda_1,\ldots,\lambda_N\}$.
It suffices to use the finitely many measurements $\Meas{A}{a}{x}(\lambda_t)$ and $\Meas{B}{b}{y}(\lambda_t)$, while $\ket{\Omega_\alpha}\otimes\ket{\Omega_\beta}$ and $\Meas{C}{c}{z}$ remain fixed.

Regarding $t,s\in[N]$ as additional measurement settings at $A,B$, respectively, we define a correlation $q$ with the following mixed realization on the bilocal tree:
\begin{align}\label{eq:BeyondTreeClassicalTreeTable}
    q(a,b,c\mid(x,t),(y,s),z)\coloneqq\sandwich{\Omega_\alpha\otimes\Omega_\beta}{\Meas{A}{a}{x}(\lambda_t)\Meas{B}{b}{y}(\lambda_s)\Meas{C}{c}{z}}{\Omega_\alpha\otimes\Omega_\beta}.
\end{align}
Then \zcref{eq:BeyondTreeFiniteClassicalSource} gives \zcref{eq:BeyondTreeClassicalRecovery}.
The correlation $q$ is defined on the bilocal tree $A-\alpha-C-\beta-B$, with finite setting sets indexed by $(x,t)$, $(y,s)$, and $z$.
Hence \zcref{thm:SourceTransferTreeEquivalence} gives a source-transfer realization of $q$.

Conversely, suppose that $q$ admits a source-transfer realization and that $p$ satisfies \zcref{eq:BeyondTreeClassicalRecovery} for a finite probability distribution $(w_t)_{t=1}^{N}$.
By \zcref{thm:SourceTransferTreeEquivalence}, we obtain a mixed realization of $q$ with source states $\ket{\Omega_\alpha},\ket{\Omega_\beta}$ independent of $t,s$.
We also have measurements $\Meas{A}{a}{(x,t)}$, $\Meas{B}{b}{(y,s)}$, and $\Meas{C}{c}{z}$.
Add an independent classical source $\gamma$ that sends the same value $t$ to $A,B$ with probability $w_t$.
Using $\Meas{A}{a}{(x,t)}$ and $\Meas{B}{b}{(y,t)}$ recovers $p$ by \zcref{eq:BeyondTreeClassicalRecovery}.
\end{proof}

\begin{remark}[Generalization beyond triangle networks]\label{rem:GeneralizationClassicalSource}
    The proposition generalizes inductively to networks in which removing the classical sources results in a disjoint union of source-party trees.
    Indeed, we reduce each classical source in turn to finitely many values by the Carath\'eodory argument, and regard these values as additional measurement settings at its neighboring parties.
    Each remaining tree then has finite setting sets, so \zcref{thm:SourceTransferTreeEquivalence} applies to it.
    For each fixed choice of the classical source values, the joint correlation is the product of the tree correlations since they are disjoint.
    Summing these products with the probabilities of the classical source values recovers the original correlation.
\end{remark}

\section{A complete source-transfer SDP hierarchy for quantum tree networks}
\label{sec:CompleteSDPSourceTransfer}
The characterization in \zcref{thm:SourceTransferTreeEquivalence} replaces the more concrete Hilbert-space description of a quantum tree network by source $C^*$-algebras and states, positive root functionals, and completely positive transfer maps satisfying source-replacer conditions.
In this section, we use this characterization to construct a new complete SDP hierarchy for quantum tree networks.

Positivity of states and functionals can be imposed through Hankel and localizing matrices.
To handle nonlinear constraints on state values, the state-polynomial framework~\cite{klep2024state} provides a systematic approach, with many quantum applications~\cite{renou2026two,xu2026bulkspectralgapsemidecidable,moran2024uncertainty}.
Such constraints arise naturally in the source-replacer condition through products of source-state values.
This motivates extending the framework to multiple states on distinct source algebras.

A further extension is needed to incorporate the completely positive transfer maps between these algebras.
Completely positive maps have also been incorporated into SDP hierarchies for Bell scenarios \cite{baroni2026quantitativequantumsoundnessmultipartite}, where they are represented through dilating $*$-homomorphisms.
However, an arbitrary dilation of each transfer map need not preserve the source-party structure of a quantum network.
We therefore encode complete positivity directly through operator-valued Hankel matrices, allowing us to reconstruct the required maps with their prescribed input and output algebras unchanged.
The resulting framework extends noncommutative real algebraic geometry to jointly optimize over several states and the completely positive maps linking them, rather than over the moments of a single state.

We first construct the multi-state polynomial algebras associated with rooted source-party trees and define the source-transfer SDP hierarchy.
We then introduce its quadratic-module formulation and show how characters satisfying the constraints reconstruct source-transfer realizations for trees.
These results establish convergence to the source-transfer model and, by \zcref{thm:SourceTransferTreeEquivalence}, to the mixed quantum model.
A sufficient stopping criterion for extracting finite-dimensional realizations from finite-level SDP solutions is then established.
We conclude with a consequence for the computability of quantum tree game values: the corresponding promise problem is $\mathsf{coRE}$-complete.

Throughout this section, we fix a finite source-party tree $\Tree$, rooted at a party $r \in \PartyVertices{\Tree}$, together with finite setting and outcome sets $\mathsf{X}_v$ and $\mathsf{A}_v$ for every party $v \in \PartyVertices{\Tree}$.

\subsection{Multi-state polynomials with completely positive maps}\label{sec:MultiStatePolyRing}
We begin by rewriting the $C^*$-algebraic objects of \zcref{def:SourceTransferModelTree} in the language of noncommutative polynomials.
We then introduce the multi-state polynomial $*$-ring for our problem, generalizing~\cite{klep2024state} by allowing different state symbols on different source $*$-algebras.

The noncommutative $*$-algebras $\fA_{\alpha}^v$ are again constructed recursively from the leaves towards the root, alternating between party and source vertices.
If $\alpha$ is a leaf source, set $\JoinPolyAlg{\alpha} = \mathbb{C}$.
Let $v \neq r$ be a leaf party.
For its parent source $\Parent{v}$, define $\fA_{\Parent{v}}^v$ to be the unital $*$-algebra generated by the symbols
\begin{align}
    \bigl\{\meas{v}{a_v}{x_v}:a_v \in \mathsf{A}_v,\ x_v\in\mathsf{X}_v\bigr\}
\end{align}
satisfying the POVM relations $(\meas{v}{a_v}{x_v})^*=\meas{v}{a_v}{x_v}$ and $\sum_{a_v \in \mathsf{A}_v}\meas{v}{a_v}{x_v}=\id$.

Given an internal source $\alpha$, whenever the local source algebras $\fA_\alpha^v$ have been defined for every child party $v \in \Children{\alpha}$, define its joint formal source algebra by
\begin{align}\label{eq:FormalJointSourceAlgebra}
   \JoinPolyAlg{\alpha} \coloneqq \bigotimes_{v \in \Children{\alpha}} \fA_\alpha^v,
\end{align}
where $\bigotimes$ denotes the algebraic tensor product and each tensor factor is identified with its canonical unital copy in $\JoinPolyAlg{\alpha}$.

Now let $v \neq r$ be an internal party and suppose that $\JoinPolyAlg{\alpha}$ has already been defined for every child source $\alpha \in \Children{v}$.
Define the formal incoming algebra by
\begin{align}\label{eq:FormalIncomingAlgebra}
    \InPolyAlg{v} \coloneqq \bigotimes_{\alpha\in\Children{v}} \JoinPolyAlg{\alpha}.
\end{align}
For every monomial $w \in \InPolyAlg{v}$ and every $(a_v,x_v) \in \mathsf{A}_v \times \mathsf{X}_v$, introduce a formal symbol
\begin{align}
    \formalInstr{v}{a_v}{x_v}(w),
\end{align}
modeling the action of the transfer map at $v$.
We impose $\formalInstr{v}{a_v}{x_v}(w^*)=\formalInstr{v}{a_v}{x_v}(w)^*$ and extend the symbol linearly to all of $\InPolyAlg{v}$.
The algebra $\fA_{\Parent{v}}^v$ is then defined to be the unital $*$-algebra generated by the symbols $\formalInstr{v}{a_v}{x_v}(w)$.
At the root, after the formal algebras of its child sources have been defined, similarly let
\begin{align}\label{eq:FormalIncomingRootAlgebra}
    \InPolyAlg{r} \coloneqq \bigotimes_{\alpha\in\Children{r}} \JoinPolyAlg{\alpha}.
\end{align}

Next, we define the \emph{degree} on these formal algebras.
Every leaf POVM symbol has degree one, $\deg(\meas{v}{a_v}{x_v})\coloneqq1$, and the degree is extended additively to products.
For elementary tensors and transfer map symbols, set
\begin{align}
    \deg\!\left(\bigotimes_i p_i\right)\coloneqq\sum_i\deg(p_i), \qquad \deg\!\left(\formalInstr{v}{a_v}{x_v}(q)\right)\coloneqq1+\deg(q),
\end{align}
and let the degree of a polynomial be the largest degree of its monomials.
Since $\Tree$, the setting sets, and the outcome sets are finite, there are only finitely many formal monomials of degree at most $d$ for every $d\in\mathbb{N}$.

We now introduce the state-polynomial extension.
For every source $\alpha\in\SourceVertices{\Tree}$ and every monomial $w \in \JoinPolyAlg{\alpha}$, introduce a formal scalar state symbol $\varsigma_\alpha(w)$ modeling the joint source state $\omega_\alpha$.
At the root, for every $(a_r,x_r)\in\mathsf{A}_r\times\mathsf{X}_r$ and every monomial $w\in\InPolyAlg{r}$, introduce a formal scalar functional symbol $\rPolystate{a_r}{x_r}(w)$ modeling $\rstate{r}{a_r}{x_r}$.
Extend these symbols linearly to all polynomials and impose
\begin{align}
    \varsigma_\alpha(\id)=\id, \qquad \varsigma_\alpha(p^*)=\varsigma_\alpha(p)^*, \qquad \sum_{a_r}\rPolystate{a_r}{x_r}(\id)=\id, \qquad \rPolystate{a_r}{x_r}(q^*)=\rPolystate{a_r}{x_r}(q)^*.
\end{align}
All scalar symbols are required to commute.
Denote the resulting commutative unital $*$-algebra by
\begin{align}\label{eq:DefMultiStateRingThinS}
    \thinTreeS\coloneqq\mathbb{C}\bigl[\varsigma_\alpha(p),\ \rPolystate{a_r}{x_r}(q)\mid \alpha\in\SourceVertices{\Tree},\ (a_r,x_r)\in\mathsf{A}_r\times\mathsf{X}_r,\ p\in\JoinPolyAlg{\alpha},\ q\in\InPolyAlg{r}\bigr].
\end{align}
We call it the \emph{multi-state polynomial $*$-algebra}.

We further extend $\varsigma_\alpha$, $\rPolystate{a_r}{x_r}$, and $\formalInstr{v}{a_v}{x_v}$ by $\thinTreeS$-linearity.
That is, for $s\in\thinTreeS$, we impose
\begin{align}
    \varsigma_\alpha(sp)\coloneqq s\varsigma_\alpha(p), \qquad \rPolystate{a_r}{x_r}(sq)\coloneqq s\rPolystate{a_r}{x_r}(q), \qquad \formalInstr{v}{a_v}{x_v}(sq)\coloneqq s\formalInstr{v}{a_v}{x_v}(q).
\end{align}
The degrees of the scalar symbols are
\begin{align}
    \deg(\varsigma_\alpha(p))\coloneqq\deg(p), \qquad \deg(\rPolystate{a_r}{x_r}(q))\coloneqq1+\deg(q),
\end{align}
and are extended additively to all of $\thinTreeS$.
Denote by $\thinTreeS^d\subset\thinTreeS$ the subspace of polynomials of degree at most $d$.

For every source $\alpha$ and party $v$, define the \emph{noncommutative multi-state polynomial rings}
\begin{align}
    \fatS_\alpha\coloneqq\thinTreeS\otimes\JoinPolyAlg{\alpha}, \qquad \InfatS{v}\coloneqq\thinTreeS\otimes\InPolyAlg{v},
\end{align}
where the tensor products are algebraic.
The state symbol $\varsigma_\alpha$ is naturally a $\thinTreeS$-linear map on $\fatS_\alpha$, while the root symbols $\rPolystate{a_r}{x_r}$ are $\thinTreeS$-linear maps on $\InfatS{r}$.
The degree on these rings is again defined additively, and a superscript $d$ denotes the subspace of degree at most $d$.

For every party $v$, the product of the state symbols over its child sources defines a $\thinTreeS$-linear map
\begin{align}\label{eq:FormalIncomingProductState}
    \prod_{\alpha\in\Children{v}}\varsigma_\alpha:\InfatS{v}\to\thinTreeS
\end{align}
by
\begin{align}
    \left(\prod_{\alpha\in\Children{v}}\varsigma_\alpha\right)\!\left(s\otimes\bigotimes_{\alpha\in\Children{v}}q_\alpha\right)\coloneqq s\prod_{\alpha\in\Children{v}}\varsigma_\alpha(q_\alpha).
\end{align}
This formal product corresponds to the incoming product state $\InState{v}$ in \zcref{eq:TreeIncomingSourceData}.
We do not introduce yet another state symbol $\varsigma^{\mathrm{in}}$ to avoid difficulty of degree tracking.

For any $*$-algebra $\fA$ with a fixed set of generators, let $\Monomial{d}{\fA}$ denote the set of monomials of degree at most $d$, and let $\operatorname{Mon}(\fA)$ denote the set of all monomials.

The construction of $\fA_{\alpha}^v$ is clearly parallel to $\cA_{\alpha}^v$ in \zcref{def:SourceTransferModelTree}, so every abstract polynomial above has a concrete counterpart in an exact source-transfer realization.
We refer to this as evaluation.

\begin{definition}[Evaluation on a source-transfer realization]\label{def:EvaluationMultiStatePoly}
Let $\cR_{(\Tree, r)}$ be a source-transfer realization on $(\Tree,r)$ as in \zcref{def:SourceTransferModelTree}.
The formal $*$-algebras above admit a natural evaluation in $\cR_{(\Tree, r)}$, defined recursively from the leaves towards the root.

At a leaf party $v$, the measurement symbols are evaluated as the corresponding POVM elements in $\cA^v_{\Parent{v}}$.
At a source $\alpha$, the tensor factors of $\JoinPolyAlg{\alpha}$ are evaluated in the corresponding factor algebras $\cA_\alpha^v$ and combined in $\JoinAlg{\alpha}$.
At every non-root internal party, set
\begin{align}
    \formalInstr{v}{a_v}{x_v}(q) (\cR_{(\Tree, r)}) \coloneqq \instr{v}{a_v}{x_v}\!\left(q(\cR_{(\Tree, r)})\right).
\end{align}
This defines $p(\cR_{(\Tree, r)})$, the \emph{evaluation of $p$ on $\cR_{(\Tree, r)}$}, for every formal noncommutative polynomial $p$.

The scalar symbols are evaluated by
\begin{align}
    \varsigma_\alpha(p)(\cR_{(\Tree, r)}) \coloneqq \omega_\alpha\!\left(p(\cR_{(\Tree, r)})\right) 
    \in \mathbb{C}, \qquad \rPolystate{a_r}{x_r}(q) (\cR_{(\Tree, r)}) \coloneqq \rstate{r}{a_r}{x_r}\!\left(q(\cR_{(\Tree, r)})\right) \in \mathbb{C}.
\end{align}
Extending multiplicatively and linearly defines $f \mapsto f(\cR_{(\Tree, r)})\in\mathbb{C}$ for every $f\in\thinTreeS$.
We call $f(\cR_{(\Tree, r)})$ the \emph{evaluation of $f$ on $\cR_{(\Tree, r)}$}.
\end{definition}

\subsection{Source-transfer SDP hierarchy}\label{sec:SourceTransferSDPHierarchy}
We now define the hierarchy of SDP relaxations for \zcref{def:SourceTransferModelTree} using the multi-state polynomial rings above.

Fix $d \in \mathbb{N}$ sufficiently large so that all objects below are appropriately defined, and let $L_d : \thinTreeS^{2d} \to \mathbb{C}$ be a Hermitian linear functional.
We first define the matrices in our relaxation.
\begin{enumerate}
    \item \emph{Source Hankel and localizing matrices.}
    For every source $\alpha\in\SourceVertices{\Tree}$ and every self-adjoint $f\in\fatS_\alpha$, define
    \begin{align}\label{eq:SourceTransferGenericEdgeLocalizer}
        \left[H_{d-\left\lceil\frac{\deg(f)}{2}\right\rceil}^\alpha(f;L_d)\right]_{w_1,w_2}\coloneqq L_d\!\left(\varsigma_\alpha(w_1^*fw_2)\right), \qquad w_1,w_2\in\Monomial{d-\left\lceil\frac{\deg(f)}{2}\right\rceil}{\fatS_\alpha}.
    \end{align}
    The source Hankel matrix is $H_d^\alpha(L_d)\coloneqq H_d^\alpha(\id;L_d)$.

    \item \emph{Post-measurement Hankel and localizing matrices at the root $r$.}
    For every $(a_r,x_r)\in\mathsf{A}_r\times\mathsf{X}_r$ and every self-adjoint $f\in\InfatS{r}$, we similarly define
    \begin{align}\label{eq:SourceTransferGenericRootLocalizer}
        \left[H_{d-\left\lceil\frac{\deg(f)+1}{2}\right\rceil}^{a_r|x_r}(f;L_d)\right]_{w_1,w_2}\coloneqq L_d\!\left(\rPolystate{a_r}{x_r}(w_1^*fw_2)\right), \qquad w_1,w_2\in\Monomial{d-\left\lceil\frac{\deg(f)+1}{2}\right\rceil}{\InfatS{r}}.
    \end{align}
    The post-measurement Hankel matrix is $H_{d-1}^{a_r|x_r}(L_d)\coloneqq H_{d-1}^{a_r|x_r}(\id;L_d)$.

    \item \emph{Transfer map complete-positivity matrices.}
    For every non-root internal party $v$ and every $(a_v,x_v)\in\mathsf{A}_v\times\mathsf{X}_v$, define
    \begin{align}\label{eq:SourceTransferCPKernelMatrix}
        \left[C_{d-1}(\formalInstr{v}{a_v}{x_v};L_d)\right]_{(w_1,t_1),(w_2,t_2)}\coloneqq L_d\!\left(\varsigma_{\Parent{v}}\!\left(w_1^*\formalInstr{v}{a_v}{x_v}(t_1^*t_2)w_2\right)\right),
    \end{align}
    where $w_i\in\operatorname{Mon}(\fatS_{\Parent{v}})$ and $t_i\in\operatorname{Mon}(\InfatS{v})$ satisfy $\deg(w_i)+\deg(t_i)\leq d-1$ for $i=1,2$.
\end{enumerate}
Here $\formalInstr{v}{a_v}{x_v}(t_1^*t_2)\in\fA_{\Parent{v}}^v$ is identified with its canonical copy in the joint parent-source algebra $\fA_{\Parent{v}}$.
Next, define the following multi-state polynomials in $\thinTreeS$.
\begin{enumerate}[label=(\roman*)]
    \item \emph{Source-replacer polynomials at internal parties.}
    For every non-root internal party $v \in \PartyVertices{\Tree}$, $x_v\in\mathsf{X}_v$, $p_1,p_2\in\fatS_{\Parent{v}}$, and $q\in\InfatS{v}$, let
    \begin{align}\label{eq:SourceTransferInternalEqualityPolynomial}
        h_{x_v}^v(p_1,q,p_2)\coloneqq\varsigma_{\Parent{v}}\!\left(p_1^*\left(\sum_{a_v\in\mathsf{A}_v}\formalInstr{v}{a_v}{x_v}(q)-\left(\prod_{\alpha\in\Children{v}}\varsigma_\alpha\right)(q) \right)p_2\right).
    \end{align}

    \item \emph{Source-replacer polynomials at the root $r$.}
    For every $x_r\in\mathsf{X}_r$ and $q\in\InfatS{r}$, let
    \begin{align}\label{eq:SourceTransferRootEqualityPolynomial}
        h_{x_r}^r(q)\coloneqq\sum_{a_r\in\mathsf{A}_r}\rPolystate{a_r}{x_r}(q)-\left(\prod_{\alpha\in\Children{r}}\varsigma_\alpha\right)(q).
    \end{align}

    \item \emph{Correlation polynomials.}
    For every $(\mathbf{a},\mathbf{x})$, define the formal source and party messages recursively as in \zcref{eq:TreeSourceMessage,eq:TreeRecursiveTransfer}:
    \begin{align}
        \TransferMsg{\alpha}&\coloneqq\bigotimes_{v\in\Children{\alpha}}\TransferMsg{v}, & \TransferMsg{v}&\coloneqq\begin{cases}\meas{v}{a_v}{x_v},&v\text{ is a leaf party},\\[1mm]\formalInstr{v}{a_v}{x_v}\!\left(\displaystyle\bigotimes_{\alpha\in\Children{v}}\TransferMsg{\alpha}\right),&v\text{ is an internal party}.
        \end{cases}
    \end{align}
    Define
    \begin{align}\label{eq:SourceTransferCorrelationPolynomial}
        h_{\mathbf{a}|\mathbf{x}}\coloneqq\rPolystate{a_r}{x_r}\!\left(\bigotimes_{\alpha\in\Children{r}}\TransferMsg{\alpha}\right).
    \end{align}
\end{enumerate}

We are now ready to define the hierarchy.
\begin{definition}[Source-transfer multi-state SDP relaxation]\label{def:SourceTransferSDPRelaxation}
    Let $\Tree$ be a finite source-party tree rooted at $r \in \PartyVertices{\Tree}$, with finite setting and outcome sets $\mathsf{X}_v$ and $\mathsf{A}_v$ for every party $v\in\PartyVertices{\Tree}$, and with the associated multi-state polynomial algebras defined in \zcref{sec:MultiStatePolyRing}.
    Fix $d\in\mathbb{N}$ large enough so that all objects below are appropriately defined.
    
    Let $\fLsf_d(\Tree,r)$ be the set of Hermitian linear functionals $L_d: \thinTreeS^{2d} \to \mathbb{C}$ satisfying
    \begin{equation}\label{eq:SourceTransferSDPFree}
    \begin{alignedat}{3}
        &L_d(\id)=1, && &&\text{(normalization)}\\[1mm]
        &H_d^\alpha(L_d)\succeq0, &&\quad\forall\,\alpha\in\SourceVertices{\Tree}, &&\text{(source-state positivity)}\\[1mm]
        &H_{d-1}^{a_r|x_r}(L_d)\succeq0, &&\quad\forall\,(a_r,x_r)\in\mathsf{A}_r\times\mathsf{X}_r, &&\text{(root-functional positivity)}\\[1mm]
        &H_{d-\deg(g_\alpha)}^\alpha(\id-g_\alpha^*g_\alpha;L_d)\succeq0, &&\quad\substack{\forall\,\alpha\in\SourceVertices{\Tree},\\ \forall\,g_\alpha\text{ generators of }\JoinPolyAlg{\alpha},} &&\text{(source archimedean constraint)}\\[1mm]
        &H_{d-1-\deg(g_r)}^{a_r|x_r}(\id-g_r^*g_r;L_d)\succeq0, &&\quad\substack{\forall\,(a_r,x_r),\\ \forall\,g_r\text{ generators of }\InPolyAlg{r},} &&\text{(root archimedean constraint)}\\[1mm]
        &H_{d-1}^{\Parent{v}}(\meas{v}{a_v}{x_v};L_d)\succeq0, &&\quad\substack{\forall\,v\neq r\text{ a leaf party},\\ \forall\,(a_v,x_v)\in\mathsf{A}_v\times\mathsf{X}_v,} &&\text{(leaf POVM positivity)}\\[1mm]
        &C_{d-1}(\formalInstr{v}{a_v}{x_v};L_d)\succeq0, &&\quad\substack{\forall\,v\neq r\text{ an internal party},\\ \forall\,(a_v,x_v)\in\mathsf{A}_v\times\mathsf{X}_v,} &&\text{(transfer-map complete positivity)}\\[1mm]
        &L_d\!\left(h_{x_v}^v(w_1,t,w_2)\right)=0, &&\quad\substack{\forall\,v\neq r\text{ an internal party},\ \forall\,x_v,\\ w_1,w_2\in\operatorname{Mon}(\fatS_{\Parent{v}}),\ t\in\operatorname{Mon}(\InfatS{v}),\\ \deg(h_{x_v}^v(w_1,t,w_2))\leq2d,} &&\text{(internal source-replacer)}\\[1mm]
        &L_d\!\left(h_{x_r}^r(t)\right)=0, &&\quad\substack{\forall\,x_r,\ t\in\operatorname{Mon}(\InfatS{r}),\\ \deg(h_{x_r}^r(t))\leq2d,} &&\text{(root source-replacer).}
    \end{alignedat}
    \end{equation}

    As $d$ increases, the sets $\fLsf_d(\Tree,r)$ define a hierarchy of multi-state SDP relaxations for optimization over quantum tree networks.
    In particular, for a self-adjoint $f\in\thinTreeS^{2d}$, define
    \begin{align}
        \beta_d^{\min}(f)&\coloneqq\inf_{L_d\in\fLsf_d(\Tree,r)}L_d(f), & \beta_d^{\max}(f)&\coloneqq\sup_{L_d\in\fLsf_d(\Tree,r)}L_d(f).
    \end{align}

    To test a prescribed correlation $p(\mathbf{a}|\mathbf{x})$, let $\fLsf_d(\Tree,r,p)\subset\fLsf_d(\Tree,r)$ denote the subset further satisfying
    \begin{align}
        L_d \left( s_1^* \bigl(h_{\mathbf{a}|\mathbf{x}} - p(\mathbf{a}|\mathbf{x}) \bigr) s_2 \right) &= 0, && \substack{ \forall\,\mathbf{a},\mathbf{x},\ s_1,s_2\in\operatorname{Mon}(\thinTreeS),\\
        \deg(s_1)+\deg(s_2) +\deg(h_{\mathbf{a}|\mathbf{x}}) \leq 2d, } \qquad \text{(prescribed correlation).}
    \end{align}
    We say that $p(\mathbf{a}|\mathbf{x})$ passes the level-$d$ source-transfer SDP relaxation if $\fLsf_d(\Tree,r,p)\neq\emptyset$.
\end{definition}

Each level of relaxations can be cast as an SDP because all matrix entries and equality constraints are linear in the moments of $L_d$.
Moreover, the restriction of every $L_{d+1}\in\fLsf_{d+1}(\Tree,r)$ to $\thinTreeS^{2d}$ belongs to $\fLsf_d(\Tree,r)$, and the same holds for $\fLsf_d(\Tree,r,p)$.
Thus the hierarchy is monotone:
\begin{align}
    \beta_{d+1}^{\min}(f)\geq\beta_d^{\min}(f), \qquad \beta_{d+1}^{\max}(f)\leq\beta_d^{\max}(f)
\end{align}
for all sufficiently large $d$.

We refer to \zcref{sec:SimpleExample} for some concrete examples of the source-transfer SDP hierarchy.

\subsection{Quadratic-module formulation of the SDP hierarchy}\label{sec:QuadraticModuleSDPHierarchy}
We reformulate the SDP hierarchy in \zcref{def:SourceTransferSDPRelaxation} in the language of archimedean quadratic modules for the later convergence proof.

The positive-semidefinite matrix constraints in \zcref{def:SourceTransferSDPRelaxation} can equivalently be written as positivity conditions on multi-state polynomials.
For every self-adjoint $f\in\fatS_\alpha$,
\begin{align}\label{eq:EdgeLocalizerPolynomialForm}
    H_{d-\left\lceil\frac{\deg(f)}{2}\right\rceil}^\alpha(f;L_d)\succeq0 \quad \iff \quad L_d\!\left(\varsigma_\alpha(p^*fp)\right)\geq0
\end{align}
for every $p\in\fatS_\alpha^{\,d-\left\lceil\frac{\deg(f)}{2}\right\rceil}$.
Similarly, for every self-adjoint $f\in\InfatS{r}$,
\begin{align}\label{eq:RootLocalizerPolynomialForm}
    H_{d-\left\lceil\frac{\deg(f)+1}{2}\right\rceil}^{a_r|x_r}(f;L_d)\succeq 0 \quad \iff \quad L_d\!\left(\rPolystate{a_r}{x_r}(p^*fp)\right)\geq0
\end{align}
for every $p\in(\InfatS{r})^{d-\left\lceil\frac{\deg(f)+1}{2}\right\rceil}$.

The complete-positivity matrix has the analogous interpretation
\begin{align}\label{eq:CPKernelPolynomialForm}
    C_{d-1}(\formalInstr{v}{a_v}{x_v};L_d)\succeq0 \quad\Longleftrightarrow\quad L_d\!\left(\sum_{i,j=1}^m\varsigma_{\Parent{v}}\!\left(p_i^*\formalInstr{v}{a_v}{x_v}(q_i^*q_j)p_j\right)\right)\geq0
\end{align}
for every finite family $p_i\in\fatS_{\Parent{v}}$ and $q_i\in\InfatS{v}$ satisfying $\deg(p_i)+\deg(q_i)\leq d-1$.
In the limit, this is the multi-state polynomial form of the complete-positivity criterion in~\cite[\nopp II.6.9.8]{blackadar2006operator}.

The equality constraints have a similarly useful polynomial interpretation.
By linearity, the source-replacer constraints in \zcref{eq:SourceTransferSDPFree} are equivalent to requiring that $L_d$ vanishes on the complex linear span of the source-replacer polynomials $h_{x_v}^{v}$ and $h_{x_r}^{r}$.
By the $*$-preserving property of state symbols, this span is closed under involution, so vanishing on it is equivalent to vanishing on its Hermitian part.
We can therefore rewrite the source-replacer constraints as
\begin{align}\label{eq:SourceTransferEqualitySpace}
    L_d\bigl(\cE(\Tree,r)\bigr)=0, \qquad \cE(\Tree, r) \coloneqq \left\{ h=h^* \mid h\in \operatorname{span}_{\mathbb{C}} \left\{ h_{x_v}^{v}(p_1,q,p_2),\ h_{x_r}^{r}(q) \right\} \right\} \subset \thinTreeS.
\end{align}
The correlation test with $h_{\mathbf{a}|\mathbf{x}} - p(\mathbf{a}|\mathbf{x})$ can be treated analogously.

We now combine the preceding positivity and equality conditions into a
single quadratic module.
\begin{definition}[Source-transfer quadratic modules]
\label{def:SourceTransferQuadraticModule}
    Let the source-transfer quadratic module $\fQ(\Tree,r) \subset \thinTreeS$ be the following cone in the Hermitian part of $\thinTreeS$:
    \begin{equation}\label{eq:SourceTransferQuadraticModule}
    \begin{aligned}
        \fQ(\Tree,r)\coloneqq\operatorname{cone}\Biggl(&\left\{\varsigma_\alpha(p^*p)\ \middle|\ \substack{\alpha\in\SourceVertices{\Tree},\ p\in\fatS_\alpha}\right\}\\
        {}\cup{}&\left\{\rPolystate{a_r}{x_r}(q^*q)\ \middle|\ \substack{(a_r,x_r)\in\mathsf{A}_r\times\mathsf{X}_r,\ q\in\InfatS{r}}\right\}\\
        {}\cup{}&\left\{\varsigma_\alpha\!\left(p^*(\id-g_\alpha^*g_\alpha)p\right)\ \middle|\ \substack{\alpha\in\SourceVertices{\Tree},\ p\in\fatS_\alpha,\\ g_\alpha\text{ a generator of }\JoinPolyAlg{\alpha}}\right\}\\
        {}\cup{}&\left\{\rPolystate{a_r}{x_r}\!\left(q^*(\id-g_r^*g_r)q\right)\ \middle|\ \substack{(a_r,x_r)\in\mathsf{A}_r\times\mathsf{X}_r,\ q\in\InfatS{r},\\ g_r\text{ a generator of }\InPolyAlg{r}}\right\}\\
        {}\cup{}&\left\{\varsigma_{\Parent{v}}\!\left(p^*\meas{v}{a_v}{x_v}p\right)\ \middle|\ \substack{v\neq r\text{ a leaf party},\ (a_v,x_v)\in\mathsf{A}_v\times\mathsf{X}_v,\\ p\in\fatS_{\Parent{v}}}\right\}\\
        {}\cup{}&\left\{\sum_{i,j=1}^m\varsigma_{\Parent{v}}\!\left(p_i^*\formalInstr{v}{a_v}{x_v}(q_i^*q_j)p_j\right)\ \middle|\ \substack{v\neq r\text{ an internal party},\ (a_v,x_v)\in\mathsf{A}_v\times\mathsf{X}_v,\\ m\in\mathbb{N},\ p_i\in\fatS_{\Parent{v}},\ q_i\in\InfatS{v}}\right\}\Biggr)+\cE(\Tree,r).
    \end{aligned}
    \end{equation}
    Here $\operatorname{cone}$ denotes the set of all finite nonnegative real linear combinations of the displayed multi-state polynomials.

    For a prescribed correlation $p(\mathbf{a}|\mathbf{x})$, define
    \begin{equation}\label{eq:PrescribedCorrelationQuadraticModule}
    \begin{aligned}
        \fQ(\Tree,r,p)\coloneqq\fQ(\Tree,r)+\Bigl\{h=h^*\ \Bigm|\ h\in\operatorname{span}_{\mathbb{C}}\bigl\{&s_1^*(h_{\mathbf{a}|\mathbf{x}}-p(\mathbf{a}|\mathbf{x}))s_2\ \bigm|\\
        &\qquad s_1,s_2\in\thinTreeS,\ \mathbf{a},\mathbf{x}\bigr\}\Bigr\}.
    \end{aligned}
    \end{equation}
    Finally, let $\fQ_d(\Tree,r)$ and $\fQ_d(\Tree,r,p)$ denote the corresponding degree-$2d$ truncated cones.
    Their unions over $d$ give back $\fQ(\Tree,r)$ and $\fQ(\Tree,r,p)$, respectively.
\end{definition}

We now justify the name ``source-transfer quadratic module''.
For every $s\in\thinTreeS$, multiplication of the positivity generators by $s^*(\cdot)s$ can be absorbed into the corresponding test polynomials, namely $p\mapsto ps$, $q\mapsto qs$, and $p_i\mapsto p_i s$.
The equality space $\cE(\Tree,r)$ is closed under the same operation due to the $\thinTreeS$-linearity of the state symbols.
Hence
\begin{align}
    s^*\fQ(\Tree,r)s \subset \fQ(\Tree,r),
\end{align}
so $\fQ(\Tree,r)$ is indeed a quadratic module.
Then $\fQ(\Tree,r,p)$ is also a quadratic module since the added correlation span is also closed under $s^* (\cdot) s$.

\begin{remark}[Quadratic-module formulation of the  hierarchy]\label{rem:SourceTransferQuadraticModuleForm}
Consider a Hermitian linear functional $L_d:\thinTreeS^{2d}\to\mathbb{C}$.
By \zcref{eq:EdgeLocalizerPolynomialForm,eq:RootLocalizerPolynomialForm,eq:CPKernelPolynomialForm,eq:SourceTransferEqualitySpace}, the unconstrained source-transfer SDP hierarchy optimizing over quantum tree networks has the equivalent form
\begin{align}\label{eq:SourceTransferSDPQuadraticModuleForm}
    L_d \in \fLsf_d(\Tree,r) \quad \iff \quad L_d(\id)=1 \quad \text{and} \quad L_d(f) \geq 0 \quad \text{for every } f \in \fQ_d(\Tree,r).
\end{align}
Likewise, the hierarchy testing a prescribed correlation $p(\mathbf{a}|\mathbf{x})$ on a quantum tree network satisfies
\begin{align}\label{eq:PrescribedCorrelationQuadraticModuleForm}
    L_d \in \fLsf_d(\Tree,r,p) \quad \iff \quad L_d(\id)=1 \quad \text{and} \quad L_d(f) \geq 0 \quad \text{for every } f \in \fQ_d(\Tree,r,p).
\end{align}
\end{remark}

We now make a technical observation on $\fQ(\Tree,r)$ and $\fQ(\Tree,r,p)$ for the later convergence proof.
\begin{lemma}[Archimedean property of $\fQ(\Tree,r)$]\label{lem:SourceTransferQuadraticModuleArchimedean}
    The quadratic modules $\fQ(\Tree,r) \subset \thinTreeS$ and $\fQ(\Tree,r,p) \subset \thinTreeS$ are archimedean.
    That is, for every self-adjoint $f \in \thinTreeS$, there exists some $R_f > 0$ such that
    \begin{align}
        R_f\id\pm f \in \fQ(\Tree,r) \subset \fQ(\Tree,r,p).
    \end{align}
\end{lemma}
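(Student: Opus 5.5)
The plan is to reduce the statement to a bound on each generator of the commutative $*$-algebra $\thinTreeS$. The key tool is the standard ring of bounded elements
\begin{align*}
    H(\fQ)\coloneqq\bigl\{f\in\thinTreeS \bigm| \exists R>0:\ R\id-f^*f\in\fQ(\Tree,r)\bigr\}.
\end{align*}
For a quadratic module in a commutative unital $*$-algebra containing $\tfrac12$, $H(\fQ)$ is a $*$-subalgebra; this is Vidav's/Cimprič's lemma, see also~\cite[Chapter~5]{marshall2008positive}. Moreover, $f=f^*\in H(\fQ)$ implies $R'\id\pm f\in\fQ$. This follows from
\begin{align*}
    (\id\pm f)^*(\id\pm f)=\id\pm 2f+f^*f\in\fQ,
\end{align*}
which gives $R+1\pm2f\in\fQ$. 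The statement for $\fQ(\Tree,r,p)\supset\fQ(\Tree,r)$ is then immediate. It therefore suffices to show that every generator $\varsigma_\alpha(w)$ and $\rPolystate{a_r}{x_r}(w)$ of $\thinTreeS$, with $w$ a monomial, lies in $H(\fQ)$.

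\emph{Step 1: noncommutative bound inside a state.} Fix a source $\alpha$ and a monomial $w=g_1\cdots g_k$ in generators of $\JoinPolyAlg{\alpha}$. These generators are the generators of the tensor factors $\fA_\alpha^v$, namely leaf POVM letters or symbols $\formalInstr{v}{a_v}{x_v}(t)$ with $t$ a monomial, all of which are generators. By the source archimedean generators of \zcref{def:SourceTransferQuadraticModule}, with $p=g_{j+1}\cdots g_k$, we have
\begin{align*}
    \varsigma_\alpha(p^*p)-\varsigma_\alpha(p^*g_j^*g_jp)\in\fQ.
\end{align*}
Telescoping over $j$ gives $\id-\varsigma_\alpha(w^*w)\in\fQ$, using $\varsigma_\alpha(\id)=\id$. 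The identical argument with the root archimedean generators gives
\begin{align*}
    \rPolystate{a_r}{x_r}(\id)-\rPolystate{a_r}{x_r}(w^*w)\in\fQ.
\end{align*}
For the root we additionally need $\rPolystate{a_r}{x_r}(\id)\le\id$ modulo $\fQ$. This follows by combining $\sum_{a_r}\rPolystate{a_r}{x_r}(\id)=\id$, or the root source-replacer $h^r_{x_r}(\id)\in\cE$, with $\rPolystate{b_r}{x_r}(\id)\in\fQ$ for each $b_r$.

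\emph{Step 2: Cauchy--Schwarz in $\fQ$.} By $\thinTreeS$-linearity of $\varsigma_\alpha$, set $c\coloneqq\varsigma_\alpha(w)\in\thinTreeS$ and take the positivity generator with $p=w-c\id\in\fatS_\alpha$. Then
\begin{align*}
    \varsigma_\alpha\bigl((w-c)^*(w-c)\bigr)=\varsigma_\alpha(w^*w)-c^*c\in\fQ,
\end{align*}
where we used $\varsigma_\alpha(w^*)=c^*$ and commutativity of scalar symbols. Combined with Step 1, this gives $\id-c^*c\in\fQ$, so $\varsigma_\alpha(w)\in H(\fQ)$.

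For the root we argue in the same way with $p=w-c\,\id$ and $c\coloneqq\rPolystate{a_r}{x_r}(w)$. This yields
\begin{align*}
    \rPolystate{a_r}{x_r}(\id)\,\rPolystate{a_r}{x_r}(w^*w)-c^*c\in\fQ
\end{align*}
after multiplying the positivity generator through by the central element $\rPolystate{a_r}{x_r}(\id)$. Since $\rPolystate{a_r}{x_r}(\id)$ is itself a bounded sum of squares, this product manipulation must be done carefully. Alternatively, it may be cleaner to use directly that $H(\fQ)$ is a ring, that $\rPolystate{a_r}{x_r}(\id)\in H$ from the first paragraph, and then bound $c^*c$.

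I expect the main obstacle to be exactly this root step, i.e.\ handling the subnormalization $\rPolystate{a_r}{x_r}(\id)$ in the Cauchy--Schwarz argument. I will first try the trick
\begin{align*}
    \rPolystate{a_r}{x_r}\bigl((\lambda w-c)^*(\lambda w-c)\bigr)\in\fQ
\end{align*}
with a scalar-symbol multiplier $\lambda$, together with the inclusion $H(\fQ)$-ring property. The other point needing care is checking that the cited bounded-elements lemma applies to complex Hermitian cones with the added equality space $\cE$; this holds because $\cE$ is a linear subspace closed under $s^*(\cdot)s$.
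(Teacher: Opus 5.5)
Your reduction to generators via the ring of bounded elements $H(\fQ)$ is sound. Your source-state argument is also correct, and it differs slightly from the paper's. The paper tests $\varsigma_\alpha$ on $p=\id\pm w$ and $p=\id\pm iw$ to bound the real and imaginary parts of $\varsigma_\alpha(w)$. Your choice $p=w-\varsigma_\alpha(w)\id\in\fatS_\alpha$ instead uses the $\thinTreeS$-coefficients allowed in $\fatS_\alpha$ to certify $\id-c^*c\in\fQ(\Tree,r)$ directly.

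The gap is the root step, which you leave open. Your claim that $\rPolystate{a_r}{x_r}(\id)\rPolystate{a_r}{x_r}(w^*w)-c^*c\in\fQ(\Tree,r)$ follows ``after multiplying the positivity generator through by the central element $\rPolystate{a_r}{x_r}(\id)$'' is not justified. A quadratic module is only closed under $s^*(\cdot)s$, and $\rPolystate{a_r}{x_r}(\id)$ is not of the form $s^*s$. Your alternative test polynomial $\lambda w-c$, with $\lambda\coloneqq\rPolystate{a_r}{x_r}(\id)$ and $\eta\coloneqq\rPolystate{a_r}{x_r}(w^*w)$, only yields $\lambda(\lambda\eta-c^*c)\in\fQ(\Tree,r)$, which does not bound $c^*c$. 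So, as written, the root generators $c=\rPolystate{a_r}{x_r}(w)$ are not shown to lie in $H(\fQ)$.

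The missing idea is to absorb the subnormalization using the quadratic-module property rather than by multiplication:
\begin{itemize}
\item With $p=w-c\,\id$ you get $\eta-2c^*c+\lambda c^*c\in\fQ(\Tree,r)$.
\item Since $\id-\lambda=\sum_{b_r\neq a_r}\rPolystate{b_r}{x_r}(\id)\in\fQ(\Tree,r)$, conjugation gives $c^*(\id-\lambda)c\in\fQ(\Tree,r)$.
\item Adding the two gives $\eta-c^*c\in\fQ(\Tree,r)$.
\item Hence $\id-c^*c=(\id-\lambda)+(\lambda-\eta)+(\eta-c^*c)\in\fQ(\Tree,r)$.
\end{itemize}

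Alternatively, follow the paper and avoid Cauchy--Schwarz altogether. One has $\rPolystate{a_r}{x_r}\bigl((\id\pm w)^*(\id\pm w)\bigr)+(\id-\lambda)+(\id-\eta)=2\id\pm(c+c^*)$, and similarly with $\id\pm iw$. This uses only the two bounds you already established in Step~1.
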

\begin{proof}
It is enough to prove that $\fQ(\Tree,r)$ is archimedean since $\fQ(\Tree,r) \subset \fQ(\Tree,r,p)$.
As in~\cite[Lemma~5.4 and~6.17]{klep2024state}, it suffices to show that the self-adjoint real and imaginary parts of every generator of the commutative multi-state polynomial ring $\thinTreeS$ are bounded with respect to $\fQ(\Tree,r)$.

We first consider a source-state symbol $\varsigma_\alpha(w)$ with monomials $w\in \JoinPolyAlg{\alpha}$.
For every generator $g_\alpha$ of $\JoinPolyAlg{\alpha}$, the source archimedean constraint implies that
\begin{align}
    \varsigma_\alpha(w^*w)-\varsigma_\alpha\!\left((g_\alpha w)^*(g_\alpha w)\right)=\varsigma_\alpha\!\left(w^*(\id-g_\alpha^*g_\alpha)w\right)\in\fQ(\Tree,r).
\end{align}
Induction on $\deg(w)$ and $\varsigma_\alpha(\id)=\id$ give
\begin{align}\label{eq:SourceWordBound}
    \id-\varsigma_\alpha(w^*w)\in\fQ(\Tree,r).
\end{align}
Therefore,
\begin{equation}
\begin{aligned}
    2\id\pm\left(\varsigma_\alpha(w)+\varsigma_\alpha(w)^*\right)&=\varsigma_\alpha\!\left((\id\pm w)^*(\id\pm w)\right)+\id-\varsigma_\alpha(w^*w)\in\fQ(\Tree,r),\\
    2\id\pm i\left(\varsigma_\alpha(w)-\varsigma_\alpha(w)^*\right)&=\varsigma_\alpha\!\left((\id\pm iw)^*(\id\pm iw)\right)+\id-\varsigma_\alpha(w^*w)\in\fQ(\Tree,r).
\end{aligned}
\end{equation}
Hence the real and imaginary parts of every source-state generator $\varsigma_{\alpha}(w)$ are bounded, as desired.

Next, let $w$ be a monomial in $\InPolyAlg{r}$ and consider $\rPolystate{a_r}{x_r}(w)$.
The same induction using the root archimedean constraint leads to
\begin{align}\label{eq:RootWordRelativeBound}
    \rPolystate{a_r}{x_r}(\id)-\rPolystate{a_r}{x_r}(w^*w)\in\fQ(\Tree,r).
\end{align}
Moreover, root-functional positivity and normalization $\sum_{b_r}\rPolystate{b_r}{x_r}(\id)=\id$ imply
\begin{align}
    \id-\rPolystate{a_r}{x_r}(\id)=\sum_{b_r\neq a_r}\rPolystate{b_r}{x_r}(\id)\in\fQ(\Tree,r).
\end{align}
Thus, by summing the above two, we have
\begin{align}\label{eq:RootWordBound}
    \id-\rPolystate{a_r}{x_r}(w^*w)\in\fQ(\Tree,r).
\end{align}
Consequently,
\begin{equation}
\begin{aligned}
    2\id\pm\left(\rPolystate{a_r}{x_r}(w)+\rPolystate{a_r}{x_r}(w)^*\right)&=\rPolystate{a_r}{x_r}\!\left((\id\pm w)^*(\id\pm w)\right)+\id-\rPolystate{a_r}{x_r}(\id)\\
    &\quad+\id-\rPolystate{a_r}{x_r}(w^*w)\in\fQ(\Tree,r),\\
    2\id\pm i\left(\rPolystate{a_r}{x_r}(w)-\rPolystate{a_r}{x_r}(w)^*\right)&=\rPolystate{a_r}{x_r}\!\left((\id\pm iw)^*(\id\pm iw)\right)+\id-\rPolystate{a_r}{x_r}(\id)\\
    &\quad+\id-\rPolystate{a_r}{x_r}(w^*w)\in\fQ(\Tree,r).
\end{aligned}
\end{equation}
Thus the real and imaginary parts of every root-functional generator are bounded as well.

Since the source-state and root-functional symbols generate $\thinTreeS$ as a commutative $*$-algebra, it follows that $\fQ(\Tree,r)$, and subsequently $\fQ(\Tree,r, p)$, are archimedean.
\end{proof}

\subsection{Characters of multi-state polynomials and source-transfer realizations}\label{sec:CharacterSourceTransferReconstruction}
We now show that characters of the multi-state polynomial algebra subjected to our SDP constraints can reconstruct source-transfer realizations.

Recall that a \emph{character} of the commutative multi-state polynomial $*$-algebra $\thinTreeS$ is a unital $*$-homomorphism
\begin{align}
    K:\thinTreeS \to \mathbb{C}, \qquad K(s_1^* s_2) = K(s_1)^* K(s_2).
\end{align}
In particular, a character is multiplicative on state symbols, regardless of whether they carry the same source labels or not.
For example, for an incoming elementary tensor $q=\bigotimes_{\alpha\in\Children{v}}q_\alpha\in\InPolyAlg{v}$, one has
\begin{align}\label{eq:CharacterIncomingProductState}
    K\!\left(\left(\prod_{\alpha\in\Children{v}}\varsigma_\alpha\right)(q)\right)=\prod_{\alpha\in\Children{v}}K\!\left(\varsigma_\alpha(q_\alpha)\right).
\end{align}
Note that this multiplicative property concerns only the commutative multi-state algebra $\thinTreeS$, and does not imply that the formal state symbols, or the states reconstructed from them, are multiplicative.

\begin{lemma}[Source-transfer realization from a character]\label{lem:SourceTransferCharacterReconstruction}
    Let $K:\thinTreeS \to \mathbb{C}$ be a character satisfying
    \begin{align}
        K(f)\geq 0 \qquad \text{for every } f \in \fQ(\Tree,r).
    \end{align}
    Then $K$ gives rise to a source-transfer realization on $(\Tree,r)$.
    If, in addition,
    \begin{align}
        K(f) \geq 0 \qquad \text{for every } f \in \fQ(\Tree,r,p),
    \end{align}
    then the reconstructed realization produces the prescribed correlation $p(\mathbf{a}|\mathbf{x})$.

    Furthermore, the reconstructed realization $\cR_K$ satisfies
    \begin{align}
        f(\cR_K)=K(f)
    \end{align}
    for every $f\in\thinTreeS$, in the sense of multi-state polynomial evaluation in \zcref{def:EvaluationMultiStatePoly}.
\end{lemma}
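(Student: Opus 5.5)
The plan is to build the realization recursively from the leaves towards the root, using the positive functionals $K\circ\varsigma_\alpha$ on the formal source algebras $\JoinPolyAlg{\alpha}$ and $K\circ\rPolystate{a_r}{x_r}$ on $\InPolyAlg{r}$, and running a GNS construction at each source. For each source $\alpha$, define $\sigma_\alpha(w)\coloneqq K(\varsigma_\alpha(w))$ for $w\in\JoinPolyAlg{\alpha}$. Since $K$ is a $*$-character nonnegative on $\varsigma_\alpha(p^*p)$, this is a positive unital functional. The source archimedean elements of $\fQ(\Tree,r)$ give $\sigma_\alpha(w^*g_\alpha^*g_\alpha w)\le\sigma_\alpha(w^*w)$ for every generator, so in the GNS representation $(\cH_\alpha,\pi_\alpha,\Omega_\alpha)$ every generator acts as a bounded operator of norm at most $1$. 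Hence $\pi_\alpha$ extends to bounded operators on the completion. We then set $\cA_\alpha^v\coloneqq\overline{\pi_\alpha(\fA_\alpha^v)}^{\|\cdot\|}$ and $\omega_\alpha\coloneqq\langle\Omega_\alpha,\cdot\,\Omega_\alpha\rangle$. Distinct tensor factors $\fA^v_\alpha$ commute in $\JoinPolyAlg{\alpha}$, so the $\cA_\alpha^v$ have commuting ranges. By the universal property of $\maxotimes$, we obtain a surjection $\JoinAlg{\alpha}=\bigotimes^{\max}_v\cA_\alpha^v\to C^*(\pi_\alpha(\JoinPolyAlg{\alpha}))$, and we pull $\omega_\alpha$ back along it. Leaf POVMs are $\pi_{\Parent{v}}(\meas{v}{a_v}{x_v})$; they are positive by the leaf-positivity generators and sum to $\id$ by the defining relation.

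The main obstacle is defining the transfer maps. For a non-root internal party $v$ and $q\in\InPolyAlg{v}$, we want
\[
\instr{v}{a_v}{x_v}\bigl(\InRep{v}(q)\bigr)\coloneqq\pi_{\Parent{v}}\bigl(\formalInstr{v}{a_v}{x_v}(q)\bigr),
\]
where $\InRep{v}=\bigotimes_{\alpha\in\Children{v}}\pi_\alpha$ acts on $\bigotimes_\alpha\cH_\alpha$. We must show that this is well defined, bounded, and completely positive. The plan is to prove the matrix inequality
\[
\Bigl[\pi_{\Parent{v}}\bigl(\formalInstr{v}{a_v}{x_v}(q_i^*q_j)\bigr)\Bigr]_{i,j}\;\le\;\Bigl[\InState{v}(q_i^*q_j)\,I\Bigr]_{i,j}
\]
on $\cH_{\Parent{v}}^{\oplus m}$, with both sides positive. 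Positivity of the left side comes from the CP generators of $\fQ$ evaluated by $K$: vectors $\pi_{\Parent{v}}(p_i)\Omega_{\Parent{v}}$ are dense, and $K$ is nonnegative on $\sum_{i,j}\varsigma_{\Parent{v}}(p_i^*\formalInstr{v}{a_v}{x_v}(q_i^*q_j)p_j)$. The upper bound comes from summing over outcomes. Because $K$ vanishes on $\cE(\Tree,r)$, the internal source-replacer identity gives $\sum_{a_v}\pi(\formalInstr{v}{a_v}{x_v}(q))=K\bigl((\prod_\alpha\varsigma_\alpha)(q)\bigr)I$. By \zcref{eq:CharacterIncomingProductState}, this equals $\prod_\alpha\sigma_\alpha(q_\alpha)\,I=\langle\InVec{v},\InRep{v}(q)\InVec{v}\rangle I$, so the other outcomes' positive terms give the inequality. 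Multiplicativity of the character is exactly what turns the formal product into the genuine product state.

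From this inequality, we will argue as follows. If $\InRep{v}(q)=0$, then $\InState{v}(q^*q)=0$, so the $1\times1$ case with $q^*q$ together with Cauchy–Schwarz for the CP kernel kills $\pi(\formalInstr{v}{a_v}{x_v}(q))$; this gives well-definedness. Norm bounds follow from $\|\Phi(q)\|\le\|\Phi(\id)\|\,\|q\|$ for CP maps on a dense $*$-subalgebra. We then extend the map by continuity to $\InAlg{v}$, which surjects onto the norm closure of $\InRep{v}(\InPolyAlg{v})$, and precompose with that surjection. The source-replacer condition \zcref{eq:TreeTransferReplacer} holds on the dense set and therefore everywhere. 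One subtlety must be handled carefully here. Boundedness needs the max-norm on $\InAlg{v}$ to dominate the spatial norm; it does, so the extension is fine.

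For the root, we define $\rstate{r}{a_r}{x_r}(\InRep{r}(q))\coloneqq K(\rPolystate{a_r}{x_r}(q))$. Positivity follows from the root generators, and the root source-replacer follows from $\cE$ and multiplicativity. Well-definedness and boundedness come from domination by $\InState{r}$, as above. With the realization in place, the identity $f(\cR_K)=K(f)$ is checked on generators $\varsigma_\alpha(w)$ and $\rPolystate{a_r}{x_r}(q)$ by induction on the recursive structure: evaluation of a formal word equals $\pi_\alpha$ of it by construction. Both sides are unital $*$-homomorphisms on $\thinTreeS$, so the identity extends to all of $\thinTreeS$. Finally, if $K\ge0$ on $\fQ(\Tree,r,p)$, then $\pm(h_{\mathbf a|\mathbf x}-p(\mathbf a|\mathbf x))$ lie in it, so $K(h_{\mathbf a|\mathbf x})=p(\mathbf a|\mathbf x)$. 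Hence evaluating \zcref{eq:TreeSourceTransferCorrelation} reproduces $p$.
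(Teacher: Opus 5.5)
Your proposal is correct and follows essentially the paper's route. The shared steps are: GNS representations from $K\circ\varsigma_\alpha$, with the archimedean generators making every generator contractive; CP-kernel positivity together with the source-replacer identity (via multiplicativity of $K$) giving well-definedness through the $2\times 2$ Cauchy--Schwarz argument; extension by continuity; and the analogous domination argument at the root. Defining the transfer maps first on the spatial closure in $\Bof{\bigotimes_{\alpha}\cH_\alpha}$ and then pulling back along the canonical surjection from $\InAlg{v}$ is only a cosmetic variant of the paper's definition on the dense image of $\InPolyAlg{v}$ in $\InAlg{v}$.

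One justification should be tightened. Positivity of the kernel on a non-closed dense $*$-subalgebra does not by itself give $\|\Phi(q)\|\le\|\Phi(\id)\|\,\|q\|$. Derive boundedness instead, as the paper does, from your domination inequality via $\|\Phi(x)\|^2\le\|\Phi(\id)\|\,\|\Phi(x^*x)\|\le\InState{v}(x^*x)\le\|x\|^2$, before extending.
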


\begin{proof}
\noindent
\emph{Reconstructing the source algebras and states.}
For every source $\alpha\in\SourceVertices{\Tree}$, define
\begin{align}\label{eq:CharacterSourceState}
    \overline{\omega}_\alpha(q)\coloneqq K\!\left(\varsigma_\alpha(q)\right), \qquad q\in\JoinPolyAlg{\alpha}.
\end{align}
Source-state positivity and normalization imply $\overline{\omega}_\alpha(q^*q)\geq0$ and $\overline{\omega}_\alpha(\id)=1$.
Thus $\overline{\omega}_\alpha$ is a positive normalized functional on the unital $*$-algebra $\JoinPolyAlg{\alpha}$.
We apply the algebraic GNS construction (see e.g.,~\cite[Theorem~1.27]{burgdorf2016optimization}) with the source archimedean constraints.
Let $\mathcal{N}_\alpha\coloneqq\{q\in \JoinPolyAlg{\alpha} \mid\overline{\omega}_\alpha(q^*q)=0\}$.
For every generator $g_\alpha$ of $\JoinPolyAlg{\alpha} $,
\begin{align}
    \overline{\omega}_\alpha(q^*g_\alpha^*g_\alpha q)\leq\overline{\omega}_\alpha(q^*q),
\end{align}
so $\mathcal{N}_\alpha$ is a left ideal and left multiplication by each generator is contractive on $\JoinPolyAlg{\alpha} /\mathcal{N}_\alpha$.
After Hilbert-space completion, this gives a bounded GNS representation $\gns{\alpha}$ of $\JoinPolyAlg{\alpha}$.

For every child party $v\in\Children{\alpha}$, define
\begin{align}\label{eq:CharacterSourceCStarAlgebra}
    \cA_\alpha^v\coloneqq C^*\!\left(\pi_\alpha(\fA_\alpha^v)\right)\subset\Bof{\cH_\alpha}, \qquad \JoinAlg{\alpha}\coloneqq\bigotimes_{v\in\Children{\alpha}}^{\max}\cA_\alpha^v.
\end{align}
The algebras $\cA_\alpha^v$ mutually commute because the associated $\fA_{\alpha}^v$ commute in $\fA_\alpha$.
The inclusions into $\JoinAlg{\alpha}$ therefore induce a canonical unital $*$-representation of $\JoinAlg{\alpha}$ on $\cH_\alpha$, which we omit for simplicity.
Define the source state $\omega_\alpha$ by
\begin{align}
    \omega_\alpha(x)\coloneqq\sandwich{\Omega_\alpha}{x}{\Omega_\alpha}, \qquad x\in\JoinAlg{\alpha}.
\end{align}
For a leaf source, we take the convention $\JoinAlg{\alpha}=\mathbb{C}$ and $\omega_\alpha=\identitymap_{\mathbb{C}}$.

Define the dense unital $*$-homomorphism
\begin{align}\label{eq:FormalJointSourceToCStar}
    \widehat{\pi}_\alpha:\fA_\alpha\to\JoinAlg{\alpha}, \qquad \widehat{\pi}_\alpha\!\left(\bigotimes_{v\in\Children{\alpha}}q_v\right)\coloneqq\bigotimes_{v\in\Children{\alpha}}\pi_\alpha(q_v).
\end{align}
By construction,
\begin{align}\label{eq:CharacterJointSourceState}
    \omega_\alpha\!\left(\widehat{\pi}_\alpha(q)\right)=K\!\left(\varsigma_\alpha(q)\right), \qquad q\in\fA_\alpha.
\end{align}

\medskip
\noindent
\emph{Reconstructing the leaf POVMs.}
If $v\neq r$ is a leaf party, take
\begin{align}
    \pi_{\Parent{v}}\!\left(\meas{v}{a_v}{x_v}\right)\in\cA_{\Parent{v}}^v
\end{align}
to be the measurement POVMs.
The POVM normalization, $\sum_{a_v} \pi_{\Parent{v}}(\meas{v}{a_v}{x_v}) = \pi_{\Parent{v}}(\id) = I_{\cH_{\Parent{v}}}$, is imposed algebraically in $\fA_{\Parent{v}}^v$.
For positivity, the leaf positivity constraints give
\begin{align}
    \sandwich{\Omega_{\Parent{v}}}{\pi_{\Parent{v}}(p)^*\pi_{\Parent{v}}\!\left(\meas{v}{a_v}{x_v}\right)\pi_{\Parent{v}}(p)}{\Omega_{\Parent{v}}}=K\!\left(\varsigma_{\Parent{v}}\!\left(p^*\meas{v}{a_v}{x_v}p\right)\right)\geq0
\end{align}
for every $p\in\fA_{\Parent{v}}$.
Cyclicity of $\ket{\Omega_{\Parent{v}}}$ therefore implies $\pi_{\Parent{v}}(\meas{v}{a_v}{x_v}) \geq 0$.

\medskip
\noindent
\emph{Incoming algebras at the parties.}
For every party $v$, define
\begin{align}
    \InAlg{v}\coloneqq\bigotimes_{\alpha\in\Children{v}}^{\max}\JoinAlg{\alpha}, \qquad \InState{v}\coloneqq\bigotimes_{\alpha\in\Children{v}}\omega_\alpha,
\end{align}
where $\InState{v}$ is associated with the vector $\ket{\InVec{v}}=\bigotimes_{\alpha \in \Children{v}}\ket{\Omega_{\alpha}}$.
The maps in \zcref{eq:FormalJointSourceToCStar} induce a dense unital $*$-homomorphism
\begin{align}\label{eq:FormalIncomingToCStar}
    \InRep{v}:\InPolyAlg{v}\to\InAlg{v}, \qquad \InRep{v}\!\left(\bigotimes_{\alpha\in\Children{v}}q_\alpha\right)\coloneqq\bigotimes_{\alpha\in\Children{v}}\widehat{\pi}_\alpha(q_\alpha).
\end{align}
Using \zcref{eq:CharacterIncomingProductState,eq:CharacterJointSourceState}, we have
\begin{align}\label{eq:CharacterIncomingState}
    \InState{v}\!\left(\InRep{v}(q)\right)=K\!\left(\left(\prod_{\alpha\in\Children{v}}\varsigma_\alpha\right)(q)\right), \qquad q\in\InPolyAlg{v}.
\end{align}

\medskip
\noindent
\emph{Properties of the represented formal transfer map symbols.}
For an internal party $v\neq r$, we make two useful observations on the symbol $\pi_{\Parent{v}}\!\left(\formalInstr{v}{a_v}{x_v}(q)\right)$.

First, the complete-positivity constraints imply that, for every finite family $q_1,\ldots,q_m\in\InPolyAlg{v}$ and $p_1,\ldots,p_m\in\fA_{\Parent{v}}$,
\begin{equation}
    \begin{aligned}
        &\sum_{i,j=1}^m\sandwich{\Omega_{\Parent{v}}}{\pi_{\Parent{v}}(p_i)^*\pi_{\Parent{v}}\!\left(\formalInstr{v}{a_v}{x_v}(q_i^*q_j)\right)\pi_{\Parent{v}}(p_j)}{\Omega_{\Parent{v}}}\\
        &\qquad = K \left( \sum_{i,j=1}^{m} \varsigma_{\Parent{v}} \left( p_i^* \formalInstr{v}{a_v}{x_v}(q_i^*q_j) p_j \right) \right) \geq0.
    \end{aligned}
\end{equation}
Since $\pi_{\Parent{v}}(\fA_{\Parent{v}})\ket{\Omega_{\Parent{v}}}$ is dense in $\cH_{\Parent{v}}$, it follows that
\begin{align}\label{eq:CharacterFormalTransferCP}
    \left[\pi_{\Parent{v}}\!\left(\formalInstr{v}{a_v}{x_v}(q_i^*q_j)\right)\right]_{i,j=1}^m\succeq0
\end{align}
as an operator-valued matrix on $\cH_{\Parent{v}}^{\oplus m}$, with each entry belonging to the algebra $\cA_{\Parent{v}}^v$.

Second, the internal source-replacer equalities similarly imply, for every $q\in\InPolyAlg{v}$ and $p_1,p_2\in\fA_{\Parent{v}}$,
\begin{equation}
    \begin{aligned}
        \sandwich{\Omega_{\Parent{v}}}{\pi_{\Parent{v}}(p_1)^*\left(\sum_{a_v}\pi_{\Parent{v}}\!\left(\formalInstr{v}{a_v}{x_v}(q)\right)-\InState{v}\!\left(\InRep{v}(q)\right)I_{\cH_{\Parent{v}}}\right)\pi_{\Parent{v}}(p_2)}{\Omega_{\Parent{v}}} \\
        = K \left( h_{x_v}^{v}(p_1,q,p_2) \right) =0.
    \end{aligned}
\end{equation}
Again by cyclicity, for every $q\in\InPolyAlg{v}$, we have
\begin{align}\label{eq:CharacterFormalTransferReplacer}
    \sum_{a_v\in\mathsf{A}_v}\pi_{\Parent{v}}\!\left(\formalInstr{v}{a_v}{x_v}(q)\right)=\InState{v}\!\left(\InRep{v}(q)\right)I_{\cH_{\Parent{v}}}.
\end{align}

\medskip
\noindent
\emph{Well-definedness with respect to the incoming algebra representation.}
Suppose that $\InRep{v}(q)=0$.
We show that $\pi_{\Parent{v}}(\formalInstr{v}{a_v}{x_v}(q))=0$.
By \zcref{eq:CharacterFormalTransferReplacer},
\begin{align}
    \sum_{a_v}\pi_{\Parent{v}}\!\left(\formalInstr{v}{a_v}{x_v}(q^*q)\right)=\InState{v}\!\left(\InRep{v}(q)^*\InRep{v}(q)\right)I_{\cH_{\Parent{v}}}=0.
\end{align}
Every summand is positive by \zcref{eq:CharacterFormalTransferCP}, and hence
\begin{align}\label{eq:TransferQStarQZero}
    \pi_{\Parent{v}}\!\left(\formalInstr{v}{a_v}{x_v}(q^*q)\right)=0
\end{align}
for every $a_v$ as well.
Applying \zcref{eq:CharacterFormalTransferCP} to $q_1=\id$ and $q_2=q$ gives
\begin{align}
    \begin{pmatrix}\pi_{\Parent{v}}(\formalInstr{v}{a_v}{x_v}(\id))&\pi_{\Parent{v}}(\formalInstr{v}{a_v}{x_v}(q))\\[1mm]\pi_{\Parent{v}}(\formalInstr{v}{a_v}{x_v}(q))^*&0\end{pmatrix}\succeq0.
\end{align}
Positivity implies that a zero diagonal entry forces the corresponding row and column to vanish.
Hence $\pi_{\Parent{v}}(\formalInstr{v}{a_v}{x_v}(q)) = 0$.

In other words, $\pi_{\Parent{v}}(\formalInstr{v}{a_v}{x_v}(q))$ depends only on $\InRep{v}(q)$ rather than $q$.

\medskip
\noindent
\emph{Reconstructing the transfer maps.}
On the dense $*$-subalgebra $\InRep{v}(\InPolyAlg{v})\subset\InAlg{v}$, define the transfer map $\instr{v}{a_v}{x_v}: \InRep{v}(\InPolyAlg{v}) \to \pi_{\Parent{v}}(\fA_{\Parent{v}}^v)$ by
\begin{align}\label{eq:CharacterTransferMapDense}
    \instr{v}{a_v}{x_v}\!\left(\InRep{v}(q)\right)\coloneqq\pi_{\Parent{v}}\!\left(\formalInstr{v}{a_v}{x_v}(q)\right)\in\cA_{\Parent{v}}^v.
\end{align}
This is well defined by the above argument, showing that $\pi_{\Parent{v}}(\formalInstr{v}{a_v}{x_v}(q))$ depends only on $\InRep{v}(q)$ rather than $q$.
\zcref[S]{eq:CharacterFormalTransferCP} becomes
\begin{align}\label{eq:CharacterTransferDenseCP}
    \left[\instr{v}{a_v}{x_v}(x_i^*x_j)\right]_{i,j}\succeq0
\end{align}
for every finite family $x_i\in\InRep{v}(\InPolyAlg{v})$, while \zcref{eq:CharacterFormalTransferReplacer} becomes
\begin{align}\label{eq:CharacterTransferReplacerDense}
    \sum_{a_v\in\mathsf{A}_v}\instr{v}{a_v}{x_v}(x)=\InState{v}(x)I_{\cH_{\Parent{v}}}, \qquad x\in\InRep{v}(\InPolyAlg{v}).
\end{align}

We next show that $\instr{v}{a_v}{x_v}$ is contractive.
Indeed, \zcref{eq:CharacterTransferDenseCP} implies
\begin{align}
    \begin{pmatrix}\instr{v}{a_v}{x_v}(\id)&\instr{v}{a_v}{x_v}(x)\\[1mm]\instr{v}{a_v}{x_v}(x)^*&\instr{v}{a_v}{x_v}(x^*x)\end{pmatrix}\succeq0,
\end{align}
and \zcref{eq:CharacterTransferReplacerDense} implies
\begin{align}
    0\leq\instr{v}{a_v}{x_v}(\id)\leq I_{\cH_{\Parent{v}}}, \qquad 0\leq\instr{v}{a_v}{x_v}(x^*x)\leq\InState{v}(x^*x)I_{\cH_{\Parent{v}}}\leq\|x\|^2I_{\cH_{\Parent{v}}}.
\end{align}
Therefore, by the Cauchy--Schwarz inequality for a positive $2 \times 2$ operator-valued matrix,
\begin{align}\label{eq:CharacterTransferContractive}
    \left\|\instr{v}{a_v}{x_v}(x)\right\|^2\leq\left\|\instr{v}{a_v}{x_v}(\id)\right\|\left\|\instr{v}{a_v}{x_v}(x^*x)\right\|\leq\|x\|^2,
\end{align}
proving the claim.

Hence, $\instr{v}{a_v}{x_v}$ extends uniquely by continuity to
\begin{align}
    \instr{v}{a_v}{x_v}:\InAlg{v}\to\cA_{\Parent{v}}^v.
\end{align}
\zcref[S]{eq:CharacterTransferDenseCP} extends to the norm closure, so this extension map is completely positive due to~\cite[\nopp II.6.9.8]{blackadar2006operator}.
Likewise, \zcref{eq:CharacterTransferReplacerDense} extends to
\begin{align}
    \sum_{a_v\in\mathsf{A}_v}\instr{v}{a_v}{x_v}(x)=\InState{v}(x)\id_{\cA_{\Parent{v}}^v}, \qquad x\in\InAlg{v},
\end{align}
which is precisely the source-replacer condition with respect to the input state $\InState{v}$.

\medskip
\noindent
\emph{Reconstructing the root post-measurement functionals.}
Define at $r$
\begin{align}\label{eq:CharacterRootFunctional}
    \overline{\omega}_{a_r|x_r}^{(r)}(q)\coloneqq K\!\left(\rPolystate{a_r}{x_r}(q)\right), \qquad q\in\InPolyAlg{r}.
\end{align}
Root-functional positivity gives $\overline{\omega}_{a_r|x_r}^{(r)}(q^*q)\geq0$.
Moreover, the root source-replacer equality and \zcref{eq:CharacterIncomingState} give
\begin{equation}\label{eq:CharacterRootDomination}
\begin{aligned}
    \sum_{a_r\in\mathsf{A}_r}\overline{\omega}_{a_r|x_r}^{(r)}(q)=\InState{r}\!\left(\InRep{r}(q)\right), \qquad 
    0\leq\overline{\omega}_{a_r|x_r}^{(r)}(q^*q)\leq\InState{r}\!\left(\InRep{r}(q^*q)\right).
\end{aligned}
\end{equation}
Define
\begin{align}
    \rstate{r}{a_r}{x_r}\!\left(\InRep{r}(q)\right)\coloneqq\overline{\omega}_{a_r|x_r}^{(r)}(q)
\end{align}
on the dense $*$-subalgebra $\InRep{r}(\InPolyAlg{r}) \subset \InAlg{r}$.
It is well-defined by an analogous argument to that for the transfer maps $\instr{v}{a_v}{x_v}$.
Indeed, \zcref{eq:CharacterRootDomination} implies
\begin{align}
    \InRep{r}(q)=0 \quad \implies \quad \overline{\omega}_{a_r|x_r}^{(r)}(q^*q)=0 \quad \implies \quad \left| \overline{\omega}_{a_r|x_r}^{(r)}(q) \right|^2 \leq \overline{\omega}_{a_r|x_r}^{(r)}(\id)\, \overline{\omega}_{a_r|x_r}^{(r)}(q^*q) = 0.
\end{align}
Furthermore, for $x=\InRep{r}(q)$,
\begin{align}
    \left|\rstate{r}{a_r}{x_r}(x)\right|^2\leq\rstate{r}{a_r}{x_r}(\id)\rstate{r}{a_r}{x_r}(x^*x)\leq\InState{r}(x^*x)\leq\|x\|^2.
\end{align}
Hence $\rstate{r}{a_r}{x_r}$ extends uniquely to a bounded positive functional on $\InAlg{r}$, and
\begin{align}
    \sum_{a_r\in\mathsf{A}_r}\rstate{r}{a_r}{x_r}=\InState{r},
\end{align}
as desired.

\medskip
\noindent
\emph{Prescribed correlations and multi-state evaluation.}
We have therefore reconstructed all the objects of \zcref{def:SourceTransferModelTree}.
If furthermore $K$ is nonnegative on $\fQ(\Tree,r,p)$, then the prescribed-correlation equalities imply
\begin{align}
    \rstate{r}{a_r}{x_r}\!\left(\bigotimes_{\alpha\in\Children{r}}\TransferMsg{\alpha}\right)=p(\mathbf{a}|\mathbf{x})
\end{align}
for every $\mathbf{a},\mathbf{x}$, by the same GNS cyclicity argument.
Finally, $f(\cR_K)=K(f)$, where $\cR_K$ is the above reconstructed realization, follows recursively from the definitions.
\end{proof}

\subsection{Convergence of the SDP hierarchy to the source-transfer model}
\label{sec:ConvergenceSDPHierarchy}
We are now ready to prove that the SDP hierarchy in \zcref{def:SourceTransferSDPRelaxation}, equivalently \zcref{rem:SourceTransferQuadraticModuleForm}, is complete for the source-transfer model on quantum tree networks.
It follows that we obtain a complete SDP hierarchy for the mixed quantum model on trees in \zcref{def:MixedQuantumModelSourcePartyGraph} due to \zcref{thm:SourceTransferTreeEquivalence}.

\begin{theorem}[The source-transfer SDP hierarchy is complete]\label{thm:SourceTransferSDPConvergence}
    Let $\Tree$ be a finite source-party tree rooted at a party $r\in\PartyVertices{\Tree}$.
    Then a correlation $p(\mathbf{a}|\mathbf{x})$ passes the level-$d$
    source-transfer SDP relaxation as in \zcref{def:SourceTransferSDPRelaxation} for every sufficiently large $d$ if and only if it admits a source-transfer realization on $(\Tree,r)$ as in \zcref{def:SourceTransferModelTree}.
    
    Furthermore, fix a self-adjoint multi-state polynomial $f=f^*\in\thinTreeS$, and define
    \begin{align}\label{eq:ExactSourceTransferOptimization}
        \beta^{\min}(f) &\coloneqq \inf_{\cR} f(\cR), & \beta^{\max}(f) &\coloneqq \sup_{\cR} f(\cR),
    \end{align}
    where $\cR$ ranges over all source-transfer realizations on $(\Tree,r)$ and $f(\cR)$ denotes the multi-state polynomial evaluation as in \zcref{def:EvaluationMultiStatePoly}.
    Then
    \begin{align}\label{eq:SourceTransferOptimizationConvergence}
        \beta_d^{\min}(f) &\nearrow \beta^{\min}(f), & \beta_d^{\max}(f) &\searrow \beta^{\max}(f).
    \end{align}
    The same convergence holds after prescribing a correlation $p(\mathbf{a}|\mathbf{x})$, with the infimum and supremum in \zcref{eq:ExactSourceTransferOptimization} restricted to source-transfer realizations $\cR$ reproducing $p(\mathbf{a}|\mathbf{x})$.
\end{theorem}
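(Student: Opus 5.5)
The plan is the standard archimedean moment argument: pass to a limiting functional, extract a character via Kadison--Dubois, and feed it to \zcref{lem:SourceTransferCharacterReconstruction}.

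\emph{Soundness.} Given a source-transfer realization $\cR$ on $(\Tree,r)$, I take $L_d(f)\coloneqq f(\cR)$ for $f\in\thinTreeS^{2d}$, using the evaluation of \zcref{def:EvaluationMultiStatePoly}. Evaluation is a character, so I need to check $f(\cR)\ge0$ on each generator of $\fQ(\Tree,r)$ and that $f(\cR)=0$ on $\cE(\Tree,r)$.
\begin{itemize}
\item States and root functionals are positive.
\item Generators of $\JoinPolyAlg{\alpha}$ evaluate to contractions (POVM elements, or contractive CP images).
\item Complete positivity of $\instr{v}{a_v}{x_v}$ gives $\sum_{i,j}\omega(p_i^*\Phi(q_i^*q_j)p_j)\ge0$.
\item The source-replacer conditions give vanishing on $\cE$.
\item The prescribed-correlation equalities hold by \zcref{eq:TreeSourceTransferCorrelation}.
\end{itemize}
By \zcref{rem:SourceTransferQuadraticModuleForm}, $L_d\in\fLsf_d(\Tree,r,p)$ for all $d$. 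The same construction shows $\beta_d^{\min}(f)\le\beta^{\min}(f)$ and $\beta_d^{\max}(f)\ge\beta^{\max}(f)$. Monotonicity in $d$ is already noted after \zcref{def:SourceTransferSDPRelaxation}.

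\emph{Completeness.} Suppose $\fLsf_d(\Tree,r,p)\neq\emptyset$ for all large $d$, and pick $L_d$ in each. By \zcref{lem:SourceTransferQuadraticModuleArchimedean}, each monomial $s\in\thinTreeS$ satisfies $R_s\pm s\in\fQ(\Tree,r)$, with the certificate lying in $\fQ_{d_s}$ for some $d_s$. Hence $|L_d(s)|\le R_s$ for all $d\ge d_s$ (Hermitian parts suffice).

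Since there are countably many monomials, a diagonal argument (Banach--Alaoglu/Tychonoff) extracts a subsequence with $L_{d_k}\to L$ pointwise on $\thinTreeS$. The limit $L$ is a unital linear functional with $L\ge0$ on $\fQ(\Tree,r,p)$, because each element of the module lies in some truncation $\fQ_d$.

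Next I invoke the Kadison--Dubois representation theorem (e.g.\ \cite[Theorem~5.4.4]{marshall2008positive}) for the commutative archimedean quadratic module $\fQ(\Tree,r,p)$. It gives that $L$ is an integral of characters $K$ nonnegative on the module. Equivalently, via Krein--Milman, extreme points of the compact convex set of such unital functionals are characters; in particular such a character $K$ exists. \zcref{lem:SourceTransferCharacterReconstruction} then turns $K$ into a source-transfer realization reproducing $p$.

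\emph{Optimization.} For $\beta_d^{\max}(f)\searrow\beta^{\max}(f)$, I take $L_d$ nearly optimal at each level. The values $\beta_d^{\max}(f)$ are bounded by $R_f$ and monotone decreasing, so they have a limit $\beta_\infty$. A pointwise-convergent subsequence yields $L$ with $L(f)=\beta_\infty$. Writing $L=\int K\,d\mu(K)$ over characters nonnegative on $\fQ$, some $K$ in the support has $K(f)\ge\beta_\infty$.

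By \zcref{lem:SourceTransferCharacterReconstruction}, $f(\cR_K)=K(f)$, so $\beta^{\max}(f)\ge\beta_\infty$. Combined with soundness this gives equality. The infimum case follows by replacing $f$ with $-f$, and the prescribed-$p$ case uses $\fQ(\Tree,r,p)$ throughout.

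\emph{Main obstacle.} The delicate point is applying Kadison--Dubois in the right form. The truncated functionals only approximate, and the limiting $L$ need not itself be a character, so an integral representation (or extreme-point argument) is needed to get at least one character with $K(f)\ge L(f)$ and $K\ge0$ on the whole module including $\cE$. Since the module contains the equality space $\cE$ and its negative, I must make sure the version used allows modules that are not necessarily proper cones. I would cite the commutative archimedean Positivstellensatz as in \cite[Lemma~5.4 and~6.17]{klep2024state}, where the same step is carried out for state polynomials. This ensures characters nonnegative on $\fQ$ automatically vanish on $\cE$ and on the correlation span.
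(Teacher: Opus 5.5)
Your proposal is correct and follows the paper's proof essentially step by step. The paper uses the same route: evaluation on a realization gives soundness, including the contractivity check for the archimedean generators. Archimedean bounds plus diagonal extraction give a limiting functional $L$ nonnegative on $\fQ(\Tree,r,p)$. Kadison--Dubois then yields characters, and the character-reconstruction lemma turns one of them into a realization. The only cosmetic difference is in the optimization step: you pick a single character $K$ in the support of $\mu$ with $K(f)\geq\beta_\infty$, while the paper integrates $K(f)\geq\beta^{\min}(f)$ against $\mu$; both give the same conclusion.
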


\begin{proof}
\emph{Necessity.}
Let $\cR$ be a source-transfer realization on $(\Tree,r)$.
For every sufficiently large $d$, define
\begin{align}\label{eq:ExactRealizationTruncatedFunctional}
    L_d(f) \coloneqq f(\cR), \qquad f \in \thinTreeS^{2d},
\end{align}
using the evaluation of \zcref{def:EvaluationMultiStatePoly}.
Then $L_d$ is a Hermitian linear functional satisfying $L_d(\id)=1$.

We verify that $L_d(\fQ_d(\Tree,r)) \subset \mathbb{R}_{\geq 0}$.
By construction, the internal and root source-replacer polynomials, $h_{x_v}^v$ and $h_{x_r}^r$, vanish under the evaluation of $\cR$.
The source-state, root-functional, and leaf-POVM generators are nonnegative
since the corresponding states $\omega_\alpha$ and functionals $\rstate{r}{a_r}{x_r}$ are positive and the leaf $\meas{v}{a_v}{x_v}$ are positive.
For an internal vertex $v \neq r$, complete positivity of $\instr{v}{a_v}{x_v}$ also gives
\begin{align}
    \sum_{i,j=1}^{m} \omega_{\Parent{v}} \left( p_i(\cR)^* \instr{v}{a_v}{x_v} \left( q_i(\cR)^*q_j(\cR) \right) p_j(\cR) \right) \geq 0
\end{align}
for every finite family $p_i\in\fatS_{\Parent{v}}$ and $q_i\in\InfatS{v}$. 
This corresponds to complete-positivity generators in \zcref{eq:SourceTransferQuadraticModule}.

The archimedean generators, $\id - g_\alpha^* g_\alpha$ and $\id - g_r^* g_r$, are also nonnegative.
Indeed, every leaf measurement generator $\meas{v}{a_v}{x_v}$ is a contraction.
For every internal transfer map $\instr{v}{a_v}{x_v}$, the source-replacer condition and complete positivity imply $0 \leq \instr{v}{a_v}{x_v}(\id) \leq \id$.
Since $\left\| \instr{v}{a_v}{x_v} \right\| = \left\| \instr{v}{a_v}{x_v}(\id) \right\| \leq 1$, $\instr{v}{a_v}{x_v}$ is also contractive.
It follows recursively from the leaves towards the root that every formal generator is evaluated as a contraction, fulfilling the archimedean condition.

Consequently, $L_d(\fQ_d(\Tree,r)) \subset \mathbb{R}_{\geq 0}$, and thus $L_d \in \fLsf_d(\Tree,r)$ by \zcref{rem:SourceTransferQuadraticModuleForm}.
Moreover, if $\cR$ reproduces the prescribed correlation $p(\mathbf{a}|\mathbf{x})$, then by construction $h_{\mathbf{a}|\mathbf{x}}(\cR) = p(\mathbf{a}|\mathbf{x})$ for every $\mathbf{a},\mathbf{x}$.
Therefore, $L_d$ is further nonnegative on $\fQ_d(\Tree,r,p)$ and belongs to $\fLsf_d(\Tree,r,p)$.
This proves necessity of the source-transfer SDP hierarchy.
(In fact, for a source-transfer realization $\cR$, the evaluation $L(f) = f(\cR)$ defines a character $L$.
This corresponds to the observation in \zcref{lem:SourceTransferCharacterReconstruction}.)

\medskip
\noindent
\emph{Sufficiency.}
Suppose that $p(\mathbf{a}|\mathbf{x})$ passes the level-$d$ source-transfer SDP hierarchy for every $d$ that is sufficiently large.

Choose
\begin{align}
    L_d \in \fLsf_d(\Tree,r,p) \neq \emptyset
\end{align}
for each sufficiently large $d$.
We aim to extract a normalized linear functional $L$ on all of $\thinTreeS$ such that $L(\fQ(\Tree,r,p)) \subset \mathbb{R}_{\geq 0}$ using the standard diagonal extraction argument, thanks to \zcref{lem:SourceTransferQuadraticModuleArchimedean}.

Indeed, as a vector space of countable dimension, we can enumerate a basis $\{b_1, b_2, \dots\}$ for $\thinTreeS$.
We can assume without loss of generality that these basis elements are self-adjoint, $b_i^* = b_i$.
By \zcref{lem:SourceTransferQuadraticModuleArchimedean}, $\fQ(\Tree,r,p) \subset \thinTreeS$ is archimedean.
Then, for each $b_i$, there exists a constant $R_i > 0$ such that $R_i \pm b_i \in \fQ(\Tree,r,p)$.
In particular, as $\fQ(\Tree,r,p)$ is the union of all $\fQ_d(\Tree,r,p)$, for all sufficiently large $d$ we have also $R_i \pm b_i \in \fQ_d(\Tree,r,p)$.
Consequently, $\left|L_d(b_i)\right| \leq R_i$ for all sufficiently large $d$.
It follows that the sequence $\{ L_d(b_i) \}_d$ is uniformly bounded and has a convergent subsequence by compactness.
A standard diagonal argument over the countable basis $\{b_1, b_2, \dots\}$ gives a convergent subsequence of $L_{d_k}$ such that $L_{d_k}(b_i)$ is convergent for all $i$ as $k \to \infty$.
Define
\begin{align}
    L: \thinTreeS \to \mathbb{C}, \qquad L(b_i) = \lim_{k \to \infty} L_{d_k}(b_i)
\end{align}
and extend $L$ linearly to all of $\thinTreeS$.
By construction, $L(\id) = 1$ and $L(f) \geq 0$ for every $f \in \fQ(\Tree,r,p)$, as desired.

By the Kadison--Dubois representation theorem for the archimedean quadratic module  $\fQ(\Tree,r,p)$ \cite[Theorem~5.4.4]{marshall2008positive}, there exists a probability measure $\mu$ supported on characters $K: \thinTreeS \to \mathbb{C}$ satisfying
\begin{align}
    K(p) \geq 0 \qquad \text{for every } p\in\fQ(\Tree,r,p),
\end{align}
such that
\begin{align}\label{eq:KadisonDuboisSourceTransferRepresentation}
    L(p) = \int K(p)\,\mathrm{d}\mu(K)
\end{align}
for every $p\in\thinTreeS$.
Although $L$ may be a mixture of characters, the prescribed-correlation constraints in $\fQ(\Tree,r,p)$ force each of these characters to reproduce the same $p(\mathbf{a}|\mathbf{x})$.
Choosing one $K$ and applying \zcref{lem:SourceTransferCharacterReconstruction} therefore gives the required source-transfer realization.
This proves sufficiency and hence the first statement of the theorem.

\medskip
\noindent
\emph{Optimization.}
Let $f=f^*\in\thinTreeS$ be a self-adjoint multi-state polynomial.
Recall that every exact source-transfer realization $\cR$ induces, for all sufficiently large $d$, a feasible functional $L_d\in\fLsf_d(\Tree,r)$ satisfying $L_d(f)=f(\cR)$.
Hence, as outer relaxations,
\begin{align}\label{eq:SourceTransferOptimizationOuterBounds}
    \beta_d^{\min}(f) \leq \beta^{\min}(f), \qquad \beta_d^{\max}(f) \geq \beta^{\max}(f).
\end{align}
By monotonicity and the archimedean property, the sequence $\beta_d^{\min}(f)$ converges to some $\beta_\infty^{\min}(f) < \infty$.

Choose $L_d\in\fLsf_d(\Tree,r)$ such that $L_d(f) \leq \beta_d^{\min}(f)+1/d$.
Then the same compactness argument as in the sufficiency part gives a limiting normalized functional $L$ that is nonnegative on $\fQ(\Tree,r)$ and satisfying $L(f) = \beta_\infty^{\min}(f)$.
Applying \zcref{eq:KadisonDuboisSourceTransferRepresentation} and \zcref{lem:SourceTransferCharacterReconstruction},
\begin{align}
    \beta_\infty^{\min}(f)=L(f)=\int K(f)\,\mathrm{d}\mu(K)=\int f(\cR_K)\,\mathrm{d}\mu(K)\geq \int \beta^{\min}(f) \,\mathrm{d}\mu(K) = \beta^{\min}(f).
\end{align}
Combining with \zcref{eq:SourceTransferOptimizationOuterBounds}, this proves $\beta_d^{\min}(f)\nearrow\beta^{\min}(f)$.
Applying the same conclusion to $-f$ leads to $\beta_d^{\max}(f)\searrow\beta^{\max}(f)$.

Finally, the prescribed-correlation case is identical: one simply replaces $\fLsf_d(\Tree,r)$ and $\fQ(\Tree,r)$ by $\fLsf_d(\Tree,r,p)$ and $\fQ(\Tree,r,p)$, then applies the correlation-constrained case of \zcref{lem:SourceTransferCharacterReconstruction}.
\end{proof}

\begin{corollary}[Completeness for mixed quantum tree networks]\label{cor:MixedTreeSDPConvergence}
Let $\Tree$ be a finite source-party tree and let $r\in\PartyVertices{\Tree}$ be any root party.
For a correlation $p(\mathbf{a}|\mathbf{x})$, the following are equivalent:
\begin{enumerate}[label=(\roman*)]
    \item $p(\mathbf{a}|\mathbf{x})$ admits a mixed quantum realization on $\Tree$ as in \zcref{def:MixedQuantumModelSourcePartyGraph};
    \item $p(\mathbf{a}|\mathbf{x})$ passes the level-$d$ source-transfer SDP relaxation rooted at $r$ for every sufficiently large $d$.
\end{enumerate}

Moreover, suppose that a physical objective $f$ admits, for every root party $r$, a self-adjoint multi-state polynomial representation $f_r$ in the corresponding source-transfer description.
Let $\beta_{\mathrm{mix}}^{\min}(f)$ and $\beta_{\mathrm{mix}}^{\max}(f)$ denote its infimum and supremum over mixed quantum realizations on $\Tree$.
Then, for every root party $r$,
\begin{align}
    \beta_{d,r}^{\min}(f_r)&\nearrow\beta_{\mathrm{mix}}^{\min}(f), & \beta_{d,r}^{\max}(f_r)&\searrow\beta_{\mathrm{mix}}^{\max}(f).
\end{align}
In particular, the limiting optimal values are independent of the root party $r$.
\end{corollary}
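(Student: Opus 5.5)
The plan is to combine \zcref{thm:SourceTransferTreeEquivalence} with \zcref{thm:SourceTransferSDPConvergence}. No new analytic work should be needed; the only care point is how the physical objective is transported between the two models.

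For the equivalence of (i) and (ii), fix any root party $r$. By \zcref{thm:SourceTransferTreeEquivalence}, $p$ admits a mixed quantum realization on $\Tree$ if and only if it admits a source-transfer realization on $(\Tree,r)$. By the first statement of \zcref{thm:SourceTransferSDPConvergence}, the latter holds if and only if $p$ passes the level-$d$ relaxation rooted at $r$ for every sufficiently large $d$.

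For the optimization statement, I interpret the hypothesis that $f$ ``admits a multi-state polynomial representation $f_r$'' as follows. Suppose $\cR^{\mathrm{mixed}}$ is a mixed realization and $\cR_{(\Tree,r)}$ is a source-transfer realization that correspond under the constructions of \zcref{thm:SourceTransferTreeEquivalence}, in either direction. Then $f(\cR^{\mathrm{mixed}})=f_r(\cR_{(\Tree,r)})$. Typical physical objectives are polynomials in the correlation entries. For these, $f_r$ is the same polynomial in the correlation polynomials $h_{\mathbf{a}|\mathbf{x}}$, and the identity holds because both directions of the proof of \zcref{thm:SourceTransferTreeEquivalence} reproduce $p$ exactly.

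With this interpretation, I show that the two optimization problems have the same value. Each mixed realization yields a source-transfer realization with the same objective value, and each source-transfer realization yields a mixed realization with the same objective value. Hence $\beta^{\min}(f_r)$ over source-transfer realizations on $(\Tree,r)$ equals $\beta_{\mathrm{mix}}^{\min}(f)$, and likewise for the maxima. The second part of \zcref{thm:SourceTransferSDPConvergence} gives $\beta_{d,r}^{\min}(f_r)\nearrow\beta^{\min}(f_r)$ and $\beta_{d,r}^{\max}(f_r)\searrow\beta^{\max}(f_r)$. Combining these yields the claimed convergence. Since $\beta_{\mathrm{mix}}^{\min}(f)$ and $\beta_{\mathrm{mix}}^{\max}(f)$ do not mention $r$, the limits are root-independent.

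The main obstacle is making the transfer of objective values rigorous. It is not automatic for an arbitrary multi-state polynomial $f_r$, since the reconstruction in \zcref{thm:SourceTransferTreeEquivalence} passes through GNS representations and may change the algebras. For this reason I will state the representation hypothesis precisely as value-preservation under the two constructions. I will then note that it holds for polynomials in the correlation, which covers game values and the optimization claim of \zcref{thm:IntroductionCompleteness}. Compactness or attainment of the optimum is not needed.
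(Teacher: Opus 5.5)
Your proposal is correct and takes the same approach as the paper, whose proof simply cites \zcref{thm:SourceTransferTreeEquivalence} and \zcref{thm:SourceTransferSDPConvergence}. Your reading of the representation hypothesis as value-preservation under the two constructions, checked for real polynomials in the correlation entries via the $h_{\mathbf{a}|\mathbf{x}}$, matches the paper's intended meaning (see the example right after the corollary) and makes its one-line argument precise.
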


For example, any real polynomial $f$ in the correlation entries $p(\mathbf{a}|\mathbf{x})$ has such a self-adjoint representation $f_r$, obtained by identifying $h_{\mathbf{a}|\mathbf{x}}$ from \zcref{eq:SourceTransferCorrelationPolynomial} for each entry $p(\mathbf{a}|\mathbf{x})$.

\begin{proof}
This follows directly from \zcref{thm:SourceTransferTreeEquivalence,thm:SourceTransferSDPConvergence}.
\end{proof}

The character reconstruction \zcref{lem:SourceTransferCharacterReconstruction} also implies that the mixed-model correlation set $C_{\mathrm{mix}}(\Tree)$ is compact for every finite source-party tree $\Tree$ with finite setting and outcome sets, so optimal game values are attained.
\begin{remark}[Compactness and attainment]\label{rem:MixedTreeCompactness}
Fix a finite source-party tree $\Tree$ with finite setting and outcome sets.
We first show that the characters $K:\thinTreeS\to\mathbb{C}$ satisfying $K(\fQ(\Tree,r))\subset\mathbb{R}_{\geq0}$ form a compact set in the topology of pointwise convergence.
The archimedean property in \zcref{lem:SourceTransferQuadraticModuleArchimedean} gives, for every self-adjoint $s\in\thinTreeS$, a constant $R_s>0$ such that $|K(s)|\leq R_s$ for every such $K$.
Applying this to real and imaginary parts gives a uniform bound on $K(s)$ for every polynomial $s \in \thinTreeS$.
Identify each $K$ with the tuple $(K(s))_{s \in \thinTreeS}$.
The bounds $R_s$ show that $(K(s))_{s \in \thinTreeS} \in \prod_{s \in \thinTreeS} \{z \in \mathbb{C} \mid \lvert z \rvert \leq R_s\}$, which is a compact set in the topology of pointwise convergence by Tychonoff's theorem.
Moreover, pointwise limits preserve linearity, normalization, and nonnegativity on $\fQ(\Tree,r)$, as well as the character identity $K(s^*t)=\lim_i K_i(s^*t)=\lim_i\overline{K_i(s)}K_i(t)=\overline{K(s)}K(t)$.
It follows that these $K$ form a closed subset of the compact set $\prod_{s \in \thinTreeS} \{z \in \mathbb{C} \mid \lvert z \rvert \leq R_s\}$ and hence compact.

By \zcref{lem:SourceTransferCharacterReconstruction,thm:SourceTransferTreeEquivalence}, the continuous map
\begin{align}\label{eq:CharacterCorrelationMap}
K\longmapsto\bigl(K(h_{\mathbf{a}|\mathbf{x}})\bigr)_{\mathbf{a},\mathbf{x}}
\end{align}
takes characters $K$ nonnegative on $\fQ(\Tree,r)$ to mixed-model correlations.
Conversely, every mixed realization gives a source-transfer realization whose evaluation defines such a character $K$.
The image of this map is therefore exactly $C_{\mathrm{mix}}(\Tree)$, which is consequently compact.
Every continuous real-valued function of the correlation entries therefore attains its minimum and maximum over $C_{\mathrm{mix}}(\Tree)$.
It follows that any continuous real-valued $f$ of the correlation entries $p(\mathbf{a}|\mathbf{x})$, including the mixed-model game value of a tree network, attains its minimum and maximum for $p \in C_{\mathrm{mix}}(\Tree)$.
\end{remark}

We finish the subsection by an observation on the choice of root in practical implementations.
\begin{remark}[Finite-level dependence on the root]\label{rem:FiniteLevelRootDependence}
The choice of root party fixes the direction of the source-transfer description, and therefore changes the finite-level SDP hierarchy.
In particular, it changes the recursively defined algebras $\fA_\alpha^v$, the joint source algebras $\fA_\alpha$, the source-state and root-functional symbols $\varsigma_\alpha$ and $\rPolystate{a_r}{x_r}$, the formal transfer symbols $\formalInstr{v}{a_v}{x_v}$, and the corresponding constraints.

Thus, at a fixed degree $d$, both the size and the strength of the relaxation may depend on the root $r$.
The limiting correlation set and optimal values are nevertheless independent of this choice by \zcref{cor:MixedTreeSDPConvergence}.
Thus the root may be chosen to simplify the finite-level SDP or to obtain tighter optimal values.
\end{remark}

\subsection{Stopping criteria and finite-dimensional extraction}\label{sec:SourceTransferFlatness}
We give a stopping criterion for the SDP hierarchy: a sufficient condition under which a finite-level SDP solution yields a finite-dimensional source-transfer realization.
By \zcref{rem:FiniteDimensionalTransferToMixed}, this also gives finite-dimensional mixed and standard tensor-product realizations on the network.
We follow~\cite[Definition~6.9]{klep2024state} to define the notion of flatness.

\begin{definition}[$\delta$-flat extension]\label{def:SourceTransferFlatness}
Suppose that $\Tree$ has at least one source, let $d,\delta\geq1$, and let $L_{d+\delta}\in\fLsf_{d+\delta}(\Tree,r)$.
Write $L_d\coloneqq L_{d+\delta}|_{\thinTreeS^{2d}}$ for its restriction.
We say that $L_{d+\delta}$ is a \emph{$\delta$-flat extension of $L_d$} if, for every source $\alpha$,
\begin{align}\label{eq:SourceTransferFlatnessCondition}
    \operatorname{rank}H_{d+\delta}^\alpha(L_{d+\delta})=\operatorname{rank}H_d^\alpha(L_d)\eqqcolon n_\alpha.
\end{align}
\end{definition}

We now define a degree bound $\delta_r$ that ensures the extension contains the polynomial relations needed for finite-dimensional extraction.
The bound is computed from the leaves to the root, using polynomial degree and the ranks $n_\alpha$ above.
At a leaf party $v$, the POVM generators have degree one, so set $\delta_v=1$.
At a source $\alpha$, degrees add across its child-party tensor factors, so we set
\begin{align}\label{eq:FlatnessSourceDegreeBound}
    \delta_\alpha\coloneqq\max\left\{1,\sum_{v\in\Children{\alpha}}\delta_v\right\}.
\end{align}
To motivate these bounds, consider the finite-dimensional source algebras that we aim to reconstruct.
We assign polynomial expressions in $\JoinAlg{\alpha}$ the degrees of their formal counterparts in $\JoinPolyAlg{\alpha}$.
Suppose that $\dim\cH_\alpha=n_\alpha$ for the source GNS space and that $\cA_\alpha^v\subset\Bof{\cH_\alpha}$ is generated by polynomials of degree at most $\delta_v$.
Starting from the span of $\id$, successively include words of length one, two, and so on in these generators.
If one step does not increase the span, multiplication by any generator preserves that span, so it already equals $\cA_\alpha^v$.
Since $\dim\cA_\alpha^v\leq n_\alpha^2$, there can be at most $n_\alpha^2-1$ strict increases.
We can therefore choose a basis represented by words of degree at most $n_\alpha^2\delta_v$.
Tensoring these bases over $v\in\Children{\alpha}$ gives a basis of $\JoinAlg{\alpha}$ represented in degree at most $n_\alpha^2\delta_\alpha$.

At an internal party $v$, tensoring the bases of its incoming source algebras gives representatives $b_0^v=\id,b_1^v,\ldots,b_{m_v}^v$ of a basis of $\InAlg{v}$, each of degree at most $\sum_{\alpha\in\Children{v}}n_\alpha^2\delta_\alpha$.
Multiplication in $\InAlg{v}$ is determined by expressing each product of two basis elements in this basis.
Thus, for each $i,j$, there are coefficients $c_{ij,k}\in\mathbb{C}$ such that the polynomial
\begin{align}\label{eq:FlatnessBasisRelationDegrees}
    q_{ij}\coloneqq (b_i^v)^*b_j^v-\sum_k c_{ij,k}b_k^v, \qquad \deg(q_{ij})\leq2\sum_{\alpha\in\Children{v}}n_\alpha^2\delta_\alpha,
\end{align}
evaluates to zero in $\InAlg{v}$.
To make the transfer map respect $q_{ij}=0$, the proof of \zcref{lem:SourceTransferCharacterReconstruction} uses $q_{ij}^*q_{ij}$, whose degree is at most $4\sum_{\alpha\in\Children{v}}n_\alpha^2\delta_\alpha$.
To recover the data of $L_d$ for $w$ with $\deg(w)\leq2d$, the relations $q(w)=w-\sum_k\lambda_k(w)b_k^v$ expressing $w$ in the basis similarly require $\deg(q(w)^*q(w))\leq4\max\{d,\sum_{\alpha\in\Children{v}}n_\alpha^2\delta_\alpha\}$.
A transfer-map or root-functional symbol adds one to the degree, so we set
\begin{align}\label{eq:FlatnessPartyDegreeBound}
    \delta_v\coloneqq1+4\max\left\{d,\sum_{\alpha\in\Children{v}}n_\alpha^2\delta_\alpha\right\}
\end{align}
for every internal party, including the root.

By construction, these bounds increase towards $r$:
\begin{align}\label{eq:FlatnessDegreeMonotonicity}
    \delta_v\leq\delta_{\Parent{v}}\leq\delta_r \quad(v\neq r), \qquad \delta_\alpha\leq\delta_{\Parent{\alpha}}\leq\delta_r \quad(\alpha\in\SourceVertices{\Tree}).
\end{align}
The theorem below uses an extension of length $\delta\geq2\delta_r$.

\begin{theorem}[Character extension and finite-dimensional extraction]\label{thm:SourceTransferFlatExtraction}
Let $L_{d+\delta}\in\fLsf_{d+\delta}(\Tree,r)$ be a $\delta$-flat extension of $L_d$, and let $\delta_r$ be defined recursively by \zcref{eq:FlatnessSourceDegreeBound,eq:FlatnessPartyDegreeBound}.
Suppose that $\delta\geq2\delta_r$ and that $L_{d + \delta}$ satisfies the $*$-multiplicativity condition
\begin{align}\label{eq:FlatExtractionScalarMultiplicativity}
    L_{d+\delta}(s^*u)=\overline{L_{d+\delta}(s)}L_{d+\delta}(u) \qquad(s,u\in\thinTreeS^{d+\delta}).
\end{align}

Then $L_d$ extends to a character $K:\thinTreeS\to\mathbb{C}$ nonnegative on $\fQ(\Tree,r)$.
Consequently, there is a finite-dimensional source-transfer realization $\cR$ whose source states have GNS spaces $\cH_\alpha$ with $\dim\cH_\alpha=n_\alpha$ for every source $\alpha$, and which satisfies
\begin{align}\label{eq:FlatExtractionMomentRecovery}
    f(\cR)=K(f)=L_d(f) \qquad(f\in\thinTreeS^{2d}).
\end{align}
If $L_{d+\delta}\in\fLsf_{d+\delta}(\Tree,r,p)$ and all correlation polynomials have degree at most $2d$, this realization reproduces $p$.

For an objective $f\in\thinTreeS$ with $f^*=f$ and $\deg(f)\leq2d$, the reconstructed finite-dimensional realization $\cR$ minimizes (respectively maximizes) $f$ over all source-transfer realizations on $(\Tree,r)$ whenever $L_{d+\delta}$ minimizes (respectively maximizes) $L(f)$ over $L\in\fLsf_{d+\delta}(\Tree,r)$.
In the minimization case,
\begin{align}\label{eq:FlatExtractionOptimalValue}
    \beta_{d+\delta}^{\min}(f)=L_{d+\delta}(f)=f(\cR)=\beta^{\min}(f).
\end{align}
The analogous equality holds with $\min$ replaced by $\max$ for maximization.
\end{theorem}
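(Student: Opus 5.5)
The plan is to reduce to \zcref{lem:SourceTransferCharacterReconstruction} and the necessity part of \zcref{thm:SourceTransferSDPConvergence}. I will construct a finite-dimensional source-transfer realization $\cR$ directly from $L_{d+\delta}$. Then I will set $K(f)\coloneqq f(\cR)$ for $f\in\thinTreeS$. By \zcref{def:EvaluationMultiStatePoly}, $K$ is automatically a character, and by the necessity argument in \zcref{thm:SourceTransferSDPConvergence} it is nonnegative on $\fQ(\Tree,r)$. So the real content is to build $\cR$ with finite-dimensional GNS spaces of dimension $n_\alpha$, and to prove $f(\cR)=L_d(f)$ for $\deg f\leq 2d$.

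The multiplicativity hypothesis \zcref{eq:FlatExtractionScalarMultiplicativity} lets me treat $L_{d+\delta}$ as a character on scalar monomials of degree at most $d+\delta$. Concretely, $L_{d+\delta}$ of a product of scalar symbols factors into values of single symbols. Hence every Hankel and localizing entry is determined by the numbers $\overline{\omega}_\alpha(w)\coloneqq L_{d+\delta}(\varsigma_\alpha(w))$, together with $L_{d+\delta}(\rPolystate{a_r}{x_r}(w))$ and the values of the transfer symbols inside $\varsigma_{\Parent{v}}$.

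I then proceed recursively from the leaves to the root, mirroring the proof of \zcref{lem:SourceTransferCharacterReconstruction}. The one change is that GNS spaces are now truncated. For each source $\alpha$, I apply the flat-extension theorem (Curto--Fialkow, in the noncommutative form of \cite[Definition~6.9]{klep2024state}) to the truncated positive functional $\overline{\omega}_\alpha$ with rank-preserving Hankel matrices \zcref{eq:SourceTransferFlatnessCondition}. This gives an $n_\alpha$-dimensional space $\cH_\alpha=\fatS_\alpha^{d+\delta}/\ker H^\alpha_{d+\delta}$ with a cyclic vector $\ket{\Omega_\alpha}$. Left multiplication by generators is well defined on it, since the kernel is a "truncated ideal" by flatness. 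Leaf POVM positivity and the archimedean constraints yield positive contractive generator images.

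At an internal party $v$, I use the degree budget $\delta_v$ of \zcref{eq:FlatnessPartyDegreeBound}. The incoming algebra $\InAlg{v}$ has a basis $b^v_k$ of degree at most $\sum_\alpha n_\alpha^2\delta_\alpha$, with multiplication relations $q_{ij}$ as in \zcref{eq:FlatnessBasisRelationDegrees}. I define $\instr{v}{a_v}{x_v}(b_k^v)$ as the operator on $\cH_{\Parent{v}}$ represented by $\formalInstr{v}{a_v}{x_v}(b^v_k)$, and extend linearly. There are three checks, each copying the argument of \zcref{lem:SourceTransferCharacterReconstruction} but with truncated degrees. Well-definedness, i.e. $q_{ij}\mapsto0$, follows from the $2\times2$ positivity trick using $q_{ij}^*q_{ij}$, whose degree fits inside $\delta\geq2\delta_r$. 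Complete positivity follows from the CP matrix $C$. The source-replacer identity follows from $h^v_{x_v}$ and cyclicity. The root functionals are handled the same way.

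The step I expect to be the main obstacle is the bookkeeping showing that every polynomial identity used stays within the truncation $2(d+\delta)$. This matters at the recursion step where the parent GNS space must still be flat after adjoining the transfer outputs. The transfer symbols are themselves generators of $\fA^v_{\Parent{v}}$, so they are covered by the source flatness at $\Parent{v}$ only if their degrees are at most $\delta_{\Parent{v}}$. This is exactly where \zcref{eq:FlatnessDegreeMonotonicity} and the choice $\delta\geq 2\delta_r$ are used.

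Once $\cR$ is built, I show $f(\cR)=L_d(f)$ for $\deg f\leq 2d$. Each monomial $w$ of degree at most $2d$ is rewritten modulo the vanishing relations $q(w)$ as a combination of basis words. Multiplicativity then reduces $L_d$ of a product to products of single-symbol values, which agree with the evaluations by construction. Then $K$ extends $L_d$. The prescribed-correlation statement follows because $h_{\mathbf a|\mathbf x}$ has degree at most $2d$.

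For optimality in the minimization case, I use the chain
\begin{align*}
f(\cR)=L_{d+\delta}(f)=\beta^{\min}_{d+\delta}(f)\leq\beta^{\min}(f)\leq f(\cR).
\end{align*}
The first equality is from $\deg f\leq2d$. The first inequality holds because the hierarchy is an outer relaxation \zcref{eq:SourceTransferOptimizationOuterBounds}, and the last because $\cR$ is itself a realization. The maximization case follows by applying this to $-f$.
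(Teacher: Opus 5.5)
Your proposal is correct and follows essentially the same route as the paper. Flatness gives Gram/GNS spaces of dimension $n_\alpha$ on which only generators of degree at most $\delta_v$ act. The transfer maps and root functionals are defined on a basis of the finite-dimensional $\InAlg{v}$, and the $2\times 2$ positivity argument with $q^*q$ kills the relations $q_{ij}$ and $q(w)$. Then $K$ is evaluation on $\cR$, agreement with $L_d$ follows from multiplicativity, and the optimality chain is the same.

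The one difference is presentational. The paper does not invoke a Curto--Fialkow-type flat extension theorem as a black box. It builds the multiplication operators by hand, only for the bounded-degree generators, and collapses the scalar symbols directly via $\bigl\|\ket{s}_\alpha-L_{d+\delta}(s)\ket{\Omega_\alpha}\bigr\|^2=L_{d+\delta}(s^*s)-|L_{d+\delta}(s)|^2=0$.
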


Note that the $*$-multiplicativity condition in \zcref{eq:FlatExtractionScalarMultiplicativity} is equivalent to
\begin{align}\label{eq:FlatExtractionScalarRank}
    \operatorname{rank}\left[L_{d+\delta}(s^*u)\right]_{s,u\in\Monomial{d+\delta}{\thinTreeS}}=1.
\end{align}
Since $L_{d+\delta}(\id)=1$, rank one gives $\det\bigl(\begin{smallmatrix}1&L_{d+\delta}(u)\\L_{d+\delta}(s^*)&L_{d+\delta}(s^*u)\end{smallmatrix}\bigr)=0$, so $L_{d+\delta}(s^*u)=\overline{L_{d+\delta}(s)}L_{d+\delta}(u)$ for every pair of scalar monomials $s,u$ indexing this matrix.
This condition could be satisfied, for example, when $L_d$ corresponds to an extreme optimal solution.
Note, however, that in practice interior-point SDP solvers converge to optimal solutions of maximum rank~\cite[Remark~6.11]{klep2024state}.

\begin{proof}
\emph{Finite-dimensional Gram representations.}
A Gram decomposition of $H_{d+\delta}^\alpha(L_{d+\delta})$ gives vectors $\ket{q}_\alpha$ with $\langle p\mid q\rangle_\alpha=L_{d+\delta}(\varsigma_\alpha(p^*q))$.
By flatness,
\begin{align}\label{eq:FlatExtractionGramSpaces}
    \cH_\alpha=\operatorname{span}\{\ket{q}_\alpha:\deg(q)\leq d\}, \qquad \dim\cH_\alpha=n_\alpha, \qquad \ket{\Omega_\alpha}=\ket{\id}_\alpha.
\end{align}
Let $g$ be a local generator of $\JoinPolyAlg{\alpha}$ with $\deg(g)\leq\delta_v$, and let $p,q$ be monomials of degree at most $d$.
Since $\deg(g^*p)\leq\delta_v+d\leq d+\delta$, the vector $\ket{g^*p}_\alpha$ is defined, and
\begin{align}\label{eq:FlatGNSAdjointShift}
    \braket{p}{gq}_\alpha=L_{d+\delta}\bigl(\varsigma_\alpha(p^*gq)\bigr)=\braket{g^*p}{q}_\alpha.
\end{align}
In particular, $\ket{q}_\alpha=0$ forces $\braket{p}{gq}_\alpha=0$ for every such $p$, hence $\ket{gq}_\alpha=0$ by \zcref{eq:FlatExtractionGramSpaces}.
Multiplication $\pi_\alpha(g)\ket{q}_\alpha\coloneqq\ket{gq}_\alpha$ is therefore well defined on $\cH_\alpha$, and \zcref{eq:FlatGNSAdjointShift} gives $\pi_\alpha(g)^*=\pi_\alpha(g^*)$.
Since $\deg(g_1g_2q)\leq2\delta_v+d\leq2\delta_r+d\leq d+\delta$ by \zcref{eq:FlatnessDegreeMonotonicity} and $\delta\geq2\delta_r$, we also have $\pi_\alpha(g_1)\pi_\alpha(g_2)=\pi_\alpha(g_1g_2)$ for any two such generators.
The local generators $g_v$ with $\deg(g_v)\leq\delta_v$ thus give a $*$-representation $\pi_\alpha$ of a unital $*$-subalgebra of $\JoinPolyAlg{\alpha}$ on the $n_\alpha$-dimensional space $\cH_\alpha$; compare the flat GNS constructions of~\cite[Theorem~1.69]{burgdorf2016optimization} and~\cite[Sections~6.3 and~6.5]{klep2024state}.
Iterating this identity from $\ket{\Omega_\alpha}=\ket{\id}_\alpha$, we obtain $\pi_\alpha(q)\ket{\Omega_\alpha}=\ket{q}_\alpha$ for every polynomial $q$ in this subalgebra with $\deg(q)\leq d+\delta$.

For $s\in\thinTreeS^{d+\delta}$, the Gram inner product and \zcref{eq:FlatExtractionScalarMultiplicativity} give
\begin{align}\label{eq:FlatExtractionScalarVector}
    \bigl\|\ket{s}_\alpha-L_{d+\delta}(s)\ket{\Omega_\alpha}\bigr\|^2=L_{d+\delta}(s^*s)-\bigl|L_{d+\delta}(s)\bigr|^2=0.
\end{align}
Hence $\ket{s}_\alpha=L_{d+\delta}(s)\ket{\Omega_\alpha}$.
For a scalar monomial $s$ and a source monomial $q\in\JoinPolyAlg{\alpha}$ with $\deg(sq)\leq d$, multiplication gives $\ket{sq}_\alpha=\pi_\alpha(q)\ket{s}_\alpha=L_{d+\delta}(s)\pi_\alpha(q)\ket{\Omega_\alpha}$.
Every monomial in $\fatS_\alpha$ has the form $sq$, so $\cH_\alpha=\operatorname{span}\{\pi_\alpha(q)\ket{\Omega_\alpha}:q\in\Monomial{d}{\JoinPolyAlg{\alpha}}\}$.

\medskip
\noindent
\emph{Constructing a finite-dimensional source-transfer realization.}
We follow the construction in the proof of \zcref{lem:SourceTransferCharacterReconstruction}.
However, here we must obtain its positivity and source-replacer identities from the finite-level SDP.
Define
\begin{equation}\label{eq:FlatExtractionSourceAlgebras}
    \begin{aligned}
        \cA_\alpha^v&\coloneqq C^*(\pi_\alpha(g_v):g_v\text{ a local generator},\ \deg(g_v)\leq\delta_v)\subset\Bof{\cH_\alpha},\\
        \JoinAlg{\alpha}&\coloneqq\bigotimes_{v\in\Children{\alpha}}^{\max}\cA_\alpha^v, \qquad
        \InAlg{v}\coloneqq\bigotimes_{\alpha\in\Children{v}}^{\max}\JoinAlg{\alpha}.
    \end{aligned}
\end{equation}
The source states, incoming product states, and incoming polynomial evaluation into $\InAlg{v}$ are defined by
\begin{equation}\label{eq:FlatExtractionProductStates}
    \begin{aligned}
        \omega_\alpha\!\left(\bigotimes_{v\in\Children{\alpha}}x_v\right)&\coloneqq\sandwich{\Omega_\alpha}{\prod_{v\in\Children{\alpha}}x_v}{\Omega_\alpha}, &&x_v\in\cA_\alpha^v,\\
        \InState{v}\!\left(\bigotimes_{\alpha\in\Children{v}}x_\alpha\right)&\coloneqq\prod_{\alpha\in\Children{v}}\omega_\alpha(x_\alpha), &&x_\alpha\in\JoinAlg{\alpha},\\
        \pi_v^{\mathrm{in}}\!\left(\bigotimes_{\alpha\in\Children{v}}\bigotimes_{u\in\Children{\alpha}}q_\alpha^u\right)&\coloneqq\bigotimes_{\alpha\in\Children{v}}\bigotimes_{u\in\Children{\alpha}}\pi_\alpha(q_\alpha^u).
    \end{aligned}
\end{equation}
Here $q_\alpha^u$ is a polynomial in the local generators $g_u\in\fA_\alpha^u$ with $\deg(g_u)\leq\delta_u$.

Since $\InAlg{v}$ is finite-dimensional, for each internal party $v$ we can find polynomials $b_0^v=\id,\ldots,b_{m_v}^v$ with $\deg(b_i^v)\leq\sum_{\alpha\in\Children{v}}n_\alpha^2\delta_\alpha$ such that $\{\pi_v^{\mathrm{in}}(b_i^v)\}_{i=0}^{m_v}$ is a basis of $\InAlg{v}$.

As in \zcref{eq:FlatnessBasisRelationDegrees}, for each $i,j$ we write $q_{ij}\coloneqq(b_i^v)^*b_j^v-\sum_k c_{ij,k}b_k^v$, with coefficients chosen so that $\pi_v^{\mathrm{in}}(q_{ij})=0$.
Likewise, for each word $w\in\InPolyAlg{v}$ with $\deg(w)\leq2d$, we write $q(w)\coloneqq w-\sum_k\lambda_k(w)b_k^v$ with $\pi_v^{\mathrm{in}}(q(w))=0$.
By the definition of $\delta_v$, every $q=q_{ij}$ or $q=q(w)$ satisfies
\begin{equation}\label{eq:FlatExtractionConstraintDegrees}
    \begin{aligned}
        \deg(q)&\leq(\delta_v-1)/2, & 1+\deg(q^*q)&\leq\delta_v,\\
        d+\deg(q)&\leq d+\delta-1, & 2d+1+\deg(q^*q)&\leq2(d+\delta).
    \end{aligned}
\end{equation}

Define the transfer maps and root functionals on these bases by
\begin{equation}\label{eq:FlatExtractionBasisEvaluations}
    \begin{aligned}
        \instr{v}{a_v}{x_v}\!\left(\pi_v^{\mathrm{in}}(b_i^v)\right)&\coloneqq\pi_{\Parent{v}}\!\left(\formalInstr{v}{a_v}{x_v}(b_i^v)\right) \quad(v\neq r),\\
        \rstate{r}{a_r}{x_r}\!\left(\pi_r^{\mathrm{in}}(b_i^r)\right)&\coloneqq L_{d+\delta}\!\left(\rPolystate{a_r}{x_r}(b_i^r)\right),
    \end{aligned}
\end{equation}
and extend linearly.
These extensions are unique because $\{\pi_v^{\mathrm{in}}(b_i^v)\}_i$ is a basis of $\InAlg{v}$.
These are the finite-basis versions of \zcref{eq:CharacterTransferMapDense,eq:CharacterRootFunctional}.
Fix $v\neq r$ and let $q$ be any of the relations $q_{ij}$ or $q(w)$ above.
The matrix $C_{d+\delta-1}(\formalInstr{v}{a_v}{x_v};L_{d+\delta})\succeq0$ can be tested on $(p,\id)$ and $(p,q)$ for every $\deg(p)\leq d$, by \zcref{eq:FlatExtractionConstraintDegrees}.
Since these $\ket{p}_{\Parent{v}}$ span $\cH_{\Parent{v}}$, we obtain
\begin{align}\label{eq:FlatExtractionOperatorCP}
    \begin{pmatrix}
        \pi_{\Parent{v}}(\formalInstr{v}{a_v}{x_v}(\id))&\pi_{\Parent{v}}(\formalInstr{v}{a_v}{x_v}(q))\\
        \pi_{\Parent{v}}(\formalInstr{v}{a_v}{x_v}(q))^*&\pi_{\Parent{v}}(\formalInstr{v}{a_v}{x_v}(q^*q))
    \end{pmatrix}\succeq0.
\end{align}

Likewise, $\deg(h_{x_v}^v(p_1,t,p_2))\leq2(d+\delta)$ for $\deg(p_1),\deg(p_2)\leq d$ and $t=b_i^v$ or $t=q^*q$, where $q=q_{ij}$ or $q=q(w)$.
The corresponding source-replacer equalities and \zcref{eq:FlatExtractionScalarMultiplicativity} therefore give
\begin{align}\label{eq:FlatExtractionOperatorReplacer}
    \sum_{a_v}\pi_{\Parent{v}}\!\left(\formalInstr{v}{a_v}{x_v}(t)\right)=\InState{v}\!\left(\pi_v^{\mathrm{in}}(t)\right)I_{\cH_{\Parent{v}}}, \qquad t=b_i^v\text{ or }q^*q.
\end{align}

These are the finite-level versions of \zcref{eq:CharacterFormalTransferCP,eq:CharacterFormalTransferReplacer}.
The well-definedness argument in the proof of \zcref{lem:SourceTransferCharacterReconstruction} therefore gives
\begin{align}\label{eq:FlatExtractionTransferKernel}
    \pi_v^{\mathrm{in}}(q)=0 \quad\Longrightarrow\quad \pi_{\Parent{v}}\!\left(\formalInstr{v}{a_v}{x_v}(q)\right)=0 \qquad(q=q_{ij}\text{ or }q=q(w)).
\end{align}
By \zcref{eq:FlatExtractionTransferKernel}, we have $\instr{v}{a_v}{x_v}(\pi_v^{\mathrm{in}}(t))=\pi_{\Parent{v}}(\formalInstr{v}{a_v}{x_v}(t))$ for $t=(b_i^v)^*b_j^v$ and for every word $t$ with $\deg(t)\leq2d$.

The condition $C_{d+\delta-1}(\formalInstr{v}{a_v}{x_v};L_{d+\delta})\succeq0$ also implies
\begin{align}\label{eq:FlatExtractionBasisCP}
    \left[\instr{v}{a_v}{x_v}\!\left(\pi_v^{\mathrm{in}}((b_i^v)^*b_j^v)\right)\right]_{i,j}
    =\left[\pi_{\Parent{v}}\!\left(\formalInstr{v}{a_v}{x_v}((b_i^v)^*b_j^v)\right)\right]_{i,j}\succeq0.
\end{align}
Thus $\instr{v}{a_v}{x_v}$ is completely positive, since $\{\pi_v^{\mathrm{in}}(b_i^v)\}_i$ is a basis of $\InAlg{v}$.
The source-replacer identity extends from the basis by linearity.
At the root, the same basis argument from \zcref{lem:SourceTransferCharacterReconstruction} gives $\rstate{r}{a_r}{x_r}(x^*x)\geq0$ for $x\in\InAlg{r}$ and $\sum_{a_r}\rstate{r}{a_r}{x_r}=\InState{r}$.
Together with the leaf POVMs $\pi_{\Parent{v}}(\meas{v}{a_v}{x_v})$, whose positivity follows directly from the POVM positivity constraints, these objects form a source-transfer realization $\cR$.
It is finite-dimensional since each $\cA_\alpha^v\subset\Bof{\cH_\alpha}$ is finite-dimensional.
The source GNS dimension is exactly $n_\alpha$: by the first part of the proof, the representation of $\JoinAlg{\alpha}$ on $\cH_\alpha$ is cyclic with vector $\ket{\Omega_\alpha}$, so it is a GNS representation of $\omega_\alpha$.

\medskip
\noindent
\emph{Character and agreement with $L_d$.}
Define $K(f)\coloneqq f(\cR)$ for $f\in\thinTreeS$ using \zcref{def:EvaluationMultiStatePoly}.
As in the necessity part of \zcref{thm:SourceTransferSDPConvergence}, this is a character satisfying $K(\fQ(\Tree,r))\geq0$.

Finally, we need to check that $K$ agrees with $L_d$ on $\thinTreeS^{2d}$.
We first verify, from the leaves towards the root, that $w(\cR)=\pi_\alpha(w)$ for every $w\in\Monomial{2d}{\fA_\alpha^v}$ and $v\in\Children{\alpha}$.
This holds for the leaf generators by their definition.
At an internal party $v\neq r$, suppose it holds for the incoming factors, so that $w(\cR)=\pi_v^{\mathrm{in}}(w)$ for every $w\in\Monomial{2d}{\InPolyAlg{v}}$.
Using $q(w)=w-\sum_k\lambda_k(w)b_k^v$, \zcref{eq:FlatExtractionBasisEvaluations,eq:FlatExtractionTransferKernel} give
\begin{equation}\label{eq:FlatExtractionWordRecovery}
    \begin{aligned}
        \formalInstr{v}{a_v}{x_v}(w)(\cR)&=\instr{v}{a_v}{x_v}\!\left(\pi_v^{\mathrm{in}}(w)\right) =\sum_k\lambda_k(w)\pi_{\Parent{v}}\!\left(\formalInstr{v}{a_v}{x_v}(b_k^v)\right) = \pi_{\Parent{v}}\!\left(\formalInstr{v}{a_v}{x_v}(w)\right).
    \end{aligned}
\end{equation}
For each generator $g_v=\formalInstr{v}{a_v}{x_v}(t)$ with $\deg(t)\leq2d-1$, \zcref{eq:FlatExtractionWordRecovery} gives $g_v(\cR)=\pi_\alpha(g_v)$, where $\alpha=\Parent{v}$.
Since both maps preserve products and adjoints, this extends to $w(\cR)=\pi_\alpha(w)$ for every $w\in\Monomial{2d}{\fA_\alpha^v}$.
Consequently, by definition, for $p\in\JoinPolyAlg{\alpha}$ with $\deg(p)\leq2d$,
\begin{align}\label{eq:FlatExtractionSourceScalarRecovery}
    K(\varsigma_\alpha(p))=\sandwich{\Omega_\alpha}{\pi_\alpha(p)}{\Omega_\alpha}
    =\langle\Omega_\alpha\mid p\rangle_\alpha=L_{d+\delta}(\varsigma_\alpha(p)).
\end{align}

At the root, applying \zcref{eq:FlatExtractionBasisEvaluations} to $w=\sum_k\lambda_k(w)b_k^r+q(w)$ and using $L_{d+\delta}(\rPolystate{a_r}{x_r}(q(w)))=0$, as in \zcref{eq:FlatExtractionTransferKernel}, gives
\begin{align}\label{eq:FlatExtractionRootScalarRecovery}
    K(\rPolystate{a_r}{x_r}(w))=\sum_k\lambda_k(w)L_{d+\delta}(\rPolystate{a_r}{x_r}(b_k^r))
    =L_{d+\delta}(\rPolystate{a_r}{x_r}(w))
\end{align}
whenever $\deg(\rPolystate{a_r}{x_r}(w))\leq2d$.

Thus $K(s)=L_{d+\delta}(s)$ for $s=\varsigma_\alpha(q)$ or $s=\rPolystate{a_r}{x_r}(w)$ with $\deg(s)\leq2d$.
For a scalar monomial $s_1\cdots s_\ell$ of total degree at most $2d$, multiplicativity of $K$ and \zcref{eq:FlatExtractionScalarMultiplicativity} give $K(s_1\cdots s_\ell)=\prod_jL_{d+\delta}(s_j)=L_{d+\delta}(s_1\cdots s_\ell)$.
Here we apply \zcref{eq:FlatExtractionScalarMultiplicativity} successively to $(s_1\cdots s_j)^*$ and $s_{j+1}$; their degrees are at most $2d\leq d+\delta$, so each application is within its stated degree bounds.
By linearity,
\begin{align}\label{eq:FlatExtractionLowDegreeAgreement}
    K|_{\thinTreeS^{2d}}=L_d.
\end{align}
Therefore $f(\cR)=K(f)=L_d(f)$ for every $f\in\thinTreeS^{2d}$, including any prescribed correlation polynomials.
Finally, \zcref{eq:SourceTransferOptimizationOuterBounds} implies that $\beta_{d+\delta}^{\min}(f)\leq\beta^{\min}(f)\leq f(\cR)=L_{d+\delta}(f)$.
If $L_{d+\delta}$ minimizes $L(f)$ over $L\in\fLsf_{d+\delta}(\Tree,r)$, $\beta_{d+\delta}^{\min}(f)=L_{d+\delta}(f)$, so all inequalities are equalities; the maximization case follows by considering $-f$.
\end{proof}

Thus, a flat extension as in \zcref{def:SourceTransferFlatness} satisfying the hypotheses of \zcref{thm:SourceTransferFlatExtraction} leads to a finite-dimensional source-transfer realization.
By \zcref{rem:FiniteDimensionalTransferToMixed}, we also obtain a finite-dimensional mixed model and, equivalently, a finite-dimensional tensor-product model realizing the same correlation.

Conversely, it is straightforward to check that a finite-dimensional source-transfer or mixed realization leads to a solution satisfying the flatness condition.
Nonetheless, we remark that the existence of a finite-dimensional source-transfer realization that is optimal for a given objective does not guarantee that an SDP solver will find a flat optimal solution in practice.

\subsection{A computability consequence}\label{sec:ComputabilitySourceTransferSDP}
The convergence result also has a consequence for the computability of quantum network game values.
A game $G$ specifies a distribution over measurement-setting sets $\mathsf{X}_v$ and a rule for accepting the parties' outcomes.
We use the usual finite verifier description of games, as in \cite[Definition~6.1]{lin2025mipcocore}.
Write $v_{\Tree}^{\mathrm{mix}}(G)$ for the supremum of its winning probability over mixed quantum realizations on $\Tree$.

The game-value promise problem stated below is already $\mathsf{coRE}$-hard for trees consisting of two parties connected to a single quantum source.
Indeed, their mixed model is exactly the commuting-operator model for Bell correlations, so hardness follows from \cite[Corollary~6.10]{lin2025mipcocore}.
We now prove $\mathsf{coRE}$-completeness for arbitrary finite tree networks.

\begin{corollary}[Computability of quantum tree game values]\label{cor:MixedTreeGameComputability}
Given a finite tree network $\Tree$ and a game $G$ with finite setting and outcome sets, the promise problem of distinguishing
\begin{align}\label{eq:MixedTreeGamePromise}
    \mathrm{YES}:\quad v_{\Tree}^{\mathrm{mix}}(G)=1,
    \qquad
    \mathrm{NO}:\quad v_{\Tree}^{\mathrm{mix}}(G)\leq\tfrac12
\end{align}
is $\mathsf{coRE}$-complete.
In particular, there exists an algorithm that detects every NO instance in finite time and never rejects a YES instance.
\end{corollary}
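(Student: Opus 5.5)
The proof has two parts: membership in $\mathsf{coRE}$ and $\mathsf{coRE}$-hardness. Hardness was already noted just before the statement. The two-party tree $1-\alpha-2$ with a single source is a special case of a finite tree network. Its mixed model coincides with the commuting-operator Bell model, so the $\mathsf{coRE}$-hardness of the commuting-operator promise problem from \cite[Corollary~6.10]{lin2025mipcocore} transfers by the trivial reduction that embeds a nonlocal game as a game on this tree. The substance is therefore membership: I will give an algorithm that halts on every NO instance and never halts on (rejects) a YES instance.

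For membership, I fix an arbitrary root party $r$; one is computable from the description of $\Tree$. I then write the winning probability as a polynomial $f_G$ in the correlation entries, namely $f_G=\sum_{\mathbf{x}}\mu(\mathbf{x})\sum_{\mathbf{a}}V(\mathbf{a},\mathbf{x})\,h_{\mathbf{a}|\mathbf{x}}$ with the correlation polynomials $h_{\mathbf{a}|\mathbf{x}}$ of \zcref{eq:SourceTransferCorrelationPolynomial}. This is a self-adjoint element of $\thinTreeS$ with rational coefficients. The algorithm runs through $d=d_0,d_0+1,\ldots$, solves the level-$d$ SDP $\beta_d^{\max}(f_G)$ over $\fLsf_d(\Tree,r)$, and halts with "NO" once a certified upper bound below $1$ appears, for instance $\beta_d^{\max}(f_G)<3/4$. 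By the outer property, which is the necessity part of \zcref{thm:SourceTransferSDPConvergence} combined with \zcref{thm:SourceTransferTreeEquivalence}, every level satisfies $\beta_d^{\max}(f_G)\geq v_{\Tree}^{\mathrm{mix}}(G)$. On a YES instance we have $v=1$, so the algorithm never halts. On a NO instance, \zcref{cor:MixedTreeSDPConvergence} gives $\beta_d^{\max}(f_G)\searrow v_{\Tree}^{\mathrm{mix}}(G)\leq 1/2$, so some finite level falls below $3/4$ and the algorithm halts.

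The main obstacle is making the halting test genuinely computable, since SDPs are solved only approximately. My plan is to avoid exact SDP solving and instead certify upper bounds through the dual, using the quadratic-module formulation of \zcref{rem:SourceTransferQuadraticModuleForm}. The algorithm enumerates rational certificates: a rational number $c<3/4$ together with an explicit expression of $c\,\id-f_G$ as an element of $\fQ_d(\Tree,r)$. Such an expression consists of finitely many sums of Hermitian squares in the generators of \zcref{eq:SourceTransferQuadraticModule}, with rational Gram matrices and rational coefficients from the equality space $\cE(\Tree,r)$, and it can be checked exactly by rational arithmetic. A certificate immediately gives $L(f_G)\leq c$ for every feasible $L$, and so bounds $v$ by soundness. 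For completeness on NO instances, I argue as follows. By the archimedean property \zcref{lem:SourceTransferQuadraticModuleArchimedean} and the duality behind the convergence theorem, the bound $v\le 1/2$ together with Kadison--Dubois implies that $(1/2+\epsilon)\id-f_G$ lies in $\fQ(\Tree,r)$ for every $\epsilon>0$. The argument is the standard one: if it did not, an Eidelheit/separation argument in the archimedean setting would produce a functional $L\geq0$ on $\fQ$ with $L(f_G)\geq 1/2+\epsilon$. By \zcref{eq:KadisonDuboisSourceTransferRepresentation} and \zcref{lem:SourceTransferCharacterReconstruction}, that functional is a mixture of realizations, which contradicts $v\leq1/2$. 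Taking $\epsilon=1/8$, the resulting certificate has finite degree. Its real Gram matrices can be perturbed into strictly feasible rational ones, because the archimedean element $\id$ provides slack: I spend part of the margin $1/8$ on a small multiple of $\id$ absorbed into the squares. So a rational certificate exists and the enumeration finds it. The delicate point is this rationalization step. It needs the equality constraints in $\cE$ to be handled exactly, which works because they have rational coefficients and form a linear space, so the rounding error can be projected onto them.
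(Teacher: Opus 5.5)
Your hardness reduction, the soundness of certificates, and the Positivstellensatz step are all fine; that step separates against the algebraic interior point $\id$ and then uses the integral representation over characters with \zcref{lem:SourceTransferCharacterReconstruction}. Certifying NO instances by dual certificates is a legitimate alternative to the paper. The paper instead decides exactly, at each level, by Tarski--Seidenberg quantifier elimination, whether some $L_d\in\fLsf_d(\Tree,r)$ has $L_d(f_{G,r})\geq 3/4$, so it never needs rational witnesses.

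\textbf{The gap is your rationalization step.} Archimedeanity makes $\id$ an algebraic interior point of the full module $\fQ(\Tree,r)$. It does not make $\id$ interior to a fixed truncation $\fQ_d(\Tree,r)$ inside the space of degree-$2d$ Hermitian elements, and the round-then-project scheme needs the latter. Adding $\epsilon I$ to every Gram matrix at level $d$ changes the certified element by $\epsilon P_d$, where $P_d$ is the sum of all diagonal kernel entries, including complete-positivity entries $\varsigma_{\Parent{v}}(w^*\formalInstr{v}{a_v}{x_v}(t^*t)w)$. Paying for this with $\delta\id$ requires $\delta\id-\epsilon P_d\in\fQ_d(\Tree,r)$ at the same level. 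The archimedean lemma only gives such a bound in some $\fQ_{d'}$ with $d'\geq d$, where the zero-padded Gram matrices are singular again. Without a positive definite certificate, rounding and projecting onto the rational affine space of exact certificates can destroy positivity; in general, rational sums-of-squares certificates need not exist at a given level even when real ones do. Nor can the rounding error be ``projected onto $\cE$'': it is a general small Hermitian polynomial, not an element of $\cE(\Tree,r)$.

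\textbf{Repair.} Round the factors rather than the Gram matrices, and pay for the residual with the slack.
\begin{enumerate}
\item Write the certificate of $g\coloneqq\tfrac58\id-f_G$ as a finite nonnegative combination of the generators in \zcref{eq:SourceTransferQuadraticModule}, given by polynomials $p,q,p_i,q_i$, plus a Hermitian combination of source-replacer polynomials.
\item Replace every coefficient in these data by a nearby Gaussian rational, taking Hermitian parts where needed.
\item The residual $r$ between $g$ and the rounded expression is Hermitian with rational coefficients, because $g$ is. It is supported on the finitely many scalar monomials of degree at most that of the certificate, and its coefficients tend to $0$ as the rounding is refined.
\item Expand $r=\sum_m c_m m$ over the $N$ Hermitian parts $m$ of these monomials. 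The identities in the proof of \zcref{lem:SourceTransferQuadraticModuleArchimedean}, together with the standard rational identities showing that bounded elements form a subring, give explicit certificates $R_m\id\pm m\in\fQ(\Tree,r)$ with rational data and $R_m\in\mathbb{Q}_{>0}$, of whatever degree they need.
\item Once $|c_m|R_m\leq 1/(16N)$ for all $m$, the identity $\tfrac1{16}\id+r=\sum_m\bigl[|c_m|\bigl(R_m\id+\operatorname{sgn}(c_m)\,m\bigr)+\bigl(\tfrac1{16N}-|c_m|R_m\bigr)\id\bigr]$ yields a rational certificate for $\tfrac{11}{16}\id-f_G$, and $\tfrac{11}{16}<\tfrac34$.
\end{enumerate}
No strict feasibility or projection is needed. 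With this fix your route works, and it produces exactly checkable witnesses for NO instances, since the formal relations in $\thinTreeS$ are linear with explicit bases. The paper's route is shorter but leans on quantifier elimination.
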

\begin{proof}
In view of the $\mathsf{coRE}$-hardness established above, it remains to prove membership in $\mathsf{coRE}$.
Fix any root party $r$, and let $f_{G,r}$ be the multi-state polynomial representing the winning probability.
At each level $d$, we test whether there exists $L_d\in\fLsf_d(\Tree,r)$ satisfying $L_d(f_{G,r})\geq 3/4$.
It is in principle decidable with quantifier elimination over the reals (the Tarski--Seidenberg theorem), for instance via cylindrical algebraic decomposition (CAD).
We output NO when the test is infeasible.
A YES instance admits realizations with winning probability above $3/4$, so it is never rejected.
For a NO instance, convergence implies that $\beta_{d,r}^{\max}(f_{G,r})<3/4$ at some finite level, where the test is infeasible.
This proves membership in $\mathsf{coRE}$.
\end{proof}

\section{Convergence of inflation-NPA on quantum tree networks}\label{sec:InflationTreeConvergence}
Quantum inflation and the inflation-NPA hierarchy~\cite{wolfe2021quantum} are the main tools for characterizing general quantum network scenarios, but it remains open whether they converge to the correct quantum network set.
Alternatively, the authors of~\cite{ligthart2023convergent} introduce a modified hierarchy with explicit operators acting on the source algebras and bounds on the Schmidt rank of measurement operators, thereby restricting the models covered at each fixed bound.
Convergence of the original hierarchy is known for the bilocal scenario~\cite{ligthart2023inflation}, with an extension to bipartite-source star networks in~\cite[Corollary~3.3.17]{ligthart2024semidefinite}.
In this section, we advance this line of work by proving that inflation-NPA converges to the mixed quantum model on every finite source-party tree.

\subsection{POVM inflation tests on source-party graphs}\label{sec:InflationAlgebrasPOVM}
First, let $\Tree$ be a finite source-party graph without assuming that it is a tree.
The finite setting and outcome sets at party $v$ are $\mathsf X_v$ and $\mathsf A_v$, as in \zcref{def:SourceTransferModelTree}.
For $m\geq1$, write $[m]\coloneqq\{1,\ldots,m\}$.
Each inflated measurement operator for party $v$ carries a tuple $\mathbf i=(i_\alpha)_{\alpha\in\Neighbors{v}}\in[m]^{\Neighbors{v}}$, with one label $i_{\alpha}$ for each neighboring source $\alpha$.
For example, on a triangle with sources $\alpha=AB$, $\beta=BC$, and $\gamma=CA$, the measurement copies are $E^A_{a\mid x}(i_\alpha,i_\gamma)$, $E^B_{b\mid y}(i_\alpha,i_\beta)$, and $E^C_{c\mid z}(i_\beta,i_\gamma)$.
Two copies at $A$ are required to commute when both their $\alpha$ and $\gamma$ labels differ; copies at different parties always commute.

We follow the \emph{inflation algebra} construction in~\cite[Section~2.3]{ligthart2024semidefinite}.
Let $\cE_m(\Tree)$ be the universal unital $C^*$-algebra generated by the positive contractions $E^v_{a_v\mid x_v}(\mathbf i)$, subject to
\begin{equation}\label{eq:InflationPOVMRelations}
    \begin{aligned}
        E^v_{a_v\mid x_v}(\mathbf{i})&\geq0, \qquad \sum_{a_v\in\mathsf A_v}E^v_{a_v\mid x_v}(\mathbf i)=\id,\\
        [E^v_{a_v\mid x_v}(\mathbf i),E^w_{a_w\mid x_w}(\mathbf j)]&=0 \qquad
        \text{if }v\neq w\text{, otherwise }i_\alpha\neq j_\alpha\text{ for every }\alpha\in\Neighbors{v}.
    \end{aligned}
\end{equation}
Consider the source permutation group $\Gamma_m\coloneqq\prod_{\alpha\in\SourceVertices{\Tree}}S_m$, which acts by independently permuting the labels of each source.
For $g=(g_\alpha)_\alpha\in\Gamma_m$, we define automorphism $\alpha_g$ on $\cE_m(\Tree)$ by
\begin{align}\label{eq:InflationPermutationAction}
    \alpha_g\!\left(E^v_{a_v\mid x_v}(\mathbf i)\right)\coloneqq E^v_{a_v\mid x_v}\!\left((g_\alpha(i_\alpha))_{\alpha\in\Neighbors{v}}\right).
\end{align}

Let $e=(\mathbf{a} | \mathbf{x})$ denote a complete event, and let $D_e^{(t)}$ be the product of measurements acting on the $t$-th copy of the source:
\begin{align}\label{eq:InflationDiagonalEvent}
    D_e^{(t)}\coloneqq\prod_{v\in\PartyVertices{\Tree}}E^v_{a_v\mid x_v}\!\left((t)_{\alpha\in\Neighbors{v}}\right),\qquad t\in[m].
\end{align}

\begin{definition}[POVM inflation test]\label{def:InflationTestPOVM}
Let $\Tree$ be a finite source-party graph.
A correlation $p(\mathbf{a} | \mathbf{x})$ passes the level-$m$ inflation test on $\Tree$ if there is a state $\omega_m:\cE_m(\Tree)\to\mathbb C$ such that
\begin{equation}\label{eq:InflationStateConstraints}
    \begin{aligned}
        \omega_m\circ\alpha_g&=\omega_m, \quad \forall g\in\Gamma_m, && \text{(permutation invariance/symmetry)}\\
        \omega_m\!\left(\prod_{t=1}^kD_{e_t}^{(t)}\right)&=\prod_{t=1}^kp(e_t), \quad 1\leq k\leq m, &&\text{(diagonal independence factorization)}
    \end{aligned}
\end{equation}
for every choice of the complete events $e_1,\ldots,e_k$.
Equivalently by the universal property, one may use any unital $C^*$-algebra containing POVMs satisfying \zcref{eq:InflationPOVMRelations} and a state satisfying \zcref{eq:InflationStateConstraints}.
\end{definition}

\begin{remark}[Limit of POVM inflation tests]\label{rem:LimitInflationTest}
    The inclusion $\cE_m(\Tree)\subset\cE_{m+1}(\Tree)$ is canonical.
    We therefore define the inflation algebra in the limit and the full source-permutation group by
    \begin{align}\label{eq:InflationInfiniteAlgebra}
        \cE_\infty(\Tree)\coloneqq\overline{\bigcup_{m\geq1}\cE_m(\Tree)}^{\|\cdot\|},\qquad \Gamma\coloneqq\bigcup_{m\geq1}\Gamma_m.
    \end{align}
    This limit algebra is separable since it has countably many generators.
    
    If $p$ passes every finite inflation test, the standard compactness argument used in the proof of~\cite[Lemma~5]{ligthart2023convergent} gives a $\Gamma$-invariant state ${\omega}$ on $\cE_\infty(\Tree)$ satisfying the constraints in \zcref{eq:InflationStateConstraints}.
    Conversely, restricting such a state to $\cE_m(\Tree)$ gives a feasible state for every finite inflation test.
\end{remark}

At this stage, the test uses states on an operator algebra; the inflation-NPA hierarchy later introduced in \zcref{sec:InflationNPASDP} will also have a parameter on the degree of words.

\subsection{Invariant-state decomposition and permutation averages}\label{sec:InflationSymmetryLimits}
We now make some useful technical observations on source-permutation-invariant states, permutation group averages of operators, and the limits of completely positive maps.

We call $z\in\cE_\infty(\Tree)$ a \emph{finite-support element} if it belongs to the $C^*$-algebra generated by finitely many inflated measurement operators $E^v_{a_v\mid x_v}(\mathbf{i})$.
For each source $\alpha$, let $I_\alpha(z)$ be the set of labels $i_\alpha$ appearing in $\mathbf{i}$ of these generators, with $I_\alpha(z)=\emptyset$ if none of these generators involves the source $\alpha$.
Two finite-support elements $z,w$ have \emph{disjoint source labels} if their generating families can be chosen so that $I_\alpha(z)\cap I_\alpha(w)=\emptyset$ for every source $\alpha$.
The commutation relations in \zcref{eq:InflationPOVMRelations} then imply $[z,w]=0$.

\begin{lemma}[Existence of a factorizing inflation state with the prescribed correlation]\label{lem:InflationFactorizingState}
Suppose that $p$ passes the level-$m$ test in \zcref{def:InflationTestPOVM} for every $m$.
Then there is a $\Gamma$-invariant state $\omega$ on $\cE_\infty(\Tree)$ satisfying \zcref{eq:InflationStateConstraints} and
\begin{align}\label{eq:InflationDisjointFactorization}
    \omega(zw)=\omega(z)\omega(w)
\end{align}
for all finite-support elements $z,w$ with disjoint source labels.
\end{lemma}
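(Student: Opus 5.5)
Starting from \zcref{rem:LimitInflationTest}, we already have a $\Gamma$-invariant state $\omega_0$ on the separable algebra $\cE_\infty(\Tree)$ satisfying \zcref{eq:InflationStateConstraints}. The plan is to decompose $\omega_0$ into extremal $\Gamma$-invariant (ergodic) states. Then we show two things: every ergodic component factorizes on disjoint source labels, and the diagonal constraints force almost every component to reproduce $p$. This is the de Finetti strategy of~\cite{ligthart2023convergent}, and we follow it.

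\emph{Step 1 (ergodic decomposition).} The set of $\Gamma$-invariant states on $\cE_\infty(\Tree)$ is a weak-$*$ compact convex set. It is metrizable because $\cE_\infty(\Tree)$ is separable. By Choquet's theorem there is a probability measure $\mu$, supported on the extreme points, with $\omega_0=\int\omega\,\mathrm d\mu(\omega)$.

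\emph{Step 2 (extreme invariant states factorize).} Let $\omega$ be extremal invariant, and fix finite-support elements $z,w$ with disjoint source labels. Choose $g_n\in\Gamma$ that fix the labels of $z$ and shift the labels of $w$ to fresh blocks, so that the translates $w_n=\alpha_{g_n}(w)$ have pairwise disjoint labels. Also choose them disjoint from $z$. Set $\bar w_N=\frac1N\sum_{n\le N}w_n$. By invariance, $\omega(zw)=\omega(z\bar w_N)$, since $\alpha_{g_n}$ fixes $z$. The $w_n-\omega(w)$ mutually commute, so in the GNS representation the mean-ergodic/law-of-large-numbers argument applies. For extremal $\omega$, $\pi_\omega(\bar w_N)\to\omega(w)I$ weakly; this uses that the fixed-point space of the unitary group implementing $\Gamma$ is $\mathbb C\Omega_\omega$, which is the standard characterization of extremality for such asymptotically abelian (commuting on disjoint labels) actions. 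Hence $\omega(zw)=\omega(z)\omega(w)$. I expect this to be the main obstacle. The clean route is to show directly that $\|\pi(\bar w_N)\Omega-\omega(\bar w_N)\Omega\|^2\to\mathrm{Var}$-type quantities vanish. To do this, expand $\omega(\bar w_N^*\bar w_N)$, using commutation of disjoint translates and a Hewitt--Savage/Størmer-type argument that the off-diagonal terms equal $\omega(w^*\alpha_g(w))$, which converges to a limit independent of $g$. Concretely, I would adapt Størmer's symmetric-state theorem to the product group $\Gamma$, with one permutation factor per source.

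\emph{Step 3 (the correlation is preserved).} Write $q_\omega(e)\coloneqq\omega(D_e^{(1)})$. By factorization (Step 2) and invariance, $\omega(\prod_{t\le k}D_{e_t}^{(t)})=\prod_t q_\omega(e_t)$. Integrating and using \zcref{eq:InflationStateConstraints} for $k=1,2$ gives $\int q_\omega(e)\,\mathrm d\mu=p(e)$ and $\int q_\omega(e)^2\,\mathrm d\mu=p(e)^2$. So the variance of $q_\omega(e)$ under $\mu$ is zero, and $q_\omega(e)=p(e)$ for $\mu$-almost every $\omega$. Since there are finitely many events, some extremal $\omega$ satisfies this for all $e$. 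That $\omega$ then satisfies all higher diagonal constraints by factorization, and it is $\Gamma$-invariant. This is the state claimed in \zcref{lem:InflationFactorizingState}.
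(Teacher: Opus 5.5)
Your proposal is correct and is essentially the paper's proof. The paper obtains your Steps~1--2 at once by citing the de~Finetti theorem~\cite[Theorem~2.3.2]{ligthart2024semidefinite}, which supplies a probability measure on $\Gamma$-invariant states that factorize on disjoint source labels; it then runs exactly your Step~3 variance argument with $D_e^{(1)}$ and $D_e^{(1)}D_e^{(2)}$.

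If you prove Step~2 yourself, make sure extremality actually enters. Expanding $\omega(\bar w_N^*\bar w_N)$ alone only shows that all off-diagonal terms equal $\omega(w_1^*w_2)$, and their equality with $|\omega(w)|^2$ is the very factorization being proved. A clean way to bring extremality in is to note that every weak limit point of $\pi(\bar w_N)$ commutes with $\pi(\cE_\infty(\Tree))$, lies in $\pi(\cE_\infty(\Tree))''$, and is fixed by every $U_g$. Such an operator lies in $\bigl(\pi(\cE_\infty(\Tree))\cup\{U_g\}\bigr)'$, which is trivial for an extremal invariant state, so it must be the scalar $\omega(w)I$.
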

\begin{proof}
By the de Finetti theorem~\cite[Theorem~2.3.2]{ligthart2024semidefinite}, there is a probability measure $\mu$ on $\Gamma$-invariant states $\varphi$ satisfying \zcref{eq:InflationDisjointFactorization}, whose average satisfies \zcref{eq:InflationStateConstraints}.
For every complete event $e$, factorization and the first two diagonal constraints $D_e^{(1)}$, $D_e^{(2)}$ give
\begin{align}\label{eq:InflationComponentCorrelationVariance}
    \int\bigl(\varphi(D_e^{(1)})-p(e)\bigr)^2\,\mathrm{d}\mu(\varphi)=0.
\end{align}
Thus, for each $e$, the vanishing integral implies $\varphi(D_e^{(1)})=p(e)$ for $\mu$-almost every $\varphi$.
Since there are finitely many complete events $e$, these equalities hold simultaneously on a set of full $\mu$-measure.
Choose $\omega$ from this set.
Its symmetry and factorization then imply all of \zcref{eq:InflationStateConstraints}.
\end{proof}

For $z\in\cE_\infty(\Tree)$, define the finite group average
\begin{align}\label{eq:InflationGroupAverage}
    \mathsf{Av}_m(z)\coloneqq\frac1{|\Gamma_m|}\sum_{g\in\Gamma_m}\alpha_g(z).
\end{align}
For two finite sets of source labels, we first bound the probability that a random permutation of one set intersects the other.
We then use this bound to estimate the commutator of $\mathsf{Av}_m(z)$ with another operator and its difference from $\omega(z)I$ on the GNS vector.

\begin{lemma}[Source-label intersection and average estimates]\label{lem:InflationCollisionBounds}
Let $I_\alpha,J_\alpha\subset[m]$ be finite label sets for each source $\alpha$, and choose $g\in\Gamma_m$ uniformly.
Then
\begin{align}\label{eq:InflationCollisionProbability}
    \Pr\!\left\{g_\alpha(I_\alpha)\cap J_\alpha\neq\emptyset\text{ for some }\alpha\right\}\leq\frac1m\sum_\alpha|I_\alpha|\,|J_\alpha|.
\end{align}
In particular, if $z,w$ are finite-support elements and $[m]$ contains all their source labels, then
\begin{align}\label{eq:InflationAverageCommutator}
    \|[\mathsf{Av}_m(z),w]\|\leq\frac{2\|z\|\|w\|}{m}\sum_\alpha|I_\alpha(z)|\,|I_\alpha(w)|.
\end{align}
If $\omega$ is $\Gamma_m$-invariant and satisfies the disjoint-source factorization condition in \zcref{eq:InflationDisjointFactorization}, let $(\cH,\pi,\ket{\Omega})$ be its GNS representation.
Then
\begin{align}\label{eq:InflationAverageVectorBound}
    \|\pi(\mathsf{Av}_m(z-\omega(z) \id))\ket{\Omega}\|^2\leq\frac{\|z-\omega(z) \id\|^2}{m}\sum_\alpha|I_\alpha(z)|^2.
\end{align}
\end{lemma}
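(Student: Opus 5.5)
We treat the three estimates in order: the first is a union bound, and the other two follow by splitting the group average according to whether a collision occurs.

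For \zcref{eq:InflationCollisionProbability}, fix a source $\alpha$ and labels $i\in I_\alpha$, $j\in J_\alpha$. Under a uniform $g_\alpha\in S_m$, the image $g_\alpha(i)$ is uniform on $[m]$, so $\Pr\{g_\alpha(i)=j\}=1/m$. A union bound over the pairs $(i,j)$ gives $\Pr\{g_\alpha(I_\alpha)\cap J_\alpha\neq\emptyset\}\leq|I_\alpha||J_\alpha|/m$. A second union bound over $\alpha$ then yields \zcref{eq:InflationCollisionProbability}; independence of the components $g_\alpha$ is not even needed.

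For \zcref{eq:InflationAverageCommutator}, write
\begin{align}
[\mathsf{Av}_m(z),w]=\frac1{|\Gamma_m|}\sum_{g}[\alpha_g(z),w].
\end{align}
The element $\alpha_g(z)$ is finite-support with label sets $g_\alpha(I_\alpha(z))$. Whenever these are disjoint from $I_\alpha(w)$ for every $\alpha$, the commutation relations \zcref{eq:InflationPOVMRelations} give $[\alpha_g(z),w]=0$, as noted before \zcref{lem:InflationFactorizingState}. Every other term is bounded by $2\|\alpha_g(z)\|\|w\|=2\|z\|\|w\|$, because $\alpha_g$ is a $*$-automorphism and hence isometric. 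The fraction of such bad $g$ is controlled by \zcref{eq:InflationCollisionProbability}, which gives the stated bound.

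For \zcref{eq:InflationAverageVectorBound}, set $y\coloneqq z-\omega(z)\id$, so that $\omega(y)=0$ and $\omega(\alpha_g(y))=0$ by invariance. Expand the squared norm as
\begin{align}
\|\pi(\mathsf{Av}_m(y))\ket{\Omega}\|^2=\frac1{|\Gamma_m|^2}\sum_{g,h}\omega\bigl(\alpha_g(y)^*\alpha_h(y)\bigr).
\end{align}
If the label sets $g_\alpha(I_\alpha(z))$ and $h_\alpha(I_\alpha(z))$ are disjoint for all $\alpha$, the factorization \zcref{eq:InflationDisjointFactorization} gives $\omega(\alpha_g(y)^*)\,\omega(\alpha_h(y))=0$. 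Otherwise the term is at most $\|y\|^2$ in modulus.

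The main point is counting the colliding pairs. Using invariance, replace $\omega(\alpha_g(y)^*\alpha_h(y))$ by $\omega(y^*\alpha_{g^{-1}h}(y))$. For fixed $g$, the element $k=g^{-1}h$ is uniform, and a collision of the $(g,h)$ label sets is the event $k_\alpha(I_\alpha(z))\cap I_\alpha(z)\neq\emptyset$. Applying \zcref{eq:InflationCollisionProbability} with $J_\alpha=I_\alpha(z)$ bounds the fraction of colliding pairs by $\frac1m\sum_\alpha|I_\alpha(z)|^2$, which gives \zcref{eq:InflationAverageVectorBound}.

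Two technical points need care. First, the disjoint-label factorization requires generating families with disjoint labels, and $\alpha_g(y)^*$ is generated by the permuted generators of $z$, so this holds. Second, the labels of $z$ must lie in $[m]$ so that $\Gamma_m$ acts on them; this is assumed.
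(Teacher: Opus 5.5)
Your proposal is correct and follows the paper's proof essentially step by step. The steps are a union bound for the collision probability; splitting the group average into commuting and colliding terms for the commutator estimate; and, for the vector estimate, expanding the squared norm, reducing it via $\Gamma_m$-invariance to $\omega(y^*\alpha_{k}(y))$ with $k$ uniform, and applying the collision bound with $J_\alpha=I_\alpha(z)$.
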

\begin{proof}
Since $g_\alpha$ is chosen uniformly at random, the image of each fixed $i\in I_\alpha$ is uniformly distributed and $\Pr(g_\alpha(i)=j)=1/m$ for every $j\in J_\alpha$.
Thus, by the union bound, we have
\begin{align}
    \Pr\!\left(\exists\alpha:\ g_\alpha(I_\alpha)\cap J_\alpha\neq\emptyset\right)
    \leq\sum_\alpha\sum_{i\in I_\alpha}\sum_{j\in J_\alpha}\Pr(g_\alpha(i)=j)
    =\frac1m\sum_\alpha|I_\alpha|\,|J_\alpha|.
\end{align}

The elements $\alpha_g(z)$ and $w$ commute when their source index sets are disjoint.
In the case that the index sets are not disjoint, we have $\|[\alpha_g(z),w]\| \leq\|\alpha_g(z)w\|+\|w\alpha_g(z)\|\leq2\|z\|\|w\|.$
Thus, \zcref{eq:InflationAverageCommutator} follows, since the number of permutations $g\in\Gamma_m$ for which $\alpha_g(z)$ and $w$ have intersecting source-label sets, divided by $|\Gamma_m|$, is precisely the probability bounded in \zcref{eq:InflationCollisionProbability}.

For the last estimate, write $z_0\coloneqq z-\omega(z) \id$ so that we have $\omega(z_0)=0$.
By $\Gamma_m$-invariance, we compute
\begin{equation}\label{eq:InflationAverageSecondMoment}
    \begin{aligned}
        \|\pi(\mathsf{Av}_m(z_0))\ket{\Omega}\|^2 &= \omega(\mathsf{Av}_m(z_0)^* \mathsf{Av}_m(z_0)) \\
        &=\frac1{|\Gamma_m|^2}\sum_{g,h\in\Gamma_m}\omega\!\left(\alpha_g(z_0)^*\alpha_h(z_0)\right)\\
        &=\frac1{|\Gamma_m|^2}\sum_{g,h\in\Gamma_m}\omega\!\left(z_0^*\alpha_{g^{-1}h}(z_0)\right) =\frac1{|\Gamma_m|}\sum_{h\in\Gamma_m}\omega\!\left(z_0^*\alpha_h(z_0)\right).
    \end{aligned}
\end{equation}
Note that the summand $\omega\!\left(z_0^*\alpha_h(z_0)\right) = \omega\!\left(z_0^*\right) \omega\!\left(\alpha_h(z_0)\right) = 0$ when $z_0$ and $\alpha_h(z_0)$ have disjoint source-label sets.
Otherwise, $| \omega\!\left(z_0^*\alpha_h(z_0)\right) | \leq \| z_0 \|^2$, and we are done by \zcref{eq:InflationCollisionProbability} with $I_\alpha=J_\alpha=I_\alpha(z)$.
\end{proof}

\begin{lemma}[Permutation averages converge to scalars in the GNS representation]\label{lem:InflationScalarAverages}
Let $\omega$ be a $\Gamma$-invariant state satisfying \zcref{eq:InflationDisjointFactorization}, with GNS representation $(\cH,\pi,\ket{\Omega})$.
Let $U_g$ be the unitaries associated with $\alpha_g$ such that $U_g\pi(z)\ket{\Omega}\coloneqq\pi(\alpha_g(z))\ket{\Omega}$.
Then the vectors fixed by every $U_g$ are exactly the scalar multiples of $\ket{\Omega}$.
Moreover, for every $z\in\cE_\infty(\Tree)$, we have strong operator convergence
\begin{align}\label{eq:InflationScalarStrongLimit}
    \pi(\mathsf{Av}_m(z))\to\omega(z)I_{\cH}\qquad\text{as }m\to\infty.
\end{align}

Define the von Neumann algebra and associated state extension by
\begin{align}\label{eq:InflationAmbientVonNeumann}
    \cM_\infty\coloneqq\pi(\cE_\infty(\Tree))''\subset\Bof{\cH},\qquad \omega(Q)\coloneqq\sandwich{\Omega}{Q}{\Omega}\quad(Q\in\cM_\infty).
\end{align}
Then for every $Q\in\cM_\infty$, we also have
\begin{align}\label{eq:InflationVonNeumannVectorAverage}
    \left(\frac1{|\Gamma_m|}\sum_{g\in\Gamma_m}U_gQU_g^*\right)\ket{\Omega}\longrightarrow\omega(Q)\ket{\Omega}.
\end{align}
\end{lemma}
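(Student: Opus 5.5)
The plan is to prove the three claims in order: the strong convergence \eqref{eq:InflationScalarStrongLimit}, then the identification of the fixed vectors, then the von Neumann version \eqref{eq:InflationVonNeumannVectorAverage}. The unitaries $U_g$ are well defined and unitary because $\omega\circ\alpha_g=\omega$, and they satisfy $U_g\pi(z)U_g^*=\pi(\alpha_g(z))$. For strong convergence I first take a finite-support $z$ and a vector of the form $\pi(w)\ket{\Omega}$ with $w$ finite-support. The estimate I aim for is
\begin{align*}
\pi(\mathsf{Av}_m(z))\pi(w)\ket{\Omega}-\omega(z)\pi(w)\ket{\Omega}
=\pi([\mathsf{Av}_m(z),w])\ket{\Omega}+\pi(w)\,\pi(\mathsf{Av}_m(z-\omega(z)\id))\ket{\Omega}.
\end{align*}
Here I use $\mathsf{Av}_m(\id)=\id$. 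The first term is $O(1/m)$ by \eqref{eq:InflationAverageCommutator}. The second term is bounded by $\|w\|\,O(m^{-1/2})$ using \eqref{eq:InflationAverageVectorBound}. Both bounds require $m$ large enough that $[m]$ contains the labels of $z$ and $w$, and $\Gamma_m$-invariance follows from $\Gamma$-invariance.

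Next I extend to all $z$ and all vectors by density. Since $\|\pi(\mathsf{Av}_m(z))\|\leq\|z\|$ uniformly in $m$, norm-approximating $z$ by finite-support elements costs at most $2\|z-z'\|$ on unit vectors. Finite-support elements are norm dense in $\cE_\infty(\Tree)$, so their images applied to $\ket{\Omega}$ are dense in $\cH$, and a standard $\varepsilon/3$ argument finishes \eqref{eq:InflationScalarStrongLimit}.

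For the fixed vectors, let $\xi$ be fixed by all $U_g$ and write $P_m\coloneqq|\Gamma_m|^{-1}\sum_{g\in\Gamma_m}U_g$. Then $P_m\xi=\xi$. Also $P_m\pi(z)\ket{\Omega}=\pi(\mathsf{Av}_m(z))\ket{\Omega}\to\omega(z)\ket{\Omega}$, so $P_m\eta\to\braket{\Omega}{\eta}\ket{\Omega}$ on a dense set. Since $\|P_m\|\leq1$, this convergence holds for all $\eta\in\cH$. Applying it to $\xi$ gives $\xi=\braket{\Omega}{\xi}\ket{\Omega}$, and the converse inclusion is clear. The statement says $P_m\to\ketbra{\Omega}{\Omega}$ strongly, which is the key reformulation.

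For \eqref{eq:InflationVonNeumannVectorAverage}, note $U_g^*\ket{\Omega}=\ket{\Omega}$. Hence $\bigl(|\Gamma_m|^{-1}\sum_g U_gQU_g^*\bigr)\ket{\Omega}=P_m(Q\ket{\Omega})$, which converges to $\braket{\Omega}{Q\Omega}\ket{\Omega}=\omega(Q)\ket{\Omega}$ by the strong convergence just obtained. This uses no normality or approximation of $Q$. The only delicate point overall is the combination of the commutator and variance estimates in the first step; everything else follows from contractivity and density.
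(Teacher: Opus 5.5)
Your proposal is correct and is essentially the paper's own argument. Both rest on the same three ingredients:
\begin{itemize}
\item splitting $(\pi(\mathsf{Av}_m(z))-\omega(z)I)\pi(w)\ket{\Omega}$ into a commutator term and $\pi(w)\pi(\mathsf{Av}_m(z-\omega(z)\id))\ket{\Omega}$, controlled by the collision and variance estimates;
\item extending to all $z$ and all vectors by density and uniform boundedness;
\item the strong limit $P_m\to\ketbra{\Omega}{\Omega}$, which gives the fixed-vector characterization and, via $U_g^*\ket{\Omega}=\ket{\Omega}$, the average over $\cM_\infty$.
\end{itemize}
The only difference is the order, which is immaterial: you prove the strong operator limit first, while the paper first proves convergence on $\ket{\Omega}$, then treats $P_m$ and $\cM_\infty$, and proves the strong limit last.
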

\begin{proof}
By \zcref{lem:InflationCollisionBounds}, we have
\begin{align}\label{eq:InflationAverageOnGNSVector}
    \pi(\mathsf{Av}_m(z))\ket{\Omega}\to\omega(z)\ket{\Omega}
\end{align}
for every finite-support element $z$.
Since these elements are norm dense in $\cE_\infty(\Tree)$ and $\|\mathsf{Av}_m\|\leq1$, this convergence extends by continuity to every $z\in\cE_\infty(\Tree)$.

Consider the finite group averages of $U_g$
\begin{align}\label{eq:InflationFiniteGroupProjection}
    P_m\coloneqq\frac1{|\Gamma_m|}\sum_{g\in\Gamma_m}U_g.
\end{align}
Each $P_m$ is the orthogonal projection onto the vectors fixed by every $U_g$ with $g\in\Gamma_m$.
We claim that $P_m$ converges strongly to $\ketbra{\Omega}{\Omega}$; i.e., $\ketbra{\Omega}{\Omega}$ is the projection onto vectors fixed by every $U_g$ with $g \in \Gamma$.
Indeed, \zcref{eq:InflationAverageOnGNSVector} gives
\begin{align}\label{eq:InflationInvariantProjection}
    P_m\pi(z)\ket{\Omega}
    =\pi(\mathsf{Av}_m(z))\ket{\Omega}
    \to\omega(z)\ket{\Omega}
    =\ketbra{\Omega}{\Omega}\pi(z)\ket{\Omega}.
\end{align}
Since these vectors are dense in $\cH$ and the $P_m$ are uniformly bounded, the convergence extends to all of $\cH$.
It follows that only scalar multiples of $\ket{\Omega}$ are fixed by every $U_g$.

For $Q\in\cM_\infty$, the identity $U_g^*\ket{\Omega}=\ket{\Omega}$ now gives
\begin{align}\label{eq:InflationVonNeumannAverageCalculation}
    \left(\frac1{|\Gamma_m|}\sum_{g\in\Gamma_m}U_gQU_g^*\right)\ket{\Omega}
    =P_mQ\ket{\Omega}
    \to\ketbra{\Omega}{\Omega}Q\ket{\Omega}
    =\omega(Q)\ket{\Omega},
\end{align}
proving \zcref{eq:InflationVonNeumannVectorAverage}.

It remains to show \zcref{eq:InflationScalarStrongLimit}.
First let $z,w$ be finite-support elements.
Thanks to the estimates given by \zcref{lem:InflationCollisionBounds}, we have
\begin{equation}\label{eq:InflationAverageDenseVectors}
    \begin{aligned}
        \|(\pi(\mathsf{Av}_m(z))-\omega(z)I)\pi(w)\ket{\Omega}\|
        \leq\|[\mathsf{Av}_m(z),w]\| +\|w\|\,\|(\pi(\mathsf{Av}_m(z))-\omega(z)I)\ket{\Omega}\| \to 0,
    \end{aligned}
\end{equation}
as $m \to \infty$.
By cyclicity of $\pi(w)\ket{\Omega}$, the above convergence extends to every vector in $\cH$, which we can extend further to every $z\in\cE_\infty(\Tree)$ by norm-density and continuity.
\end{proof}

The consideration of the von Neumann algebra $\cM_\infty$ above is due to the fact that a $C^*$-algebra need not contain the weak limits of its bounded sequences, which we need for our construction.
We finish the subsection by recording a general compactness statement for later convenience.

\begin{lemma}[Limits of completely positive contractions between separable spaces]\label{lem:InflationCPLimits}
Let $\cA$ be a separable $C^*$-algebra, let $\cH$ be a separable Hilbert space, and let $\cM\subset\Bof{\cH}$ be a von Neumann algebra.
For each $m\geq1$ and each index $\ell$ in a countable set, suppose that $F_m^\ell:\cA\to\cM$ is a completely positive contraction.
Then there exists a single increasing subsequence $m_k$ and completely positive contractions $F^\ell:\cA\to\cM$ such that, simultaneously for every $\ell$,
\begin{align}\label{eq:InflationPointUltraweakLimit}
    \sandwich{\xi}{F_{m_k}^\ell(q)}{\eta}\longrightarrow\sandwich{\xi}{F^\ell(q)}{\eta},\qquad q\in\cA,\quad \ket{\xi}, \ket{\eta}\in\cH.
\end{align}
\end{lemma}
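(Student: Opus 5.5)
The plan is a standard compactness argument: a diagonal extraction over countable dense sets, followed by a density argument to upgrade pointwise convergence on a dense set to all of $\cA$ and all vectors.

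\emph{Step 1: countable data.} Fix a countable norm-dense subset $\{q_n\}_n$ of $\cA$, which exists by separability. Take it to be a $\mathbb{Q}+i\mathbb{Q}$-linear subspace closed under adjoints. Fix also a countable dense subset $\{\xi_j\}_j$ of $\cH$.

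\emph{Step 2: extraction.} For each triple $(\ell,n,j,k)$ the scalars $\sandwich{\xi_j}{F_m^\ell(q_n)}{\xi_k}$ are bounded by $\|q_n\|\|\xi_j\|\|\xi_k\|$, since each $F_m^\ell$ is a contraction. The index set is countable, so a Cantor diagonal argument gives a single increasing subsequence $m_k$ along which all these scalars converge.

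\emph{Step 3: convergence for all $q$ and all vectors.} Fix $\ell$ and $q_n$. The operators $F_{m_k}^\ell(q_n)$ are uniformly bounded, so convergence of the matrix coefficients on a dense set of vectors upgrades to weak-operator convergence to some bounded operator $F^\ell(q_n)$, by a standard $\epsilon/3$ argument. The same kind of $\epsilon/3$ estimate then extends convergence from the dense set $\{q_n\}$ to all of $\cA$, using again that every $F_m^\ell$ is contractive. This defines $F^\ell(q)$ for every $q\in\cA$ as a weak-operator limit, and gives \zcref{eq:InflationPointUltraweakLimit}.

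\emph{Step 4: properties of the limit.} Linearity of $F^\ell$ follows because limits of linear maps are linear. The bound $\|F^\ell(q)\|\leq\|q\|$ follows from weak-operator lower semicontinuity of the norm. For complete positivity, take $[q_{ab}]\geq0$ in $M_n(\cA)$. Then $\sum_{a,b}\sandwich{\eta_a}{F_{m_k}^\ell(q_{ab})}{\eta_b}\geq0$ for all vectors $\eta_a$, and this passes to the limit. For membership in $\cM$, each $F_{m_k}^\ell(q)$ lies in $\cM$, and a von Neumann algebra is weak-operator closed, so $F^\ell(q)\in\cM$.

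I do not expect any genuine obstacle. The only points needing care are that a single subsequence must work for all countably many $\ell$ simultaneously, which the diagonal argument provides, and that membership in $\cM$ relies on weak closedness, which is exactly why the target is taken to be a von Neumann algebra rather than a $C^*$-algebra.
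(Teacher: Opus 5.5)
Your proposal is correct and follows essentially the same route as the paper: a diagonal extraction over a countable dense subset of $\cA$ and countably many vectors (the paper uses an orthonormal basis of $\cH$ rather than a dense subset, which makes no difference), extension to all of $\cA$ and $\cH$ by density using uniform contractivity, and then passing linearity, contractivity, complete positivity, and membership in $\cM$ to the weak-operator limit. One small notational point: your quadruple index $(\ell,n,j,k)$ reuses the letter $k$, which you also use for the subsequence $m_k$.
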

\begin{proof}
By separability, choose a countable norm-dense set $\{q_t\}$ in $\cA$ and an orthonormal basis $\{e_i\}$ of $\cH$.
For every $\ell,t,i,j$, the sequence $\sandwich{e_i}{F_m^\ell(q_t)}{e_j}$ is bounded.
The standard diagonal extraction over these countably many sequences gives one subsequence $m_k$ for which they all converge.

Since $F_m^\ell$ are uniformly contractive, we can extend these limits by density to every $q\in\cA$ and $\ket{\xi},\ket{\eta}\in\cH$, defining linear contractions $F^\ell:\cA\to\Bof{\cH}$ satisfying \zcref{eq:InflationPointUltraweakLimit}.
Their values belong to $\cM$, since $\cM$ is closed in the weak operator topology.
Finally, for every positive operator-valued matrix $[q_{ij}]$ over $\cA$, the positive operator-valued matrices $[F_{m_k}^\ell(q_{ij})]$ converge in the weak operator topology to $[F^\ell(q_{ij})]$, which is therefore positive.
Thus each $F^\ell$ is completely positive.
\end{proof}

\subsection{From inflation tests to the source-transfer model}\label{sec:InflationSourceTransferEquivalence}
We now assume that $\Tree$ is a source-party tree and fix a root party $r$.
To obtain a source-transfer model, an appropriate construction for internal transfer maps in \zcref{eq:TreeTransferMapType} is key.
Recall that when reconstructing these transfer maps from the mixed model, we use \zcref{eq:TreeMixedToTransferMap}, which takes a partial sandwich of the expression $\bigl(\Meas{v}{a_v}{x_v}\bigr)^{1/2}(q\otimes I)\bigl(\Meas{v}{a_v}{x_v}\bigr)^{1/2}$.
To reconstruct transfer maps from inflation, we average/symmetrize over the source labels of the input instead of partial sandwiching.

To build intuition of this averaging construction, consider the line-of-four network $1-\alpha-2-\beta-3-\gamma-4$ rooted at $3$ as in \zcref{fig:SourceTransferLineOfFour}.
Let $q$ be an element generated by $E^1_{a_1\mid x_1}(1)$, and let $q^i$ be its source-$\alpha$ copy with label $i$ replacing $1$.
The transfer map at party $2$ is defined via a pointwise ultraweak limit along a subsequence of $m$:
\begin{align}\label{eq:InflationPathTransferAverage}
    \Phi^{2,j;m}_{a_2\mid x_2}(q)\coloneqq \frac1m\sum_{i=1}^m q^i E^2_{a_2\mid x_2}(i,j) = \frac1m\sum_{i=1}^m (E^2_{a_2\mid x_2}(i,j))^{1/2} q^i (E^2_{a_2\mid x_2}(i,j))^{1/2}.
\end{align}
It is completely positive by construction.
By POVM completeness, summing over $a_2$ leaves $1/m\sum_{i=1}^m q^i$, which we will show converges strongly to $\omega_{\alpha}(q) I$ thanks to \zcref{lem:InflationScalarAverages}, giving the desired source-replacer condition.

\begin{theorem}[Inflation, source-transfer, and mixed models on trees]\label{thm:InflationSourceTransferTreeEquivalence}
Let $\Tree$ be a finite source-party tree, with finite setting and outcome sets, and let $r\in\PartyVertices{\Tree}$ be a root party.
For a correlation $p(\mathbf a|\mathbf x)$, the following are equivalent:
\begin{enumerate}[label=(\roman*)]
    \item $p$ passes the level-$m$ POVM inflation test in \zcref{def:InflationTestPOVM} for every $m\geq1$;
    \item $p$ admits a source-transfer realization on $(\Tree,r)$ as in \zcref{def:SourceTransferModelTree};
    \item $p$ admits a mixed quantum realization on $\Tree$ as in \zcref{def:MixedQuantumModelSourcePartyGraph}.
\end{enumerate}
\end{theorem}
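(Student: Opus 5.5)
The equivalence (ii)$\Leftrightarrow$(iii) is \zcref{thm:SourceTransferTreeEquivalence}. For (iii)$\Rightarrow$(i) I would use the device-duplication construction of \zcref{sec:IntroductionInflation}. Given a mixed realization, take $m$ copies of each source Hilbert space with the product vector $\bigotimes_{\alpha}\ket{\Omega_\alpha}^{\otimes m}$. Let $E^v_{a_v\mid x_v}(\mathbf i)$ be the copy of $\Meas{v}{a_v}{x_v}$ acting on the copies labelled by $\mathbf i$. The relations \zcref{eq:InflationPOVMRelations} then hold: different parties commute by the mixed commutation relations, and same-party copies on disjoint labels act on different tensor factors. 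Invariance under $\Gamma_m$ comes from permuting identical tensor factors, and the diagonal factorization holds because the $D^{(t)}_e$ act on disjoint tensor factors of a product vector. The universal property of $\cE_m(\Tree)$ then gives the required state.

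The main work is (i)$\Rightarrow$(ii). First, \zcref{lem:InflationFactorizingState} gives a $\Gamma$-invariant, disjoint-label-factorizing state $\omega$ on $\cE_\infty(\Tree)$ reproducing $p$. Let $(\cH,\pi,\ket{\Omega})$ be its GNS representation and $\cM_\infty=\pi(\cE_\infty(\Tree))''$; $\cH$ is separable. I would build the realization recursively from the leaves to $r$. For each non-root party $v$ with parent source $\beta=\Parent{v}$, fix the labels of all sources in $\Tree_v$ other than the child sources, and fix the parent label $i_\beta=1$. Then $\cA_\beta^v$ is defined as the $C^*$-algebra (inside $\Bof{\cH}$, or $\cM_\infty$) generated by the transfer outputs of $v$. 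For a leaf this is just $\pi(E^v_{a\mid x}(1))$. For an internal $v$, $\JoinAlg{\alpha}$ is the max tensor product of the child algebras, and $\omega_\alpha$ is the restriction of $\omega$ via the canonical product representation. These child algebras commute because they live on disjoint subtrees with only the shared source $\alpha$ label $1$.

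Next, for internal $v$ and $q\in\InAlg{v}$, let $q^{(\mathbf i)}$ be the relabelled copy with child-source labels $\mathbf i\in[m]^{\Children{v}}$, realized through the automorphisms $\alpha_g$ or the unitaries $U_g$. The whole child subtrees must be shifted to fresh labels so that copies for different $\mathbf i$ have disjoint labels outside the child sources. Define
\[
\Phi^{v;m}_{a_v\mid x_v}(q)\coloneqq\frac1{m^{k}}\sum_{\mathbf i}\bigl(E^v_{a_v\mid x_v}(\mathbf i,1)\bigr)^{1/2}q^{(\mathbf i)}\bigl(E^v_{a_v\mid x_v}(\mathbf i,1)\bigr)^{1/2},
\]
which is completely positive and contractive. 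Then apply \zcref{lem:InflationCPLimits} over the countably many $(v,a_v,x_v)$ to get weak-operator limits $\instr{v}{a_v}{x_v}$. Summing over $a_v$ gives an average of copies of $q$. Because the copies are not the full $\Gamma_m$ average, I would compare this average with $\mathsf{Av}_m$ using \zcref{lem:InflationCollisionBounds}, and then use \zcref{lem:InflationScalarAverages} to obtain strong convergence to $\InState{v}(q)I$; this is the source-replacer condition.

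The main obstacle is that the limit maps must land in $\cA_\beta^v$, or in an algebra commuting with the other children of $\beta$, and the correlation must be recovered through nested limits. My plan is to enlarge $\cA_\beta^v$ to the von Neumann algebra generated by all weak limits, i.e.\ the weak closure inside $\cM_\infty$ of the relevant finite-support elements. Commutation with sibling subtrees would then pass to weak limits, since each approximant commutes with elements having disjoint labels. Recovering $p$ needs care: I would prove inductively that $\omega$ of products of transfer messages equals $\omega$ of the diagonal product $D^{(1)}_e$. The inductive step uses that $E^v(\mathbf i,1)$ commutes with the other factors once labels are disjoint, plus exchangeability, so that $\omega\bigl(\Phi^{v;m}(Q)\,Y\bigr)=\omega\bigl(E^v(\mathbf 1,1)\,Q^{(\mathbf 1)}\,Y\bigr)+O(1/m)$ by \zcref{eq:InflationCollisionProbability}. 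Weak convergence suffices here because at each step only one limit is inserted against a fixed vector. The root functionals are defined the same way with $\omega$ in place of an output algebra. This gives $p(\mathbf a|\mathbf x)=\omega(D^{(1)}_e)$, which completes (i)$\Rightarrow$(ii).
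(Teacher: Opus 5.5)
Your treatment of (ii)$\Leftrightarrow$(iii) and (iii)$\Rightarrow$(i) matches the paper. The direction (i)$\Rightarrow$(ii) has a genuine gap, in how you form the relabelled inputs $q^{(\mathbf i)}$ and argue that they commute.

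The obstacle you flag, that outputs must land in $\cA_\beta^v$, is automatic if $\cA_\beta^v$ is generated by the outputs. The real obstacle is the copies. Once an internal party $u$ lies below an internal party $v$, the inputs at $v$ are built from $u$'s outputs. These are weak-operator limits of averages over all labels in $[m]$, $m\to\infty$, of $u$'s child sources, so they involve infinitely many deeper labels. A finitely supported $g\in\Gamma$ cannot move such an element to ``fresh'' labels, and copies for different $\mathbf i$ cannot be given disjoint deeper labels.

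Your fallback, enlarging $\cA_\beta^v$ to the weak closure of the subtree's finite-support elements with parent label $1$, fails outright. On that algebra only the parent label can be transposed. On $1-\alpha-2-\beta-3-\gamma-4$ rooted at $4$, the copies $E^2_{a_2\mid x_2}(1,i)$ of $s=E^2_{a_2\mid x_2}(1,1)$ share their $\alpha$-label, so they need not commute. Their average need not tend to $\omega(s)I$: if $\Meas{2}{a_2}{x_2}=P\otimes I_{\cH_\beta}$ in an inflated mixed realization, every copy is the same projection $P$ on the first $\alpha$-copy. So the source-replacer at party $3$ is lost. Such von Neumann algebras are also not norm-separable, as \zcref{lem:InflationCPLimits} requires.

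Separately, ``each approximant commutes with elements having disjoint labels'' is false, since $\Phi^{v;m}_{a_v\mid x_v}(q)$ involves every child label in $[m]$; it only commutes with a fixed finite-support element up to $O(1/m)$. Likewise, \zcref{lem:InflationCollisionBounds} applies only to finite-support elements, so it cannot compare your average of limit elements with $\mathsf{Av}_m$.

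The missing idea is the inductive invariant of \zcref{lem:InflationRecursiveConstruction}: averaging over the child labels averages out all deeper labels, so no fresh relabelling is ever needed. Write $d_v=|\Children{v}|$. One shows by induction from the leaves that the limits $\Phi^{v,j}_{a_v\mid x_v}(q)$ of $\frac{1}{m^{d_v}}\sum_{\mathbf i}\iota_{v,\mathbf i}(q)E^v_{a_v\mid x_v}(j,\mathbf i)$ satisfy two properties.

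(a) They are fixed by every $g\in\Gamma$ with $g_{\Parent{v}}(j)=j$, because the finite averages are exactly invariant under deeper permutations and under child permutations preserving $[m]$.

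(b) They commute exactly with every inflated measurement except $E^v_{a_v\mid x_v}(j,\cdot)$, including deeper ones carrying the same labels. This follows from $\|[\Phi^{v,j;m}_{a_v\mid x_v}(q),Z]\|\leq 2d_v\|q\|\|Z\|/m$ and weak continuity of $X\mapsto[X,Z]$.

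With these, the copy $c^j$ is just the transposition of parent-source labels $1$ and $j$. By (b), together with the outputs lying in the von Neumann algebra of the subtree's measurements, copies at distinct labels and at sibling parties commute. Hence $\rho_\alpha^j$ and $\iota_{v,\mathbf i}$ are $*$-homomorphisms on maximal tensor products of separable $C^*$-algebras. By (a), $\frac1m\sum_j\rho_\alpha^j(q)$ equals the group average $\frac{1}{|\Gamma_m|}\sum_g U_g\rho_\alpha^1(q)U_g^*$, so \zcref{lem:InflationScalarAverages} gives convergence on $\ket{\Omega}$. Property (b) then upgrades this to strong convergence via commutators with finite-support words (\zcref{lem:InflationConstructedAverages}), which yields the source-replacer.

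Finally, (a) makes the remaining factor $R$ exactly invariant under permutations of the child sources of $v$. So the replacement $\omega(R\,\iota_{v,\mathbf 1}(q)E^v_{a_v\mid x_v}(j,\mathbf 1))=\omega(R\,\Phi^{v,j;m}_{a_v\mid x_v}(q))$ of \zcref{lem:InflationExpectationReplacement} holds exactly at every $m$, with no collision error to control.
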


\begin{proof}
The equivalence of (ii) and (iii) is already established by \zcref{thm:SourceTransferTreeEquivalence}.
We shall finish the proof by first showing that (iii) implies (i) and then that (i) implies (ii).

\medskip
\noindent\emph{Mixed model $\Rightarrow$ inflation.}
Given a mixed realization, take $m$ independent copies of every source:
\begin{align}\label{eq:InflationMixedCopies}
    \cH^{(m)}\coloneqq\bigotimes_{\alpha\in\SourceVertices{\Tree}}\bigotimes_{i=1}^m\cH_\alpha^{(i)},\qquad \ket{\Omega^{(m)}}\coloneqq\bigotimes_{\alpha\in\SourceVertices{\Tree}}\bigotimes_{i=1}^m\ket{\Omega_\alpha^{(i)}}.
\end{align}
Let $E^v_{a_v\mid x_v}(\mathbf i)$ be the copy of $\Meas{v}{a_v}{x_v}$ acting on the source algebra copy indexed by $(\alpha,i_\alpha)$ for source $\alpha$ neighboring $v$.
It is straightforward to check that $E^v_{a_v\mid x_v}(\mathbf i)$ satisfies the required algebraic properties and \zcref{eq:InflationPOVMRelations,eq:InflationStateConstraints} are fulfilled, proving (iii)$\Rightarrow$(i).

\medskip
\noindent\emph{Inflation $\Rightarrow$ source-transfer: construction.}
Choose $\omega$ on $\cE_{\infty}(\Tree)$ to be the factorizing $\Gamma$-invariant state by \zcref{lem:InflationFactorizingState}.
Consider its GNS representation $(\cH, \pi, \ket{\Omega})$ and the von Neumann $\cM_\infty = \pi(\cE_{\infty}(\Tree))''$ as in \zcref{eq:InflationAmbientVonNeumann}.
For convenience, in the rest of this proof, we use $E^v_{a_v\mid x_v}(\mathbf i)$ to refer to the represented operator $\pi(E^v_{a_v\mid x_v}(\mathbf i)) \in \Bof{\cH}$.
Let source permutation act on $\cM_\infty$ by $\alpha_g(Q)\coloneqq U_gQU_g^*$.
We first give all the formulas for the recursive construction.
Their well-definedness, including the existence of the completely positive limits, is proved later in \zcref{lem:InflationRecursiveConstruction}.

At a non-root leaf party $v$, define the generated $C^*$-algebra and the leaf measurements by
\begin{align}\label{eq:InflationLeafAlgebra}
    \cA_{\Parent{v}}^{v,j}\coloneqq C^*\!\left(E^v_{a_v\mid x_v}(j):a_v\in\mathsf{A}_v,\ x_v\in\mathsf{X}_v\right), \qquad \meas{v}{a_v}{x_v}\coloneqq E^v_{a_v\mid x_v}(1).
\end{align}
Here $j$ specifies the copy of source $\Parent{v}$.
By $\Gamma$-invariance, we may choose the first copy when defining the source algebras.
At a source $\alpha$, once the algebras $\cA_\alpha^{v,j}$ have been constructed for every child party $v\in\Children{\alpha}$, define
\begin{align}\label{eq:InflationConstructedSourceAlgebra}
    \cA_\alpha^v\coloneqq\cA_\alpha^{v,1}, \qquad \JoinAlg{\alpha}\coloneqq\bigotimes_{v\in\Children{\alpha}}^{\max}\cA_\alpha^v.
\end{align}

For $c\in\cA_\alpha^v$, let $c^j$ be its image under the transposition of source-$\alpha$ labels $1$ and $j$, with $c^1=c$.
The map $\rho_\alpha^j$ evaluating the joint source algebra of $\alpha$ at copy $j$ is defined to be
\begin{align}\label{eq:InflationSourceCopyEvaluation}
    \rho_\alpha^j:\JoinAlg{\alpha}\to\cM_\infty,\qquad \rho_\alpha^j\!\left(\bigotimes_{v\in\Children{\alpha}}c_v\right)\coloneqq\prod_{v\in\Children{\alpha}}c_v^j.
\end{align}
Note that $\rho_\alpha^j$ is a unital $*$-homomorphism that is not necessarily injective.
The $\alpha$-source state $\omega_\alpha$ is then the restriction to $\JoinAlg{\alpha}$ via this map at the first copy:
\begin{align}\label{eq:InflationConstructedSourceState}
    \omega_\alpha\coloneqq\omega\circ\rho_\alpha^1.
\end{align}
By our convention, for a leaf source, the empty tensor product for joint source algebra is $\mathbb{C}$, and $\rho_\alpha^j(c)=cI$.

At a party $v$ whose joint source algebras in $\Children{v}$ are already constructed, define the input algebra and state by
\begin{align}\label{eq:InflationIncomingData}
    \InAlg{v}\coloneqq\bigotimes_{\alpha\in\Children{v}}^{\max}\JoinAlg{\alpha},\qquad \InState{v}\coloneqq\bigotimes_{\alpha\in\Children{v}}\omega_\alpha.
\end{align}
For a tuple of child-source labels $\mathbf i=(i_\alpha)_{\alpha\in\Children{v}}$, define the unital $*$-homomorphism $\iota_{v,\mathbf i}$, induced by the maps $\rho_\alpha^{i_\alpha}$, by
\begin{align}\label{eq:InflationIncomingEvaluation}
    \iota_{v,\mathbf i}:\InAlg{v}\to\cM_\infty,\qquad \iota_{v,\mathbf i}\!\left(\bigotimes_{\alpha\in\Children{v}}q_\alpha\right)\coloneqq\prod_{\alpha\in\Children{v}}\rho_\alpha^{i_\alpha}(q_\alpha).
\end{align}

For an internal party $v\neq r$, let $d_v \coloneqq |\Children{v}|$ and write $E^v_{a_v\mid x_v}(j,\mathbf i)$ with the parent-source label $j$ first.
Write $[m]^{\Children{v}}$ for the set of tuples $\mathbf{i}=(i_\alpha)_{\alpha\in\Children{v}}$ with $i_\alpha\in[m]$ for every child source $\alpha$.
We define, for $q \in \InAlg{v}$,
\begin{align}\label{eq:InflationInternalCPAverage}
    \Phi^{v,j;m}_{a_v\mid x_v}(q)\coloneqq\frac1{m^{d_v}}\sum_{\mathbf i\in[m]^{\Children{v}}}\iota_{v,\mathbf i}(q) E^v_{a_v\mid x_v}(j,\mathbf i).
\end{align}
As we shall show in \zcref{lem:InflationRecursiveConstruction}, there exists a subsequence $m_k^v$ at $v$ for which these maps converge pointwise ultraweakly, simultaneously for all $j,a_v,x_v$:
\begin{align}\label{eq:InflationInternalLimitMaps}
    \Phi^{v,j}_{a_v\mid x_v}(q)\coloneqq\lim_{k\to\infty}\Phi^{v,j;m_k^v}_{a_v\mid x_v}(q),\qquad q\in\InAlg{v}.
\end{align}
The output parent source algebras and transfer maps are thus defined with the first copy of $\Parent{v}$ as
\begin{equation}\label{eq:InflationConstructedTransferData}
    \begin{aligned}
        \cA_{\Parent{v}}^{v,j}&\coloneqq C^*\!\left(\Phi^{v,j}_{a_v\mid x_v}(q):a_v\in\mathsf A_v,\ x_v\in\mathsf X_v,\ q\in\InAlg{v}\right),\\
        \cA_{\Parent{v}}^v&\coloneqq\cA_{\Parent{v}}^{v,1},\qquad \instr{v}{a_v}{x_v}\coloneqq\Phi^{v,1}_{a_v\mid x_v}:\InAlg{v}\to\cA_{\Parent{v}}^v,
    \end{aligned}
\end{equation}
which are the needed data for the recursive construction towards $\Parent{v}$.
Since $\Tree$ has only finitely many internal parties, we choose the subsequences $(m_k^v)_k$ recursively, so that a single $(m_k)_k$ satisfies \eqref{eq:InflationInternalLimitMaps} for every $j,a_v,x_v$, and every internal party $v$ at once.

Finally, at the root $r$, let $\mathbf1 = (1_{\alpha})_{\alpha \in \Children{r}}$ and define
\begin{align}\label{eq:InflationConstructedRootFunctional}
    \rstate{r}{a_r}{x_r}(q)\coloneqq\omega\!\left(\iota_{r,\mathbf1}(q)E^r_{a_r\mid x_r}(\mathbf1)\right),\qquad q\in\InAlg{r}.
\end{align}
By \zcref{lem:InflationRecursiveConstruction}, the transfer maps are completely positive contractions, the source states are states, and the root functionals are positive.
It remains to verify the source-replacer conditions and the correlation formula.

\medskip
\noindent\emph{Source-replacer conditions.}
Summing the finite averages over $a_v$ removes the measurement at $v$ by POVM completeness.
Since $\sum_{a_v} E^v_{a_v\mid x_v} = I$, \zcref{lem:InflationConstructedAverages} therefore gives, for every $q\in\InAlg{v}$,
\begin{equation}\label{eq:InflationTransferNormalizationLimit}
    \begin{aligned}
        \sum_{a_v}\Phi^{v,j;m}_{a_v\mid x_v}(q)
        &=\frac1{m^{d_v}}\sum_{\mathbf i\in[m]^{\Children{v}}}\iota_{v,\mathbf i}(q) \to \InState{v}(q)I
    \end{aligned}
\end{equation}
strongly as $m\to\infty$.
Taking the above subsequence $m_k$ on both sides proves
\begin{align}\label{eq:InflationConstructedTransferReplacer}
    \sum_{a_v}\instr{v}{a_v}{x_v}(q)=\InState{v}(q)I.
\end{align}
At the root, the same POVM completeness and \zcref{lem:InflationConstructedAverages} imply
\begin{align}\label{eq:InflationConstructedRootReplacer}
    \sum_{a_r}\rstate{r}{a_r}{x_r}(q)=\omega(\iota_{r,\mathbf1}(q))=\InState{r}(q).
\end{align}
These are exactly the source-replacer conditions of \zcref{eq:TreeTransferReplacer,eq:TreeRootFunctionalReplacer}.

\medskip
\noindent\emph{Correlation recovery.}
Fix a complete event $(\mathbf a|\mathbf x)$.
With the above constructed operators and transfer maps, the recursive formulas for the transfer messages are
\begin{equation}\label{eq:InflationRecursiveMessages}
    \begin{aligned}
        \TransferMsg{v}&=\meas{v}{a_v}{x_v} &&\text{at a leaf party},\\
        \TransferMsg{\alpha}&=\bigotimes_{u\in\Children{\alpha}}\TransferMsg{u} &&\text{at a source},\\
        \TransferMsg{v}&=\instr{v}{a_v}{x_v}\!\left(\bigotimes_{\alpha\in\Children{v}}\TransferMsg{\alpha}\right) &&\text{at an internal non-root party}.
    \end{aligned}
\end{equation}

Consider $E^v_{a_v\mid x_v}(\mathbf{1})$ for the measurement copy at party $v$ with every neighboring source label equal to $1$.
(For a non-root party $v$, we also write $E^v_{a_v\mid x_v}(\mathbf{1})=E^v_{a_v\mid x_v}(1,\mathbf{1})$ by splitting off the parent-source label as in \zcref{eq:InflationInternalCPAverage} with $\mathbf{1}$ the all-ones tuple over $\Children{v}$.)
By \zcref{eq:InflationDiagonalEvent}, $p(\mathbf{a}|\mathbf{x})=\omega\!\left(\prod_{v\in\PartyVertices{\Tree}}E^v_{a_v\mid x_v}(\mathbf{1})\right)$.
Now, for each subtree $\Tree_v$, we will replace all measurements $E^u_{a_u\mid x_u}(\mathbf{1})$ for $u \in \PartyVertices{\Tree_v}$ by the transfer message $\TransferMsg{v}$, for $v$ starting from the leaves towards the root $r$.

Specifically, let $B$ be a set of non-root parties such that the subtrees $\Tree_w$, $w\in B$, have pairwise disjoint party sets.
Consider the set of parties outside these subtrees $U\coloneqq\PartyVertices{\Tree}\setminus\bigcup_{w\in B}\PartyVertices{\Tree_w}$.
We claim that
\begin{align}\label{eq:InflationCorrelationInduction}
    p(\mathbf{a} | \mathbf{x})=\omega\!\left(\left(\prod_{u\in U}E^u_{a_u\mid x_u}(\mathbf{1})\right)\left(\prod_{w\in B}\TransferMsg{w}\right)\right),
\end{align}
i.e., the measurements at parties in these $\Tree_w$ can be replaced by their corresponding transfer message $\TransferMsg{w}$ while preserving the correlations.

We prove the claim by induction on the number of replaced subtrees, or equivalently on the size of $B$.
In the base case, no subtree has been replaced, so $B = \emptyset$ and $U = \PartyVertices{\Tree}$.
The claim then follows directly from \zcref{eq:InflationDiagonalEvent}, after moving $v$ from $U$ to $B$.
For the induction step, let $v \in U$ be a leaf party of the remaining subtree.
The required replacement of $v$ follows immediately from \zcref{eq:InflationLeafAlgebra}.
Now let $v\in U$ be an internal non-root party, and denote
\begin{align}\label{eq:InflationChildPartySet}
    C_v\coloneqq\bigcup_{\alpha\in\Children{v}}\Children{\alpha}
\end{align}
for the parties immediately below its child sources.
Suppose that \zcref{eq:InflationCorrelationInduction} holds with $C_v\subset B$, that is, all measurements in $\Tree_w$ for $w \in C_v$ can be replaced by $\TransferMsg{w}$ while preserving the correlation.

Denote $q_v\coloneqq\bigotimes_{\alpha\in\Children{v}}\TransferMsg{\alpha}$.
By the definitions of the homomorphisms and transfer messages, we have
\begin{align}\label{eq:InflationIncomingMessageProduct}
    \iota_{v,\mathbf{1}}(q_v)= \prod_{\alpha \in \Children{v}} \rho_{\alpha}^1(\TransferMsg{\alpha}) = \prod_{\alpha\in\Children{v}}\prod_{w\in\Children{\alpha}}\TransferMsg{w}=\prod_{w\in C_v}\TransferMsg{w}.
\end{align}
Hence, as measurement operators in disjoint party sets commute, we have
\begin{equation}\label{eq:InflationCorrelationInnerSubstitution}
    \begin{aligned}
        \left(\prod_{u\in U}E^u_{a_u\mid x_u}(\mathbf{1})\right)\left(\prod_{w\in B}\TransferMsg{w}\right) &= \left(\prod_{u\in U \setminus \{v\}}E^u_{a_u\mid x_u}(\mathbf{1})\right) E^v_{a_v\mid x_v}(\mathbf{1}) \left(\prod_{w\in B \setminus C_v}\TransferMsg{w}\right) \left(\prod_{w\in C_v}\TransferMsg{w}\right) \\
        &= \left(\prod_{u\in U \setminus \{v\}}E^u_{a_u\mid x_u}(\mathbf{1})\right)  \left(\prod_{w\in B \setminus C_v}\TransferMsg{w}\right) \iota_{v,\mathbf{1}}(q_v) E^v_{a_v\mid x_v}(\mathbf{1}).
    \end{aligned}
\end{equation}

We wish to apply \zcref{lem:InflationExpectationReplacement} with $R \coloneqq \left(\prod_{u\in U \setminus \{v\}}E^u_{a_u\mid x_u}(\mathbf{1})\right) \left(\prod_{w\in B \setminus C_v}\TransferMsg{w}\right)$.
To this end, we need to check that $R$ is indeed invariant under permutations of every child source $\alpha \in \Children{v}$.
Indeed, any $\alpha \in \Children{v}$ is connected only to $v$ and parties in $C_v$, so the permutations do not affect any $E^u_{a_u\mid x_u}(\mathbf{1})$ with $u \in U \setminus \{v\}$.
For $w\in B\setminus C_v$, we have $\Parent{w}\neq\alpha$, so these permutations on $\alpha$ also fix $\TransferMsg{w}$ by \zcref{lem:InflationRecursiveConstruction}(ii).
Thus, \zcref{lem:InflationExpectationReplacement,eq:InflationRecursiveMessages,eq:InflationCorrelationInnerSubstitution} imply that
\begin{equation}\label{eq:InflationCorrelationReplacement}
    \begin{aligned}
        p(\mathbf{a}|\mathbf{x}) &= \omega\!\left(R \iota_{v,\mathbf{1}}(q_v)E^v_{a_v\mid x_v}(1,\mathbf{1})\right) \\
        &= \omega\!\left(R \Phi^{v,1}_{a_v\mid x_v}(q_v)\right) \\
        &= \omega\!\left(\left(\prod_{u\in U\setminus\{v\}}E^u_{a_u\mid x_u}(\mathbf{1})\right) \left(\prod_{w\in B\setminus C_v}\TransferMsg{w}\right) \TransferMsg{v}\right),
    \end{aligned}
\end{equation}
replacing the whole subtree $\Tree_v$ by its transfer message $\TransferMsg{v}$.
This proves the claim by replacing $U$ with $U\setminus\{v\}$ and $B$ with $(B\setminus C_v)\cup\{v\}$.

Applying \zcref{eq:InflationCorrelationInduction} with $U=\{r\}$ and $B=\bigcup_{\alpha\in\Children{r}}\Children{\alpha}$ gives
\begin{equation}\label{eq:InflationRecoveredCorrelation}
    \begin{aligned}
        p(\mathbf{a}|\mathbf{x}) =\omega\!\left(E^r_{a_r\mid x_r}(\mathbf{1})\prod_{\alpha\in\Children{r}}\prod_{w\in\Children{\alpha}}\TransferMsg{w}\right) =\rstate{r}{a_r}{x_r}\!\left(\bigotimes_{\alpha\in\Children{r}}\TransferMsg{\alpha}\right),
    \end{aligned}
\end{equation}
where the second equality follows from the definitions of the transfer messages, the input evaluation map, the root functional, and \zcref{eq:InflationIncomingMessageProduct}.
This proves the desired correlation recovery formula.
\end{proof}

We now prove the three lemmas required by the above proof, where the first one shows that the recursive construction is well-defined, the second corresponds to the source-replacer condition, and the third is used for the correlation recovery.

\begin{lemma}[Properties of the recursive construction]\label{lem:InflationRecursiveConstruction}
In the GNS representation $(\cH, \pi, \ket{\Omega})$ used in \zcref{thm:InflationSourceTransferTreeEquivalence}, the recursive constructions from \zcref{eq:InflationLeafAlgebra,eq:InflationConstructedSourceAlgebra,eq:InflationSourceCopyEvaluation,eq:InflationConstructedSourceState,eq:InflationIncomingData,eq:InflationIncomingEvaluation,eq:InflationInternalCPAverage,eq:InflationInternalLimitMaps,eq:InflationConstructedTransferData,eq:InflationConstructedRootFunctional} produce separable unital $C^*$-algebras, unital $*$-homomorphisms $\rho_\alpha^j$ and $\iota_{v,\mathbf i}$, source states $\omega_\alpha$, completely positive contractions $\Phi^{v,j}_{a_v\mid x_v}$, and positive root functionals $\rstate{r}{a_r}{x_r}$.

For each non-root party $v$, let $\cM_v^j$ be the von Neumann algebra generated by all measurements at $v$ with parent-source label fixed at $j$ and all measurement copies at its descendant parties:
\begin{equation}\label{eq:InflationSubtreeVonNeumannAlgebra}
    \begin{aligned}
        \cM_v^j \coloneqq \Bigl( &\{E^v_{a_v\mid x_v}(j,\mathbf i):\text{all }a_v,x_v,\mathbf i\} {}\cup{} \{E^w_{a_w\mid x_w}(\mathbf k):w\in\PartyVertices{\Tree_v}\setminus\{v\},\ \text{all }a_w,x_w,\mathbf k\} \Bigr)^{\prime\prime}.
    \end{aligned}
\end{equation}
That is, the parent-source label $j$ is fixed, while all other source labels may vary freely.
Then the constructed algebras $\cA_{\Parent{v}}^{v,j}$ satisfy:
\begin{enumerate}[label=(\roman*)]
    \item $\cA_{\Parent{v}}^{v,j}\subset\cM_v^j$.
    \item For every source permutation $g=(g_\alpha)_\alpha\in\Gamma$
    \begin{align}\label{eq:InflationAlgebraPermutation}
        \alpha_g\!\left(\cA_{\Parent{v}}^{v,j}\right)=\cA_{\Parent{v}}^{v,g_{\Parent{v}}(j)}
    \end{align}
    and $\alpha_g(q)=q$ for all $q \in \cA_{\Parent{v}}^{v,j}$ if $g_{\Parent{v}}(j)=j$.
    That is, the algebra is invariant under permutations of sources except for those of $\Parent{v}$.
    \item For every $q \in \cA_{\Parent{v}}^{v,j}$ and every inflation measurement $E^u_{a_u\mid x_u}(\mathbf{i})$,
    \begin{align}\label{eq:InflationAlgebraGeneratorCommutation}
        [q,E^u_{a_u\mid x_u}(\mathbf{i})]=0 \quad\text{if }u\neq v\text{ or }\bigl(u=v\text{ and }i_{\Parent{v}}\neq j\bigr).
    \end{align}
    That is, elements of $\cA_{\Parent{v}}^{v,j}$ commute with every measurement operator except for $E^v_{a_v\mid x_v}( j, \mathbf{i})$.
\end{enumerate}
\end{lemma}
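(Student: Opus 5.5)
We argue by strong induction along the rooted tree, from the leaves towards the root, treating properties (i)--(iii) together with the well-definedness claims as one inductive package. At a non-root leaf party $v$, the algebra $\cA_{\Parent{v}}^{v,j}$ is generated by the represented POVM copies $E^v_{a_v\mid x_v}(j)$. Property (i) is immediate from the definition. For (ii), note that $\alpha_g(E^v_{a_v\mid x_v}(j))=E^v_{a_v\mid x_v}(g_{\Parent{v}}(j))$, and $\alpha_g$ acts on $\cM_\infty$ as the $*$-automorphism $Q\mapsto U_gQU_g^*$, so it maps the generated $C^*$-algebras onto each other. For (iii), the relations \zcref{eq:InflationPOVMRelations} give the required commutation. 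Separability holds because there are countably many generators. Next, at a source $\alpha$, the subalgebras $\cA_\alpha^{v,j}$ for distinct $v\in\Children{\alpha}$ mutually commute by (iii), since $v$ is never the generating party of the other algebra. The universal property of $\bigotimes^{\max}$ therefore yields the unital $*$-homomorphism $\rho_\alpha^j$. Here $c^j=\alpha_\tau(c)$, where $\tau$ is the transposition of labels $1$ and $j$, and this lies in $\cA_\alpha^{v,j}$ by (ii). Then $\omega_\alpha=\omega\circ\rho_\alpha^1$ is a state. At an internal party $v$, the homomorphisms $\rho_\alpha^{i_\alpha}$ for distinct child sources commute: their ranges lie in algebras attached to disjoint subtrees with different parties, so (iii) and the commutation of different parties apply. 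This gives $\iota_{v,\mathbf i}$.

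To construct the transfer maps, the key point is that $E^v_{a_v\mid x_v}(j,\mathbf i)$ commutes with $\iota_{v,\mathbf i}(q)$ for all $q\in\InAlg{v}$. This holds because $\iota_{v,\mathbf i}(\InAlg{v})$ is generated by the algebras $\cA_\alpha^{u,i_\alpha}$ with $u\neq v$, and (iii) for those children covers this case. It follows that $\iota_{v,\mathbf i}(q)E=E^{1/2}\iota_{v,\mathbf i}(q)E^{1/2}$, so each summand in \zcref{eq:InflationInternalCPAverage} is a compression of a $*$-homomorphism. Hence $\Phi^{v,j;m}_{a_v\mid x_v}$ is completely positive, with $\|\Phi^{v,j;m}\|=\|\Phi^{v,j;m}(\id)\|\le\|\frac{1}{m^{d_v}}\sum_{\mathbf i}E(j,\mathbf i)\|\le1$. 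The domain $\InAlg{v}$ is separable, being a maximal tensor product of separable algebras, and $\cH$ is separable because $\cE_\infty(\Tree)$ is separable and $\Omega$ is cyclic. Applying \zcref{lem:InflationCPLimits} with the countable index set of all $(j,a_v,x_v)$ and $\cM=\cM_\infty$ produces a common subsequence and completely positive contractions $\Phi^{v,j}_{a_v\mid x_v}$. The root functional $q\mapsto\omega(\iota_{r,\mathbf1}(q)E^r(\mathbf1))$ is positive by the same square-root trick.

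The main obstacle is transporting (i)--(iii) through the weak limit. For (i), each finite average lies in $\cM_v^j$, since the factors $\iota_{v,\mathbf i}(q)$ come from descendant subtrees (by (i) at the children) and the measurements carry parent label $j$. Because $\cM_v^j$ is weakly closed, the limits and the generated $C^*$-algebra also lie in $\cM_v^j$. For (iii), it suffices to show that every generator of $\cM_v^j$ commutes with the measurement copies $E^u(\mathbf i)$ with $u\neq v$, or with $u=v$ and $i_{\Parent{v}}\neq j$. Parties outside $\Tree_v$ commute with those inside. Copies at $v$ with a different parent label commute with $E^v(j,\cdot)$ provided the child labels also differ. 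This is the delicate case: equal child labels are not covered by \zcref{eq:InflationPOVMRelations}. We plan to handle it by noting that the commutator $[\Phi^{v,j;m}(q),E^v(j',\mathbf k)]$ involves only the $O(m^{d_v-1})$ tuples $\mathbf i$ that collide with $\mathbf k$ in some coordinate. In the spirit of \zcref{lem:InflationCollisionBounds}, this gives a norm bound of order $1/m$, so the commutator vanishes in the weak limit, and the commutant property then passes to the generated $C^*$-algebra. For descendants $w$, and for $v$ itself, the same collision estimate disposes of the finitely many offending summands. For (ii), we first observe that $\alpha_g$ is normal on $\cM_\infty$ since it is unitarily implemented. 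If $g$ fixes $j$, then $\alpha_g$ permutes the summands of the finite average up to the relabeling $\mathbf i\mapsto(g_\alpha(i_\alpha))_\alpha$, using (ii) at the children and $\rho^{g_\alpha(i)}=\alpha_g\circ\rho^{i}$. For $m$ large this relabeling is a bijection of $[m]^{\Children{v}}$, except for labels outside $[m]$ moved by $g$, which contribute $O(1/m)$. Thus $\alpha_g\Phi^{v,j}=\Phi^{v,j}$ in the limit, and similarly $\alpha_g\Phi^{v,j}=\Phi^{v,g_{\Parent{v}}(j)}$ in general. This gives \zcref{eq:InflationAlgebraPermutation} and completes the induction.
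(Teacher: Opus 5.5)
Your proposal is correct and takes essentially the same route as the paper's proof: induction from the leaves, the square-root identity for complete positivity and contractivity of the finite averages, \zcref{lem:InflationCPLimits} for a common weak limit, weak closedness of $\cM_v^j$ for (i), exact relabeling covariance of the finite averages together with normality of $\alpha_g$ for (ii), and an $O(d_v/m)$ collision count on the finite averages for (iii). Two small corrections: your preliminary reduction of (iii) to ``every generator of $\cM_v^j$ commutes with $E^u_{a_u\mid x_u}(\mathbf i)$'' is false for same-party and descendant generators, so the collision estimate is genuinely needed, with induction hypothesis (iii) at the parties in $\Children{\beta}$, $\beta\in\Children{v}$, identifying which summands can fail to commute; and in (ii) no $O(1/m)$ correction arises, because $g\in\Gamma$ moves only finitely many labels, so $\mathbf i\mapsto g\mathbf i$ is an exact bijection of $[m]^{\Children{v}}$ once $m$ is large.
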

\begin{proof}
We proceed by induction, reflecting the recursive nature of the construction.

\medskip
\noindent
\emph{Leaf parties.}
At a leaf party $v$, the algebra $\cA_{\Parent{v}}^{v,j}$ is generated by the finitely many $E^v_{a_v\mid x_v}(j)$ and is therefore separable; here $j$ labels the copy of the parent source $\Parent{v}$.
Property (i) follows directly from the definition.
The equality
\begin{align}\label{eq:InflationLeafPermutationProof}
    \alpha_g\!\left(E^v_{a_v\mid x_v}(j)\right)=E^v_{a_v\mid x_v}\!\left(g_{\Parent{v}}(j)\right)
\end{align}
proves (ii), and the inflation commutation relations give (iii).

Next, suppose that all the descendant source algebras of a given source $\alpha$ have been constructed and satisfy (i)--(iii).
We first verify that the source and input maps are well-defined, and then construct the transfer maps and verify (i)--(iii) for their output parent source algebras.

\medskip
\noindent
\emph{Source algebras, states and input $*$-homomorphisms.}
Given a source $\alpha$, consider parties $u, v \in \Children{\alpha}$ such that $u \neq v$.
Then the subtrees $\Tree_u, \Tree_v$ have disjoint party sets.
The definition of $\cM_u^i, \cM_v^j$, \zcref{eq:InflationSubtreeVonNeumannAlgebra}, involves only the measurements of the parties in the subtrees $\Tree_u, \Tree_v$, respectively.
Thus the inflation commutation relations imply
\begin{align}\label{eq:InflationSubtreeAlgebraCommutation}
    [\cM_u^i,\cM_v^j]=0,\qquad i,j\geq1.
\end{align}
By the induction hypothesis (i), the subalgebras $\cA_\alpha^{u,i}$ and $\cA_\alpha^{v,j}$ commute as well.
Moreover by (ii), the map $c=c^1\mapsto c^j$ is a unital $*$-isomorphism from $\cA_\alpha^{u,1}$ onto $\cA_\alpha^{u,j}$.
Since their ranges $\cA_\alpha^{u,j}$ mutually commute, the universal property of the maximal tensor product gives a unital $*$-homomorphism
\begin{align}\label{eq:InflationSourceMapProof}
    \rho_\alpha^j:\JoinAlg{\alpha} = \bigotimes_{v\in\Children{\alpha}}^{\max}\cA_\alpha^v \to\cM_\infty,\qquad \rho_\alpha^j\!\left(\bigotimes_{u\in\Children{\alpha}}c_u\right)=\prod_{u\in\Children{\alpha}}c_u^j.
\end{align}
Consequently, the $\alpha$-source state $\omega_\alpha=\omega\circ\rho_\alpha^1$ is indeed a state, as $\rho_\alpha^1$ preserves positivity and the unit.

Similarly, consider $\InAlg{v}=\bigotimes_{\beta\in\Children{v}}^{\max}\JoinAlg{\beta}$.
For distinct child sources $\beta,\gamma$, the ranges $\rho_\beta^{i_\beta}(\JoinAlg{\beta})$ and $\rho_\gamma^{i_\gamma}(\JoinAlg{\gamma})$ involve measurements in disjoint party subtrees and therefore commute.
The universal property of the maximal tensor product gives a unital $*$-homomorphism
\begin{align}\label{eq:InflationInputMapProof}
    \iota_{v,\mathbf{i}}:\InAlg{v}\to\cM_\infty,\qquad
    \iota_{v,\mathbf{i}}\!\left(\bigotimes_{\beta\in\Children{v}}q_\beta\right)=\prod_{\beta\in\Children{v}}\rho_\beta^{i_\beta}(q_\beta).
\end{align}
By the induction hypothesis (i), every $\iota_{v,\mathbf{i}}(q)$ belongs to the von Neumann algebra generated by measurements from parties that are not $v$.
Thus, these measurements commute with $E^v_{a_v\mid x_v}(j,\mathbf{k})$ for every choice of source labels, and so
\begin{align}\label{eq:InflationInputMeasurementCommutation}
    [\iota_{v,\mathbf{i}}(q),E^v_{a_v\mid x_v}(j,\mathbf{k})]=0,\qquad q\in\InAlg{v},
\end{align}
for all source labels $j,\mathbf{i},\mathbf{k}$.

For $g\in\Gamma$, denote $g\mathbf{i}\coloneqq(g_\beta(i_\beta))_{\beta\in\Children{v}}$.
We claim for the input $*$-homomorphism that
\begin{align}\label{eq:InflationInputMapCovariance}
    \alpha_g\!\left(\iota_{v,\mathbf{i}}(q)\right)=\iota_{v,g\mathbf{i}}(q),\qquad q\in\InAlg{v}.
\end{align}
Indeed, the induction hypothesis (ii) gives $\alpha_g(c_u^i)=c_u^{g_\alpha(i)}$.
By the definition of $\iota_{v,\mathbf{i}}$, this proves \zcref{eq:InflationInputMapCovariance} for every elementary tensor $q=\bigotimes_{\beta\in\Children{v}}\bigl(\bigotimes_{u\in\Children{\beta}}c_u\bigr)$, with $c_u\in\cA_\beta^{u,1}$.
Linearity and continuity extend the identity to all $q\in\InAlg{v}$.

\medskip
\noindent\emph{Transfer maps from ultraweak limits.}
Fix an internal non-root party $v$ and set $d_v\coloneqq|\Children{v}|$.
\zcref[S]{eq:InflationInputMeasurementCommutation} implies
\begin{align}\label{eq:InflationAverageCPSandwich}
    \iota_{v,\mathbf{i}}(q) E^v_{a_v\mid x_v}(j,\mathbf{i})=(E^v_{a_v\mid x_v}(j,\mathbf{i}))^{1/2}\iota_{v,\mathbf{i}}(q)(E^v_{a_v\mid x_v}(j,\mathbf{i}))^{1/2}.
\end{align}
Each summand in \zcref{eq:InflationInternalCPAverage} defining $\Phi^{v,j;m}_{a_v\mid x_v}$ is therefore completely positive.
Moreover,
\begin{align}\label{eq:InflationAverageContractivity}
    0\leq\Phi^{v,j;m}_{a_v\mid x_v}(\id) =\frac1{m^{d_v}}\sum_{\mathbf{i}\in[m]^{\Children{v}}}E^v_{a_v\mid x_v}(j,\mathbf{i}) \leq I,
\end{align}
so these maps are contractions.
The GNS space $\cH$ is separable, and $\InAlg{v}$ as a finite maximal tensor product of separable $C^*$-algebras is again separable.
Thus \zcref{lem:InflationCPLimits} gives a common subsequence $m_k$ and completely positive contractions $\Phi^{v,j}_{a_v\mid x_v}$ as ultraweak limits, simultaneously for all $j,a_v,x_v$.

Lastly, since $\{q_t\}_{t\geq1}$ is norm-dense in $\InAlg{v}$, continuity gives
\begin{align}\label{eq:InflationOutputSeparability}
    \cA_{\Parent{v}}^{v,j}
    =C^*\!\left(\Phi^{v,j}_{a_v\mid x_v}(q_t):a_v\in\mathsf{A}_v,\ x_v\in\mathsf{X}_v,\ t\geq1\right).
\end{align}
Hence every output parent algebra $\cA_{\Parent{v}}^{v,j}$ is separable, allowing the further recursive construction.
We now verify the three properties for these new objects.

\medskip
\noindent\emph{Property (i).}
The input operators $q \in \InAlg{v}$ and the measurement at $v$ belong to $\cM_v^j$ by construction, so
\begin{align}\label{eq:InflationOutputContainment}
    \Phi^{v,j;m}_{a_v\mid x_v}(q)\in\cM_v^j
    \quad\Longrightarrow\quad
    \Phi^{v,j}_{a_v\mid x_v}(q)\in\cM_v^j
\end{align}
thanks to the ultraweak closedness of $\cM_v^j$.
Taking the generated $C^*$-algebra $\cA_{\Parent{v}}^{v,j}$ proves (i).

\medskip
\noindent\emph{Property (ii).}
Fix $g\in\Gamma$.
Since each $g_\beta$ moves only finitely many labels, $g_\beta([m])=[m]$ for every $\beta\in\Children{v}$ when $m$ is sufficiently large.
Thus $\mathbf{i}\mapsto g\mathbf{i}$ is a bijection of $[m]^{\Children{v}}$, and \zcref{eq:InflationInputMapCovariance} gives
\begin{equation}\label{eq:InflationFiniteAverageCovariance}
    \begin{aligned}
        \alpha_g\!\left(\Phi^{v,j;m}_{a_v\mid x_v}(q)\right)
        =\frac1{m^{d_v}}\sum_{\mathbf{i}\in[m]^{\Children{v}}}
        \iota_{v,g\mathbf{i}}(q)E^v_{a_v\mid x_v}\!\left(g_{\Parent{v}}(j),g\mathbf{i}\right) =\Phi^{v,g_{\Parent{v}}(j);m}_{a_v\mid x_v}(q).
    \end{aligned}
\end{equation}
Thus, the ultraweak limit along $m_k$, which is the same subsequence for every $j$ and $g_{\Parent{v}}(j)$, implies
\begin{align}\label{eq:InflationLimitMapCovariance}
    \alpha_g\!\left(\Phi^{v,j}_{a_v\mid x_v}(q)\right)=\Phi^{v,g_{\Parent{v}}(j)}_{a_v\mid x_v}(q),
\end{align}
since $\alpha_g(\cdot)=U_g \cdot U_g^*$ is ultraweakly continuous.

Therefore, the generating family of $\cA_{\Parent{v}}^{v,j}$ is mapped to the generating family of $\cA_{\Parent{v}}^{v,g_{\Parent{v}}(j)}$ by $\alpha_g$, proving \zcref{eq:InflationAlgebraPermutation}.
If also $g_{\Parent{v}}(j)=j$, it fixes every generator and hence every element, proving (ii).

\medskip
\noindent\emph{Property (iii).}
Let $q\in\InAlg{v}$ and $Z\coloneqq E^u_{a_u\mid x_u}(\mathbf{k})$, with either $u\neq v$ or $k_{\Parent{v}}\neq j$ when $u=v$.
It suffices to prove $[\Phi^{v,j}_{a_v\mid x_v}(q),Z]=0$ for every $q,a_v,x_v$, since these outputs generate $\cA_{\Parent{v}}^{v,j}$ and $Z=Z^*$.

To this end, first consider the summand $\iota_{v,\mathbf{i}}(q) E^v_{a_v\mid x_v}(j,\mathbf{i})$ in $\Phi^{v,j;m}_{a_v\mid x_v}(q)$ and check its commutation with $Z$.
If $u=v$, $\iota_{v,\mathbf{i}}(q)$ commutes with $Z$ by \zcref{eq:InflationInputMeasurementCommutation}.
Since $k_{\Parent{v}}\neq j$, the inflation relations also give
\begin{align}\label{eq:InflationSamePartyCommutation}
    [E^v_{a_v\mid x_v}(j,\mathbf{i}),Z]=0
    \quad\text{if }i_\beta\neq k_\beta\text{ for every }\beta\in\Children{v}.
\end{align}
Thus the summand can fail to commute with $Z$ only when some child-source labels coincide.

If $u\neq v$, $E^v_{a_v\mid x_v}(j,\mathbf{i})$ commutes with $Z$ by the inflation relations.
For the input factor, the induction hypothesis (iii) gives
\begin{align}\label{eq:InflationDescendantInputCommutation}
    [\iota_{v,\mathbf{i}}(q),Z]=0
    \quad\text{if }i_\beta\neq k_\beta
    \text{ whenever }\beta\in\Children{v}\text{ and }u\in\Children{\beta}.
\end{align}
Indeed, on an elementary tensor the input is a product of elements $c_w^{i_\beta}\in\cA_\beta^{w,i_\beta}$.
By (iii), each $c_w^{i_\beta}$ commutes with $Z$ unless $w=u$ and $i_\beta=k_\beta$ with $\beta = \Parent{w} = \Parent{u}$.
The conclusion then extends to every $q$ by linearity and continuity.
If $u$ is not a child party of any child source of $v$, the input commutes with $Z$ for every $\mathbf{i}$ by the induction hypothesis (iii).

We now bound the number of indices $\mathbf{i}$ for which noncommutation $[\iota_{v,\mathbf{i}}(q) E^v_{a_v\mid x_v}(j,\mathbf{i}), Z] \neq 0$ occurs.
Note that, for each child source $\beta\in\Children{v}$, fixing $i_\beta=k_\beta$ leaves the other $d_v-1$ coordinates free to take any value in $[m]$.
Hence
\begin{align}\label{eq:InflationSingleLabelCount}
    \left|\left\{ \mathbf{i}\in[m]^{\Children{v}} \mid i_\beta=k_\beta \right\}\right|\leq m^{d_v-1}.
\end{align}
As $| [m]^{\Children{v}}| = m^{d_v}$, the fraction of $\mathbf{i}$ such that $i_\beta=k_\beta$ is at most $1/m$.
When $u=v$, noncommutation requires at least one of the $d_v$ equalities $i_\beta=k_\beta$, with $\beta\in\Children{v}$.
The union bound therefore gives a fraction of noncommutation at most $d_v/m$.
When $u\neq v$, at most one child source $\beta$ satisfies $u\in\Children{\beta}$, so the fraction is at most $1/m$, and is zero if there is no such source.
Thus, in every case, at most a fraction $d_v/m$ of the summands $\iota_{v,\mathbf{i}}(q) E^v_{a_v\mid x_v}(j,\mathbf{i})$ can fail to commute with $Z$.

Now, each summand $\iota_{v,\mathbf{i}}(q) E^v_{a_v\mid x_v}(j,\mathbf{i})$ has norm at most $\|q\|$ by contractivity, so its commutator with $Z$ has norm at most $2\|q\|\|Z\|$.
Consequently, by the triangle inequality,
\begin{align}\label{eq:InflationTransferCommutatorBound}
    \|[\Phi^{v,j;m}_{a_v\mid x_v}(q),Z]\| \leq\frac{2d_v}{m}\|q\|\|Z\| \to 0.
\end{align}
Taking the ultraweak limit $m_k \to \infty$ gives $[\Phi^{v,j}_{a_v\mid x_v}(q),Z]=0$, proving (iii).

\medskip
\noindent\emph{Root functionals.}
Finally, \zcref{eq:InflationInputMeasurementCommutation} also holds at the root, without a parent-source label.
Then, for every $q\in\InAlg{r}$,
\begin{align}\label{eq:InflationRootPositivityProof}
    \rstate{r}{a_r}{x_r}(q^* q)
    =\omega\!\left( \iota_{r,\mathbf{1}}(q^*q) E^r_{a_r\mid x_r}(\mathbf{1})\right) = \omega\!\left(\left(E^r_{a_r\mid x_r}(\mathbf{1})\right)^{1/2} \iota_{r,\mathbf{1}}(q)^* \iota_{r,\mathbf{1}}(q) \left(E^r_{a_r\mid x_r}(\mathbf{1})\right)^{1/2}\right) \geq 0,
\end{align}
which proves positivity of the root functionals.
\end{proof}

\begin{lemma}[Averages of the constructed input operators]\label{lem:InflationConstructedAverages}
For the construction from \zcref{eq:InflationLeafAlgebra,eq:InflationConstructedSourceAlgebra,eq:InflationSourceCopyEvaluation,eq:InflationConstructedSourceState,eq:InflationIncomingData,eq:InflationIncomingEvaluation,eq:InflationInternalCPAverage,eq:InflationInternalLimitMaps,eq:InflationConstructedTransferData,eq:InflationConstructedRootFunctional} and every $q\in\JoinAlg{\alpha}$, we have
\begin{align}\label{eq:InflationConstructedSourceMean}
    \frac1m\sum_{j=1}^m\rho_\alpha^j(q) \to \omega_\alpha(q)I
\end{align}
strongly as $m\to\infty$.
Consequently, for every party $v$, with $d_v\coloneqq|\Children{v}|$, and every $q\in\InAlg{v}$,
\begin{align}\label{eq:InflationIncomingAverage}
    \frac1{m^{d_v}}\sum_{\mathbf i\in[m]^{\Children{v}}}\iota_{v,\mathbf i}(q) \to \InState{v}(q)I
\end{align}
strongly, and
\begin{align}\label{eq:InflationIncomingProductState}
    \omega\circ\iota_{v,\mathbf i}=\InState{v}
\end{align}
for every tuple $\mathbf{i}$.
\end{lemma}
\begin{proof}
We first apply \zcref{lem:InflationScalarAverages} to $\rho_\alpha^1(q)\in\cM_\infty$.
By \zcref{lem:InflationRecursiveConstruction}(ii), we can compute the group average of $\rho_\alpha^1(q)$ as
\begin{equation}\label{eq:InflationSourceGroupAverage}
    \begin{aligned}
        \frac1{|\Gamma_m|}\sum_{g\in\Gamma_m}U_g\rho_\alpha^1(q)U_g^*
        =\frac1{|\Gamma_m|}\sum_{g\in\Gamma_m}\rho_\alpha^{g_\alpha(1)}(q) =\frac1m\sum_{j=1}^m\rho_\alpha^j(q),
    \end{aligned}
\end{equation}
where the second equality holds because each value $g_\alpha(1)=j$ occurs for exactly $|\Gamma_m|/m$ permutations.
Thus \zcref{eq:InflationVonNeumannVectorAverage} from \zcref{lem:InflationScalarAverages} leads to
\begin{align}\label{eq:InflationSourceAverageOnVector}
    \left(\frac1m\sum_{j=1}^m\rho_\alpha^j(q)\right)\ket{\Omega}
    \to\omega(\rho_\alpha^1(q))\ket{\Omega}
    =\omega_\alpha(q)\ket{\Omega}.
\end{align}

To prove strong operator convergence, let $z$ be a finite-support word generated by the represented measurement in $E^v_{a_v\mid x_v}(\mathbf{i})$.
By \zcref{lem:InflationRecursiveConstruction}(iii), $[\rho_\alpha^j(q),z]=0$ if $j$ does not belong to the label set $I_\alpha(z)$.
Since $\|\rho_\alpha^j(q)\|\leq\|q\|$, it follows that
\begin{equation}\label{eq:InflationConstructedSourceCommutator}
    \begin{aligned}
        \left\|\left[\frac1m\sum_{j=1}^m\rho_\alpha^j(q),z\right]\right\| \leq\frac1m\sum_{j\in[m]\cap I_\alpha(z)}\|[\rho_\alpha^j(q),z]\| &\leq\frac{2|I_\alpha(z)|}{m}\|q\|\|z\| \to 0
    \end{aligned}
\end{equation}
for sufficiently large $m$.
Combining this with \zcref{eq:InflationSourceAverageOnVector} implies
\begin{equation}\label{eq:InflationSourceAverageDenseVectors}
    \begin{aligned}
        \left\|\left(\frac1m\sum_{j=1}^m\rho_\alpha^j(q)- \omega_\alpha(q)I\right)z\ket{\Omega}\right\| \leq\frac{2|I_\alpha(z)|}{m}\|q\|\|z\| +\|z\|\left\|\left(\frac1m\sum_{j=1}^m\rho_\alpha^j(q)-\omega_\alpha(q)I\right)\ket{\Omega}\right\| \to 0.
    \end{aligned}
\end{equation}
It follows that the convergence holds for all vectors in $\cH$ by cyclicity of $\ket{\Omega}$, proving \zcref{eq:InflationConstructedSourceMean}.

Next, for an elementary tensor $q=\bigotimes_{\beta\in\Children{v}}q_\beta$, it follows from \zcref{eq:InflationConstructedSourceMean} that
\begin{equation}\label{eq:InflationIncomingAverageFactorization}
    \begin{aligned}
        \frac1{m^{d_v}}\sum_{\mathbf{i}\in[m]^{\Children{v}}}\iota_{v,\mathbf{i}}(q)
        =\prod_{\beta\in\Children{v}}\left(\frac1m\sum_{i=1}^m\rho_\beta^i(q_\beta)\right) \to \left(\prod_{\beta\in\Children{v}}\omega_\beta(q_\beta)\right)I
        =\InState{v}(q)I
    \end{aligned}
\end{equation}
strongly.
Since $q \mapsto \frac1{m^{d_v}}\sum_{\mathbf{i}\in[m]^{\Children{v}}}\iota_{v,\mathbf{i}}(q)$ is contractive, \zcref{eq:InflationIncomingAverageFactorization} extends to all $q\in\InAlg{v}$ by continuity and linearity.
This proves \zcref{eq:InflationIncomingAverage}.

Finally, for any tuple $\mathbf{i}$, choose $g\in\Gamma$ with $g\mathbf{1}=\mathbf{i}$.
$\Gamma$-invariance of $\omega$ and \zcref{eq:InflationInputMapCovariance} give
\begin{align}\label{eq:InflationInputStateInvariance}
    \omega(\iota_{v,\mathbf{i}}(q))
    =\omega\!\left(\alpha_g(\iota_{v,\mathbf{1}}(q))\right)
    =\omega(\iota_{v,\mathbf{1}}(q)).
\end{align}
Consequently, \zcref{eq:InflationIncomingAverage} implies that
\begin{align}\label{eq:InflationInputStateFromAverage}
    \omega(\iota_{v,\mathbf{1}}(q)) =\omega\!\left(\frac1{m^{d_v}}\sum_{\mathbf{i}\in[m]^{\Children{v}}}\iota_{v,\mathbf{i}}(q)\right) \to \InState{v}(q).
\end{align}
So $\omega(\iota_{v,\mathbf{i}}(q)) = \omega(\iota_{v,\mathbf{1}}(q)) = \InState{v}(q)$, which shows \zcref{eq:InflationIncomingProductState}.
\end{proof}

\begin{lemma}[Replacing a measurement by its transfer map]\label{lem:InflationExpectationReplacement}
Let $v\neq r$ be an internal party, $j\geq1$, and $q\in\InAlg{v}$.
Suppose that $R\in\cM_\infty$ is fixed by every permutation of every child source of $v$; i.e., $\alpha_g(R)=R$ for every $g=(g_\alpha)_\alpha\in\Gamma$ with $g_\alpha=\mathrm{id}$ for all $\alpha\notin\Children{v}$.
Then
\begin{align}\label{eq:InflationExpectationReplacementIdentity}
    \omega\!\left(R\iota_{v,\mathbf1}(q)E^v_{a_v\mid x_v}(j,\mathbf1)\right)=\omega\!\left(R\Phi^{v,j}_{a_v\mid x_v}(q)\right).
\end{align}
\end{lemma}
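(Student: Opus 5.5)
The idea is to write the right-hand side as a limit of finite averages and then show each average equals the left-hand side, using permutation invariance of $\omega$ and of $R$. Since $\Phi^{v,j}_{a_v\mid x_v}(q)$ is the ultraweak limit along $m_k$ of $\Phi^{v,j;m_k}_{a_v\mid x_v}(q)$, and $R^*\ket{\Omega}$ and $\ket{\Omega}$ are fixed vectors, we have
\begin{align}
    \omega\!\left(R\Phi^{v,j}_{a_v\mid x_v}(q)\right)=\lim_{k\to\infty}\frac1{m_k^{d_v}}\sum_{\mathbf i\in[m_k]^{\Children{v}}}\omega\!\left(R\,\iota_{v,\mathbf i}(q)E^v_{a_v\mid x_v}(j,\mathbf i)\right).
\end{align}
It therefore suffices to show that every summand equals $\omega(R\iota_{v,\mathbf1}(q)E^v_{a_v\mid x_v}(j,\mathbf1))$. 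Then each finite average equals the left-hand side exactly, and so does the limit.

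For a fixed tuple $\mathbf i\in[m]^{\Children{v}}$, choose $g\in\Gamma$ with $g_\beta(1)=i_\beta$ for $\beta\in\Children{v}$ and $g_\alpha=\mathrm{id}$ for every other source, in particular $g_{\Parent{v}}=\mathrm{id}$. Child sources of $v$ are distinct from $\Parent{v}$, so this choice is legitimate. By \zcref{eq:InflationInputMapCovariance}, $\alpha_g(\iota_{v,\mathbf1}(q))=\iota_{v,\mathbf i}(q)$. By \zcref{eq:InflationPermutationAction} together with the weak continuity of $\alpha_g=U_g\cdot U_g^*$, $\alpha_g(E^v_{a_v\mid x_v}(j,\mathbf1))=E^v_{a_v\mid x_v}(j,\mathbf i)$. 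By hypothesis, $\alpha_g(R)=R$.

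It remains to check that $\omega\circ\alpha_g=\omega$ on $\cM_\infty$. This holds because $U_g^*\ket{\Omega}=\ket{\Omega}$: $U_g$ is the unitary implementing a $\Gamma$-invariant state in its GNS space. Multiplicativity of $\alpha_g$ then gives
\begin{align}
    \omega\!\left(R\,\iota_{v,\mathbf i}(q)E^v_{a_v\mid x_v}(j,\mathbf i)\right)=\omega\!\left(\alpha_g\!\left(R\,\iota_{v,\mathbf1}(q)E^v_{a_v\mid x_v}(j,\mathbf1)\right)\right)=\omega\!\left(R\,\iota_{v,\mathbf1}(q)E^v_{a_v\mid x_v}(j,\mathbf1)\right).
\end{align}

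The main point needing care is the passage from the finite averages to the limit map. $R$ lies in $\cM_\infty$, not in the $C^*$-algebra, so I use that $X\mapsto\sandwich{\Omega}{RX}{\Omega}=\langle R^*\Omega|X\Omega\rangle$ is weak-operator continuous; this is exactly the convergence that \zcref{lem:InflationCPLimits} supplies along the common subsequence $m_k$. A secondary check is that $g$ really exists: it is a permutation of finitely many labels in each child source, so it lies in $\Gamma$, and the averaging identity only ever needs $m\ge\max_\beta i_\beta$. No factorization of $\omega$ is needed here; symmetry alone does the work.
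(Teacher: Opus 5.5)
Your proposal is correct and follows essentially the same route as the paper. For each $\mathbf i$ you relabel the summand $\omega\bigl(R\,\iota_{v,\mathbf i}(q)E^v_{a_v\mid x_v}(j,\mathbf i)\bigr)$ back to the $\mathbf 1$ copy using a permutation $g$ that acts only on the child sources of $v$; then $\Gamma$-invariance of $\omega$ and $\alpha_g(R)=R$ make every finite average equal to the left-hand side, and you pass to the limit along $m_k$. Your extra remarks only make explicit what the paper leaves implicit: that the vector state is invariant on $\cM_\infty$ because $U_g^*\ket{\Omega}=\ket{\Omega}$, and that $X\mapsto\langle R^*\Omega|X\Omega\rangle$ is weak-operator continuous.
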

\begin{proof}
For each tuple $\mathbf{i}$, choose $g\in\Gamma$ such that $g_\beta(1)=i_\beta$ for $\beta\in\Children{v}$ and $g_\alpha=\mathrm{id}$ for $\alpha\notin\Children{v}$.
Then $\alpha_g(R)=R$, $\alpha_g(E^v_{a_v\mid x_v}(j,\mathbf{1}))=E^v_{a_v\mid x_v}(j,\mathbf{i})$, and $\alpha_g(\iota_{v,\mathbf{1}}(q))=\iota_{v,\mathbf{i}}(q)$ by \zcref{eq:InflationInputMapCovariance}.
Using $\Gamma$-invariance $\omega\circ\alpha_g=\omega$, we calculate
\begin{equation}\label{eq:InflationCorrelationRelabeling}
    \begin{aligned}
        \omega\!\left(R\iota_{v,\mathbf{1}}(q)E^v_{a_v\mid x_v}(j,\mathbf{1})\right) = \omega\!\left(\alpha_g\!\left(R\iota_{v,\mathbf{1}}(q)E^v_{a_v\mid x_v}(j,\mathbf{1})\right)\right) = \omega\!\left(R\iota_{v,\mathbf{i}}(q)E^v_{a_v\mid x_v}(j,\mathbf{i})\right).
    \end{aligned}
\end{equation}
It follows that the average over $\mathbf{i}\in[m]^{\Children{v}}$ satisfies
\begin{equation}\label{eq:InflationReplacementAverage}
    \begin{aligned}
        \omega\!\left(R\iota_{v,\mathbf{1}}(q)E^v_{a_v\mid x_v}(j,\mathbf{1})\right) = \frac1{m^{d_v}}\sum_{\mathbf{i}\in[m]^{\Children{v}}} \omega\!\left(R\iota_{v,\mathbf{i}}(q)E^v_{a_v\mid x_v}(j,\mathbf{i})\right) =\omega\!\left(R\Phi^{v,j;m}_{a_v\mid x_v}(q)\right),
    \end{aligned}
\end{equation}
and we are done by taking the subsequence $m_k$ defining $\Phi^{v,j}_{a_v\mid x_v}$.
\end{proof}

\subsection{Inflation-NPA SDP hierarchy and its convergence}\label{sec:InflationNPASDP}
We finish by showing that the inflation-NPA hierarchy converges to the mixed quantum model on source-party trees.
For $m,d\geq1$, let $\cE_m^d(\Tree)$ denote the linear span of formal words of length at most $d$ in $E^v_{a_v\mid x_v}(\mathbf{i}) \in \cE_m(\Tree)$, including the empty word $\id$.
The inflation-NPA test is the degree-truncated version of \zcref{def:InflationTestPOVM}, with positivity expressed by moment and localizing matrices.

\begin{definition}[Inflation-NPA test]\label{def:InflationNPATest}
A correlation $p$ passes the level-$(m,d)$ inflation-NPA test if there exists a Hermitian linear functional $L:\cE_m^{2d}(\Tree)\to\mathbb{C}$ with $L(\id)=1$ satisfying
\begin{equation}\label{eq:InflationNPAPositivity}
    \begin{aligned}
        L(f^*f)&\geq0 &&\text{for every }f\in\cE_m^d(\Tree),\\
        L(h^* E^v_{a_v\mid x_v}(\mathbf{i}) h)&\geq0,\qquad L(h^*(\id-E^v_{a_v\mid x_v}(\mathbf{i})^2)h)\geq0
        &&\text{for every }h\in\cE_m^{d-1}(\Tree)
    \end{aligned}
\end{equation}
and every measurement generator $E^v_{a_v\mid x_v}(\mathbf{i})$.
The POVM-completeness and commutation relations are already incorporated into $\cE_m^{2d}(\Tree)$.
We additionally impose the symmetry and diagonal-correlation constraints in \zcref{eq:InflationStateConstraints} whenever the expressions involved have degree at most $2d$.

For optimization, we follow~\cite[Sections~IV and~VII.C.2]{wolfe2021quantum}.
Write the real polynomial objective $f$ using real coefficients $c_0,\ldots,c_N$ and complete events $e_{ij}$, and define its inflated counterpart $\widehat{f}$ by
\begin{equation}\label{eq:InflationPolynomialObjectiveLift}
    \begin{aligned}
        f(p)&=c_0+\sum_{i=1}^N c_i\prod_{j=1}^{k_i}p(e_{ij}),\\
        \widehat{f}&=c_0\id+\sum_{i=1}^N c_i\prod_{j=1}^{k_i}D_{e_{ij}}^{(j)},
    \end{aligned}
\end{equation}
where $k_i$ is the degree of the $i$-th monomial.
For sufficiently large $m$ and $d$, define
\begin{equation}\label{eq:InflationOptimizationBounds}
    \begin{aligned}
        \beta_{m,d}^{\min}(f)&\coloneqq\inf_L L(\widehat{f}), \qquad \beta_{m,d}^{\max}(f)\coloneqq\sup_L L(\widehat{f}),
    \end{aligned}
\end{equation}
where $L:\cE_m^{2d}(\Tree)\to\mathbb{C}$ ranges over Hermitian linear functionals satisfying $L(\id)=1$, \zcref{eq:InflationNPAPositivity}, and source-permutation invariance, without the prescribed-correlation equations.
\end{definition}

Clearly, the tests of \zcref{def:InflationNPATest} can be implemented by SDPs.

\begin{corollary}[Convergence of inflation-NPA on source-party trees]\label{cor:InflationNPATreeConvergence}
Let $\Tree$ be a finite source-party tree with finite setting and outcome sets.
A correlation $p$ passes every level-$(m,d)$ inflation-NPA test if and only if it admits a mixed quantum realization on $\Tree$.
\end{corollary}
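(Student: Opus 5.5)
Both directions reduce to \zcref{thm:InflationSourceTransferTreeEquivalence}, once we show that passing every inflation-NPA level at fixed $m$ is the same as passing the level-$m$ POVM inflation test of \zcref{def:InflationTestPOVM}.

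\emph{Necessity.} Let $p$ admit a mixed realization. The implication (iii)$\Rightarrow$(i) in \zcref{thm:InflationSourceTransferTreeEquivalence} gives, for every $m$, a state $\omega_m$ on $\cE_m(\Tree)$ satisfying \zcref{eq:InflationStateConstraints}. Restricting $\omega_m$ to the formal words of $\cE_m^{2d}(\Tree)$ gives a Hermitian normalized functional $L$. This $L$ satisfies the moment positivity in \zcref{eq:InflationNPAPositivity}, because $\omega_m$ is a state. It satisfies the localizing positivity because each $E^v_{a_v\mid x_v}(\mathbf i)$ is a positive contraction, so $E$ and $\id-E^2$ are positive elements of the $C^*$-algebra. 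Symmetry and the diagonal factorization hold exactly at every degree. Hence $p$ passes every level $(m,d)$.

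\emph{Sufficiency.} Fix $m$ and suppose level $(m,d)$ is feasible for every $d$; choose a feasible $L_d$ at each level.

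\begin{enumerate}
\item \emph{Uniform bounds.} The localizing constraints $L(h^*(\id-E^2)h)\ge0$ give, by induction on word length exactly as in \zcref{lem:SourceTransferQuadraticModuleArchimedean}, the estimate $L_d(w^*w)\le1$ for every word $w$. Cauchy--Schwarz on the moment matrix then gives $|L_d(w)|\le1$ for every word of length at most $d$.
\item \emph{Limit functional.} Since $\cE_m(\Tree)$ has finitely many generators and hence countably many words, a diagonal subsequence yields a pointwise limit $L$ on the free $*$-algebra modulo the POVM-completeness and commutation relations. This $L$ is positive, is nonnegative on the localizing elements, is $\Gamma_m$-invariant, and satisfies the diagonal constraints at all degrees.
\item \emph{Operator realization.} The GNS construction of $L$ represents each generator as a bounded positive contraction satisfying \zcref{eq:InflationPOVMRelations}, because the archimedean localizers make left multiplication bounded, as in \zcref{lem:SourceTransferCharacterReconstruction}. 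By the universal property, $L$ therefore extends to a state $\omega_m$ on $\cE_m(\Tree)$ satisfying \zcref{eq:InflationStateConstraints}.
\end{enumerate}

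This is the standard NPA convergence argument. The only additional point to check is that the symmetry and diagonal constraints are linear in the moments, so they survive the pointwise limit.

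Thus $p$ passes the level-$m$ inflation test for every $m$. The implication (i)$\Rightarrow$(iii) of \zcref{thm:InflationSourceTransferTreeEquivalence} then gives a mixed realization.

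The main obstacle, reconstructing independent sources on a tree, is already handled by \zcref{thm:InflationSourceTransferTreeEquivalence}. Within this corollary, the step needing care is the second part of step~3: checking that the POVM positivity survives as operator positivity in the GNS representation. I would argue it via $L(h^*Eh)\ge0$ and density of $\pi(\cdot)\ket{\Omega}$.
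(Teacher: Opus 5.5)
Your proposal is correct and follows the paper's proof. Necessity restricts the level-$m$ inflation states supplied by \zcref{thm:InflationSourceTransferTreeEquivalence}. Sufficiency uses a compactness argument at each fixed $m$ to produce a state on $\cE_m(\Tree)$ passing the POVM inflation test, and then applies the same theorem. The only difference is that you carry out the diagonal-extraction and GNS/universal-property step yourself, using the archimedean bound from the $\id-E^2$ localizers, whereas the paper simply cites the Banach--Alaoglu argument of~\cite[Section~2.2.3 and Lemma~5]{ligthart2023convergent}.
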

\begin{proof}
A mixed realization gives inflation states $\omega_m$ by \zcref{thm:InflationSourceTransferTreeEquivalence}, whose restrictions to $2d$-truncation satisfy every $(m,d)$-test.
Conversely, for each fixed $m$, the Banach--Alaoglu argument of \cite[Section~2.2.3 and Lemma~5]{ligthart2023convergent} gives a state $\omega_m$ on $\cE_m(\Tree)$ satisfying \zcref{def:InflationTestPOVM} whenever all degree-$d$ truncations are feasible.
Thus $p$ passes every POVM inflation test, and \zcref{thm:InflationSourceTransferTreeEquivalence} gives a mixed realization.
\end{proof}

\begin{remark}[POVM and projective formulations]\label{rem:InflationPOVMProjective}
The original inflation-NPA formulation imposes projective measurements.
The local dilation argument in~\cite[p.~4]{wolfe2021quantum} shows that every mixed-model correlation passes the projective tests.
Conversely, every correlation passing all projective tests also passes all POVM tests.
Thus \zcref{cor:InflationNPATreeConvergence} also establishes completeness of the original projective formulation.
\end{remark}

\begin{remark}[Flatness and the inflation order]\label{rem:InflationFlatnessStopping}
For a fixed inflation level $m$, the ordinary flatness condition of the NPA hierarchy~\cite[Theorem~10]{navascues2008convergent} applies to the inflated measurement algebra.
However, our reconstruction of a source-transfer realization relies on the de Finetti theorem~\cite{ligthart2023convergent}, which requires the limit $m\to\infty$ to extract a $\Gamma$-invariant state factorizing over disjoint inflated copies.
It is therefore unclear whether a sufficient flatness condition can be established at a fixed level $(m,d)$ to extract a source-transfer realization, or equivalently a mixed-model realization.
\end{remark}

\begin{remark}[Computability and convergence of optimal values]\label{rem:InflationGameComputability}
The same argument as in \zcref{cor:MixedTreeGameComputability} establishes $\mathsf{coRE}$-completeness of the mixed quantum tree game-value promise problem using the inflation-NPA hierarchy subject to the constraint that the winning probability is at least $3/4$.

Furthermore, the inflation-NPA optimization bounds $\beta_{m,d}^{\min}(f)$ and $\beta_{m,d}^{\max}(f)$ in \zcref{eq:InflationOptimizationBounds} converge to the mixed-model infimum and supremum of every real polynomial objective $f$ in the correlation entries as in \zcref{eq:InflationPolynomialObjectiveLift}.
The limiting state obtained by the same compactness argument as in \zcref{cor:InflationNPATreeConvergence} has a de Finetti decomposition $\omega=\int\varphi\,d\mu(\varphi)$ into $\Gamma$-invariant states factorizing over disjoint source labels.
For each component, put $p_\varphi(e)\coloneqq\varphi(D_e^{(1)})$.
Symmetry and factorization imply the inflation constraints for $p_\varphi$, which therefore admits a mixed realization by \zcref{thm:InflationSourceTransferTreeEquivalence}.
By \zcref{eq:InflationPolynomialObjectiveLift,eq:InflationOptimizationBounds}, we obtain
\begin{equation}\label{eq:InflationOptimizationDeFinettiBound}
    \begin{aligned}
        \varphi(\widehat{f})
        &=c_0+\sum_{i=1}^N c_i\prod_{j=1}^{k_i}\varphi(D_{e_{ij}}^{(j)})
        =f(p_\varphi),\\
        \inf_{p\in C_{\mathrm{mix}}(\Tree)}f(p)
        &\geq\lim_{m,d\to\infty}\beta_{m,d}^{\min}(f)
        =\omega(\widehat{f})
        =\int f(p_\varphi)\,d\mu(\varphi)
        \geq\inf_{p\in C_{\mathrm{mix}}(\Tree)}f(p).
    \end{aligned}
\end{equation}
Thus all inequalities are equalities, proving convergence for minimization.
Applying the same argument to $-f$ proves convergence for maximization.
\end{remark}

\section*{Acknowledgments}
We thank David Gross for insightful feedback.
This work was performed within the project COMPUTE, which is funded within the QuantERA II Programme that has received funding from the EU's H2020 research and innovation programme under the GA No.~101017733 {\normalsize\euflag}.
XX and MOR acknowledge funding by the ANR for the JCJC grants LINKS (No.~ANR-23-CE47-0003) and the T-ERC QNET (No.~ANR-24-ERCS-0008).
IK also acknowledges support of the Slovenian Research Agency program P1-0222 and grants J1-50002, N1-0217, J1-60011, J1-50001, J1-3004 and J1-60025.

\section*{Data availability statement}
Data sharing is not applicable to this article as no datasets were generated or analyzed during the current study.

\section*{Conflict of interest}
The authors have no competing interests to declare that are relevant to the content of this article.

\printbibliography

\appendix

\section{Explicit examples of source-transfer models and SDP relaxations}\label{sec:SimpleExample}
\begin{figure}[htbp]
    \centering
    \begin{subfigure}[t]{0.48\textwidth}
        \centering
        \begin{tikzpicture}[
            scale=0.70,
            transform shape,
            every node/.style={font=\small},
            party/.style={rectangle,draw,thick,minimum width=7mm,minimum height=7mm,inner sep=0pt},
            source/.style={circle,draw,thick,minimum size=8mm,inner sep=0pt},
            sourceinfo/.style={font=\scriptsize,align=center}
        ]
            \node[party]  (p1) at (0,0) {$1$};
            \node[source] (a)  at (1.7,0) {$\alpha$};
            \node[party]  (p2) at (3.4,0) {$2$};
            \node[source] (b)  at (5.1,0) {$\beta$};
            \node[party]  (p3) at (6.8,0) {$3$};
            \node[source] (c)  at (8.5,0) {$\gamma$};
            \node[party]  (p4) at (10.2,0) {$4$};

            \draw[thick] (p1)--(a)--(p2)--(b)--(p3)--(c)--(p4);
            \node[sourceinfo,above=3pt of a] {$\bigl(\cH_\alpha,\ket{\Omega_\alpha}\bigr)$};
            \node[sourceinfo,above=3pt of b] {$\bigl(\cH_\beta,\ket{\Omega_\beta}\bigr)$};
            \node[sourceinfo,above=3pt of c] {$\bigl(\cH_\gamma,\ket{\Omega_\gamma}\bigr)$};
        \end{tikzpicture}
        \caption{The line-of-four network $1-\alpha-2-\beta-3-\gamma-4$. We use roots $r=4$ and $r=3$. }
        \label{fig:appendix-line-four}
    \end{subfigure}
    \hfill
    \begin{subfigure}[t]{0.48\textwidth}
        \centering
        \begin{tikzpicture}[
            scale=0.74,
            transform shape,
            every node/.style={font=\small},
            party/.style={rectangle,draw,thick,minimum width=7mm,minimum height=7mm,inner sep=0pt},
            root/.style={party,fill=blue!10},
            source/.style={circle,draw,thick,minimum size=8mm,inner sep=0pt},
            sourceinfo/.style={font=\scriptsize,align=center}
        ]
            \node[root]   (p1) at (0,0) {$1$};
            \node[source] (a)  at (2.1,0) {$\alpha$};
            \node[party]  (p2) at (4.3,1.25) {$2$};
            \node[party]  (p3) at (4.3,-1.25) {$3$};
            \node[source] (b)  at (6.5,-1.25) {$\beta$};
            \node[party]  (p4) at (8.7,-1.25) {$4$};

            \draw[thick] (p1)--(a);
            \draw[thick] (a)--(p2);
            \draw[thick] (a)--(p3);
            \draw[thick] (p3)--(b)--(p4);
            \node[sourceinfo,above=3pt of a] {$\bigl(\cH_\alpha,\ket{\Omega_\alpha}\bigr)$};
            \node[sourceinfo,above=3pt of b] {$\bigl(\cH_\beta,\ket{\Omega_\beta}\bigr)$};
            \node[above=3pt of p1] {$r=1$};
        \end{tikzpicture}
        \caption{The four-party tripartite-tail network. The source $\alpha$ is shared by parties $1,2,3$, the source $\beta$ is shared by parties $3,4$, and the chosen root is $r=1$.}
        \label{fig:appendix-multipartite-four}
    \end{subfigure}
    \caption{
    Two source-party trees used to illustrate the source-transfer model and its SDP hierarchy explicitly.
    Squares denote measurement parties and circles denote independent sources.
    In the mixed model, each source $\delta$ carries a Hilbert space $\cH_\delta$ and a source vector $\ket{\Omega_\delta}$.
    }
    \label{fig:appendix-examples}
\end{figure}

In this appendix, we focus on two simple source-party trees to provide concrete examples for the main text. 
The line-of-four network in \zcref{fig:appendix-line-four} contains four measurement parties, who are arranged on a line sharing bipartite sources consecutively; we consider two possible choices of the root.
The four-party tripartite-tail network in \zcref{fig:appendix-multipartite-four} contains a tripartite source shared by parties $1,2,3$ and a bipartite source shared by parties $3,4$; we choose $1$ as the root.

For each example, we first derive the source-transfer model from a mixed realization, then reconstruct the mixed model explicitly from these source-transfer data, and finally write the corresponding specialization of the new multi-state SDP hierarchy in \zcref{def:SourceTransferSDPRelaxation}.
Even though the general results have already been proved in the main text, we make the discussion explicit and self-contained to motivate the definition of the general source-transfer model, clarify the mathematical proofs and calculations using special cases, and provide SDP programs for these concrete scenarios.

\subsection{Line-of-four network}\label{sec:AppendixLineFour}
Consider the source-party tree in \zcref{fig:appendix-line-four}
\begin{align}\label{eq:AppendixLineGraph}
    1-\alpha-2-\beta-3-\gamma-4.
\end{align}

\paragraph{The mixed model.}
By \zcref{def:MixedQuantumModelSourcePartyGraph}, its mixed realization consists of source Hilbert spaces $\cH_\alpha,\cH_\beta,\cH_\gamma$, the product source vector
\begin{align}
    \ket{\Omega_{\mathrm{line}}}\coloneqq\ket{\Omega_\alpha\otimes\Omega_\beta\otimes\Omega_\gamma},
\end{align}
and POVMs satisfying
\begin{equation}\label{eq:AppendixLineMixedLocations}
    \begin{aligned}
        \Meas{1}{a_1}{x_1}&\in\cM_\alpha^1, &
        \Meas{2}{a_2}{x_2}&\in\cM_\alpha^2\vnotimes\cM_\beta^2,\\
        \Meas{3}{a_3}{x_3}&\in\cM_\beta^3\vnotimes\cM_\gamma^3, &
        \Meas{4}{a_4}{x_4}&\in\cM_\gamma^4
    \end{aligned}
\end{equation}
with the required commutation on the von Neumann algebras.
Using the natural amplification of the POVMs to the global Hilbert space, the correlation is
\begin{align}\label{eq:AppendixLineMixedCorrelation}
    p_{\mathrm{line}}(\mathbf{a}|\mathbf{x})=\sandwich{\Omega_{\mathrm{line}}}{\prod_{j=1}^4\Meas{j}{a_j}{x_j}}{\Omega_{\mathrm{line}}}.
\end{align}

\subsubsection{Rooted at party \texorpdfstring{$4$}{4}}\label{sec:AppendixLineRootFour}
With root $r=4$, the parents of the parties are $\Parent{1} = \alpha$, $\Parent{2} = \beta$, and $\Parent{3} = \gamma$.
The child-party sets of the three sources are $\Children{\alpha}=\{1\}$, $\Children{\beta}=\{2\}$, and $\Children{\gamma}=\{3\}$, hence the joint source algebras have only one tensor factor.
Starting from the mixed realization, let
\begin{align}\label{eq:AppendixLineRootFourAlgebras}
    \cA_\alpha^1&\coloneqq\cM_\alpha^1, & \JoinAlg{\alpha}&=\cA_\alpha^1, &
    \cA_\beta^2&\coloneqq\cM_\beta^2, & \JoinAlg{\beta}&=\cA_\beta^2, &
    \cA_\gamma^3&\coloneqq\cM_\gamma^3, & \JoinAlg{\gamma}&=\cA_\gamma^3.
\end{align}
Let $\omega_\alpha,\omega_\beta,\omega_\gamma$ be the states induced by $\ket{\Omega_{\alpha}}, \ket{\Omega_{\beta}}, \ket{\Omega_{\gamma}}$ and restricted to $\JoinAlg{\alpha}, \JoinAlg{\beta}, \JoinAlg{\gamma}$, respectively, i.e.,
\begin{align}
	\omega_{\delta}: \JoinAlg{\delta} \to \mathbb{C}, \qquad q \mapsto \sandwich{\Omega_{\delta}}{q}{\Omega_{\delta}}
\end{align}
for $\delta = \alpha, \beta, \gamma$.

On the only non-root leaf $1$, we let
\begin{align}
    \meas{1}{a_1}{x_1}\coloneqq\Meas{1}{a_1}{x_1}\in\cA_\alpha^1.
\end{align}
The internal parties $2$ and $3$ define transfer maps
\begin{align}
    \instr{2}{a_2}{x_2}&:\JoinAlg{\alpha}\to\cA_\beta^2, &
    \instr{3}{a_3}{x_3}&:\JoinAlg{\beta}\to\cA_\gamma^3,
\end{align}
by
\begin{equation}\label{eq:AppendixLineRootFourTransferMaps}
    \begin{aligned}
        \instr{2}{a_2}{x_2}(q)&\coloneqq(\bra{\Omega_\alpha})\bigl(\Meas{2}{a_2}{x_2}\bigr)^{1/2}(q\otimes I_{\cH_\beta})\bigl(\Meas{2}{a_2}{x_2}\bigr)^{1/2}(\ket{\Omega_\alpha}),\\
        \instr{3}{a_3}{x_3}(s)&\coloneqq(\bra{\Omega_\beta})\bigl(\Meas{3}{a_3}{x_3}\bigr)^{1/2}(s\otimes I_{\cH_\gamma})\bigl(\Meas{3}{a_3}{x_3}\bigr)^{1/2}(\ket{\Omega_\beta}).
    \end{aligned}
\end{equation}
Their ranges lie in $\cA_{\beta}^2$ and $\cA_{\gamma}^3$, respectively.
They are completely positive since tensoring with identity, conjugation by positive operators, and sandwiching by vectors are all completely positive operations.
Note also the connection to the partial sandwich form of Radon--Nikodym derivative in \zcref{rem:RadonNikodymSandwichForm}.
POVM completeness gives the source-replacer condition with respect to $\omega_{\alpha}$ and $\omega_{\beta}$.
\begin{align}\label{eq:AppendixLineRootFourReplacers}
    \sum_{a_2}\instr{2}{a_2}{x_2}(q)&=\omega_\alpha(q)\id_{\cA_\beta^2}, &
    \sum_{a_3}\instr{3}{a_3}{x_3}(s)&=\omega_\beta(s)\id_{\cA_\gamma^3}.
\end{align}
At the root $4$, define
\begin{align}\label{eq:AppendixLineRootFourFunctional}
    \rstate{4}{a_4}{x_4}(t)\coloneqq\sandwich{\Omega_\gamma}{\bigl(\Meas{4}{a_4}{x_4}\bigr)^{1/2}t\bigl(\Meas{4}{a_4}{x_4}\bigr)^{1/2}}{\Omega_\gamma},
\end{align}
so that $\sum_{a_4}\rstate{4}{a_4}{x_4}=\omega_\gamma$.
The source-transfer expression for the correlation then reads
\begin{align}\label{eq:AppendixLineRootFourCorrelation}
    p_{\mathrm{line}}(\mathbf{a}|\mathbf{x})=\rstate{4}{a_4}{x_4}\!\left(\instr{3}{a_3}{x_3}\!\left(\instr{2}{a_2}{x_2}\!\left(\meas{1}{a_1}{x_1}\right)\right)\right).
\end{align}

\subsubsection{Rooted at party \texorpdfstring{$3$}{3}}\label{sec:AppendixLineRootThree}
We consider a different choice of root, $r = 3$.
In this case, the leaf party $4$ transfers its information to $3$ through the source $\gamma$, while the left branch starts from the leaf party $1$, transferring through party $2$ and eventually reaching party $3$.
We write
\begin{align}
    \cA_\alpha^1&\coloneqq\cM_\alpha^1, & \JoinAlg{\alpha}&=\cA_\alpha^1, &
    \cA_\beta^2&\coloneqq\cM_\beta^2, & \JoinAlg{\beta}&=\cA_\beta^2, &
    \cA_\gamma^4&\coloneqq\cM_\gamma^4, & \JoinAlg{\gamma}&=\cA_\gamma^4.
\end{align}
The leaf POVMs are
\begin{align}
    \meas{1}{a_1}{x_1}&\coloneqq\Meas{1}{a_1}{x_1}\in\cA_\alpha^1, &
    \meas{4}{a_4}{x_4}&\coloneqq\Meas{4}{a_4}{x_4}\in\cA_\gamma^4.
\end{align}
The transfer map for party $2$ $\instr{2}{a_2}{x_2}:\JoinAlg{\alpha}\to\cA_\beta^2$ is defined as in \zcref{eq:AppendixLineRootFourTransferMaps}, satisfying the first source-replacer condition in \zcref{eq:AppendixLineRootFourReplacers}.
The root functionals are
\begin{align}\label{eq:AppendixLineRootThreeFunctional}
    \rstate{3}{a_3}{x_3}(s)\coloneqq\sandwich{\Omega_\beta\otimes\Omega_\gamma}{\bigl(\Meas{3}{a_3}{x_3}\bigr)^{1/2}s\bigl(\Meas{3}{a_3}{x_3}\bigr)^{1/2}}{\Omega_\beta\otimes\Omega_\gamma},
\end{align}
for $s\in\JoinAlg{\beta}\maxotimes\JoinAlg{\gamma}$, and satisfy
\begin{align}
    \sum_{a_3}\rstate{3}{a_3}{x_3}=\omega_\beta\otimes\omega_\gamma.
\end{align}
Thus, one checks that the correlation can be rewritten as
\begin{align}\label{eq:AppendixLineRootThreeCorrelation}
    p_{\mathrm{line}}(\mathbf{a}|\mathbf{x})=\rstate{3}{a_3}{x_3}\!\left(\instr{2}{a_2}{x_2}\!\left(\meas{1}{a_1}{x_1}\right)\otimes\meas{4}{a_4}{x_4}\right).
\end{align}

We consider the following specialization of \zcref{thm:SourceTransferTreeEquivalence} to line-of-four networks.

\begin{proposition}[Two source-transfer descriptions of the line of four]\label{prop:AppendixLineEquivalence}
For a correlation $p_{\mathrm{line}}(\mathbf{a}|\mathbf{x})$ on the network described by \zcref{eq:AppendixLineGraph}, the following are equivalent:
\begin{enumerate}[label=(\roman*)]
    \item $p_{\mathrm{line}}$ admits a mixed realization satisfying \zcref{eq:AppendixLineMixedLocations,eq:AppendixLineMixedCorrelation};
    \item $p_{\mathrm{line}}$ admits the source-transfer realization rooted at $4$ described by \zcref{eq:AppendixLineRootFourAlgebras,eq:AppendixLineRootFourTransferMaps,eq:AppendixLineRootFourReplacers,eq:AppendixLineRootFourFunctional,eq:AppendixLineRootFourCorrelation};
    \item $p_{\mathrm{line}}$ admits the source-transfer realization rooted at $3$ described by \zcref{eq:AppendixLineRootThreeFunctional,eq:AppendixLineRootThreeCorrelation} and the preceding source algebras, leaf POVMs, and the transfer map.
\end{enumerate}
\end{proposition}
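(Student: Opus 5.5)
This proposition is the specialization of \zcref{thm:SourceTransferTreeEquivalence} to the tree \zcref{eq:AppendixLineGraph}, taken with roots $r=4$ and $r=3$. The plan is to prove (i)$\Rightarrow$(ii), (i)$\Rightarrow$(iii), (ii)$\Rightarrow$(i) and (iii)$\Rightarrow$(i), mirroring the two directions of that proof. We make the constructions explicit rather than only citing the general theorem.

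\emph{From the mixed model to either rooted model.} Take the data \zcref{eq:AppendixLineRootFourAlgebras,eq:AppendixLineRootFourTransferMaps,eq:AppendixLineRootFourFunctional}. Complete positivity and the source-replacer identities \zcref{eq:AppendixLineRootFourReplacers} follow from POVM completeness and the commutation of $\Meas{v}{a_v}{x_v}$ with $q\otimes I$. The correlation identity \zcref{eq:AppendixLineRootFourCorrelation} comes from the inductive partial-sandwich claim \zcref{eq:TreeSubtreeTransferIdentity}, which here has two steps:
\[
\instr{2}{a_2}{x_2}(\meas{1}{a_1}{x_1})=\sandwich{\Omega_\alpha}{\Meas{1}{a_1}{x_1}\Meas{2}{a_2}{x_2}}{\Omega_\alpha}\in\cM_\beta^2 ,
\]
and similarly at party $3$ after sandwiching over $\Omega_\alpha\otimes\Omega_\beta$. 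Each step uses $[\Meas{2}{a_2}{x_2},\Meas{1}{a_1}{x_1}\otimes I]=0$ to write $M^{1/2}(\cdot)M^{1/2}$ as a product. For root $3$ the same computation applies to the left branch. The right branch is the leaf $4$, and $\rstate{3}{a_3}{x_3}$ is a sandwich over $\Omega_\beta\otimes\Omega_\gamma$, so \zcref{eq:AppendixLineRootThreeCorrelation} follows.

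\emph{From either rooted model to the mixed model.} Apply the GNS construction to each $(\JoinAlg{\delta},\omega_\delta)$ and set $\cM_\delta^{v}=\pi_\delta(\cA_\delta^v)''$ for child parties $v$, and $\cM_\delta^{\Parent{\delta}}=\cM_\delta'$ for the parent party. Leaf POVMs are the represented elements. For internal parties, \zcref{cor:RadonNikodymSourceProductReplacer} is applied with $k=1$; for root $3$ it is also applied at party $2$. This produces POVMs, for example $\Meas{2}{a_2}{x_2}\in\cM_\alpha'\vnotimes\cM_\beta^2$, in the locations required by \zcref{eq:AppendixLineMixedLocations}. The root POVMs come from \zcref{rem:FunctionalProductRadonNikodymIdentity}; for root $3$ this uses the product state $\omega_\beta\otimes\omega_\gamma$ and gives $\Meas{3}{a_3}{x_3}\in\cM_\beta'\vnotimes\cM_\gamma'$. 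We then unwind the composition using the partial-sandwich form \zcref{eq:ProductRadonNikodymPartialSandwich} at each step to recover \zcref{eq:AppendixLineMixedCorrelation}.

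The main obstacle is this reverse direction, specifically verifying the localization and commutation relations. For root $4$, $\Meas{3}{a_3}{x_3}$ must lie in $\cM_\beta'\vnotimes\cM_\gamma^3$, so that it commutes with $\Meas{2}{a_2}{x_2}\in\cM_\alpha'\vnotimes\cM_\beta^2$ on the $\beta$ factor. This is exactly why the transfer map at $2$ must take values in $\cA_\beta^2$, and the localization is supplied by the $\cZ$-part of \zcref{cor:RadonNikodymSourceProductReplacer}. With that settled, the remaining work is bookkeeping: moving partial sandwiches past operators acting on other tensor factors, with the product-GNS identification supplied by \zcref{cor:RadonNikodymSourceProductReplacer}.
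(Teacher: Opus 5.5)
Your proposal is correct and follows the paper's proof. The forward implications are the explicit rooted constructions, and the converses use the source GNS representations with endpoint algebras $\cM_\delta$ for the child party and $\cM_\delta'$ for the parent, the product Radon--Nikodym corollary at internal parties and the functional remark at the root, and then unwind the composition to Born's rule; the differences are cosmetic: the paper unwinds via the expectation-value identity (with $Z$ in the commutant of the output algebra) rather than the partial-sandwich form, and since $\cA_\beta^2=\JoinAlg{\beta}$ on this line, the localization you highlight is simply $\Meas{2}{a_2}{x_2}\in\cM_\alpha'\vnotimes\cM_\beta$ rather than a genuine subalgebra restriction.
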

\begin{proof}
The constructions in \zcref{sec:AppendixLineRootFour,sec:AppendixLineRootThree} prove $(\mathrm{i})\Rightarrow(\mathrm{ii})$ and $(\mathrm{i})\Rightarrow(\mathrm{iii})$.
We spell out the converse calculations, which are completely analogous to those of \zcref{thm:SourceTransferTreeEquivalence}.

Suppose first that the source-transfer realization rooted at $4$ is given.
Let $\gns{\alpha}$, $\gns{\beta}$, and $\gns{\gamma}$ be the GNS representations of the three source states, and set
\begin{align}
    \cM_\alpha&\coloneqq\pi_\alpha(\JoinAlg{\alpha})'', &
    \cM_\beta&\coloneqq\pi_\beta(\JoinAlg{\beta})'', &
    \cM_\gamma&\coloneqq\pi_\gamma(\JoinAlg{\gamma})''.
\end{align}
Identify the endpoint algebras by
\begin{align}\label{eq:AppendixLineRootFourEndpointAlgebras}
    \cM_\alpha^1&=\cM_\alpha, & \cM_\alpha^2&=\cM_\alpha', &
    \cM_\beta^2&=\cM_\beta, & \cM_\beta^3&=\cM_\beta', &
    \cM_\gamma^3&=\cM_\gamma, & \cM_\gamma^4&=\cM_\gamma'.
\end{align}
Applying \zcref{cor:RadonNikodymSourceProductReplacer,rem:FunctionalProductRadonNikodymIdentity} gives
\begin{equation}\label{eq:AppendixLineRootFourRecoveredLocations}
    \begin{aligned}
        \Meas{1}{a_1}{x_1}&\coloneqq\pi_\alpha\!\left(\meas{1}{a_1}{x_1}\right)\in\cM_\alpha, & \qquad
        \Meas{2}{a_2}{x_2}&\in\cM_\alpha'\vnotimes\cM_\beta,\\
        \Meas{3}{a_3}{x_3}&\in\cM_\beta'\vnotimes\cM_\gamma, &
        \Meas{4}{a_4}{x_4}&\in\cM_\gamma'.
    \end{aligned}
\end{equation}
The correlation is recovered by applying the expectation-value identities successively:
\begin{equation}\label{eq:AppendixLineRootFourRecoveredCorrelation}
\begin{aligned}
    p_{\mathrm{line}}(\mathbf{a}|\mathbf{x})
    &=\rstate{4}{a_4}{x_4}\!\left(\instr{3}{a_3}{x_3}\!\left(\instr{2}{a_2}{x_2}\!\left(\meas{1}{a_1}{x_1}\right)\right)\right)\\
    &=\sandwich{\Omega_\gamma}{\Meas{4}{a_4}{x_4}\,\pi_\gamma\!\left(\instr{3}{a_3}{x_3}\!\left(\instr{2}{a_2}{x_2}\!\left(\meas{1}{a_1}{x_1}\right)\right)\right)}{\Omega_\gamma}\\
    &=\sandwich{\Omega_\beta\otimes\Omega_\gamma}{\Meas{3}{a_3}{x_3}\left(\pi_\beta\!\left(\instr{2}{a_2}{x_2}\!\left(\meas{1}{a_1}{x_1}\right)\right)\otimes\Meas{4}{a_4}{x_4}\right)}{\Omega_\beta\otimes\Omega_\gamma}\\
    &=\sandwich{\Omega_{\mathrm{line}}}{\bigl(\Meas{2}{a_2}{x_2}\otimes I_{\cH_\gamma}\bigr)\left(\pi_\alpha\!\left(\meas{1}{a_1}{x_1}\right)\otimes\Meas{3}{a_3}{x_3}\bigl(I_{\cH_\beta}\otimes\Meas{4}{a_4}{x_4}\bigr)\right)}{\Omega_{\mathrm{line}}}\\
    &=\sandwich{\Omega_{\mathrm{line}}}{\prod_{j=1}^4\Meas{j}{a_j}{x_j}}{\Omega_{\mathrm{line}}}.
\end{aligned}
\end{equation}
Due to the commutation relations in \zcref{eq:AppendixLineRootFourRecoveredLocations}, the second equality uses \zcref{eq:FunctionalProductRadonNikodymIdentity}; the third and fourth use \zcref{eq:ProductRadonNikodymExpectationIdentity} at parties $3$ and $2$, respectively.

Now suppose that the source-transfer realization rooted at $3$ is given instead.
Use the same GNS notation, but identify
\begin{align}\label{eq:AppendixLineRootThreeEndpointAlgebras}
    \cM_\alpha^1&=\cM_\alpha, & \cM_\alpha^2&=\cM_\alpha', &
    \cM_\beta^2&=\cM_\beta, & \cM_\beta^3&=\cM_\beta', &
    \cM_\gamma^4&=\cM_\gamma, & \cM_\gamma^3&=\cM_\gamma'.
\end{align}
The reconstructed POVMs by \zcref{cor:RadonNikodymSourceProductReplacer,rem:FunctionalProductRadonNikodymIdentity} satisfy
\begin{equation}\label{eq:AppendixLineRootThreeRecoveredLocations}
    \begin{aligned}
        \Meas{1}{a_1}{x_1}&\coloneqq\pi_\alpha\!\left(\meas{1}{a_1}{x_1}\right)\in\cM_\alpha, & \qquad
        \Meas{2}{a_2}{x_2}&\in\cM_\alpha'\vnotimes\cM_\beta,\\
        \Meas{3}{a_3}{x_3}&\in\cM_\beta'\vnotimes\cM_\gamma', &
        \Meas{4}{a_4}{x_4}&\coloneqq\pi_\gamma\!\left(\meas{4}{a_4}{x_4}\right)\in\cM_\gamma.
    \end{aligned}
\end{equation}
The corresponding calculation is
\begin{equation}\label{eq:AppendixLineRootThreeRecoveredCorrelation}
\begin{aligned}
    p_{\mathrm{line}}(\mathbf{a}|\mathbf{x})
    &=\rstate{3}{a_3}{x_3}\!\left(\instr{2}{a_2}{x_2}\!\left(\meas{1}{a_1}{x_1}\right)\otimes\meas{4}{a_4}{x_4}\right)\\
    &=\sandwich{\Omega_\beta\otimes\Omega_\gamma}{\Meas{3}{a_3}{x_3}\left(\pi_\beta\!\left(\instr{2}{a_2}{x_2}\!\left(\meas{1}{a_1}{x_1}\right)\right)\otimes\pi_\gamma\!\left(\meas{4}{a_4}{x_4}\right)\right)}{\Omega_\beta\otimes\Omega_\gamma}\\
    &=\sandwich{\Omega_{\mathrm{line}}}{\bigl(\Meas{2}{a_2}{x_2}\otimes I_{\cH_\gamma}\bigr)\left(\pi_\alpha\!\left(\meas{1}{a_1}{x_1}\right)\otimes\Meas{3}{a_3}{x_3}\bigl(I_{\cH_\beta}\otimes\pi_\gamma\!\left(\meas{4}{a_4}{x_4}\right)\bigr)\right)}{\Omega_{\mathrm{line}}}\\
    &=\sandwich{\Omega_{\mathrm{line}}}{\prod_{j=1}^4\Meas{j}{a_j}{x_j}}{\Omega_{\mathrm{line}}}.
\end{aligned}
\end{equation}
Here the first equality after the source-transfer expression is \zcref{eq:FunctionalProductRadonNikodymIdentity}, and the next equality, corresponding to the transfer-map at $2$, follows from \zcref{eq:ProductRadonNikodymExpectationIdentity}.
This proves both converses.
\end{proof}

\subsubsection{Source-transfer SDP relaxations for the two roots}\label{sec:AppendixLineSDP}
We now specialize \zcref{def:SourceTransferSDPRelaxation} to the two root choices for the line-of-four network.

\paragraph{Root $r=4$.}
The formal source algebras are constructed sequentially:
\begin{equation}
    \begin{aligned}
        \fA_\alpha&=\fA_\alpha^1=\!{}^*\!\operatorname{alg}\!\left(\meas{1}{a_1}{x_1}\right), &
    		\fA_\beta&=\fA_\beta^2=\!{}^*\!\operatorname{alg}\!\left(\formalInstr{2}{a_2}{x_2}(q):q\in\fA_\alpha\right),\\
    		\fA_\gamma&=\fA_\gamma^3=\!{}^*\!\operatorname{alg}\!\left(\formalInstr{3}{a_3}{x_3}(s):s\in\fA_\beta\right).
    \end{aligned}
\end{equation}
Thus $\InPolyAlg{2}=\fA_\alpha$, $\InPolyAlg{3}=\fA_\beta$, and $\InPolyAlg{4}=\fA_\gamma$.
Introduce formal state and root functional symbols $\varsigma_{\alpha}$, $\varsigma_{\beta}$, $\varsigma_{\gamma}$, $\rPolystate{a_r}{x_r}$, and the associated multi-state polynomial algebras for the corresponding formal $*$-algebras.
The source-replacer and correlation polynomials are
\begin{equation}\label{eq:AppendixLineRootFourSDPPolynomials}
	\begin{aligned}
		h_{x_2}^{2}(p_1,q,p_2)&=\varsigma_\beta\!\left(p_1^*\left[\sum_{a_2}\formalInstr{2}{a_2}{x_2}(q)-\varsigma_\alpha(q)\right]p_2\right),\\
   	 	h_{x_3}^{3}(c_1,s,c_2)&=\varsigma_\gamma\!\left(c_1^*\left[\sum_{a_3}\formalInstr{3}{a_3}{x_3}(s)-\varsigma_\beta(s)\right]c_2\right),\\
    		h_{x_4}^{4}(t)&=\sum_{a_4}\rPolystate{a_4}{x_4}(t)-\varsigma_\gamma(t),\\
    		h_{\mathbf{a}|\mathbf{x}}^{(4)}&=\rPolystate{a_4}{x_4}\!\left(\formalInstr{3}{a_3}{x_3}\!\left(\formalInstr{2}{a_2}{x_2}\!\left(\meas{1}{a_1}{x_1}\right)\right)\right).
	\end{aligned}
\end{equation}
The two complete-positivity matrices are
\begin{equation}\label{eq:AppendixLineRootFourCPMatrices}
\begin{aligned}
    \left[C_{d-1}\!\left(\formalInstr{2}{a_2}{x_2};L_d\right)\right]_{(p_1,q_1),(p_2,q_2)}
    &=L_d\!\left(\varsigma_\beta\!\left(p_1^*\formalInstr{2}{a_2}{x_2}(q_1^*q_2)p_2\right)\right),\\
    \left[C_{d-1}\!\left(\formalInstr{3}{a_3}{x_3};L_d\right)\right]_{(c_1,s_1),(c_2,s_2)}
    &=L_d\!\left(\varsigma_\gamma\!\left(c_1^*\formalInstr{3}{a_3}{x_3}(s_1^*s_2)c_2\right)\right),
\end{aligned}
\end{equation}
where $p_i\in\operatorname{Mon}(\fatS_\beta)$, $q_i\in\operatorname{Mon}(\fatS_\alpha)$ with $\deg(p_i)+\deg(q_i)\leq d-1$, and $c_i\in\operatorname{Mon}(\fatS_\gamma)$, $s_i\in\operatorname{Mon}(\fatS_\beta)$ with $\deg(c_i)+\deg(s_i)\leq d-1$.

At level $d$, by identifying the generators iteratively, the source and root archimedean conditions are
\begin{equation}\label{eq:AppendixLineRootFourArchimedean}
\begin{alignedat}{2}
    &H_{d-1}^{\alpha}\!\left(\id-\bigl(\meas{1}{a_1}{x_1}\bigr)^*\meas{1}{a_1}{x_1};L_d\right)\succeq0,
    &&\quad(a_1,x_1)\in\mathsf{A}_1\times\mathsf{X}_1,\\
    &H_{d-1-\deg(q)}^{\beta}\!\left(\id-\bigl(\formalInstr{2}{a_2}{x_2}(q)\bigr)^*\formalInstr{2}{a_2}{x_2}(q);L_d\right)\succeq0,
    &&\quad\substack{q\in\operatorname{Mon}(\fA_\alpha),\\(a_2,x_2)\in\mathsf{A}_2\times\mathsf{X}_2},\\
    &H_{d-1-\deg(s)}^{\gamma}\!\left(\id-\bigl(\formalInstr{3}{a_3}{x_3}(s)\bigr)^*\formalInstr{3}{a_3}{x_3}(s);L_d\right)\succeq0,
    &&\quad\substack{s\in\operatorname{Mon}(\fA_\beta),\\(a_3,x_3)\in\mathsf{A}_3\times\mathsf{X}_3},\\
    &H_{d-2-\deg(s)}^{a_4|x_4}\!\left(\id-\bigl(\formalInstr{3}{a_3}{x_3}(s)\bigr)^*\formalInstr{3}{a_3}{x_3}(s);L_d\right)\succeq0,
    &&\quad\substack{s\in\operatorname{Mon}(\fA_\beta),\\(a_3,x_3)\in\mathsf{A}_3\times\mathsf{X}_3,\\(a_4,x_4)\in\mathsf{A}_4\times\mathsf{X}_4}.
\end{alignedat}
\end{equation}
The remaining level-$d$ constraints are
\begin{equation}\label{eq:AppendixLineRootFourSDP}
\begin{alignedat}{3}
    &H_d^\delta(L_d)\succeq0, &&\quad\delta\in\{\alpha,\beta,\gamma\}, &&\text{(source states)}\\
    &H_{d-1}^{a_4|x_4}(L_d)\succeq0, &&\quad(a_4,x_4)\in\mathsf{A}_4\times\mathsf{X}_4, &&\text{(root functionals)}\\
    &H_{d-1}^{\alpha}\!\left(\meas{1}{a_1}{x_1};L_d\right)\succeq0, &&\quad(a_1,x_1)\in\mathsf{A}_1\times\mathsf{X}_1, &&\text{(leaf POVM)}\\
    &C_{d-1}\!\left(\formalInstr{j}{a_j}{x_j};L_d\right)\succeq0, &&\quad j\in\{2,3\},\ (a_j,x_j)\in\mathsf{A}_j\times\mathsf{X}_j, &&\text{(complete positivity)}\\
    &L_d\!\left(h_{x_2}^{2}(p_1,q,p_2)\right)=0, &&\quad p_1,p_2\in\operatorname{Mon}(\fatS_\beta),\ q\in\operatorname{Mon}(\fatS_\alpha), &&\text{(replacer at $2$)}\\
    &L_d\!\left(h_{x_3}^{3}(c_1,s,c_2)\right)=0, &&\quad c_1,c_2\in\operatorname{Mon}(\fatS_\gamma),\ s\in\operatorname{Mon}(\fatS_\beta), &&\text{(replacer at $3$)}\\
    &L_d\!\left(h_{x_4}^{4}(t)\right)=0, &&\quad t\in\operatorname{Mon}(\fatS_\gamma), &&\text{(replacer at $4$).}
\end{alignedat}
\end{equation}
All equalities are imposed whenever the total degree is at most $2d$.
The prescribed-correlation constraints are
\begin{align}
    L_d\!\left(s_1^*\left[h_{\mathbf{a}|\mathbf{x}}^{(4)}-p_{\mathrm{line}}(\mathbf{a}|\mathbf{x})\right]s_2\right)=0
\end{align}
for all compatible multi-state monomials $s_1,s_2$.
Since $\deg(h_{\mathbf{a}|\mathbf{x}}^{(4)})=4$, the correlation is first imposable at $d=2$.

\paragraph{Root $r=3$.}
Now $\fA_\alpha=\fA_\alpha^1$ is generated by $\meas{1}{a_1}{x_1}$, $\fA_\beta=\fA_\beta^2$ is generated by $\formalInstr{2}{a_2}{x_2}(q)$, and $\fA_\gamma=\fA_\gamma^4$ is generated by $\meas{4}{a_4}{x_4}$.
Introduce the corresponding formal state and root functional symbols $\varsigma_{\alpha}$, $\varsigma_{\beta}$, $\varsigma_{\gamma}$, and $\rPolystate{a_r}{x_r}$, and the associated multi-state polynomial algebras.
The source-replacer and correlation polynomials are
\begin{equation}\label{eq:AppendixLineRootThreeSDPPolynomials}
	\begin{aligned}
		h_{x_2}^{2}(p_1,q,p_2)&=\varsigma_\beta\!\left(p_1^*\left[\sum_{a_2}\formalInstr{2}{a_2}{x_2}(q)-\varsigma_\alpha(q)\right]p_2\right),\\
    		h_{x_3}^{3}(t)&=\sum_{a_3}\rPolystate{a_3}{x_3}(t)-(\varsigma_\beta\varsigma_\gamma)(t),\\
    		h_{\mathbf{a}|\mathbf{x}}^{(3)}&=\rPolystate{a_3}{x_3}\!\left(\formalInstr{2}{a_2}{x_2}\!\left(\meas{1}{a_1}{x_1}\right)\otimes\meas{4}{a_4}{x_4}\right),
	\end{aligned}
\end{equation}
where $(\varsigma_\beta\varsigma_\gamma)(s\otimes t)=\varsigma_\beta(s)\varsigma_\gamma(t)$.
The complete-positivity matrix is
\begin{align}\label{eq:AppendixLineRootThreeCPMatrix}
    \left[C_{d-1}\!\left(\formalInstr{2}{a_2}{x_2};L_d\right)\right]_{(p_1,q_1),(p_2,q_2)}
    =L_d\!\left(\varsigma_\beta\!\left(p_1^*\formalInstr{2}{a_2}{x_2}(q_1^*q_2)p_2\right)\right),
\end{align}
where $p_i\in\operatorname{Mon}(\fatS_\beta)$ and $q_i\in\operatorname{Mon}(\fatS_\alpha)$ satisfy $\deg(p_i)+\deg(q_i)\leq d-1$.

At level $d$, the source and root archimedean conditions are
\begin{equation}\label{eq:AppendixLineRootThreeArchimedean}
\begin{alignedat}{2}
    &H_{d-1}^{\alpha}\!\left(\id-\bigl(\meas{1}{a_1}{x_1}\bigr)^*\meas{1}{a_1}{x_1};L_d\right)\succeq0,
    &&\quad(a_1,x_1)\in\mathsf{A}_1\times\mathsf{X}_1,\\
    &H_{d-1-\deg(q)}^{\beta}\!\left(\id-\bigl(\formalInstr{2}{a_2}{x_2}(q)\bigr)^*\formalInstr{2}{a_2}{x_2}(q);L_d\right)\succeq0,
    &&\quad\substack{q\in\operatorname{Mon}(\fA_\alpha),\\(a_2,x_2)\in\mathsf{A}_2\times\mathsf{X}_2},\\
    &H_{d-1}^{\gamma}\!\left(\id-\bigl(\meas{4}{a_4}{x_4}\bigr)^*\meas{4}{a_4}{x_4};L_d\right)\succeq0,
    &&\quad(a_4,x_4)\in\mathsf{A}_4\times\mathsf{X}_4,\\
    &H_{d-2-\deg(q)}^{a_3|x_3}\!\left(\id-\bigl(\formalInstr{2}{a_2}{x_2}(q)\otimes\id\bigr)^*\bigl(\formalInstr{2}{a_2}{x_2}(q)\otimes\id\bigr);L_d\right)\succeq0,
    &&\quad\substack{q\in\operatorname{Mon}(\fA_\alpha),\\(a_2,x_2)\in\mathsf{A}_2\times\mathsf{X}_2,\\(a_3,x_3)\in\mathsf{A}_3\times\mathsf{X}_3},\\
    &H_{d-2}^{a_3|x_3}\!\left(\id-\bigl(\id\otimes\meas{4}{a_4}{x_4}\bigr)^*\bigl(\id\otimes\meas{4}{a_4}{x_4}\bigr);L_d\right)\succeq0,
    &&\quad\substack{(a_4,x_4)\in\mathsf{A}_4\times\mathsf{X}_4,\\(a_3,x_3)\in\mathsf{A}_3\times\mathsf{X}_3}.
\end{alignedat}
\end{equation}
The remaining level-$d$ constraints are
\begin{equation}\label{eq:AppendixLineRootThreeSDP}
\begin{alignedat}{3}
    &H_d^\delta(L_d)\succeq0, &&\quad\delta\in\{\alpha,\beta,\gamma\}, &&\text{(source states)}\\
    &H_{d-1}^{a_3|x_3}(L_d)\succeq0, &&\quad(a_3,x_3)\in\mathsf{A}_3\times\mathsf{X}_3, &&\text{(root functionals)}\\
    &H_{d-1}^{\alpha}\!\left(\meas{1}{a_1}{x_1};L_d\right)\succeq0, &&\quad(a_1,x_1)\in\mathsf{A}_1\times\mathsf{X}_1, &&\text{(leaf at $1$)}\\
    &H_{d-1}^{\gamma}\!\left(\meas{4}{a_4}{x_4};L_d\right)\succeq0, &&\quad(a_4,x_4)\in\mathsf{A}_4\times\mathsf{X}_4, &&\text{(leaf at $4$)}\\
    &C_{d-1}\!\left(\formalInstr{2}{a_2}{x_2};L_d\right)\succeq0, &&\quad(a_2,x_2)\in\mathsf{A}_2\times\mathsf{X}_2, &&\text{(complete positivity)}\\
    &L_d\!\left(h_{x_2}^{2}(p_1,q,p_2)\right)=0, &&\quad p_1,p_2\in\operatorname{Mon}(\fatS_\beta),\ q\in\operatorname{Mon}(\fatS_\alpha), &&\text{(replacer at $2$)}\\
    &L_d\!\left(h_{x_3}^{3}(t)\right)=0, &&\quad t\in\operatorname{Mon}(\InfatS{3}), &&\text{(replacer at $3$).}
\end{alignedat}
\end{equation}
The prescribed-correlation constraints are generated by $h_{\mathbf{a}|\mathbf{x}}^{(3)}-p_{\mathrm{line}}(\mathbf{a}|\mathbf{x})$ and again first testable at $d=2$.

This explicitly shows how the two roots lead to different finite SDP descriptions of the same limiting correlation set.

\subsection{Four-party tripartite-tail network}\label{sec:AppendixMultipartiteFour}
Consider the source-party tree in \zcref{fig:appendix-multipartite-four}:
\begin{align}\label{eq:AppendixMultipartiteGraph}
    1-\alpha-2, \qquad 1-\alpha-3-\beta-4,
\end{align}
where the notation means that the source $\alpha$ is shared by parties $1,2,3$, while $\beta$ is shared by parties $3,4$.
Fix party $r = 1$ as the root.
Then $\Parent{4}=\beta$, $\Parent{\beta}=3$, $\Parent{3}=\alpha$, and $\Children{\alpha}=\{2,3\}$.
The non-root leaf parties are $2$ and $4$.

\paragraph{The mixed model.}
A mixed realization consists of source spaces $\cH_\alpha,\cH_\beta$, the vector $\ket{\Omega_{\mathrm{tail}}}\coloneqq\ket{\Omega_\alpha\otimes\Omega_\beta}$, and mutually commuting von Neumann algebras
\begin{align}
    [\cM_\alpha^u,\cM_\alpha^v]=0, \quad u\neq v, u,v\in\{1,2,3\},  \qquad \quad [\cM_\beta^3,\cM_\beta^4]=0.
\end{align}
The POVMs satisfy
\begin{align}\label{eq:AppendixMultipartiteMixedLocations}
    \Meas{1}{a_1}{x_1}&\in\cM_\alpha^1, &
    \Meas{2}{a_2}{x_2}&\in\cM_\alpha^2, &
    \Meas{3}{a_3}{x_3}&\in\cM_\alpha^3\vnotimes\cM_\beta^3, &
    \Meas{4}{a_4}{x_4}&\in\cM_\beta^4,
\end{align}
and the correlation reads
\begin{align}\label{eq:AppendixMultipartiteMixedCorrelation}
    p_{\mathrm{tail}}(\mathbf{a}|\mathbf{x})=\sandwich{\Omega_{\mathrm{tail}}}{\prod_{j=1}^4\Meas{j}{a_j}{x_j}}{\Omega_{\mathrm{tail}}}.
\end{align}

\subsubsection{Source-transfer description rooted at party \texorpdfstring{$1$}{1}}\label{sec:AppendixMultipartiteRootOne}
The source $\beta$ has child party $4$, while the source $\alpha$ has child parties $2$ and $3$.
We define the source-transfer algebras by
\begin{equation}
    \begin{aligned}
        \cA_\beta^4&\coloneqq\cM_\beta^4, & \JoinAlg{\beta}&=\cA_\beta^4,\\
        \cA_\alpha^2&\coloneqq\cM_\alpha^2, & \cA_\alpha^3&\coloneqq\cM_\alpha^3, &
        \JoinAlg{\alpha}&\coloneqq\cA_\alpha^2\maxotimes\cA_\alpha^3.
    \end{aligned}
\end{equation}
By construction, one has a natural representation of $\cA_\alpha^2$ and $\cA_\alpha^3$ in $\Bof{\cH_\alpha}$ 
\begin{align}\label{eq:AppendixMultipartiteMultiplicationRepresentation}
    \iota_\alpha:\JoinAlg{\alpha}\to\Bof{\cH_\alpha}, \qquad \iota_\alpha(q_2\otimes q_3)=q_2q_3.
\end{align}
(Recall the proof of \zcref{thm:SourceTransferTreeEquivalence} where we omit this representation for simplicity.)
Define the vector-induced source states to be
\begin{align}
    \omega_\alpha(q)&\coloneqq\sandwich{\Omega_\alpha}{\iota_\alpha(q)}{\Omega_\alpha}, &
    \omega_\beta(s)&\coloneqq\sandwich{\Omega_\beta}{s}{\Omega_\beta}.
\end{align}

At the two non-root leaves $2$ and $4$, we simply let
\begin{align}
    \meas{2}{a_2}{x_2}&\coloneqq\Meas{2}{a_2}{x_2}\in\cA_\alpha^2, &
    \meas{4}{a_4}{x_4}&\coloneqq\Meas{4}{a_4}{x_4}\in\cA_\beta^4.
\end{align}
Define the completely positive transfer map for the internal party $3$
\begin{align}
    \instr{3}{a_3}{x_3}:\JoinAlg{\beta} \to \cA_{\Parent{3}}^3 = \cA_\alpha^3 \subset \JoinAlg{\alpha}
\end{align}
by
\begin{align}\label{eq:AppendixMultipartiteTransferMap}
    \instr{3}{a_3}{x_3}(s)\coloneqq(\bra{\Omega_\beta})\bigl(\Meas{3}{a_3}{x_3}\bigr)^{1/2}(I_{\cH_\alpha}\otimes s)\bigl(\Meas{3}{a_3}{x_3}\bigr)^{1/2}(\ket{\Omega_\beta}).
\end{align}
The range is contained in $\cA_\alpha^3$ by construction, and it satisfies the source-replacer condition
\begin{align}\label{eq:AppendixMultipartiteTransferReplacer}
    \sum_{a_3}\instr{3}{a_3}{x_3}(s)=\omega_\beta(s)\id_{\cA_\alpha^3}.
\end{align}
Finally, the root functionals are
\begin{align}\label{eq:AppendixMultipartiteRootFunctional}
    \rstate{1}{a_1}{x_1}(q)\coloneqq\sandwich{\Omega_\alpha}{\bigl(\Meas{1}{a_1}{x_1}\bigr)^{1/2}\iota_\alpha(q)\bigl(\Meas{1}{a_1}{x_1}\bigr)^{1/2}}{\Omega_\alpha}, \qquad q\in\JoinAlg{\alpha},
\end{align}
with $\sum_{a_1}\rstate{1}{a_1}{x_1}=\omega_\alpha$.
The transfer message at $\alpha$ is
\begin{align}
    \TransferMsg{\alpha}=\meas{2}{a_2}{x_2}\otimes\instr{3}{a_3}{x_3}\!\left(\meas{4}{a_4}{x_4}\right),
\end{align}
giving the correlation
\begin{align}\label{eq:AppendixMultipartiteTransferCorrelation}
    p_{\mathrm{tail}}(\mathbf{a}|\mathbf{x})=\rstate{1}{a_1}{x_1}\!\left(\meas{2}{a_2}{x_2}\otimes\instr{3}{a_3}{x_3}\!\left(\meas{4}{a_4}{x_4}\right)\right).
\end{align}

The following proposition is the specialization of the general results in \zcref{thm:SourceTransferTreeEquivalence} to the tripartite-tail network.
\begin{proposition}[Source-transfer description of the four-party tripartite-tail network]\label{prop:AppendixMultipartiteEquivalence}
A correlation $p_{\mathrm{tail}}(\mathbf{a}|\mathbf{x})$ admits the mixed realization \zcref{eq:AppendixMultipartiteMixedLocations,eq:AppendixMultipartiteMixedCorrelation} if and only if it admits the source-transfer realization rooted at $1$ described in \zcref{sec:AppendixMultipartiteRootOne}.
\end{proposition}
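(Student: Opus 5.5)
The proof follows the same two-direction scheme as in \zcref{thm:SourceTransferTreeEquivalence} and \zcref{prop:AppendixLineEquivalence}; the only new feature is the tripartite source $\alpha$ with two child parties $2,3$.

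\emph{Mixed $\Rightarrow$ source-transfer.} The construction in \zcref{sec:AppendixMultipartiteRootOne} already provides the data. We still need to check four things.
\begin{enumerate}[label=(\alph*)]
\item $\instr{3}{a_3}{x_3}$ is completely positive with range in $\cA_\alpha^3=\cM_\alpha^3$. This holds because $\Meas{3}{a_3}{x_3}\in\cM_\alpha^3\vnotimes\cM_\beta^3$ commutes with $I\otimes s$ for $s\in\cM_\beta^4$, and the partial sandwich over $\ket{\Omega_\beta}$ lands in $\cM_\alpha^3$.
\item The replacer identity \zcref{eq:AppendixMultipartiteTransferReplacer} follows from POVM completeness.
\item The root functional is positive and satisfies $\sum_{a_1}\rstate{1}{a_1}{x_1}=\omega_\alpha$. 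Here we use that $\Meas{1}{a_1}{x_1}\in\cM_\alpha^1$ commutes with $\iota_\alpha(\JoinAlg{\alpha})$, which is generated by $\cM_\alpha^2$ and $\cM_\alpha^3$.
\item The correlation \zcref{eq:AppendixMultipartiteTransferCorrelation} is recovered. We compute $\iota_\alpha(\TransferMsg{\alpha})=\Meas{2}{a_2}{x_2}\,\sandwich{\Omega_\beta}{\Meas{3}{a_3}{x_3}(I\otimes\Meas{4}{a_4}{x_4})}{\Omega_\beta}$, then sandwich with $\Meas{1}{a_1}{x_1}$ on $\ket{\Omega_\alpha}$ and reassemble the full product using commutativity, as in \zcref{eq:TreeSubtreeTransferIdentity}.
\end{enumerate}

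\emph{Source-transfer $\Rightarrow$ mixed.} Take the GNS triples of $(\JoinAlg{\alpha},\omega_\alpha)$ and $(\JoinAlg{\beta},\omega_\beta)$ and define the algebras
\[
\cM_\alpha=\pi_\alpha(\JoinAlg{\alpha})'',\quad \cM_\alpha^2=\pi_\alpha(\cA_\alpha^2)'',\quad \cM_\alpha^3=\pi_\alpha(\cA_\alpha^3)'',\quad \cM_\alpha^1=\cM_\alpha',
\]
\[
\cM_\beta^4=\cM_\beta,\quad \cM_\beta^3=\cM_\beta'.
\]
The algebras $\cM_\alpha^2$ and $\cM_\alpha^3$ commute because they come from different factors of the maximal tensor product, and $\cM_\alpha'$ commutes with both.

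We then set the POVMs as follows.
\begin{itemize}
\item For the leaves, $\Meas{2}{a_2}{x_2}=\pi_\alpha(\meas{2}{a_2}{x_2})$ and $\Meas{4}{a_4}{x_4}=\pi_\beta(\meas{4}{a_4}{x_4})$.
\item At party $3$, apply \zcref{cor:RadonNikodymSourceProductReplacer} with $k=1$, $\cX_1=\JoinAlg{\beta}$, $\cY=\JoinAlg{\alpha}$, $\cZ=\cA_\alpha^3$. This gives $\Meas{3}{a_3}{x_3}\in\cM_\beta'\vnotimes\cM_\alpha^3$, reordered as $\cM_\alpha^3\vnotimes\cM_\beta^3$.
\item At the root, \zcref{rem:FunctionalProductRadonNikodymIdentity} gives $\Meas{1}{a_1}{x_1}\in\cM_\alpha'$ with $\rstate{1}{a_1}{x_1}(q)=\sandwich{\Omega_\alpha}{\Meas{1}{a_1}{x_1}\pi_\alpha(q)}{\Omega_\alpha}$.
\end{itemize}
With these, the locations \zcref{eq:AppendixMultipartiteMixedLocations} hold.

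Correlation recovery is a two-step chain. First,
\[
p_{\mathrm{tail}}=\sandwich{\Omega_\alpha}{\Meas{1}{a_1}{x_1}\,\Meas{2}{a_2}{x_2}\,\pi_\alpha\bigl(\instr{3}{a_3}{x_3}(\meas{4}{a_4}{x_4})\bigr)}{\Omega_\alpha}.
\]
Next, apply the expectation identity \zcref{eq:ProductRadonNikodymExpectationIdentity} with $Z=\Meas{1}{a_1}{x_1}\Meas{2}{a_2}{x_2}$. This gives
\[
p_{\mathrm{tail}}=\sandwich{\Omega_{\mathrm{tail}}}{\Meas{3}{a_3}{x_3}\bigl(Z\otimes\Meas{4}{a_4}{x_4}\bigr)}{\Omega_{\mathrm{tail}}},
\]
and commutativity then yields the full product.

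\emph{Main obstacle.} The delicate step is verifying that $Z=\Meas{1}{a_1}{x_1}\Meas{2}{a_2}{x_2}$ lies in $\cM_{\cZ}'=\pi_\alpha(\cA_\alpha^3)'$, so that the expectation identity may be invoked.
\begin{itemize}
\item $\Meas{1}{a_1}{x_1}\in\cM_\alpha'$, so it commutes with $\pi_\alpha(\cA_\alpha^3)$.
\item $\Meas{2}{a_2}{x_2}\in\pi_\alpha(\cA_\alpha^2)$, which commutes with $\pi_\alpha(\cA_\alpha^3)$ by the maximal tensor product structure.
\end{itemize}
This is precisely where the output of $\instr{3}{a_3}{x_3}$ must land in $\cA_\alpha^3$ rather than in all of $\JoinAlg{\alpha}$. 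We must also keep track of the ordering of tensor factors $\cH_\beta\otimes\cH_\alpha$ versus $\cH_\alpha\otimes\cH_\beta$ when we identify the result with $\ket{\Omega_{\mathrm{tail}}}$.
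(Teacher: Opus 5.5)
Your proposal is correct and follows the paper's proof essentially step for step. It uses the same GNS algebras with $\cM_\alpha^1=\cM_\alpha'$ and $\cM_\beta^3=\cM_\beta'$, the same application of \zcref{cor:RadonNikodymSourceProductReplacer} with output subalgebra $\cA_\alpha^3$ and of \zcref{rem:FunctionalProductRadonNikodymIdentity} at the root, and the same expectation-value identity with $Z=\Meas{1}{a_1}{x_1}\Meas{2}{a_2}{x_2}\in(\cM_\alpha^3)'$, which is exactly the step the paper flags; your checks (a)--(d) for the forward direction simply make explicit what the paper leaves to the construction in \zcref{sec:AppendixMultipartiteRootOne}.
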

\begin{proof}
The forward implication was constructed already above.
Conversely, as in the proof of \zcref{thm:SourceTransferTreeEquivalence}, let $\gns{\alpha}$ be the GNS representation of $(\JoinAlg{\alpha},\omega_\alpha)$, and let $\gns{\beta}$ be the GNS representation of $(\JoinAlg{\beta},\omega_\beta)$.
Let
\begin{align}
    \cM_\alpha^2&\coloneqq\pi_\alpha(\cA_\alpha^2)'', &
    \cM_\alpha^3&\coloneqq\pi_\alpha(\cA_\alpha^3)'', &
    \cM_\alpha&\coloneqq\pi_\alpha(\JoinAlg{\alpha})''=\cM_\alpha^2\vee\cM_\alpha^3, &
    \cM_\beta&\coloneqq\pi_\beta(\JoinAlg{\beta})''.
\end{align}
Define the measurement observable algebras of the parties:
\begin{align}\label{eq:AppendixMultipartiteEndpointAlgebras}
    \cM_\alpha^1&\coloneqq\cM_\alpha', &
    \cM_\beta^3&\coloneqq\cM_\beta', &
    \cM_\beta^4&\coloneqq\cM_\beta.
\end{align}
By construction from the maximal tensor products and commutants they are indeed mutually commuting.
The represented leaf POVMs are 
\begin{align}
    \Meas{2}{a_2}{x_2}&\coloneqq\pi_\alpha\!\left(\meas{2}{a_2}{x_2}\right)\in\cM_\alpha^2, &
    \Meas{4}{a_4}{x_4}&\coloneqq\pi_\beta\!\left(\meas{4}{a_4}{x_4}\right)\in\cM_\beta.
\end{align}
Applying \zcref{cor:RadonNikodymSourceProductReplacer} to the transfer map $\instr{3}{a_3}{x_3}: \JoinAlg{\beta} \to \cA_{\Parent{3}}^3 \subset \JoinAlg{\alpha}$ gives
\begin{align}\label{eq:AppendixMultipartiteRecoveredLocations}
    \Meas{3}{a_3}{x_3}\in\cM_\alpha^3\vnotimes\cM_\beta' = \cM_\alpha^3\vnotimes\cM_\beta^3.
\end{align}
The root functionals by \zcref{rem:FunctionalProductRadonNikodymIdentity} give
\begin{align}
    \Meas{1}{a_1}{x_1}\in\cM_\alpha' = \cM_{\alpha}^1.
\end{align}
Thus all four measurement POVMs have exactly the locations as in \zcref{eq:AppendixMultipartiteMixedLocations}.

To recover the correlation, write $\Meas{2}{a_2}{x_2}=\pi_\alpha(\meas{2}{a_2}{x_2}\otimes\id)$ and identify each operator with its natural amplification.
Then
\begin{equation}\label{eq:AppendixMultipartiteRecoveredCorrelation}
\begin{aligned}
    p_{\mathrm{tail}}(\mathbf{a}|\mathbf{x})
    &=\rstate{1}{a_1}{x_1}\!\left(\meas{2}{a_2}{x_2}\otimes\instr{3}{a_3}{x_3}\!\left(\meas{4}{a_4}{x_4}\right)\right)\\
    &=\sandwich{\Omega_\alpha}{\Meas{1}{a_1}{x_1}\Meas{2}{a_2}{x_2}\,\pi_\alpha\!\left(\instr{3}{a_3}{x_3}\!\left(\meas{4}{a_4}{x_4}\right)\right)}{\Omega_\alpha}\\
    &=\sandwich{\Omega_\beta\otimes\Omega_\alpha}{\Meas{3}{a_3}{x_3}\left(\pi_\beta\!\left(\meas{4}{a_4}{x_4}\right)\otimes\Meas{1}{a_1}{x_1}\Meas{2}{a_2}{x_2}\right)}{\Omega_\beta\otimes\Omega_\alpha}\\
    &=\sandwich{\Omega_{\mathrm{tail}}}{\prod_{j=1}^4\Meas{j}{a_j}{x_j}}{\Omega_{\mathrm{tail}}}.
\end{aligned}
\end{equation}
The third equality uses \zcref{eq:ProductRadonNikodymExpectationIdentity} with $Z=\Meas{1}{a_1}{x_1}\Meas{2}{a_2}{x_2}\in(\cM_\alpha^3)'$.
\end{proof}

\subsubsection{Source-transfer SDP relaxation}\label{sec:AppendixMultipartiteSDP}
Let the formal leaf $*$-algebras be
\begin{align}
    \fA_\beta^4&={}^*\!\operatorname{alg}\!\left(\meas{4}{a_4}{x_4}\right), &
    \fA_\alpha^2&={}^*\!\operatorname{alg}\!\left(\meas{2}{a_2}{x_2}\right).
\end{align}
Define transfer symbols $\formalInstr{3}{a_3}{x_3}(q)$, $q\in\fA_\beta$, and let $\fA_\alpha^3$ be the $*$-algebra generated by them.
Write
\begin{align}
    \fA_\beta&=\fA_\beta^4, &
    \fA_\alpha&=\fA_\alpha^2\otimes\fA_\alpha^3, &
    \InPolyAlg{3}&=\fA_\beta, &
    \InPolyAlg{1}&=\fA_\alpha.
\end{align}
We introduce a joint state symbol $\varsigma_\alpha$ on $\fA_\alpha$, $\varsigma_{\beta}$ on $\fA_{\beta}$, the root functional symbols $\rPolystate{a_1}{x_1}$, and the associated multi-state polynomial algebras.
The internal and root source-replacer polynomials are
\begin{equation}\label{eq:AppendixMultipartiteSDPPolynomials}
    \begin{aligned}
        h_{x_3}^{3}(p_1,q,p_2)&=\varsigma_\alpha\!\left(p_1^*\left[\sum_{a_3}\formalInstr{3}{a_3}{x_3}(q)-\varsigma_\beta(q)\right]p_2\right),\\
    	h_{x_1}^{1}(t)&=\sum_{a_1}\rPolystate{a_1}{x_1}(t)-\varsigma_\alpha(t),\\
    	h_{\mathbf{a}|\mathbf{x}}^{(1)}&=\rPolystate{a_1}{x_1}\!\left(\meas{2}{a_2}{x_2}\otimes\formalInstr{3}{a_3}{x_3}\!\left(\meas{4}{a_4}{x_4}\right)\right).
    \end{aligned}
\end{equation}
Here $p_1,p_2$ range over the full noncommutative multi-state source ring $\fatS_\alpha$.
Accordingly, the complete-positivity matrix is
\begin{align}\label{eq:AppendixMultipartiteCPMatrix}
    \left[C_{d-1}\!\left(\formalInstr{3}{a_3}{x_3};L_d\right)\right]_{(p_1,q_1),(p_2,q_2)}=L_d\!\left(\varsigma_\alpha\!\left(p_1^*\formalInstr{3}{a_3}{x_3}(q_1^*q_2)p_2\right)\right),
\end{align}
with $p_i\in\operatorname{Mon}(\fatS_\alpha)$ and $q_i\in\operatorname{Mon}(\fatS_\beta)$ such that $\deg(p_i q_i) \leq d - 1$.

At level $d$, the source and root archimedean conditions are
\begin{equation}\label{eq:AppendixMultipartiteArchimedean}
\begin{alignedat}{2}
    &H_{d-1}^{\beta}\!\left(\id-\bigl(\meas{4}{a_4}{x_4}\bigr)^*\meas{4}{a_4}{x_4};L_d\right)\succeq0,
    &&\quad(a_4,x_4)\in\mathsf{A}_4\times\mathsf{X}_4,\\
    &H_{d-1}^{\alpha}\!\left(\id-\bigl(\meas{2}{a_2}{x_2}\otimes\id\bigr)^*\bigl(\meas{2}{a_2}{x_2}\otimes\id\bigr);L_d\right)\succeq0,
    &&\quad(a_2,x_2)\in\mathsf{A}_2\times\mathsf{X}_2,\\
    &H_{d-1-\deg(q)}^{\alpha}\!\left(\id-\bigl(\id\otimes\formalInstr{3}{a_3}{x_3}(q)\bigr)^*\bigl(\id\otimes\formalInstr{3}{a_3}{x_3}(q)\bigr);L_d\right)\succeq0,
    &&\quad\substack{q\in\operatorname{Mon}(\fA_\beta),\\(a_3,x_3)\in\mathsf{A}_3\times\mathsf{X}_3},\\
    &H_{d-2}^{a_1|x_1}\!\left(\id-\bigl(\meas{2}{a_2}{x_2}\otimes\id\bigr)^*\bigl(\meas{2}{a_2}{x_2}\otimes\id\bigr);L_d\right)\succeq0,
    &&\quad\substack{(a_2,x_2)\in\mathsf{A}_2\times\mathsf{X}_2,\\(a_1,x_1)\in\mathsf{A}_1\times\mathsf{X}_1},\\
    &H_{d-2-\deg(q)}^{a_1|x_1}\!\left(\id-\bigl(\id\otimes\formalInstr{3}{a_3}{x_3}(q)\bigr)^*\bigl(\id\otimes\formalInstr{3}{a_3}{x_3}(q)\bigr);L_d\right)\succeq0,
    &&\quad\substack{q\in\operatorname{Mon}(\fA_\beta),\\(a_3,x_3)\in\mathsf{A}_3\times\mathsf{X}_3,\\(a_1,x_1)\in\mathsf{A}_1\times\mathsf{X}_1}.
\end{alignedat}
\end{equation}
The last two lines are imposed for every $(a_1,x_1)\in\mathsf{A}_1\times\mathsf{X}_1$.
The remaining level-$d$ constraints are
\begin{equation}\label{eq:AppendixMultipartiteSDP}
\begin{alignedat}{3}
    &H_d^\alpha(L_d),\ H_d^\beta(L_d)\succeq0, && &&\text{(source states)}\\
    &H_{d-1}^{a_1|x_1}(L_d)\succeq0, &&\quad(a_1,x_1)\in\mathsf{A}_1\times\mathsf{X}_1, &&\text{(root functionals)}\\
    &H_{d-1}^{\alpha}\!\left(\meas{2}{a_2}{x_2};L_d\right)\succeq0, &&\quad(a_2,x_2)\in\mathsf{A}_2\times\mathsf{X}_2, &&\text{(leaf at $2$)}\\
    &H_{d-1}^{\beta}\!\left(\meas{4}{a_4}{x_4};L_d\right)\succeq0, &&\quad(a_4,x_4)\in\mathsf{A}_4\times\mathsf{X}_4, &&\text{(leaf at $4$)}\\
    &C_{d-1}\!\left(\formalInstr{3}{a_3}{x_3};L_d\right)\succeq0, &&\quad(a_3,x_3)\in\mathsf{A}_3\times\mathsf{X}_3, &&\text{(complete positivity)}\\
    &L_d\!\left(h_{x_3}^{3}(p_1,q,p_2)\right)=0, &&\quad p_1,p_2\in\operatorname{Mon}(\fatS_\alpha),\ q\in\operatorname{Mon}(\fatS_\beta), &&\text{(replacer at $3$)}\\
    &L_d\!\left(h_{x_1}^{1}(t)\right)=0, &&\quad t\in\operatorname{Mon}(\fatS_\alpha), &&\text{(replacer at $1$).}
\end{alignedat}
\end{equation}
All equalities are truncated by total degree as in \zcref{def:SourceTransferSDPRelaxation}.
The prescribed-correlation constraints are generated by $h_{\mathbf{a}|\mathbf{x}}^{(1)}-p_{\mathrm{tail}}(\mathbf{a}|\mathbf{x})$.
Since $\deg(h_{\mathbf{a}|\mathbf{x}}^{(1)})=4$, they first appear at level $d=2$.

\section{Measurement-algebra factorization is insufficient for the mixed model}\label{sec:AppendixWhySourceTransfer}
The main text establishes two complete SDP hierarchies for the mixed quantum model on source-party trees: the source-transfer hierarchy in \zcref{cor:MixedTreeSDPConvergence} and the inflation-NPA hierarchy in \zcref{cor:InflationNPATreeConvergence}.
The former is constructed directly from source algebras, source states, and completely positive transfer maps.

A natural question is whether a complete hierarchy could instead be obtained by applying the existing state-polynomial framework directly to the party measurement algebras and their factorization constraints.

The original scalar-extension method introduced SDP-compatible relaxations of measurement-algebra factorization~\cite{PozasKerstjens2019Bounding}.
For the bilocal network, suitable additional constraints lead to complete SDP hierarchies~\cite{ligthart2023inflation,renou2026two}.
These formulations are closely related to state-polynomial optimization~\cite{klep2024state}, which provides convergent SDP relaxations for polynomial constraints involving a state.

In this approach, one works with the party measurement algebras and a global state $\omega$.
Source independence is expressed through identities of the form
\begin{align}\label{eq:AppendixGenericStateFactorization}
    \omega(q_1\cdots q_k)=\omega(q_1)\cdots\omega(q_k).
\end{align}
Such constraints can be encoded by state-polynomial optimization.
In the Renou--Xu model~\cite{ligthart2023inflation,renou2026two}, this condition is sufficient to recover the mixed model for the bilocal network.
For general networks, this measurement-factorization description has recently been studied by~\cite{yang2026mutuallycommutingvonneumannalgebra}.

Already on the line of four, this measurement-factorization model does not coincide with the mixed quantum model.
Indeed, it admits a correlation whose conditional distribution at the two middle parties, given the endpoint outcomes, is a PR box, which excludes a mixed realization (\zcref{prop:AppendixPRFactorizationCounterexample}).
Consequently, imposing only these factorization constraints cannot yield a complete hierarchy for the mixed model.

\subsection{One-state measurement-factorization model}\label{sec:AppendixMeasurementFactorizationModel}
Consider the line-of-four source-party tree in \zcref{fig:appendix-line-four}, whose vertices we now denote by
\begin{align}\label{eq:AppendixFactorizationP4Graph}
    X-\alpha-A-\beta-B-\gamma-Y.
\end{align}
All parties have one measurement setting and binary outcomes $x, a, b, y \in \{0,1\}$.
The pairs of parties that do not share a source are $(X,B)$, $(X,Y)$, and $(A,Y)$.
Hence they have independent measurement algebras.
Its natural factorization-based model, which is a line-of-four specialization of~\cite[Definition~2.2]{yang2026mutuallycommutingvonneumannalgebra}, reads as follows.

\begin{definition}[Measurement-factorization model for the line-of-four]\label{def:AppendixMeasurementFactorizationLine}
A correlation $p(x,a,b,y)$ admits a \emph{measurement-factorization realization} on the line-of-four network~\eqref{eq:AppendixFactorizationP4Graph} if there exist a Hilbert space $\cH$, a unit vector $\ket{\Omega}\in\cH$, mutually commuting von Neumann algebras
\begin{align}
    \cM_X,\cM_A,\cM_B,\cM_Y\subset\Bof{\cH},
\end{align}
and binary POVMs
\begin{equation}
\begin{aligned}
    \{X_x\}_{x=0}^1&\subset\cM_X, &
    \{A_a\}_{a=0}^1&\subset\cM_A,\\
    \{B_b\}_{b=0}^1&\subset\cM_B, &
    \{Y_y\}_{y=0}^1&\subset\cM_Y.
\end{aligned}
\end{equation}
such that, with $\omega(q)\coloneqq\sandwich{\Omega}{q}{\Omega}$,
\begin{align}\label{eq:AppendixMeasurementFactorizationCorrelation}
    p(x,a,b,y)=\omega(X_xA_aB_bY_y).
\end{align}
Moreover, the state factorizes over every pair of parties that does not share a source:
\begin{align}\label{eq:AppendixIndependentPartyFactorizations}
    \omega(q_Xq_B)&=\omega(q_X)\omega(q_B), &
    \omega(q_Xq_Y)&=\omega(q_X)\omega(q_Y), &
    \omega(q_Aq_Y)&=\omega(q_A)\omega(q_Y)
\end{align}
for all $q_X\in\cM_X$, $q_A\in\cM_A$, $q_B\in\cM_B$, and $q_Y\in\cM_Y$.
\end{definition}
All conditions in~\zcref{def:AppendixMeasurementFactorizationLine} can be readily adapted to the original state polynomial optimization.

\subsection{Counterexample on the line of four}\label{sec:AppendixFactorizationCounterexample}
Related to~\cite[Example~3.3.18]{ligthart2024semidefinite}, we show such a factorization-based model admits correlations that are incompatible with the mixed model on the line of four.
\begin{proposition}[A post-quantum correlation satisfying the independent-party factorizations]\label{prop:AppendixPRFactorizationCounterexample}
Define
\begin{align}\label{eq:AppendixPRCorrelation}
    p_{\mathrm{PR}}(x,a,b,y)\coloneqq
    \begin{cases}
        \frac{1}{8}, & a\oplus b=xy,\\[1mm]
        0, & a\oplus b\neq xy,
    \end{cases}
    \qquad x,a,b,y\in\{0,1\}.
\end{align}
Then $p_{\mathrm{PR}}$ admits a measurement-factorization realization in the sense of~\zcref{def:AppendixMeasurementFactorizationLine}, but it does not admit a mixed quantum realization on the line of four as in \zcref{def:MixedQuantumModelSourcePartyGraph}.
\end{proposition}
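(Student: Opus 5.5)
The proof has two parts: an explicit measurement-factorization realization of $p_{\mathrm{PR}}$, and an obstruction showing that no mixed realization exists.

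\emph{Realization.} The idea is to make the endpoints trivially independent classical bits and to put a PR box inside an abstract commuting-operator structure. Let $\cH=\mathbb{C}^2\otimes\mathbb{C}^2\otimes\cK$ with $\ket{\Omega}=\ket{+}\otimes\ket{+}\otimes\ket{\xi}$, where $\ket{+}=(\ket{0}+\ket{1})/\sqrt2$. Let $X_x$ and $Y_y$ be the computational-basis projections on the first and second qubit, and set $\cM_X=\Bof{\mathbb{C}^2}\otimes I\otimes I$ and $\cM_Y=I\otimes\Bof{\mathbb{C}^2}\otimes I$. The PR box is not quantum, so $A$ and $B$ cannot be ordinary commuting operators reproducing it for every fixed $(x,y)$. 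Instead we use the fact that the PR box is a mixture of deterministic local-looking responses only after conditioning. Concretely, let $\cK=L^2$ of a classical variable $\lambda=(a',z)$ uniform on $\{0,1\}^2$. Define
\[
A_a=\sum_{x}\ketbra{x}{x}\otimes I\otimes P_{a'=a}
\]
and
\[
B_b=\sum_{x,y}\ketbra{x}{x}\otimes\ketbra{y}{y}\otimes P_{a'\oplus b=xy}.
\]
All operators are diagonal, hence mutually commuting, and they generate abelian algebras. We then check the factorizations one at a time. First, $\omega(q_Xq_Y)$ factorizes because the two qubits are in a product state. Second, $\omega(q_Aq_Y)$ factorizes because $A$ does not involve the second qubit, and $\lambda$ is independent of it. The pair $(X,B)$ is the one to worry about, since $B$ depends on $x$. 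Its marginal given $x$ is still uniform in $b$ because $a'$ is uniform, but we need factorization for all of $\cM_X$ and $\cM_B$, not only for marginals. Since everything is diagonal, this reduces to independence of the random variables $x$ and $(\text{functions of }B)$. To secure it, we choose $\cM_B$ to be the algebra generated by the $B_b$ alone, which is a function of $a'\oplus xy$. That function is uniform and independent of $x$ because $a'$ is uniform and independent of $(x,y)$.

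Finally, the correlation $\omega(X_xA_aB_bY_y)$ equals $1/8$ exactly when $a\oplus b=xy$. The main subtlety is whether $A$ may depend on $x$ in this model: it may not be required to, but letting $A$ depend only on $a'$ as above suffices, so no such dependence is used.

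\emph{Obstruction.} Suppose a mixed realization exists. Condition on the endpoint outcomes $x$ and $y$. Since $X_x\in\cM^X_\alpha$ and $Y_y\in\cM^Y_\gamma$, absorb $X_x^{1/2}$ into the $\alpha$-source vector and $Y_y^{1/2}$ into the $\gamma$-source vector. Each endpoint outcome has probability $1/2$ and the two are independent, so the normalized vectors $\ket{\Omega_\alpha^{x}}\propto X_x^{1/2}\ket{\Omega_\alpha}$ and $\ket{\Omega_\gamma^{y}}\propto Y_y^{1/2}\ket{\Omega_\gamma}$ are well defined. Up to normalization, the conditional distribution $p(a,b|x,y)=4\,p_{\mathrm{PR}}(x,a,b,y)$ is then
\[
\sandwich{\Omega_\alpha^x\otimes\Omega_\beta\otimes\Omega_\gamma^y}{A_aB_b}{\Omega_\alpha^x\otimes\Omega_\beta\otimes\Omega_\gamma^y}.
\]
Here $A_a$ acts on $\alpha\beta$ and $B_b$ on $\beta\gamma$, and they commute. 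This expression is a commuting-operator correlation for the bipartite scenario in which $x$ plays the role of Alice's input, acting through her $\alpha$-state preparation, and $y$ plays the role of Bob's input. To make this precise, purify the input dependence: for each $x$, the map $q\mapsto\langle\Omega_\alpha^x|q|\Omega_\alpha^x\rangle$ is a state on the $\alpha$-part of $\cM^A_\alpha$. The Radon--Nikodym result \zcref{rem:FunctionalProductRadonNikodymIdentity}, applied to $\omega_\alpha$ decomposed by $X_x$, turns this state into an operator $2X'_x$ in the commutant, and $\{X'_x\}$ is a POVM. This yields $A'_{a|x}$ in a commuting structure with Bob's analogous $B'_{b|y}$, so $p(a,b|x,y)\in C_{\mathrm{qc}}$. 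A PR box violates Tsirelson's bound, which holds in the commuting-operator model, so this is a contradiction.

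The main obstacle is rigorously converting conditioning on the endpoints into inputs of a commuting-operator Bell scenario. A cleaner way to do this is to use the mixed-model structure directly: the operators $A_a X_x/p(x)$, sandwiched in the state, define positive functionals, and Tsirelson's inequality $\sum (-1)^{xy}\langle (A_0-A_1)(B_0-B_1)\rangle$ over the conditional states can be bounded by the Cauchy--Schwarz argument with commuting operators.
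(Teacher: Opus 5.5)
The architecture matches the paper's: you realize $p_{\mathrm{PR}}$ with commuting diagonal PVMs, and you rule out a mixed realization by conditioning on the endpoints and invoking Tsirelson. However, the obstruction has a genuine gap, at exactly the step you yourself flag as the main obstacle.

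After conditioning, the inputs $x,y$ enter only through the vector $\Omega^x_\alpha\otimes\Omega_\beta\otimes\Omega^y_\gamma$, which is not yet a Bell model. Neither of your two fixes turns it into one.
\begin{itemize}
\item \emph{The Radon--Nikodym step is circular.} Decomposing $\omega_\alpha$ by $X_x$ returns a POVM in the commutant of (the image of) $\cM^A_\alpha$, essentially the $X_x$ you started with. It never moves the $x$-dependence into Alice's measurements on a fixed state, and $A'_{a|x}$ is never actually defined. The obvious candidate $2X_x^{1/2}A_aX_x^{1/2}$ sums over $a$ to $2X_x\neq I$.
\item \emph{The Cauchy--Schwarz route cannot work as described.} In your own first-half realization, $X_xA_a$ and $B_bY_y$ are commuting positive operators reproducing $p_{\mathrm{PR}}$ in a single state. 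So no argument using only these commutations and positivity can contradict Tsirelson's bound; the tensor-product structure of the sources must be used.
\end{itemize}
The missing idea is to use that structure to partially sandwich out the endpoint sources:
\[
\widetilde A_{a|x}\coloneqq\sandwich{\Omega^x_\alpha}{A_a}{\Omega^x_\alpha}=2\sandwich{\Omega_\alpha}{X_xA_a}{\Omega_\alpha}\in\cM^A_\beta,\qquad
\widetilde B_{b|y}\coloneqq 2\sandwich{\Omega_\gamma}{B_bY_y}{\Omega_\gamma}\in\cM^B_\beta .
\]
These operators have the following properties:
\begin{itemize}
\item for each fixed $x$ and $y$ they are POVMs on $\cH_\beta$, using $p(x)=p(y)=\tfrac12$;
\item they commute, since $[\cM^A_\beta,\cM^B_\beta]=0$;
\item because the sandwiches act on the distinct factors $\cH_\alpha$ and $\cH_\gamma$, they factorize, giving $\sandwich{\Omega_\beta}{\widetilde A_{a|x}\widetilde B_{b|y}}{\Omega_\beta}=4p_{\mathrm{PR}}(x,a,b,y)=p_{\mathrm{PR}}(a,b|x,y)$.
\end{itemize}
This is a commuting-operator Bell realization of the PR box with a fixed state, contradicting Tsirelson's bound. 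It is exactly the paper's argument.

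Your realization is, up to swapping the roles of $A$ and $B$ and the idle variable $z$, the paper's construction. The paper works on $(\mathbb{C}^2)^{\otimes 3}$, with $B_b$ the projection onto $\lambda=b$ and $A_a$ the projection onto $\lambda=a\oplus xy$. Your correlation computation and the three pairwise factorizations are correct.

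However, your explicit choices $\cM_X=\Bof{\mathbb{C}^2}\otimes I\otimes I$ and $\cM_Y=I\otimes\Bof{\mathbb{C}^2}\otimes I$ violate the mutual-commutation requirement of the definition. $B_b$ is diagonal but depends on both qubits, so it does not commute with, e.g., $\sigma_x\otimes I\otimes I$ or $I\otimes\sigma_x\otimes I$. The fix is to take all four algebras to be the abelian algebras generated by the respective PVMs. Each is then spanned by its two projections, so factorization on the generators suffices by linearity, as in the paper.
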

\begin{proof}
We first construct a measurement-factorization realization for $p_{\mathrm{PR}}$ explicitly.
Let $\cH = \mathbb{C}^2 \otimes \mathbb{C}^2 \otimes \mathbb{C}^2$ with computational basis $\{\ket{x,y,\lambda}:x,y,\lambda\in\{0,1\}\}$, and write
\begin{align}\label{eq:AppendixPRVector}
    \ket{\Omega_{\mathrm{PR}}}\coloneqq\frac{1}{\sqrt{8}}\sum_{x,y,\lambda\in\{0,1\}}\ket{x,y,\lambda}.
\end{align}
Consider the binary PVMs,
\begin{equation}\label{eq:AppendixPRMeasurements}
\begin{aligned}
    X_x&\coloneqq\sum_{y,\lambda}\ketbra{x,y,\lambda}{x,y,\lambda}, &
    Y_y&\coloneqq\sum_{x,\lambda}\ketbra{x,y,\lambda}{x,y,\lambda},\\
    B_b&\coloneqq\sum_{x,y}\ketbra{x,y,b}{x,y,b}, &
    A_a&\coloneqq\sum_{x,y}\ketbra{x,y,a\oplus xy}{x,y,a\oplus xy},
\end{aligned}
\end{equation}
which are mutually commuting as operators diagonal in the same basis.
Let $\omega_{\mathrm{PR}}(\cdot) \coloneqq \sandwich{\Omega_{\mathrm{PR}}}{\cdot}{\Omega_{\mathrm{PR}}}$.
One directly computes that
\begin{align}\label{eq:AppendixPRCorrelationCalculation}
    \omega_{\mathrm{PR}}(X_xA_aB_bY_y)=\frac{1}{8}\delta_{a\oplus b,xy}=p_{\mathrm{PR}}(x,a,b,y).
\end{align}
Moreover, we can check that
\begin{equation}\label{eq:AppendixPRPairFactorizations}
    \begin{aligned}
        \omega_{\mathrm{PR}}(X_xB_b)&=\frac{1}{4}=\omega_{\mathrm{PR}}(X_x)\omega_{\mathrm{PR}}(B_b), \\
        \omega_{\mathrm{PR}}(X_xY_y)&=\frac{1}{4}=\omega_{\mathrm{PR}}(X_x)\omega_{\mathrm{PR}}(Y_y), \\
        \omega_{\mathrm{PR}}(A_aY_y)&=\frac{1}{4}=\omega_{\mathrm{PR}}(A_a)\omega_{\mathrm{PR}}(Y_y),
    \end{aligned}
\end{equation}
on the generators of the measurement algebras.
Since these operators generate the measurement algebras,~\zcref{eq:AppendixIndependentPartyFactorizations} follows by linearity.

It remains to exclude a mixed quantum realization for $p_{\mathrm{PR}}(x,a,b,y)$.
We use the observation made in~\cite[Theorem~2.4]{fritz2012beyond}---any quantum realization of a line-of-four correlation must give a quantum Bell correlation after conditioning on the two end outcomes---and extend it to the mixed model.
Indeed, conditioned on the marginal $p_{\mathrm{PR}}(x,y)=\frac{1}{4}$, we obtain the PR box correlation in the Bell scenario
\begin{align}\label{eq:AppendixConditionalPRBox}
    p_{\mathrm{PR}}(a,b|x,y)=
    \begin{cases}
        \frac{1}{2}, & a\oplus b=xy,\\[1mm]
        0, & a\oplus b\neq xy.
    \end{cases}
\end{align}
Suppose that $p_{\mathrm{PR}}$ admitted a mixed realization with source spaces $\cH_\alpha,\cH_\beta,\cH_\gamma$ and product source vector $\ket{\Omega_\alpha\otimes\Omega_\beta\otimes\Omega_\gamma}$. Define operators on the middle source space $\cH_\beta$ by
\begin{align}\label{eq:AppendixEffectiveBellMeasurements}
    \widetilde A_{a|x}&\coloneqq 2\sandwich{\Omega_\alpha}{X_xA_a}{\Omega_\alpha}, &
    \widetilde B_{b|y}&\coloneqq 2\sandwich{\Omega_\gamma}{B_bY_y}{\Omega_\gamma}.
\end{align}
The mixed-model commutation relations make these operators positive and mutually commuting.
Since $p_{\mathrm{PR}}(x)=p_{\mathrm{PR}}(y)=1/2$, we have $\sum_a\widetilde A_{a|x}=\sum_b\widetilde B_{b|y}=I_{\cH_\beta}$.
These commuting POVMs therefore satisfy
\begin{align}\label{eq:AppendixMixedImpliesBell}
    \sandwich{\Omega_\beta}{\widetilde A_{a|x}\widetilde B_{b|y}}{\Omega_\beta} = 4p_{\mathrm{PR}}(x,a,b,y) = p_{\mathrm{PR}}(a,b|x,y).
\end{align}
It follows that the conditional PR box in~\zcref{eq:AppendixConditionalPRBox} is a commuting-operator quantum Bell correlation attaining the CHSH value $4$, contradicting the Tsirelson bound of $2\sqrt{2}$.
\end{proof}

\begin{remark}[Further factorization conditions do not help]\label{rem:AppendixStrongerCutFactorizations}
The counterexample persists even when we impose factorization between groups of parties that depend on disjoint sources.
Indeed, the counterexample also satisfies
\begin{align}
    \omega_{\mathrm{PR}}(q_Xq_Aq_Y)&=\omega_{\mathrm{PR}}(q_Xq_A)\omega_{\mathrm{PR}}(q_Y), &
    \omega_{\mathrm{PR}}(q_Xq_Bq_Y)&=\omega_{\mathrm{PR}}(q_X)\omega_{\mathrm{PR}}(q_Bq_Y)
\end{align}
for all $q_X\in\cM_X$, $q_A\in\cM_A$, $q_B\in\cM_B$, and $q_Y\in\cM_Y$.
Together with \zcref{eq:AppendixIndependentPartyFactorizations}, these identities include all nontrivial factorizations between groups of parties whose incident source sets are disjoint in the line-of-four network.
\end{remark}
\enlargethispage{2\baselineskip}
By \zcref{prop:AppendixPRFactorizationCounterexample}, even a complete state-polynomial hierarchy for the measurement-factorization model would admit $p_{\mathrm{PR}}$, so extending the bilocal approach to all source-party trees requires more than these factorization constraints.
Our source-transfer hierarchy instead encodes the source-transfer model, which is equivalent to the mixed model by \zcref{thm:SourceTransferTreeEquivalence}, incorporating its source states and completely positive transfer maps through the multi-state extension of state-polynomial optimization in \zcref{def:SourceTransferSDPRelaxation}.
This gives a direct and complete hierarchy for the mixed model on trees, and the same characterization identifies the model reconstructed from inflation in \zcref{thm:InflationSourceTransferTreeEquivalence}, establishing convergence of inflation-NPA.

\end{document}